\documentclass[smallcondensed, nospthms]{svjour3}

\usepackage{mathrsfs}
\usepackage{amsmath, amscd,amssymb, amsthm, amsfonts, thmtools, verbatim, breqn}
\usepackage[mathcal]{eucal}
\usepackage{hyperref}
\usepackage{relsize}
\usepackage[cmtip, arrow, all]{xy}
\usepackage{pb-diagram ,pb-xy}
\usepackage{tikz-cd}
\usepackage{graphicx}
\usepackage{subcaption}
\usepackage{epstopdf}
\usepackage{color}
\usepackage{ifthen}

\textwidth=14.5cm  \oddsidemargin=1cm \evensidemargin=1cm
\setlength{\headsep}{20pt}

\renewcommand{\epsilon}{\varepsilon}

\newcommand{\E}{\mscr{E}}

\newcommand{\mscr}{\mathscr}

\newcommand{\Z}{\mathbb Z}

\newcommand{\op}{\operatorname}

\newcommand{\mbb}{\mathbb}

\newcommand{\from}{\leftarrow}
\newcommand{\ip}[1][\cdot,\cdot]{\left\langle #1 \right\rangle}

\newcommand{\R}{\mbb R}

\renewcommand{\Im}{\op{Im}}

 \DeclareMathOperator{\End}{End}
 
\DeclareMathOperator{\Sym}{Sym} \DeclareMathOperator{\Hom}{Hom}

\DeclareMathOperator{\Tr}{Tr}

\def\cB{\mathcal B}
\def\cF{\mathcal F}

\def\cZ{\mathcal Z}

\def\sE{\mathscr E}\def\sF{\mathscr F}
\def\sL{\mathscr L}
\def\sM{\mathscr M}\def\sO{\mathscr O}
\def\sS{\mathscr S}
\def\sV{\mathscr V}\def\sW{\mathscr W}

\def\bL{\mathbf L}

\def\fJ(E){\mathfrak E}
\def\fJ{\mathfrak J}

\def\fT{\mathfrak T}

\def\fg{\mathfrak g}

\declaretheoremstyle[
spaceabove=7pt, spacebelow=7pt,
headfont=\normalfont\bfseries,
notefont=\mdseries, notebraces={(}{)},
bodyfont=\itshape,
postheadspace=5pt,
headpunct = .
]{thm}

\declaretheoremstyle[
spaceabove=7pt, spacebelow=7pt,
headfont=\normalfont\bfseries,
notefont=\mdseries, notebraces={(}{)},
bodyfont=\itshape,
postheadspace=10pt,
headpunct = .
]{def}

\declaretheoremstyle[
spaceabove=4pt, spacebelow=7pt,
headfont=\itshape,
postheadspace=5pt,
headpunct = :,
postheadspace = 3pt, qed = $\lozenge$
]{rem}

\declaretheorem[numbered = no, style = thm, name = Main Theorem]{thmmain}
\declaretheorem[numbered = no, style = thm, name = Proposition]{uproposition}
\declaretheorem[numbered = yes, parent = section, style = thm]{theorem}

\declaretheorem[numbered = no, style = thm, name = Corollary]{ucorollary}
\declaretheorem[numbered = no, style = thm, name = Corollary A]{corA}
\declaretheorem[numbered = no, style = thm, name = Corollary B]{corB}

\declaretheorem[sibling = theorem, style = thm, name = Theorem/Definition]{thm-def}
\declaretheorem[sibling = theorem, style = thm]{proposition}
\declaretheorem[sibling = theorem, style = thm]{lemma}

\declaretheorem[sibling = theorem, style = thm, name = Definition-Lemma]{deflem}

\declaretheorem[sibling = theorem, style = def]{definition}

\declaretheorem[sibling = theorem, style = rem]{remark}

\declaretheorem[sibling = theorem, style = rem]{example}

\newcommand{\cinfty}{C^{\infty}}

\newcommand{\Obstr}{\mathfrak {Obstr}}
\def\cot{T^*[-1]\E}

\newcommand\ind{\operatorname{ind}}
\newcommand\Str{\operatorname{Str}}
\newcommand\coker{\operatorname{coker}}

\newcommand\Obcl[1][~]{\ifthenelse{ \equal{#1}{~}} {
  \operatorname{Obs}^{cl}
}{
  \operatorname{Obs}^{cl}(#1)
}}
\newcommand\Obq[1][~]{\ifthenelse{ \equal{#1}{~}} {
  \operatorname{Obs}^{q}
}{
  \operatorname{Obs}^{q}(#1)
}}
\newcommand\Obqh{\operatorname{Obs}^q_{\hbar=1}}
\newcommand\Obqobstr{\operatorname{Obs}^q_{\hbar=1,\Obstr}}
\newcommand\Obqhtriv{\operatorname{Obs}^q_{\hbar=1,0}}
\renewcommand{\L}{\mathscr{L}}
\begin{document}


\title{A Mathematical Analysis of the Axial Anomaly}
\author{Eugene Rabinovich}
\institute{Department of Mathematics, University of California, Berkeley\\ \email{erabin@math.berkeley.edu}}
\date{}
\maketitle
\begin{abstract}
As is well known to physicists, the axial anomaly of the massless free fermion in Euclidean signature is given by the index of the corresponding Dirac operator. We use the Batalin-Vilkovisky (BV) formalism and the methods of equivariant quantization of Costello and Gwilliam to produce a new, mathematical derivation of this result. Using these methods, we formalize two conventional interpretations of the axial anomaly, the first as a violation of current conservation at the quantum level and the second as the obstruction to the existence of a well-defined fermionic partition function. Moreover, in the formalism of Costello and Gwilliam, anomalies are measured by cohomology classes in a certain obstruction-deformation complex. Our main result shows that---in the case of the axial symmetry---the relevant complex is quasi-isomorphic to the complex of de Rham forms of the spacetime manifold and that the anomaly corresponds to a top-degree cohomology class which is trivial if and only if the index of the corresponding Dirac operator is zero.  
\end{abstract}

\begin{acknowledgements}
The author would like to thank Owen Gwilliam and Brian Williams for many incredibly helpful discussions and immeasurable guidance in this project. He has Owen Gwilliam to thank for pointing him in the direction of this work. He would also like to thank his advisor Peter Teichner for inviting him to the Max Planck Institut f\"ur Mathematik in Bonn, for encouraging the author to speak in his (Peter's) student seminars, and for his comments on an earlier draft of this paper. Furthermore, he would like to thank Matthias Ludewig for many helpful comments on an earlier draft. Finally, thanks are due to the Max Planck Institut f\"ur Mathematik in Bonn for its hospitality while much of the work on this project was completed. 

This research was supported in part by Perimeter Institute for Theoretical Physics. Research at Perimeter Institute is supported by the Government of Canada thrgough the Department of Innovation, Science and Economic Development and by the Province of Ontario through
the Ministry of Research and Innovation. Some revisions of the present work resulted from the author's stay at the Perimeter Institute, and the author is thankful to Ben Albert and Kevin Costello for helpful comments. 

The author would like to thank an anonymous referee for numerous and helpful suggestions.

This material is based upon work supported by the National Science Foundation Graduate
Research Fellowship Program under Grant No. DGE 1752814. Any opinions,
findings, and conclusions or recommendations expressed in this material are those of the
author and do not necessarily reflect the views of the National Science Foundation.
\end{acknowledgements}

\tableofcontents

\section{Introduction}
\label{sec: intro}
\subsection{Background}
The axial anomaly is the failure of a certain classical symmetry of the massless free fermion to persist after quantization (see \cite{BellJackiw} for an original reference on the topic). It is well known to physicists that the axial anomaly is measured precisely by the index of the Dirac operator (see, e.g., Chapter 22.2 of \cite{wein} or \cite{NRS}); the aim of this paper is to prove this fact in a mathematically rigorous context for perturbative quantum field theory (QFT). Namely, we use the framework for the study of anomalies in Euclidean signature developed by Costello and Gwilliam in \cite{CG2}.

Given a generalized Dirac operator $D$ on a $\Z/2$-graded vector bundle $V\to M$ over a Riemannian manifold $M$, the \textit{massless free fermion} is described by the equation of motion 
\[
D\phi = 0,
\]
for $\phi$ a section of $V$. Consider the operator $\Gamma$ which is the identity on even sections of $V$ and minus the identity on odd sections of $V$, i.e. $\Gamma$ is the chirality involution. Then, $\Gamma$ anti-commutes with $D$, since $D$ is odd for the $\Z/2$-grading. It follows that if $D\phi = 0$, then \[D\left( \Gamma \phi\right)=0.\] In other words, the operator $\Gamma$ preserves the equations of motion, so generates a $U(1)$ symmetry of the classical theory. This symmetry is known as the \textit{axial symmetry}. One can ask whether this symmetry persists after the massless free fermion is quantized. This is in general not the case; the \textit{axial anomaly} measures the obstruction to the promotion of this classical symmetry to a quantum one. More generally, we say that a classical symmetry is anomalous if it does not persist after quantization. Anomalies in fermionic field theories have been an object of renewed recent interest (\cite{fermpathintegral} and \cite{freed1}), in part because of their relevance to topological phases of matter.

The question of what it means mathematically to quantize a field theory is still an open one. However, the formalism of Costello and Gwilliam (\cite{CG1}, \cite{CG2}, \cite{cost}) is one approach to the perturbative quantization of field theories which has been able to reproduce many properties of quantum field theory---especially those pertaining to the observables of quantum field theories---that physicists have long studied. This formalism naturally includes a framework for studying symmetries, and it is within this framework that we prove our results.

From another (heuristic) vantage point, a fermionic anomaly is an obstruction to the existence of a well-defined fermionic partition function, in the following sense. To compute the partition function of a quantum field theory involving fermionic fields, one performs a path integral over the space of all quantum fields. To do this, one first fixes a number of pieces of geometric and topological data, such as a manifold $M$, a metric on $M$, a spin structure on $M$, and so forth. Let $\cB'$ denote the ``space'' of all relevant pieces of geometric and topological data. The scare quotes are there because we might need a more abstract notion of space (such as a stack) to appropriately describe the situation. Given a point $x\in \cB'$, we can form the space of fields, which is a product $\cB_x\times \cF_x$ of the fermionic fields $\cF_x$ and the non-fermionic fields $\cB_x$ at $x$. The non-fermionic fields $\cB_x$ may include a gauge field or a scalar field with which the fermionic fields interact. As $x$ varies, these spaces of fields fit together into something like a product $\cB\times \cF$ of fibrations over $\cB'$. In a given fiber $\cB_x\times \cF_x$, one can choose to perform the path integral by first integrating over $\cF_x$, choosing a fixed $b\in \cB_x$. Because the fermionic fields are coupled to the fields in $\cB_x$, the fermionic integration produces something which varies over $\cB$, i.e. depends both on the fixed non-fermionic field $b\in\cB_x$ and the point $x\in \cB'$. However, the fermionic integration does not canonically produce a number in general; it produces an element of a (super-)line depending on the background data. In other words, there is a line bundle 
\[
L \to \cB.
\]
This is called the \textit{determinant line bundle,} because its fiber over a point $y\in \cB$ is the natural home for the determinant of the Dirac operator corresponding to the massless free fermion theory with background data encoded in $y$. The fermionic path integral produces a section $\sigma$ of this bundle, which deserves to be called ``the partition function'' only when $L$ is endowed with a trivialization, for in that case, one can compare $\sigma$ to this trivialization to get an honest function.

There is a well-developed mathematical literature addressing many of the relevant issues in the case where one chooses $\cB=\cB'$ and restricts attention to a subspace of $\cB'$ which is an actual space (see \cite{bf1}, \cite{bf2}, \cite{freed2}). These ideas make maneuvers of physicists (e.g. zeta-function regularization) precise in the special context of families of massless free fermionic field theories. In the present article, we allow $\cB\to \cB'$ to be non-trivial, fix $x\in \cB$ (and therefore a manifold $M$ on which the theory lives), and study the dependence of $L$ only over the fiber $\cB_x$. In other words, we study the pullback
\[
\begin{tikzcd}
L'\arrow[r]\arrow[d] \arrow[rd, phantom, "\lrcorner", at start] & L\arrow[d]\\
\cB_x \arrow[r,hook] & \cB
\end{tikzcd}.
\] 
We will let $\cB_x$ be a space of perturbative gauge fields, i.e. $\cB_x$ will be the space of background gauge fields for the axial symmetry.

We note that the present work differs from the mathematical treatments in the vein of \cite{freed2} in two ways. The first is that the ``space'' $\cB_x$ is best understood as an object of formal derived deformation theory (\cite{DAGX}). In \cite{CG2}, the authors explain how perturbative field theory can be described with the language of formal derived deformation theory; the essence of this relationship is that, for the purposes of quantum field theory, formal derived deformation theory provides a nice unification of perturbation theory and gauge theory. In the context of symmetries of quantum field theory, the approach of \cite{CG2} amounts essentially to turning on a background gauge field for the symmetry in question and then studying the gauge-invariance of the partition function as a function of this background gauge field. In the present work, we will mainly use formal derived deformation theory for the geometric intuition it provides. In particular, we will prove homological algebraic results about certain modules over a commutative differential-graded algebra, and intuitions from formal derived deformation theory will allow us to interpret these results as related to the heuristic discussion above.

The second way in which our work differs from existing treatments of anomalies is that we focus on a general approach to field theories and their quantization known as the Batalin-Vilkovisky (BV) formalism. In the work of Costello and Gwilliam, this formalism is used to perform the quantization of the observables of field theories and their symmetries. Our work shows that, simply by ``turning the crank on the machine'' of BV quantization, one naturally recovers the axial anomaly. In other words, we study the axial anomaly using a framework that was not hand-made for the case of the massless free fermion.

We note also that we work entirely in Euclidean signature; for a mathematical discussion of the axial anomaly in Lorentzian signature, see \cite{BS}, which also provides a historical survey of the topic.

\subsection{Presentation of Main Results}
\label{subsec: mainresults}
This paper has two main results. First, we compute the equivariant quantum observables of a massless free fermion with a symmetry. Second, we compute the axial anomaly of the massless free fermion and show that it is given by the index of the associated Dirac operator. 

Costello and Gwilliam provide definitions of the notions of free quantum field theory and of symmetries of such theories. In their formalism, a symmetry is effected by the action of a differential-graded Lie algebra (dgla) $\sL$ on the space of fields, and one can study the \textit{obstruction} to the quantization of this symmetry, as well as the cochain complex of \textit{equivariant quantum observables}, which forms a differential-graded module for the commutative differential-graded algebra $C^\bullet(\sL)$ (the Chevalley-Eilenberg cochains of $\sL$). Our first main result is the following proposition:

\begin{uproposition}[Cf. Proposition \ref{prop: line}, Lemma 
\ref{lem: obstrtrivbund}]
Let $\sL$ be a dgla acting on the massless free fermion by symmetries, $C^\bullet(\sL)$ be the corresponding Chevalley-Eilenberg cochain complex, and $\sO(\sL[1])$ the underlying graded vector space of $C^\bullet(\sL)$. The equivariant quantum observables of the massless free fermion with action of $\sL$ are quasi-isomorphic to a $C^\bullet(\sL)$-module $P$ whose underlying graded vector space is $\sO(\sL[1])$. This module is isomorphic to the trivial such module if and only if the obstruction is trivial.
\end{uproposition}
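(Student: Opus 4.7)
The plan is to leverage the structure of BV quantization for free theories together with a Hodge-theoretic reduction. In the Costello--Gwilliam framework, the equivariant quantum observables of a free theory with symmetry by $\sL$ carry a natural $C^\bullet(\sL)$-module structure and, before passing to cohomology, have underlying graded vector space $C^\bullet(\sL) \otimes \Obq$, where $\Obq$ denotes the quantum observables of the massless free fermion alone. The strategy is to show first that $\Obq$ is quasi-isomorphic (over $\mbb C$) to a one-dimensional graded vector space $L$ --- a ``determinant line'' --- and then to transport this equivalence to produce a rank-one $C^\bullet(\sL)$-module $P$ with underlying graded space $C^\bullet(\sL) \otimes L$, which, up to an overall shift, is isomorphic to $\sO(\sL[1])$. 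The subsequent algebraic analysis will identify the module structure on $P$ with a twist of the free rank-one module by a degree-$1$ cocycle equal to the obstruction, from which both assertions follow.

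The first technical step is the reduction of $\Obq$ to a line. The free fermion has classical BV complex built from the Dirac operator $D$, and the quantum differential is $Q + \hbar \Delta_{BV}$. Using Hodge and Fredholm theory for $D$ (via the analytic hypotheses set up earlier in the paper), I would split the fermionic fields as $\ker D \oplus D(\cF)$. The factor arising from $D(\cF)$ admits an explicit contracting homotopy built from a partial inverse to $D$, so the BV complex is quasi-isomorphic to the finite-dimensional BV complex of $\ker D$, whose cohomology is the Berezinian of $\ker D$ concentrated in a single cohomological degree. This furnishes a deformation retract of $\Obq$ onto a one-dimensional graded vector space $L$.

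Applying the homological perturbation lemma to the equivariant complex $C^\bullet(\sL)\otimes \Obq$, equipped with its full differential including the $\sL$-action term, transfers both the differential and the $C^\bullet(\sL)$-module structure onto $C^\bullet(\sL)\otimes L$. On this free rank-one module the transferred differential has the form $d_{CE} + m_\Theta$ for some degree-$1$ element $\Theta \in C^\bullet(\sL)$: any differential making the free rank-one module into a dg $C^\bullet(\sL)$-module is of this shape, and $d^2 = 0$ forces $d_{CE}\Theta = 0$ since $\Theta^2 = 0$ automatically in odd degree. A careful unpacking of the homological perturbation formulas then identifies $\Theta$ with the equivariant BV obstruction cocycle constructed elsewhere in the paper, up to coboundary.

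For the final assertion, a $C^\bullet(\sL)$-module isomorphism $(\sO(\sL[1]), d_{CE} + m_\Theta) \to (\sO(\sL[1]), d_{CE})$ is multiplication by a degree-$0$ unit $u \in C^\bullet(\sL)$ intertwining the two differentials, which amounts to the equation $\Theta = u^{-1} d_{CE} u$. Necessity of $[\Theta] = 0$ is immediate, and sufficiency follows by exponentiating a degree-$0$ primitive of a cocycle representative of $[\Theta]$ in the pronilpotent augmentation ideal of $C^\bullet(\sL)$. The principal obstacle is the analytic content of the second paragraph: carrying out the Hodge decomposition rigorously for the distributional observables used in the Costello--Gwilliam formalism, and verifying that the contracting homotopy is compatible with both the BV Laplacian and the $\sL$-action, so that the homological perturbation lemma genuinely applies and produces the claimed $C^\bullet(\sL)$-module $P$ with $\Theta$ equal to the obstruction cocycle on the nose.
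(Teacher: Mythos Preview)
Your overall strategy matches the paper's: use the Hodge decomposition for $D$ to split $\sS$ into $\ker D$ and its orthogonal complement, build a contracting homotopy from a Green's operator, and then apply the homological perturbation lemma in stages to retract the equivariant observables onto a rank-one $C^\bullet(\sL)$-module. Your argument for the final equivalence---triviality of the module iff the twisting cocycle is exact, via multiplication by an exponential of a primitive---is exactly what the paper does in its Lemma~\ref{lem: obstrtrivbund}.

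The one place where your plan is underspecified, and where the paper takes a cleaner route, is the identification of the transferred cocycle $\Theta$ with the obstruction $\Obstr$. You propose to extract this by ``carefully unpacking the homological perturbation formulas''; the paper does not do this. Instead it proves a separate lemma: multiplication by $e^{I_{\hbar=1}[t]}$ is a $C^\bullet(\sL)$-linear cochain isomorphism from $\Obqh[t]$ to the complex $\Obqobstr[t]$ whose differential is $Q+d_\sL+\Delta_t-\Obstr[t]\cdot$ (the bracket term is traded for plain multiplication by the obstruction). Applying the same deformation-retraction argument to $\Obqobstr$ then gives $(\sO(\sL[1]),\,d_\sL+\Obstr\cdot)$ on the nose, with no need to analyze the perturbed HPL data. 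This bypasses exactly the difficulty you flag in your last paragraph, and it is the one idea you should add: rather than chasing $\Theta$ through the perturbation series, first conjugate away the interaction term by $e^{I}$ and only then run the retraction.
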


In Section \ref{sec: line}, we will comment on the interpretation of this proposition as an instantiation of the heuristic discussion of the line bundle $L'\to\cB$ in the previous subsection.

In the framework of Costello and Gwilliam, there is an \textit{obstruction-deformation complex} $C^\bullet_{red, loc}(\sL)$ and the obstruction is a cohomology class $[\Obstr]$ in this complex (see Definition-Lemma \ref{def: tobstr} and Lemma \ref{lem: obstr}). The obstruction is what physicists would call an ``anomaly'', and we will use the two terms more or less interchangeably.

Our aim is to compute the anomaly, at least for certain actions of a dgla $\sL$ on the massless free fermion. To accomplish this, one needs to understand the structure of $C^\bullet_{loc, red}(\sL)$. For the case of the axial symmetry, which corresponds to the  dgla $\sL=\Omega^\bullet_M$, we show the following two results:

\begin{uproposition}[Cf. Proposition \ref{prop: qism}]
\label{prop: introqism}
If $M$ is oriented, there is a canonical quasi-isomorphism 
\[
\Phi: \Omega^\bullet_M[n-1]\hookrightarrow C^\bullet_{loc, red}(\Omega^\bullet_M)
\]
of complexes of sheaves on $M$. Here $\Omega^\bullet_M$ is the abelian elliptic dgla encoding the axial symmetry of the massless free fermion theory.
\end{uproposition}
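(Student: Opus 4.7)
The plan is to define $\Phi$ explicitly as a wedge-and-integrate pairing, verify it is a chain map via Stokes's theorem, and establish that it is a quasi-isomorphism by combining the Costello--Gwilliam description of local cochains with the formal Poincar\'e lemma for jets and a Spencer-type resolution of $\cO_M$ as a left $D_M$-module.

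First, for a form $\omega \in \Omega^k_M(U)$, viewed as sitting in cohomological degree $k-n+1$ of $\Omega^\bullet_U[n-1]$, I would set $\Phi(\omega)$ to be the linear local functional $\alpha \mapsto \int_U \omega \wedge \alpha$ on compactly supported sections $\alpha$ of $\Omega^\bullet_M[1]$. Degree bookkeeping confirms $\Phi(\omega)$ lands in the correct summand of $C^\bullet_{loc, red}(\Omega^\bullet_M)$: only $\alpha \in \Omega^{n-k}_M$ contributes to the integral, and a linear functional on a graded piece in degree $(n-k)-1$ of $\sL[1]$ sits in degree $k-n+1$. Compatibility with the differentials is then immediate from Stokes: modulo total derivatives, $(-1)^k \omega \wedge d\alpha = -\, d\omega \wedge \alpha$, so the de Rham differential on the source intertwines (up to a controllable sign) with the Chevalley--Eilenberg differential on the target, which for the abelian dgla $\Omega^\bullet_M$ is just the dual of $d$.

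To show $\Phi$ is a quasi-isomorphism I would invoke the general Costello--Gwilliam identification
\[
C^\bullet_{loc, red}(\sL) \ \simeq \ \dens_M \otimes^{\mathbb L}_{D_M} \overline{C^\bullet_{red}}(J\sL),
\]
where $J\sL$ is the $\infty$-jet bundle of $\sL$ as a dg $D_M$-module. Specializing to $\sL = \Omega^\bullet_M$, each stalk $J_x \Omega^\bullet_M$ is the formal de Rham complex over $\widehat{\cO}_{M,x}$, so by the formal Poincar\'e lemma $J\Omega^\bullet_M \simeq \cO_M$ concentrated in degree $0$. Since the only nonvanishing generator of $(J\Omega^\bullet_M[1])^\vee$ is one-dimensional and odd (degree $+1$), the completed reduced symmetric algebra $\overline{C^\bullet_{red}}(J\Omega^\bullet_M)$ is quasi-isomorphic to $\cO_M[-1]$. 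Finally, the Koszul/Spencer resolution of $\cO_M$ as a left $D_M$-module, together with the orientation of $M$ used to identify $\dens_M$ with $\Omega^n_M$, gives $\dens_M \otimes^{\mathbb L}_{D_M} \cO_M \simeq \Omega^\bullet_M[n]$; combining with the $[-1]$ shift produces $\Omega^\bullet_M[n-1]$, as desired.

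The main technical obstacle I anticipate is verifying that this chain of abstract quasi-isomorphisms is literally realized, at the level of cochain representatives, by the explicit integration map $\Phi$ of the second paragraph. Each intermediate identification is natural (Stokes pairing, jet-evaluation at the origin, Spencer contraction with a volume form), but aligning them requires a careful sign and shift audit; and because the conclusion is sheaf-theoretic, all steps must be carried out functorially in the opens $U \subset M$. A useful sanity check, which also illustrates the shift, is that on a contractible $U$ both complexes should have cohomology $\mathbb R$ concentrated in degree $1-n$, matching the constant sheaf $\mathbb R_M$ shifted by $[n-1]$.
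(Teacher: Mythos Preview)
Your proposal is correct and follows essentially the same route as the paper: the Costello--Gwilliam identification of local cochains, the formal Poincar\'e lemma on jets to reduce $\sO_{red}(J(\Omega^\bullet_M))$ to $\cinfty_M[-1]$, and a Koszul/Spencer resolution to compute the derived tensor product with densities. The one point worth noting is that the paper resolves your anticipated ``alignment'' obstacle not by tracing the explicit $\Phi$ through the abstract chain, but by a shortcut: having shown abstractly that source and target have isomorphic cohomology, it then checks directly on open balls $U\cong\R^n$ that $\Phi(1)$ is not exact in $C^\bullet_{loc,red}(\sL_\R)(U)$ (using that the $\Sym$-degree-preserving differential forces any primitive to live in a zero space), which suffices since both sides are one-dimensional in the relevant degree.
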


The assumption of orientability is not necessary for a version of the above proposition to be true; see Remark \ref{rem: twisteddR}. 

\begin{thmmain}[Cf. Theorem \ref{thm: str}]
\label{thm: introstr}
The cohomology class of $\Obstr$ is equal to the cohomology class of 
\[
\Phi\left((-1)^{n+1}2\frac{\ind{D}}{\text{vol}(M)}dVol_g\right),
\]
where $\Phi$ is the quasi-isomorphism of the Proposition and $dVol_g$ is the Riemannian volume form on $(M,g)$. 
\end{thmmain}

In fact, we will have very similar theorems in case a Lie algebra $\fg$ acts on the fermions in a way that commutes with the corresponding Dirac operator. We prove the corresponding generalizations of the above results in Section \ref{sec: eq}.

We note that the obstruction-theoretic approach to anomalies is known also to physicists: see, e.g. Chapter 22.6 of \cite{wein}. In the present work, we give a mathematically precise version of this approach; in particular, we use the renormalization techniques of \cite{cost}. 

Readers with a background in physics are likely to be more familiar with the perspective on anomalies as violations of current conservation at the quantum level. This perspective is essentially dual to the one we highlight here; in fact, the formalism of Costello and Gwilliam in Chapter 12 of \cite{CG2} incorporates both perspectives---albeit at a very abstract level---and the authors freely move between the two. We comment on the current-conservation perspective in Section \ref{subsec: fujikawa}. 

Let us summarize. In this paper, we focus on two perspectives on anomalies in fermionic theories. In the first, anomalies are obstructions to the construction of a well-defined fermionic partition function, which perspective will be justified by Lemma \ref{lem: obstrtrivbund}. In the second perspective, anomalies are the obstruction to the persistence of a symmetry after quantization. This perspective is justified by the arguments of Section \ref{subsec: fujikawa}, using the obstruction theoretic result of the Main Theorem.

\subsection{Future Directions}
In this subsection, we outline a few natural extensions of the present work.

One of the novel realizations of Costello and Gwilliam is that the observables of a quantum field theory fit together into a local-to-global (on the spacetime manifold), cosheaf-like object known as a factorization algebra. In this paper, we use only the global observables on $M$. However, the factorization-algebraic structure of observables often reproduces familiar algebraic gadgets which mix geometry and algebra: \textit{locally constant} factorization algebras, as shown by Lurie (\cite{higheralgebra}), are equivalent to $E_n$-algebras, and \textit{holomorphic} factorization algebras produce vertex algebras (See chapter 5 of \cite{CG1} for the general construction, and \cite{ggw} for an example of the construction). It would be interesting to understand the analogous sort of structure in the case at hand.

Another interesting local-to-global aspect of field theories in the formalism of Costello and Gwilliam is that, given a theory on a manifold $M$, the formalism naturally produces a sheaf of theories on $M$. Moreover, the construction of the massless free fermion on $M$ depends only a metric, a $\Z/2$-graded vector bundle on $M$, and a generalized Dirac operator $D$. All three objects are local in nature, so that it is to be expected that the massless free fermion can be extended to a sheaf of theories on an appropriate site of smooth manifolds (i.e., manfiolds equipped with a metric, $\Z/2$-graded vector bundle, and Dirac operator). We would like to give sheaf-theoretic extensions of the main theorems of this paper.

In yet another direction, we would like to use the BV formalism to make analogous constructions to those in \cite{freed2}. Namely, in the case where we have a family of Dirac operators parametrized by a manifold $B$, we expect the BV formalism to produce a line bundle over $B$ just as we saw in the first subsection. We hope to give a BV-inspired construction of a metric and compatible connection, which can be used to probe the line bundle.

Finally, in the case that the anomaly does vanish, we would like to compute the partition function.

\subsection{Plan of the Paper}
The plan for the rest of the paper is as follows. In the next section, we provide a lightning review of basic concepts from the theory of generalized Dirac operators sufficient to define the theory of massless free fermions in the BV formalism. In Section \ref{sec: ferm}, we introduce the main example of the massless free fermion and its axial symmetry. This section relies heavily on the techniques of Costello and Gwilliam, the relevant background on which we review in the appendix. In Section \ref{sec: lemmas}, we discuss a few lemmas concerning BV quantization which appear implicitly in some form or other in the literature, but whose statements and proofs we make explicit in the present context. Next, in Section \ref{sec: line}, we prove the first proposition from Subsection \ref{subsec: mainresults}. In Section \ref{sec: main} (Subsections \ref{subsec: main} and \ref{subsec: comps}) , we show that the index of the Dirac operator completely characterizes the anomaly by proving more precise versions of the second Proposition above and the Main Theorem. Also in Section \ref{sec: main} (Subsection \ref{subsec: fujikawa}), we bring our discussion closer to the physics literature by describing the relationship between the axial anomaly and the existence of conserved currents in the massless free fermion quantum field theory. Finally, in Section \ref{sec: eq}, we prove equivariant generalizations of the preceding results.

\subsection{Notation and Conventions} 
\label{subsec: not}
We assemble here many conventions and notations that we use throughout the remainder of the text. Some may be explained where they first appear, but we believe it to be useful to the reader to collect them in one place.

\begin{itemize}
\item The references we cite are rarely the original or definitive treatments of the subject; they are merely those which most directly inspired and informed the work at hand. We have made every attempt to indicate which results are taken or modified from another source.
\item The free fermion theory will be formulated on a Riemannian manifold $M$. We will assume $M$ to be connected for simplicity, though our results are easily extended to non-connected $M$. 
\item If a Latin letter in standard formatting is used to denote a vector bundle, e.g. $E$, then that same letter in script formatting, e.g. $\sE$, is used to denote its sheaf of sections. \textbf{Letters in script formatting, except $\sO$, will always denote sheaves of smooth sections of vector bundles.} 
\item We use the notation $\cong$ for isomorphisms and $\simeq$ for quasi-isomorphisms (weak equivalences). 
\item We use the notation $\cinfty_M$ and $\Omega^k_M$ for the sheaves of smooth functions and smooth $k$-forms, respectively, on $M$. We use the notation $\underline{\R}$, $\Lambda^k T^*M$ for the corresponding vector bundles.
\item If $V$ is a finite-dimensional vector space, $V^\vee$ is the linear dual to $V$; if $V$ is a topological vector space, then $V^\vee$ is the continuous linear dual. If $V\to M$ is a finite-rank vector bundle, then $V^\vee \to M$ is the fiberwise dual to $V$. 
\item If $V_1\to M_1$ and $V_2\to M_2$ are vector bundles and $M_3$ is a manifold, then we denote by $V_1\boxtimes V_2$ the bundle $p_1^*(V_1)\otimes p_1^*(V_2)\to M_1\times M_2\times M_3$, where $p_1: M_1\times M_2\times M_3 \to M_1$ and $p_2: M_1\times M_2\times M_3\to M_2$ are the canonical projections. If $M_3$ is not explicitly mentioned, it is assumed to be $pt$.
\item We will use $\otimes$ to denote the completed projective tensor product of topological vector spaces as well as the algebraic tensor product of finite-dimensional vector spaces. The completed projective tensor product has the nice characterization that if $\sV(M)$ and $\sW(N)$ are vector spaces of global sections of vector bundles $V\to M$ and $W\to N$, then $\sV(M)\otimes \sW(N)$ is the space of global sections of the bundle $V\boxtimes W$ over $M\times N$.
\item If $V$ is a $\Z\times \Z/2$-graded vector space and $v$ is a homogeneous element of $v$, then $|v|$ and $\pi_v$ refer to the $\Z$ and $\Z/2$-degrees of $v$, respectively. We will also call $|v|$ the \textit{ghost number} of $v$ and $\pi_v$ the \textit{statistics} of $v$. This gives a convenient terminology for distinguishing between the various types of grading our objects will have; it is a terminology that conforms loosely to that of physics. We will sometimes also refer to ghost number as cohomological degree.
\item If $V\to M$ is a vector bundle, then $V^!$ is the bundle $V^\vee\otimes Dens_M$, where $Dens_M$ is the density bundle of $M$. $\sV^!$ will denote the sheaf of sections of $V^!$.
\item If $V$ is a $\Z$-graded vector space, then $V[k]$ is the $\Z$-graded vector space whose $i$-th homogeneous space is $V_{i+k}$. In other words, $V[k]$ is $V$ shifted \textit{down} by $k$ ``slots''.
\item If $M$ is compact and Riemannian with metric $g$, then for a one-form $\alpha$, 
\[
|\alpha|^2 : = g^{-1}(\alpha, \alpha)\in \cinfty(M).
\]
Here, $g^{-1}$ is the metric on $T^*M$ induced from $g$. There is a similar definition if $\alpha$ is a section of a vector bundle $V\to M$ with metric $(\cdot, \cdot)$.
\item If $V$ is a $\Z\times \Z/2$-graded vector space, then $Bilin(V)$ is the $\Z\times \Z/2$-graded vector space of bilinear maps $V\otimes V\to V$.
\item When dealing with $\Z\times \Z/2$-graded objects, we will always say ``graded (anti)-symmetric'' if we wish Koszul signs to be taken into account. If we simply say (anti)-symmetric, we mean that this notion is to be interpreted without taking Koszul signs into account.
\item If $v\in  \Sym(V)$ for some vector space $V$, then $v^{(r)}$ denotes the component of $v$ in $\Sym^r(V)$.
\item Let $V$ be a (non-graded) vector space and $\phi: V\to V$ a linear map. We denote by $\phi_{0\to1}$ the cohomological degree 1 operator 
\[
V \overset{\phi}{\longrightarrow} V[-1],
\]
and similarly for $\phi_{1\to 0}$. We denote by $\phi_{0\to 0}$ the cohomological degree 0 operator on $V\oplus V[-1]$ which acts by $\phi$ on $V$ and by 0 on $V[-1]$; similarly, $\phi_{1\to 1}$ is the operator on $V\oplus V[-1]$ which acts by 0 on $V$ and by $\phi$ on $V[-1]$.
\item If $\sL$ is an elliptic dgla, then we will use $d_{\sL}$ to denote the differential on the complex of Chevalley-Eilenberg cochains of $\sL$.
\end{itemize}


\section{Generalized Laplacians, Heat Kernels, and Dirac Operators}
\label{sec: review}
We present here a list of definitions and results relevant to our work, taking most of these from \cite{bgv}. Throughout, $(M,g)$ is a closed Riemannian manifold of dimension $n$ with Riemannian volume form $dVol_g$. We let $V\to M$ be a $\Z/2$-graded vector bundle with $V^+,V^-$ the even and odd bundles, respectively. We let $\mathscr V$ be the sheaf of smooth sections of $V$, and similarly $\sV^\pm$ the sheaf of smooth sections of $V^\pm$. We always use normal-font letters for vector bundles and script letters for the sheaves of sections of the corresponding vector bundles.


\begin{definition}
 A \textbf{$\Z/2$-graded metric bundle} is a metric bundle $V\to M$ with an orthogonal decomposition $V=V^+\oplus V^-$.
\end{definition}


\begin{definition}[Proposition 2.3 of \cite{bgv}]
	Let $V$ be a $\Z/2$-graded metric bundle. A \textbf{generalized Laplacian} on $V$ is a differential operator
	\[
	H:\mathscr V \to \mathscr V
	\]
	such that 
	\[
	[[H,f],f] = -2|df|^2,
	\]
	where we are thinking of smooth functions as operators which act on sections of $V$ by pointwise scalar multiplication .
	
	We say that $H$ is \textbf{formally self-adjoint} if for all $s,r\in \mathscr V$, 
	\[
	\int_M (s,Hr)dVol_g = \int_M(Hs,r)dVol_g.
	\]
	
\end{definition}

\begin{remark}
The equation defining a generalized Laplacian is a coordinate-free way of saying that $H$ is a second-order differential operator on $V$ whose principal symbol is just the metric, i.e., that in local coordinates, $H$ looks like $g^{ij} \partial_i \partial_j+\text{lower-order derivatives}$.
\end{remark}
\begin{remark}
We did not strictly need $V$ to be $\Z/2$-graded to define a generalized Laplacian, but since we only ever work in this case, we do not use this slightly different definition.
\end{remark}


\begin{definition}[Definition 3.36 of \cite{bgv}]
	A \textbf{Dirac operator} on $V$ is an odd differential operator
	\[
	D: \mathscr V^\pm \to \mathscr V^\mp
	\]
	such that $D^2$ is a generalized Laplacian. The definition of formal self-adjointness makes sense also for Dirac operators.
\end{definition}

\begin{example}
\label{ex: euler}
Let $V = \Lambda^\bullet T^*M$, with the $\Z/2$-grading given by the form degree mod 2. Then the operator $d+d^*$ is a generalized Dirac operator since its square $(d+d^*)^2$ is the Laplace-Beltrami operator. Here, $d^*$ is the formal adjoint of $d$, characterized by
\[
\int_M \left( \alpha, d\beta\right) dVol_g = \int_M \left( d^*\alpha, \beta\right) dVol_g.
\]
It is clear from this description that $d+d^*$ is formally self-adjoint, since it is the sum of an operator with its formal adjoint. Via Hodge theory, the kernel of $d+d^*$ is canonically identified with the de Rham cohomology $H^\bullet_{dR}(M)$.
\end{example}

There are numerous other examples of interest---for example, the traditional Dirac operator on an even-dimensional spin manifold and $\sqrt{2}\left(\bar\partial +\bar\partial^*\right)$ on a K\"ahler manifold---and our formalism will work for these as well. We refer the reader to Section 3.6 of \cite{bgv} for the details of the construction of these operators.


\begin{deflem}[Proposition 3.38 of \cite{bgv}]
\label{deflem: cliffaction}
If $V$ is a $\Z/2$ graded bundle over $(M,g)$ a Riemannian manifold and $D$ a Dirac operator on $V$, then \textbf{the Clifford action of 1-forms on $V$} 
\[
c: \Omega^1(M) \to \End^{odd}(V)
\]
is defined on exact forms by 
\begin{equation}
\label{eq: cliffaction}
c(df) = [D,f],
\end{equation}
and extended to all 1-forms by $\cinfty$-linearity. This action is well-defined.
\end{deflem}


\begin{theorem}[The Heat Kernel; Section 8.2.3 of \cite{CG2}]
\label{thm: heatkernel}
	Let $V$ be a $\Z/2$-graded metric bundle with metric $(\cdot, \cdot)$ and Dirac operator $D$. Write $H:=D^2$ for the generalized Laplacian corresponding to $D$. Then there is a unique \textbf{heat kernel} $k \in \Gamma(M\times M,V\boxtimes V)\otimes \cinfty(\R_{>0})$ satisfying: 
	\begin{enumerate}
		\item 
		\[
		\frac{d}{dt}k_t + (H\otimes 1)k_t = 0,
		\]
		where $k_t\in \Gamma(M\times M,V\boxtimes V)$ is the image of $k$ under the map \[id\otimes ev_t:\Gamma(M\times M,V\boxtimes V)\otimes \cinfty(\R_{>0})\to \Gamma(M\times M,V\boxtimes V)\] induced from the evaluation of functions at $t\in \R_{>0}$.
		\item For $s\in \sV$,
		\[
		\lim_{t\to 0}\int_{y\in M} id\otimes (\cdot,\cdot)(k_t(x,y)\otimes s(y)) dVol_g(y) = s(x),
		\]
where the limit is uniform over $M$ and is taken with respect to some norm on $V$.
	\end{enumerate}
	The scale $t$ heat kernel $k_t$ is the integral kernel of the operator $e^{-tH}$ in the sense that 
	\[
	\int_{y\in M}id\otimes (\cdot, \cdot)(k_t(x,y)\otimes s(y))dVol_g(y)= (e^{-tH}s)(x).
	\]
\end{theorem}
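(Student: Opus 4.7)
My plan is to construct the heat kernel via spectral theory and then verify the listed properties, although the parametrix construction of Berline--Getzler--Vergne would equally work and has the advantage of exposing the short-time asymptotic expansion. First, since $D$ is odd and formally self-adjoint, $H:=D^2$ is a formally self-adjoint, non-negative generalized Laplacian whose principal symbol at $\xi\in T^*M$ is $|\xi|^2\cdot\op{id}_V$; in particular $H$ is elliptic. On the closed Riemannian manifold $M$, standard elliptic theory (Gårding inequality plus compactness of $(1+H)^{-1}$ as an operator on $L^2(M,V)$) then provides a self-adjoint extension with purely discrete non-negative spectrum $\{\lambda_i\}_{i\geq 0}$ accumulating only at $+\infty$ and an orthonormal basis of smooth eigensections $\{\phi_i\}_{i\geq 0}$.

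Next, I would define $e^{-tH}$ by functional calculus and propose the candidate kernel
\[
k_t(x,y)\;=\;\sum_{i\geq 0} e^{-t\lambda_i}\,\phi_i(x)\boxtimes \phi_i^\flat(y),
\]
where $\phi_i^\flat$ denotes the section of $V^\vee$ obtained from $\phi_i$ via the fiberwise metric. The proof that this series converges in every $C^k$-norm on $M\times M$ for each fixed $t>0$ proceeds by combining the Weyl-type growth $\lambda_i\sim c\,i^{2/n}$ with elliptic regularity, which controls each $\|\phi_i\|_{C^k}$ by a polynomial in $\lambda_i$; the Gaussian damping $e^{-t\lambda_i}$ then dominates and also allows termwise differentiation in $t$. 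This yields $k_t\in\Gamma(M\times M,V\boxtimes V)\otimes\cinfty(\R_{>0})$ satisfying property (1), since $(\partial_t+H\otimes 1)(e^{-t\lambda_i}\phi_i\boxtimes\phi_i^\flat)=0$ term by term. For property (2), the completeness relation $\sum_i \phi_i(x)\boxtimes\phi_i^\flat(y)=\delta_{\op{diag}}$ holds distributionally, and combining the resulting convergence $e^{-tH}s\to s$ in $L^2$ with the smoothing property of $e^{-tH}$ (which upgrades the convergence to every Sobolev norm, hence via Sobolev embedding to the uniform norm) gives the claimed $t\to 0$ limit.

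Finally, for uniqueness, suppose $k_t$ and $k_t'$ both satisfy (1) and (2), and fix $s\in\sV$. The section $u_t(x):=\int_y(\op{id}\otimes(\cdot,\cdot))((k_t-k_t')(x,y)\otimes s(y))\,dVol_g(y)$ satisfies $\partial_t u_t+Hu_t=0$ with $\lim_{t\to 0}u_t=0$ uniformly. Since $H\geq 0$, the energy identity
\[
\frac{d}{dt}\|u_t\|^2_{L^2}\;=\;-2\langle u_t,Hu_t\rangle\;\leq\; 0
\]
forces $\|u_t\|_{L^2}\equiv 0$, and since $s$ was arbitrary we conclude $k_t=k_t'$. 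The step I expect to be the main technical obstacle is the $C^k(M\times M)$-convergence of the spectral sum defining $k_t$, since it requires quantitative control on $\|\phi_i\|_{C^k}$ in terms of $\lambda_i$; this is routine via elliptic regularity and Weyl's law, but it is the step in which one must pay for the abstract spectral-theoretic shortcut. The identification of $k_t$ as the integral kernel of $e^{-tH}$ then follows immediately from expanding $s$ in the eigenbasis.
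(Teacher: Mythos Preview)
The paper does not supply its own proof of this theorem; it is stated as a background result attributed to \cite{CG2} (and implicitly to \cite{bgv}). Your spectral-theoretic construction is correct, and it is in fact precisely the characterization the paper records later as Lemma~\ref{lem: spectrum}, so your approach is entirely consonant with how the heat kernel is actually used in the body of the paper.

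One caveat worth flagging: you invoke at the outset that $D$ is formally self-adjoint, which is not part of the theorem statement as written. This hypothesis is assumed everywhere the theorem is applied in the paper (see the construction of the massless free fermion and Lemma~\ref{lem: spectrum}), so the restriction is harmless here. But for a general Dirac operator---where $H=D^2$ is elliptic with positive-definite leading symbol yet need not be self-adjoint---the orthonormal eigenbasis expansion is not available, and one must fall back on the parametrix construction of \cite{bgv} that you mention. That construction has the further advantage of yielding the short-time asymptotic expansion of $k_t$ along the diagonal, which the spectral sum does not directly exhibit; the paper does not need those asymptotics, but they are what underlies the local index computations in the broader literature.
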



\begin{definition}[Proposition 3.48 of \cite{bgv}]
	Let $D^\pm$ denote $D\mid_{\sV^\pm}$ for $D$ a formally self-adjoint Dirac operator $D$. The \textbf{index} $\ind(D)$ of $D$ is $\dim(\ker(D^+))-\dim(\coker(D^+))$. By the formal self-adjointness of $D$, we have also the alternative formula \[\ind(D) = \dim(\ker(D^+))-\dim(\ker(D^-))=:\text{sdim} (\ker D)\]
for the index of $D$.
\end{definition}

\begin{example}
The index of the Dirac operator from Example \ref{ex: euler} is 
\[
\sum_{i=0}^n (-1)^i \dim (\ker (d+d^*)\mid_{\Omega^i_{dR}})= \sum_{i=0}^n (-1)^i \dim H^i_{dR}(M)=\chi(M),
\]
where $\chi(M)$ is the Euler characteristic of $M$.
\end{example}


The following definition can be found in the discussion preceding Proposition 1.31 of \cite{bgv}
\begin{definition}
	If $\phi: V\to V$ is a grading-preserving endomorphism of the super-vector space $V$, then the \textbf{supertrace of $\phi$} is
	\[
	\Str(\phi) := \Tr(\phi\mid_{V^+})-\Tr(\phi\mid_{V^-}).
	\]
Let $\Gamma: V\to V$ denote the operator which is the identity on $V^+$ and minus the identity on $V^-$. Then $\Str(\phi) = \Tr(\phi \Gamma)$.
\end{definition}


With these definitions in place, we can finally state the following 
\begin{theorem}[Theorem 3.50 of \cite{bgv}]
\label{thm: mcs}
Let $V$ be a Hermitian, $\Z/2$-graded vector bundle on a closed Riemannian manifold $M$, with $dVol_g$ the Riemannian volume form on $M$. Let $D$ be a formally self-adjoint Dirac operator on $V$, with $k_t$ the heat kernel of $D^2$. Then
	\begin{equation}
		\label{eq: mcs}
		\ind(D)=\int_M \Str(k_t(x,x)) dVol_g(x)=\Str(e^{-tD^2}),
	\end{equation}
	where the last trace is an $L^2$ trace of the bounded operator $e^{-tD^2}$.
\end{theorem}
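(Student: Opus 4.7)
The plan is the classical McKean--Singer argument: establish that $\Str(e^{-tD^2})$ is independent of $t$, then let $t\to\infty$ to recover $\ind(D)$.

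First I would check that everything is trace-class. Since $M$ is compact and $e^{-tD^2}$ has a smooth integral kernel $k_t$ (by Theorem \ref{thm: heatkernel} and standard elliptic regularity applied to the Dirac Laplacian), the operator is trace-class on $L^2(V)$. Moreover, for any such integral operator with smooth kernel on a compact manifold, the trace equals the integral of the kernel along the diagonal:
\[
\Tr(e^{-tD^2}) = \int_M \tr(k_t(x,x))\,dVol_g(x),
\]
and passing to the $\Z/2$-graded version gives the identity
\[
\Str(e^{-tD^2}) = \int_M \Str(k_t(x,x))\,dVol_g(x).
\]
This takes care of the second equality in \eqref{eq: mcs} for every $t>0$.

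Next I would show $\frac{d}{dt}\Str(e^{-tD^2})=0$. Formally,
\[
\tfrac{d}{dt}\Str(e^{-tD^2}) = -\Str(D^2 e^{-tD^2}) = -\tfrac{1}{2}\Str\bigl(\{D,\,D\,e^{-tD^2}\}\bigr),
\]
using $[D, e^{-tD^2}]=0$. Now $D$ is odd and $De^{-tD^2}$ is odd, and a standard computation gives $\Str(AB)=-\Str(BA)$ when both $A,B$ are odd, so $\Str(\{D,De^{-tD^2}\})=0$. To make this rigorous one must justify that both $D\cdot(De^{-tD^2})$ and $(De^{-tD^2})\cdot D$ are trace-class and satisfy the cyclicity relation; since $e^{-tD^2}$ is smoothing, composition with the differential operator $D$ on either side still yields an operator with smooth Schwartz kernel, so trace-class-ness and the required identity $\Str(AB)=(-1)^{|A||B|+1}\Str(BA)$ follow from Fubini applied to the kernel expression. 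The main technical obstacle is exactly this careful handling of the cyclic property for unbounded-but-smoothing operators and the justification of differentiation under the trace, which I would settle by writing $e^{-tD^2}=e^{-(t/2)D^2}\cdot e^{-(t/2)D^2}$ and distributing so that each factor in any composition is already trace-class (or even Hilbert--Schmidt), reducing everything to the standard cyclic property for Hilbert--Schmidt operators.

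Finally I would take $t\to\infty$. By spectral theory $D^2$ is a non-negative self-adjoint elliptic operator, so it has a discrete spectrum $0\le \lambda_0 \le \lambda_1 \le \cdots \to \infty$ with smooth eigensections, and the graded eigenspaces for $\lambda>0$ are paired by $D$: if $\phi\in\sV^+$ satisfies $D^2\phi=\lambda\phi$ with $\lambda>0$, then $D\phi\in\sV^-$ also satisfies $D^2(D\phi)=\lambda D\phi$, and $\phi\mapsto \lambda^{-1/2}D\phi$ is an isomorphism between the corresponding positive-eigenvalue eigenspaces in $\sV^+$ and $\sV^-$. Hence these contributions to the supertrace cancel for every $t$, and
\[
\Str(e^{-tD^2}) \;=\; \dim\ker(D^2\mid_{\sV^+}) - \dim\ker(D^2\mid_{\sV^-}).
\]
Because $D$ is formally self-adjoint, $\ker D^2 = \ker D$ (as $\langle D^2\phi,\phi\rangle=\|D\phi\|^2$), so the right-hand side equals $\operatorname{sdim}(\ker D)=\ind(D)$ by the definition given before the theorem. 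Combined with the $t$-independence established above, this gives $\Str(e^{-tD^2})=\ind(D)$ for all $t>0$, completing the proof.
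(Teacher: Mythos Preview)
The paper does not actually supply its own proof of this statement; it is quoted as Theorem~3.50 of \cite{bgv} and treated as background. The only commentary the paper offers is the remark following Lemma~\ref{lem: diagramcomputation}, which summarizes the BGV proof as having two steps: $t$-independence of $\Str(e^{-tD^2})$, followed by computation of the $t\to\infty$ limit. Your proposal follows exactly this classical outline and is correct.

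One small observation: your final step does more than take a limit. By pairing the positive eigenspaces of $D^2$ on $\sV^+$ and $\sV^-$ via $\phi\mapsto\lambda^{-1/2}D\phi$, you show directly that the nonzero-eigenvalue contributions to $\Str(e^{-tD^2})$ cancel for \emph{every} $t>0$, so that $\Str(e^{-tD^2})=\operatorname{sdim}(\ker D)=\ind(D)$ already holds for all $t$. This makes your second step (the differentiation argument showing $\frac{d}{dt}\Str(e^{-tD^2})=0$) logically redundant, though of course it is the step that the paper's later remark highlights as having a Feynman-diagrammatic interpretation.
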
 
 This is the \textbf{McKean-Singer theorem}.

We will also need a few technical results that will be useful especially in Section \ref{sec: eq}. The first is 

\begin{lemma}[Proposition 3.48 of \cite{bgv}]
\label{lem: hodge}
If $D$ is a formally self-adjoint Dirac operator over a compact manifold, then 
\[
\sV^{\pm} = \ker(D^\pm)\oplus \Im(D^\mp),
\]
where the decomposition is orthogonal with respect to the metric on $\sV$ induced from $(\cdot, \cdot)$.
\end{lemma}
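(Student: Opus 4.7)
The plan is to reduce the statement to standard elliptic theory on a closed manifold, using that $D$ is elliptic and formally self-adjoint. First I would verify ellipticity of $D^\pm$: since $D^2$ is a generalized Laplacian, its principal symbol is $|\xi|^2 \mathrm{id}_V$, so the principal symbol $\sigma(D)(\xi)$ satisfies $\sigma(D)(\xi)^2 = |\xi|^2 \mathrm{id}_V$. Hence $\sigma(D^\pm)(\xi): V^\pm \to V^\mp$ is invertible for $\xi\neq 0$, and $D^\pm$ is elliptic. Standard Fredholm theory for elliptic operators on the closed manifold $M$ then gives that $\ker(D^\pm) \subset \sV^\pm$ is finite-dimensional (consisting of smooth sections by elliptic regularity) and that $D^\pm: H^{s+1}(V^\pm) \to H^s(V^\mp)$ has closed range for every $s$.

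Next I would establish the orthogonality. If $s \in \ker(D^+)$ and $t = D^- r \in \Im(D^-) \subset \sV^+$ with $r \in \sV^-$, then formal self-adjointness of $D$ gives
\[
\int_M (s, t)\, dVol_g = \int_M (s, D^- r)\, dVol_g = \int_M (D^+ s, r)\, dVol_g = 0,
\]
and analogously with the roles of $+$ and $-$ swapped. So $\ker(D^\pm) \perp \Im(D^\mp)$ in $L^2$.

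The remaining step, spanning, is the main work. I would argue as follows: by formal self-adjointness, the $L^2$-adjoint of the bounded operator $D^-: H^1(V^-) \to L^2(V^+)$ is $D^+: H^1(V^+) \to L^2(V^-)$ (after accounting for the Sobolev setup). Since $D^-$ has closed range in $L^2(V^+)$, the closed range theorem yields
\[
L^2(V^+) = \ker(D^+_{L^2}) \oplus \overline{\Im(D^-)},
\]
orthogonally, and the right-hand closure is already closed. Here $\ker(D^+_{L^2})$ agrees with $\ker(D^+) \subset \sV^+$ by elliptic regularity. Finally, given a smooth $u \in \sV^+$ orthogonal to $\ker(D^+)$, write $u = D^- r$ with $r \in H^1(V^-)$; elliptic regularity applied to $D^- r = u \in \sV^+$ (using ellipticity of $D^-$) promotes $r$ to a smooth section in $\sV^-$. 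This gives the desired decomposition inside the smooth sections.

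The main obstacle is this last spanning step, since a priori the decomposition holds only on $L^2$-completions and one needs elliptic regularity to descend to smooth sections. An alternative route, perhaps more in line with the heat-kernel methods used elsewhere in the paper, is to observe that $e^{-tD^2}$ converges as $t \to \infty$ to the $L^2$-orthogonal projector onto $\ker(D^2) = \ker(D^+)\oplus \ker(D^-)$, while $\mathrm{id} - e^{-tD^2} = D^2 \int_0^t e^{-sD^2}\, ds$ has image in $\Im(D^2) \subset \Im(D^\mp)$; this directly realizes any smooth section as a sum of an element of $\ker(D^\pm)$ and an element of $\Im(D^\mp)$, bypassing the abstract closed range theorem.
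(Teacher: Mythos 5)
Your proposal is correct. Note, however, that the paper does not actually prove this lemma: it is imported verbatim as Proposition 3.48 of \cite{bgv}, so there is no in-text argument to compare against. Your first route is the standard elliptic-theory proof (ellipticity of $D^\pm$ from $\sigma(D)(\xi)^2=|\xi|^2\,\mathrm{id}$, orthogonality from formal self-adjointness, spanning from closed range plus elliptic regularity to descend from $H^1$ to smooth sections), and all the steps are sound; the only loose phrase is calling $D^+$ ``the $L^2$-adjoint'' of the bounded map $D^-:H^1\to L^2$, but what you actually use --- that the $L^2$-orthocomplement of $\Im(D^-)$ is the distributional kernel of $D^+$, which equals the smooth kernel by elliptic regularity --- is the correct statement. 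Your heat-kernel alternative is closer in spirit to the toolkit the paper does develop (its Lemmas on the spectrum of $\bar H$ and on $e^{-tD^2}\to P_\pm$); the one point to make explicit there is the $t\to\infty$ limit of $D^2\int_0^t e^{-sD^2}u\,ds$: writing $\int_0^t e^{-sD^2}(1-P)u\,ds\to G(1-P)u$ with $G$ the Green's operator (which exists and preserves smoothness by the discrete spectral decomposition) shows the limit lands in $\Im(D^2)\subset\Im(D^\mp)$ on smooth sections, completing that version of the argument.
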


\begin{remark}
This is a special case of a general Hodge decomposition that applies for any elliptic complex with a metric on the corresponding bundle of sections.
\end{remark}

The second result is Proposition 2.37 of \cite{bgv}, as used in the proof of the McKean-Singer Theorem.
\begin{lemma}
\label{lem: infraredkernel}
If $D$ is a formally self-adjoint Dirac operator, and if $P_{\pm}$ is the orthogonal projection onto  $\ker(D^\pm)$, then for $t$ large,
\[
\left |\Tr(\left. e^{-tD^2}\right|_{\sV^\pm}-P_\pm)\right| \leq C\cdot e^{-t\lambda_1/2},
\]
where $\lambda_1$ is the smallest non-zero eigenvalue of $D^2$ and $C$ is a constant (i.e. independent of $t$).
\end{lemma}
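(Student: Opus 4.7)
The plan is to invoke the spectral theorem for $D^2$ and reduce the trace estimate to controlling a sum of exponentials. Since $D^2$ is a non-negative, formally self-adjoint elliptic operator acting on the sections of $V^\pm$ over the closed manifold $M$, standard elliptic theory yields a discrete spectrum $0 \leq \mu_0 \leq \mu_1 \leq \cdots \to \infty$ (with multiplicity) together with an $L^2$-orthonormal basis $\{\phi_k\}$ of smooth eigensections of $D^2|_{\sV^\pm}$. Formal self-adjointness of $D$ yields $\langle D\phi, D\phi\rangle = \langle D^2\phi, \phi\rangle$, so $\ker(D^2|_{\sV^\pm}) = \ker D^\pm$; consequently every nonzero eigenvalue on $\sV^\pm$ is at least $\lambda_1$, and
\[
e^{-tD^2}|_{\sV^\pm} - P_\pm \;=\; \sum_{\mu_k > 0} e^{-t\mu_k}\, \langle \phi_k, \,\cdot\,\rangle\, \phi_k.
\]

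To extract the exponential decay I would then apply the elementary splitting $e^{-t\mu_k} = e^{-t\mu_k/2} \cdot e^{-t\mu_k/2}$ together with the two bounds $e^{-t\mu_k/2} \leq e^{-t\lambda_1/2}$ (from $\mu_k \geq \lambda_1$) and, valid for $t \geq 2$, $e^{-t\mu_k/2} \leq e^{-\mu_k}$. Taking traces and summing yields
\[
\left|\Tr\!\left(e^{-tD^2}|_{\sV^\pm} - P_\pm\right)\right| \;=\; \sum_{\mu_k > 0} e^{-t\mu_k} \;\leq\; e^{-t\lambda_1/2} \sum_{\mu_k > 0} e^{-\mu_k} \qquad (t \geq 2),
\]
so $C := \sum_{\mu_k > 0} e^{-\mu_k}$ provides the bound for all $t \geq 2$; the inequality extends to arbitrary $t \geq 0$ after enlarging $C$ to absorb the continuous, bounded behavior on the compact interval $[0,2]$.

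Finally, I would verify that $C$ is finite by appealing to the trace-class property of $e^{-D^2}|_{\sV^\pm}$. By Theorem \ref{thm: heatkernel}, the scale-$1$ heat kernel $k_1$ is a smooth section of $V \boxtimes V$ on $M \times M$, so $\Tr\!\left(e^{-D^2}|_{\sV^\pm}\right) = \sum_k e^{-\mu_k}$ is the integral of a continuous function over the compact manifold $M$ and is therefore finite. The main technical input is this existence of a smooth heat kernel, which in turn furnishes the trace-class property; once that is granted, the remainder is elementary bookkeeping. The only other delicate point is the identification $\ker D^2 = \ker D$, which follows in one line from formal self-adjointness of $D$.
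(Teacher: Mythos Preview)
Your argument is correct. The paper does not supply its own proof of this lemma; it merely cites Proposition~2.37 of \cite{bgv} (see the sentence introducing Lemma~\ref{lem: infraredkernel}). Your spectral argument---write $e^{-tD^2}|_{\sV^\pm}-P_\pm$ in an eigenbasis, split $e^{-t\mu_k}=e^{-t\mu_k/2}e^{-t\mu_k/2}$, and bound one factor by $e^{-t\lambda_1/2}$ and the other by $e^{-\mu_k}$ for $t\geq 2$---is exactly the standard proof one finds in \cite{bgv}, so there is nothing substantively different to compare. The two points you flag as delicate (the identification $\ker D^2|_{\sV^\pm}=\ker D^\pm$ via formal self-adjointness, and the finiteness of $\sum_{\mu_k>0}e^{-\mu_k}$ via smoothness of the time-$1$ heat kernel) are both handled correctly; the latter is also available directly from Lemma~\ref{lem: spectrum}, which asserts that $e^{-tD^2}$ is trace-class.
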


Finally, we present useful analytic facts about the spectrum of a Laplacian on a compact manifold.
\begin{lemma}[Cf. Proposition 2.36 of \cite{bgv}]
\label{lem: spectrum}
If $H$ is a formally self-adjoint generalized Laplacian on a compact manifold, then $H$ has a unique self-adjoint extension $\bar H$ with domain a subspace of $L^2$ sections of $V$; $\bar H$ has discrete spectrum, bounded below, and each eigenspace is finite dimensional, with smooth eigenvectors. Letting $\phi_j$ be an eigenvector for $\bar H$ with eigenvalue $\lambda_j$, we have
\[
k_t=\sum_{j} e^{-t\lambda_j} \phi_j\otimes \phi_j.
\]
Moreover, $e^{-t\bar H}$ is compact (and therefore bounded and trace-class).
\end{lemma}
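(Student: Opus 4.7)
The plan is to assemble this standard spectral theorem from several well-known ingredients: elliptic regularity, Rellich's compactness theorem on a compact manifold, and the spectral theorem for compact self-adjoint operators, together with the uniqueness of the heat kernel (Theorem \ref{thm: heatkernel}) to identify the series expansion.

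First I would establish essential self-adjointness. Since $H$ is elliptic of order $2$ with positive-definite principal symbol (the metric), integration by parts shows that $\langle H\phi,\phi\rangle \geq -C\|\phi\|_{L^2}^2$ for some constant $C$ depending only on the first-order part of $H$, so after replacing $H$ by $H + C + 1$ we may assume $H$ is strictly positive. Formal self-adjointness then gives a densely defined symmetric operator on $\sV$, and ellipticity combined with the compactness of $M$ shows that $\Im(H+\lambda)$ is dense in $L^2(V)$ for $\lambda$ large (inverting via a parametrix and elliptic regularity), which yields a unique self-adjoint extension $\bar H$ with domain $H^2(V)$, the second Sobolev space of sections.

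Next I would show the resolvent $(\bar H + \lambda)^{-1}$ is compact on $L^2(V)$. By elliptic regularity, $(\bar H+\lambda)^{-1}$ maps $L^2(V)$ continuously into $H^2(V)$, and the Rellich-Kondrachov theorem on the compact manifold $M$ gives that $H^2(V) \hookrightarrow L^2(V)$ is a compact embedding. Hence $(\bar H+\lambda)^{-1}$ is a compact, self-adjoint operator on $L^2(V)$. The spectral theorem for compact self-adjoint operators then produces an orthonormal basis $\{\phi_j\}$ of $L^2$-eigenvectors whose eigenvalues $\mu_j$ are real, of finite multiplicity, and accumulate only at $0$. Translating back, the $\phi_j$ are eigenvectors of $\bar H$ with real eigenvalues $\lambda_j$ of finite multiplicity, bounded below and accumulating only at $+\infty$. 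Elliptic regularity applied to $\bar H\phi_j = \lambda_j\phi_j$ upgrades each $\phi_j$ to a smooth section.

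For the heat kernel expansion, I would define the candidate kernel
\[
\til k_t(x,y) := \sum_j e^{-t\lambda_j}\,\phi_j(x)\otimes\phi_j(y),
\]
and verify convergence in $\Gamma(M\times M, V\boxtimes V)$ for $t>0$ using Weyl-type eigenvalue growth estimates together with elliptic $L^2$-to-$C^k$ bounds on the $\phi_j$ (i.e., $\|\phi_j\|_{C^k}\leq C_k(1+\lambda_j)^{N_k}$ via Sobolev embedding), so that $e^{-t\lambda_j}$ dominates all polynomial growth. Term-by-term differentiation shows $\til k_t$ satisfies the heat equation and, as $t\to 0$, the $L^2$ completeness of $\{\phi_j\}$ gives the initial condition of Theorem \ref{thm: heatkernel}. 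By the uniqueness clause of that theorem, $k_t = \til k_t$. Finally, compactness (in fact, trace-class-ness) of $e^{-t\bar H}$ follows because $\sum_j e^{-t\lambda_j}<\infty$ for each $t>0$, again by Weyl growth.

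The main obstacle I expect is the technical convergence argument for $\til k_t$ in the $C^\infty$ topology on $M\times M$: one needs uniform control of $\|\phi_j\|_{C^k}$ in terms of $\lambda_j$, which requires iterating elliptic regularity and invoking Weyl's asymptotic $\lambda_j\sim c\, j^{2/n}$ (or at least a polynomial lower bound) to ensure that $e^{-t\lambda_j}$ beats any polynomial factor. Everything else follows from standard functional-analytic packaging; the characterization of $\bar H$ as the unique self-adjoint extension is perhaps the most subtle non-analytic point, and is handled as above by noting that $H\pm i$ has dense range via the parametrix construction on the compact manifold $M$.
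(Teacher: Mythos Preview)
Your proof sketch is correct and follows the standard route (essential self-adjointness via ellipticity on a compact manifold, compact resolvent via Rellich, spectral theorem for compact self-adjoint operators, then identification of the eigenfunction expansion with $k_t$ via the uniqueness clause of Theorem~\ref{thm: heatkernel}). Note, however, that the paper does not supply its own proof of this lemma: it is stated with a citation to Proposition~2.36 of \cite{bgv} and used as background input. Your argument is essentially a reconstruction of the proof given there, so there is nothing to compare against in the paper itself.
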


\section{The Massless Free Fermion and its Axial Symmetry}
\label{sec: ferm}
In this section, we introduce the main examples which will concern us in the rest of the paper, namely the example of the massless free fermion and its axial symmetry. We use the formalism of (equivariant) Batalin-Vilkovisky (BV) quantization and renormalization, as developed in \cite{cost,CG2,CG1}. The appendix consists of a self-contained summary of all the relevant background material; we encourage the reader to peruse it as necessary. To the reader already familiar with this material, we note that in our work, we allow the space of fields of a free field theory and all objects built from this space of fields to possess an extra $\Z/2$-grading which tracks particle statistics---fermionic or bosonic, corresponding to odd or even $\Z/2$-grading, respectively. In this convention for the usage of the term ``statistics'', ghosts, anti-fields, and anti-ghosts for fermionic fields have fermionic statistics. Of course, fermionic ghosts and anti-fields are mutually commuting, so our usage of the term ``statistics'' differs from the conventional physics usage.

In the formalism of \cite{cost,CG1}, a free BV theory is specified by a triple $(\sF,Q,\ip_{loc})$, where $(\sF,Q)$ is an elliptic complex on a manifold $N$ and $\ip$ is a degree --1 pairing $F\otimes F\to \text{Dens}_N$. The triple is required to satisfy some conditions elaborated in the appendix. It is also possible to define what is meant by a gauge-fixing of a free BV theory and an action of an elliptic differential graded Lie algebra (dgla) $\sL$ on a free BV theory $\sF$. Given such an action, one can define cochain complexes $\Obcl$ and $\Obq{}[t]$ of equivariant classical observables and equivariant scale-$t$ quantum observables for the action. Moreover, there is a scale-$t$ obstruction $\Obstr[t]$ and its $t\to 0$ limit exists and is denoted $\Obstr$. The aim of this section is to present an example of a free BV theory with an action of an elliptic dgla, which we call the axial symmetry of the massless free fermion. The aim of the remainder of the paper will be to better understand the structure of $\Obq{}[t]$ and $\Obstr$.

We will show that a formally self-adjoint Dirac operator defines a free BV theory; first, though, we need to establish some notation. Let $(M,g)$ be, as in Section \ref{sec: review}, a Riemannian manifold with Riemannian density $dVol_g$, $V$ a $\Z/2$-graded metric bundle on $M$ with metric $(\cdot, \cdot)$, and $D$ a formally self-adjoint Dirac operator. Define $\Gamma: \sV\to \sV$ to be $-id_{\sV^-}\oplus id_{\sV^+}$. Let \[S:= V\oplus V[-1],\] with purely odd $\Z/2$ grading (and let $\sS$ denote the sheaf of sections of $S$), and
\[
\ip[f_1, f_2]_{loc} =(f_1,f_2)dVol_g
\]
when either $|f_1|=0$ and $|f_2|=1$ or vice versa. As described in the Notation and Conventions subsection of Section \ref{sec: intro}, we let $D_{0\to 1}$ denote the operator of degree 1 on $\sS$ which is simply given by $D$, and similarly for $D_{0\from1}$, $\Gamma_{0\to0}$, etc. With all this notation in place, we can state and prove the following 

\begin{lemma}
$(\sS,D_{0\to1}\Gamma_{0\to0}, \ip_{loc})$ is a free BV theory, which we call the \textbf{massless free fermion},  and $\Gamma_{0\to 0} D_{1\to 0}$ is a gauge-fixing for this theory. 
\end{lemma}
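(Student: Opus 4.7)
The statement packages two claims: that the triple $(\sS, D_{0\to 1}\Gamma_{0\to 0}, \ip_{loc})$ satisfies the axioms for a free BV theory, and that $Q^{GF} := \Gamma_{0\to 0} D_{1\to 0}$ is a gauge-fixing for it. My plan is to verify each of the defining conditions in turn; all of them are elementary consequences of the formal self-adjointness of $D$, the identity $\Gamma^2 = 1$, and the anticommutation $D\Gamma = -\Gamma D$ (which holds because $D$ is odd for the $\Z/2$-grading on $V$ and $\Gamma$ is the corresponding chirality involution).

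First I would show that $(\sS, Q)$ with $Q := D_{0\to 1}\Gamma_{0\to 0}$ is a two-term elliptic complex. The identity $Q^2 = 0$ is automatic since there is nothing in cohomological degree $2$. Ellipticity reduces to fiberwise invertibility off the zero section of the principal symbol of $D\Gamma$; by Definition-Lemma \ref{deflem: cliffaction} the symbol of $D$ at $\xi$ is Clifford multiplication $c(\xi)$, and $c(\xi)^2 = |\xi|^2 \cdot \mathrm{id}$ because $D^2$ is a generalized Laplacian, so $c(\xi)\circ\Gamma$ is invertible for $\xi\neq 0$. Next I would check that $\ip_{loc}$ has cohomological degree $-1$ (it pairs degree $0$ against degree $1$ and outputs a density), and that its graded (anti)symmetry and non-degeneracy --- with respect to the mixed $\Z\times \Z/2$ grading --- follow from the symmetry and non-degeneracy of the fiberwise metric $(\cdot,\cdot)$ together with the convention that $S$ sits in purely odd statistics.

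The conceptual content of the first claim is the graded skew-self-adjointness of $Q$, which I would deduce via
\begin{align*}
\int_M (D\Gamma s_1, s_2)\,dVol_g &= \int_M (\Gamma s_1, D s_2)\,dVol_g \\
&= \int_M (s_1, \Gamma D s_2)\,dVol_g \\
&= -\int_M (s_1, D\Gamma s_2)\,dVol_g,
\end{align*}
using in turn the formal self-adjointness of $D$, the self-adjointness of $\Gamma$ (since the decomposition $V = V^+\oplus V^-$ is orthogonal), and the anticommutation $D\Gamma = -\Gamma D$.

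Finally I would handle the gauge-fixing claim. The operator $Q^{GF}$ is manifestly of cohomological degree $-1$, squares to zero (for the same bookkeeping reason as $Q^2=0$), and is graded skew-self-adjoint by the same calculation as for $Q$. The only computation of substance is that $[Q, Q^{GF}]$ is a generalized Laplacian: on $\sV$ only $Q^{GF}Q$ contributes, giving $\Gamma D D \Gamma$, and on $\sV[-1]$ only $Q Q^{GF}$ contributes, giving $D\Gamma\Gamma D$; using $\Gamma^2 = 1$ and $D\Gamma = -\Gamma D$, both simplify to $D^2$, which is a generalized Laplacian by hypothesis. No step presents any real obstacle; the one thing to watch is the interplay between the three gradings in play (cohomological degree, particle statistics, and the $\Z/2$-grading on $V$), but the $\Gamma$-insertions have been arranged precisely so that the signs work out.
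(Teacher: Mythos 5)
Your proposal is correct and follows essentially the same route as the paper: a direct verification of each axiom using $D\Gamma=-\Gamma D$, $\Gamma^2=1$, the self-adjointness of $\Gamma$, and the formal self-adjointness of $D$, culminating in the computation $[Q,Q^{GF}]=D^2_{0\to0}+D^2_{1\to1}$; you are in fact more explicit than the paper about ellipticity and the integration-by-parts step. One small correction: the definition of a gauge-fixing requires $Q^{GF}$ to be graded \emph{self}-adjoint for $\ip$, not skew-self-adjoint. Your integral identity $\int_M(\Gamma Ds_1,s_2)\,dVol_g=-\int_M(s_1,\Gamma Ds_2)\,dVol_g$ is the right computation, but since $Q^{GF}$ has cohomological degree $-1$ and its arguments sit in degree $1$, the Koszul sign $(-1)^{|Q^{GF}||s_1|}=-1$ converts that raw minus sign into graded self-adjointness --- the same raw sign that, for the degree $+1$ operator $Q$ acting on degree-$0$ arguments, signals graded skew-self-adjointness. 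So the calculation establishes the required property; only the label ``skew'' should be dropped for $Q^{GF}$.
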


\begin{proof}
Let us first note the following facts: since $D$ reverses the $\Z/2$-grading of $\sV$, $D\Gamma = -\Gamma D$ (hence $D_{0\to1}\Gamma_{0\to0}=-\Gamma_{1\to1}D_{0\to1})$; moreover, $\Gamma$ is self-adjoint for the pairing $(\cdot, \cdot)$.

$D_{0\to1}\Gamma_{0\to0}$ and $\Gamma_{0\to 0} D_{1\to 0}$ can be checked directly to have the required symmetry properties with respect to $\ip$ as a consequence of the above facts and the formal self-adjointness of $D$.

$\ip$ is symmetric (in the brute, non-graded, sense) because of the symmetry of $(\cdot, \cdot)$; this is consistent with Equation \ref{eq: ipsym} since all fields have fermionic statistics, i.e. $\ip$ is graded skew-symmetric.

It remains only to check that $[Q,Q^{GF}]$ is a generalized Laplacian. Using the first fact noted at the beginning of the proof, and the fact that $\Gamma^2=1$, it is a direct computation to show that $[Q,Q^{GF}]=D^2_{0\to0}+D^2_{1\to1}$, so is in fact a generalized Laplacian. 
\end{proof}

\begin{remark}
The metric on $V$ can be used to identify $V$ with $V^\vee$, and the Riemannian density can be used to trivialize the bundle of densities on $M$, so that $V^!\cong V$. As a result, it can be shown that the the free fermion theory is isomorphic, under the obvious notion of isomorphism of free theories, to the cotangent theory to the elliptic complex
\[
\sV^+ \overset{D^+}{\longrightarrow} \sV^-.
\]
We will use this characterization in the sequel.
\end{remark}

In the appendix, we define what it means for an elliptic dgla to act on a free BV theory. The next lemma provides an action of the elliptic dgla $\Omega^\bullet_M$ on the massless free fermion.


\begin{lemma}
\label{lem: actionofforms}
Let $\sL_{\R} = \Omega^\bullet_{dR,M}$, equipped with the de Rham differential and trivial Lie bracket. If $f\in \cinfty(M)$, we let 
\begin{equation}
\label{eq: 0formaction}
[f,\cdot]:= f \Gamma_{0\to0}- f\Gamma_{1\to1}.
\end{equation}
If $\alpha\in \Omega^1(M)$, let 
\begin{equation}
\label{eq: 1formaction}
[\alpha, \cdot] = c(\alpha)_{0\to1},
\end{equation}
where $c(\alpha)$ is the Clifford action of forms on sections of $V$. For $\beta$ a form of degree greater than 1, let $[\beta,\cdot]=0$. Then,  these equations define an action of $\sL_\R$ on the massless free fermion. This action is known as the \textbf{axial symmetry} in physics.
\end{lemma}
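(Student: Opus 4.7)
The plan is to verify the three conditions required of an action of an elliptic dgla on a free BV theory (as laid out in the appendix): the dgla module axiom, compatibility of the bracket with the de Rham differential $d$ on $\sL_\R = \Omega^\bullet_M$ and the BV differential $Q = D_{0\to 1}\Gamma_{0\to 0}$ on $\sS$, and graded invariance of the pairing $\ip_{loc}$. Locality is automatic because the operators $[\ell,\cdot]$ are differential operators of order zero. Throughout I will rely on $D\Gamma = -\Gamma D$, $\Gamma^2 = 1$, the self-adjointness of $\Gamma$ with respect to $(\cdot,\cdot)$ (coming from the orthogonal decomposition $V = V^+ \oplus V^-$), and the defining property $c(df) = [D,f]$ recorded in Definition-Lemma \ref{deflem: cliffaction}. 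Since $\sL_\R$ is abelian, the module axiom $[[\ell_1, \ell_2], s] = [\ell_1, [\ell_2, s]] - (-1)^{|\ell_1||\ell_2|}[\ell_2, [\ell_1, s]]$ collapses to a graded-commutator check for the operators $[\ell_i, \cdot]$; each case is immediate from the identities above, using in addition $\Gamma c(\alpha) = -c(\alpha)\Gamma$ for the mixed $0$-form/$1$-form case and the vanishing of $[\alpha, \cdot]$ on $\sV[-1]$ in the $1$-form/$1$-form case.

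Next I would verify the differential compatibility $Q[\ell,s] = [d\ell, s] + (-1)^{|\ell|}[\ell, Qs]$ case by case in the ghost numbers of $\ell$ and $s$. Because $Q$ vanishes on $\sS^1 = \sV[-1]$, the operator $[\alpha,\cdot]$ vanishes on $\sV[-1]$ for $\alpha \in \Omega^1$, and $[\beta,\cdot] = 0$ for $|\beta| \geq 2$, every case is trivial except $\ell = f \in \Omega^0$ and $s = v \in \sV$. There, using $\Gamma^2 = 1$ and $\Gamma D \Gamma = -D$,
\[
Q[f,v] = D\Gamma(f\Gamma v) = D(fv), \qquad [df,v] + [f, Qv] = c(df)v - f\Gamma(D\Gamma v) = c(df)v + fDv,
\]
and the desired equality is precisely the Leibniz identity $D(fv) = fDv + [D,f]v$ combined with $c(df) = [D,f]$.

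Finally, pairing compatibility takes the form $\ip[[\ell, s_1], s_2]_{loc} + (-1)^{|\ell||s_1|}\ip[s_1, [\ell, s_2]]_{loc} = 0$, and only arguments landing in complementary ghost-degrees can contribute. For $\ell = f \in \Omega^0$, $s_1 \in \sV$, $s_2 \in \sV[-1]$, the sum reduces to $f\left[(\Gamma s_1, s_2) - (s_1, \Gamma s_2)\right]dVol_g$, which vanishes by self-adjointness of $\Gamma$. For $\ell = \alpha \in \Omega^1$ and $s_1, s_2 \in \sV$, it reduces to $\left[(c(\alpha)s_1, s_2) + (s_1, c(\alpha)s_2)\right]dVol_g$, which vanishes by skew-self-adjointness of $c(\alpha)$; the latter follows by $\cinfty$-linear extension from $c(df)^* = [D,f]^* = [f, D] = -c(df)$. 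All remaining cases are vacuous. The only real obstacle is careful bookkeeping of signs and of the two gradings (ghost number versus $\Z/2$-statistics); no analysis is needed beyond the pointwise properties of $\Gamma$ and $c$.
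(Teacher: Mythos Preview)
Your proof is correct and follows essentially the same approach as the paper's: the paper only spells out the differential compatibility check for $\ell=f\in\Omega^0$ and $s\in\sV$ (your $D(fv)=c(df)v+fDv$ computation) and then declares the Jacobi identity and pairing invariance to be ``straightforward and direct computations along similar lines.'' You have simply filled in those details, and your case analysis is sound; the one small point worth noting is that the skew-adjointness of $c(\alpha)$ you invoke is a priori an $L^2$ statement (since $D$ is only assumed \emph{formally} self-adjoint), but since $c(df)=[D,f]$ is a zeroth-order bundle map this upgrades to pointwise skew-adjointness, which is what you use.
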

\begin{proof}
Let us check the derivation property of a dgla: 
\[
Q[f,\varphi] = [df,\varphi]+[f,Q\varphi].
\]
Assuming that $|\varphi|=0$ (otherwise both sides of the above equation are 0), we find that the right hand side of the above equation is 
\[
[df, \varphi] - f \Gamma_{1\to1} D_{0\to1}\Gamma_{0\to0} \varphi= D_{0\to1}f\varphi,
\]
which agrees with $Q[f,\varphi]$.

The remaining checks---namely of the Jacobi equation and the invariance of $\ip$ under this action---are straightforward and direct computations along similar lines.
\end{proof}
\begin{remark}
Note that the constant functions act by a multiple of $\Gamma$ on the degree 0 fields in $\sS$. This is the sense in which the symmetry is axial: it acts by opposite factors on $\sV^+$ and $\sV^-$.
\end{remark}

\section{A Few Lemmas on Batalin-Vilkovisky Quantization}
\label{sec: lemmas}

In this section, we prove three lemmas which will be useful in the sequel, and which are implicit in the literature, but which have not been directly proved there. For a more complete discussion of the BV formalism, see the appendix, which assembles results from the literature relevant to the discussion here.

In the Appendix, we define the strong (equivariant) quantum master equation (sQME) for the action of an elliptic dgla $\sL$ on a BV theory $\sF$. For the reader familiar with the formalism of Costello and Gwilliam, a solution of the sQME is what those authors would call an inner action of $\sL$ on the quantum theory $\sF$. The first of the three remaining lemmas in this section makes precise the idea that the sQME measures the failure of an $\sL$-equivariant quantization to be trivial over $C^\bullet(\sL)$. Before we state it, we make the following 
\begin{definition}
Assume given an action of an elliptic dgla $\sL$ on a free theory $\sF$; let $I_{\hbar=1}[t]$ denote $I_{tr}[t]+I_{wh}[t]$. Then, for any $t\in \R_{>0}$, the \textbf{scale $t$ fundamentally quantum equivariant observables}, denoted $\Obqh[t]$, are
\[
\left(\widehat{Sym}\left((\sL[1](N)\oplus \sF(N))^\vee\right), Q+d_{\sL}+\{I_{\hbar=1}[t],\cdot\}_t+\Delta_t\right);
\]
in other words, $\Obqh[t]$ is $\Obq{}[t]$ with $\hbar$ set to 1. Similarly, let $\Obqhtriv$ denote the cochain complex whose underlying graded vector space agrees with that of $\Obqh[t]$ but with differential $d_{\sL}+ Q+ \Delta_t$; this is the analogous construction for the \textit{trivial} action of $\sL$ on $\sF$.
\end{definition}

\begin{remark}
It is worth noting here that the scale $t$ plays the same role here as the heat kernel parameter (which is also denoted by the symbol $t$) in Theorem \ref{thm: heatkernel}, since the BV Laplacian $\Delta_t$ is defined in terms of this heat kernel. Physically, $t$ carries units of inverse length squared.
\end{remark}

\begin{lemma}
\label{lem: sQME}
If $I[t]$ satisfies the scale $t$ sQME, then multiplication by $e^{I_{\hbar=1}[t]}$ gives a cochain isomorphism $\Obqh[t]\to\Obqhtriv[t]$. More generally, let $\Obqobstr[t]$ denote the cochain complex 
\[
\left(\sO(\sL[1]\oplus\sF),Q+d_{\sL}+ \Delta_t - \Obstr[t] \cdot\right),
\]
where $\Obstr[t]\cdot$ denotes multiplication by the obstruction in the symmetric algebra $
\sO(\sL[1]\oplus \sF)$. Then, multiplication by $e^{I_{\hbar=1}[t]}$ gives an isomorphism $\Obqh[t]\to \Obqobstr[t]$.
\end{lemma}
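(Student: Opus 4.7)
The plan is to verify directly that multiplication by $e^{I_{\hbar=1}[t]}$ intertwines the differentials. Since $e^{I_{\hbar=1}[t]}$ is invertible in $\sO(\sL[1]\oplus\sF)$, with explicit inverse $e^{-I_{\hbar=1}[t]}$, this map is automatically an isomorphism of graded vector spaces; the only content is the chain map property. Write $I$ for $I_{\hbar=1}[t]$ to keep notation light, and note that $I$ has ghost number $0$ and even total degree, so all signs below are trivial.

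The central input is the standard BV identity expressing the failure of $\Delta_t$ to be a derivation:
\[
\Delta_t(FG) = \Delta_t(F)\, G + (-1)^{|F|} F\, \Delta_t(G) + \{F, G\}_t.
\]
First I would apply this iteratively to compute
\[
\Delta_t\!\left(e^I\right) = \left(\Delta_t I + \tfrac{1}{2}\{I,I\}_t\right) e^I,
\]
and then combine with the Leibniz rule $\{e^I, G\}_t = e^I \{I, G\}_t$ to obtain
\[
\Delta_t(e^I G) = \left(\Delta_t I + \tfrac{1}{2}\{I,I\}_t\right) e^I G + e^I \Delta_t G + e^I \{I,G\}_t.
\]
Next, since $Q + d_{\sL}$ is an honest derivation, the Leibniz rule gives
\[
(Q + d_{\sL})(e^I G) = (Q + d_{\sL})(I) \cdot e^I G + e^I \cdot (Q + d_{\sL})(G).
\]

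Adding the two identities and using the defining property of $\Obstr[t]$ as the failure of the scale $t$ sQME, namely
\[
(Q + d_{\sL}) I + \tfrac{1}{2}\{I,I\}_t + \Delta_t I = \Obstr[t],
\]
all ``anomalous'' terms collect into $\Obstr[t]\cdot e^I G$, leaving
\[
(Q + d_{\sL} + \Delta_t - \Obstr[t]\cdot)(e^I G) = e^I \cdot (Q + d_{\sL} + \{I, \cdot\}_t + \Delta_t)(G).
\]
This is precisely the statement that multiplication by $e^I$ is a chain map $\Obqh[t] \to \Obqobstr[t]$. Specializing to the case $\Obstr[t]=0$, when $I$ satisfies the sQME on the nose, recovers the first claim with target $\Obqhtriv[t]$.

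The only step that requires real care is the BV identity $\Delta_t(e^I) = (\Delta_t I + \tfrac{1}{2}\{I,I\}_t)\, e^I$; because $\Delta_t$ is a second-order operator, one cannot simply differentiate through the exponential, and the corrective bracket term is exactly what allows the sQME to absorb into a clean multiplicative shift of the differential. Everything else is formal. I expect no further obstacle: the map is evidently bijective, the computation above produces the desired intertwining, and the ``more general'' statement follows because no cancellation between $\Obstr[t]$ and the other pieces was used in the derivation.
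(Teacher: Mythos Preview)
Your proposal is correct and follows essentially the same approach as the paper: both arguments hinge on the BV failure-of-Leibniz identity for $\Delta_t$ applied to a product with $e^{I_{\hbar=1}[t]}$, combined with the derivation property of $Q+d_{\sL}$, to show that the sQME defect $\Obstr[t]$ is exactly what appears as a multiplicative correction to the differential. The paper packages the computation slightly differently---first recording $(Q+d_{\sL}+\Delta_t)e^{I_{\hbar=1}[t]}=\Obstr[t]\,e^{I_{\hbar=1}[t]}$ and then applying the BV identity to $\Delta_t(J e^{I_{\hbar=1}[t]})$---but the content is identical to yours.
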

\begin{proof}
By direct computation,
\begin{equation}
\label{eq: volumeform}
(Q+d_{\sL}+ \Delta_t)e^{I_{\hbar=1}[t]}  = \Obstr[t] e^{I_{\hbar=1}[t]}.
\end{equation}
Let $J$ be an element of $\Obq_{\hbar^{-1}}[t]$.
Now, consider
\[
\left(Q+d_{\sL}+ \Delta_t- \Obstr[t]\right) (Je^{I_{\hbar=1}[t]}).
\]
$Q$ and $d_{\sL}$ are both graded derivations for the commutative product; $\Delta_t$, however, is not, and the Poisson bracket is precisely the failure of this to be the case. In fact, direct computation reveals that
\[
\Delta_t\left( Je^{I_{\hbar=1}[t]}\right) = \Delta_t(J)e^{I_{\hbar=1}[t]}+(-1)^{|J|}J \Delta_t e^{I_{\hbar=1}[t]} +e^{I_{\hbar=1}[t]}\{I_{\hbar=1}[t],J\}_t.
\]
It follows that 
\begin{align*}
\left(Q+d_{\sL}\right.&\left.+ \Delta_t-\Obstr[t]\right) (Je^{I_{\hbar=1}[t]})\\
& = (QJ+d_{\sL}J + \Delta_t J+\{I_{\hbar=1}[t],J\}_t)e^{I_{\hbar=1}[t]}-\Obstr[t] Je^{I_{\hbar=1}[t]} +(-1)^{|J|}J\Obstr[t]e^{I_{\hbar=1}[t]}\\
&=(QJ+d_{\sL}J+ \Delta_tJ+\{I_{\hbar=1}[t],J\}_t)e^{I_{\hbar=1}[t]}.
\end{align*}
The above equation states precisely that multiplication by $e^{I_{\hbar=1}[t]}$ intertwines the differentials of $\Obqh[t]$ and $\Obqobstr[t]$. 
\end{proof}

\begin{lemma}
\label{lem: obstrind}
For a free theory with an action of the elliptic dgla $\sL$, the obstruction $\Obstr[t]$ is independent of $t$. Using Lemma \ref{lem: obstr} of the Appendix, this implies that $\Obstr[t]$ is local.
\end{lemma}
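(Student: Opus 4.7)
The strategy is to transport the defining identity $(Q+d_{\sL}+\Delta_t)e^{I_{\hbar=1}[t]}=\Obstr[t]\cdot e^{I_{\hbar=1}[t]}$ (Equation \ref{eq: volumeform} in the proof of Lemma \ref{lem: sQME}) between two scales $0<\epsilon<t$ using Costello's renormalization group flow, and to exploit that $\Obstr[\epsilon]$ is a functional of the $\sL[1]$-fields alone.

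The key ingredients I would cite from Costello's renormalization machinery are: the RG flow identity $e^{I_{\hbar=1}[t]}=\exp(\partial_{P(\epsilon,t)})\,e^{I_{\hbar=1}[\epsilon]}$, where the propagator $P(\epsilon,t)=\int_\epsilon^t(Q^{GF}\otimes 1)K_s\,ds$ generates the second-order contraction operator $\partial_{P(\epsilon,t)}$ on $\sO(\sL[1]\oplus\sF)$; and the fundamental commutator $[Q,\partial_{P(\epsilon,t)}]=\Delta_\epsilon-\Delta_t$. Since $d_\sL$ acts only on $\sL[1]$-legs and $\partial_{P(\epsilon,t)}$ contracts only $\sF$-legs, these commute, and of course $\partial_{P(\epsilon,t)}$ commutes with itself. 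A short induction on powers of $\partial_{P(\epsilon,t)}$ then gives the intertwining
\[
(Q+d_\sL+\Delta_t)\exp\bigl(\partial_{P(\epsilon,t)}\bigr) = \exp\bigl(\partial_{P(\epsilon,t)}\bigr)(Q+d_\sL+\Delta_\epsilon).
\]

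Applying this operator identity to $e^{I_{\hbar=1}[\epsilon]}$ and using Equation \ref{eq: volumeform} on both ends produces
\[
\Obstr[t]\cdot e^{I_{\hbar=1}[t]} \;=\; \exp\bigl(\partial_{P(\epsilon,t)}\bigr)\bigl(\Obstr[\epsilon]\cdot e^{I_{\hbar=1}[\epsilon]}\bigr).
\]
Because $\Obstr[\epsilon]\in\sO(\sL[1])$ has no $\sF$-dependence (see below), multiplication by $\Obstr[\epsilon]$ commutes with the second-order $\sF$-differential operator $\partial_{P(\epsilon,t)}$, hence also with $\exp(\partial_{P(\epsilon,t)})$. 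The right-hand side therefore collapses to $\Obstr[\epsilon]\cdot e^{I_{\hbar=1}[t]}$, and since $e^{I_{\hbar=1}[t]}$ is invertible in the completed symmetric algebra, one concludes $\Obstr[t]=\Obstr[\epsilon]$.

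The main obstacle is justifying the claim that $\Obstr[\epsilon]$ carries no $\sF$-dependence. This is a structural feature of equivariant BV quantization of a free theory: the tree interaction $I^{tr}[t]$ is quadratic in $\sF$ (it encodes the $\sL$-action on the fields), the wheel correction $I^{wh}[t]$ already lies in $\sO(\sL[1])$, and an $\sF$-polynomial-degree count on the terms $(Q+d_\sL) I[t] + \tfrac12\{I[t],I[t]\}_t + \Delta_t I[t]$ assembling the obstruction shows that, after the Feynman-diagrammatic contractions against $K_t$, the total $\sF$-degree is zero. This is either imported from the general framework of \cite{CG2} or verified directly from the explicit form of $I[t]$ furnished in the Appendix. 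Once granted, the RG-flow computation above completes the proof, and then Lemma \ref{lem: obstr} upgrades $t$-independence to locality.
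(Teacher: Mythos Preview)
Your proof is correct and reaches the conclusion by a genuinely different route from the paper's. Both arguments rest on the same structural fact---that $\Obstr[t]$ lies in $C^\bullet_{red}(\sL)$, i.e.\ has no $\sF$-dependence---which is exactly Lemma~\ref{lem: formofobstr} in the Appendix; you should simply cite that lemma rather than re-deriving the degree count.

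Where the two approaches diverge: the paper invokes Lemma~\ref{lem: obstrtangent}, which expresses $\Obstr[t']$ as the linearization of the tree-level RG flow $W_{t,t'}$ at $I_{tr}[t]$ in the direction $\Obstr[t]$, and then argues by Feynman-diagram inspection that the only connected tree containing an $\Obstr[t]$-vertex is the bare vertex itself (propagators connect only $\sF$-edges, and $\Obstr[t]$ has none). Your argument instead bypasses the tangent-space lemma and works directly with the exponentiated RG flow, combining the characterization of $\Obstr[t]$ from Equation~\eqref{eq: volumeform} with the standard intertwining $(Q+d_\sL+\Delta_t)\exp(\partial_{P(\epsilon,t)})=\exp(\partial_{P(\epsilon,t)})(Q+d_\sL+\Delta_\epsilon)$; the $\sF$-independence of $\Obstr[\epsilon]$ then lets you pull it through $\exp(\partial_{P(\epsilon,t)})$ as a multiplication operator. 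Your version is cleaner algebraically and avoids any diagrammatics; the paper's version is more combinatorially explicit and makes visible exactly which graphs are being discarded. Both are short once Lemma~\ref{lem: formofobstr} is in hand.
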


\begin{proof}
Let us rewrite Lemma \ref{lem: obstrtangent} as the following equation:
\[
\Obstr[t'] = \frac{d}{d\epsilon}\left(W(P(t', t), I[t]+\epsilon \Obstr[t])\mod \hbar\right)
\]
The right-hand side of the above equation is represented diagrammatically by a sum over connected Feynman trees where one of the vertices uses $\Obstr[t]$ and all the others use $I_{tr}[t]$. At any scale, the Feynman diagram expansion of $\Obstr[t]$ for a free theory has only external edges corresponding to $\sL$. The elements of $\sL$ are non-propagating fields in the sense that the propagators appearing in the Feynman diagrams of RG flow connect only $\sF$ edges and not $\sL$ edges. This means, in particular, that the only connected Feynman diagrams having a vertex labelled by a term from $\Obstr[t]$ are those with only that vertex. In other words, the right hand side of the equation above is just $\Obstr[t]$, which completes the proof. 
\end{proof}

\begin{remark}
This result is particular to the case that we are dealing with families of free theories.
\end{remark}

Let us now examine what happens if $\Obstr[t]$ is exact in $C^\bullet_{red}(\sL)$:

\begin{lemma}
If $\Obstr[t]= d_{\sL}J$ for some $J\in C^\bullet_{red}(\sL)$, then $I[t]-\hbar J$ solves the scale $t$ sQME. Conversely, if there exists $J$ such that $I[t]-\hbar J$ solves the scale $t$ sQME, then $\Obstr[t]=d_{\sL}J$.
\end{lemma}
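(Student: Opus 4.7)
The plan is to compute the failure of the shifted interaction $I[t] - \hbar J$ to satisfy the sQME directly and observe that it equals $\hbar(\Obstr[t] - d_\sL J)$; both directions of the lemma then follow at once. The computation is essentially the one behind equation (\ref{eq: volumeform}), adapted to the shifted interaction, and it rests on the observation that a cochain $J \in C^\bullet_{red}(\sL)$ has no dependence on the fields in $\sF$.

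First I would record the three vanishing statements that follow from $J$ being a function of $\sL$ alone: $QJ = 0$ because $Q$ differentiates only along $\sF$; $\Delta_t J = 0$ because $\Delta_t$ is a second-order derivation built from a kernel in $\sF \otimes \sF$; and $\{A, J\}_t = 0$ for every $A \in \sO(\sL[1] \oplus \sF)$ because the scale-$t$ bracket is built from the propagator $P(t)$, which pairs $\sF$-fields only. This is compatible with $I_{wh}[t]$ having genuine mixed $\sL$--$\sF$ dependence; what matters is that only $J$, and not $I[t]$, is required to be $\sL$-linear.

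Next I would write
\[
e^{I_{\hbar=1}[t] - J} \;=\; e^{I_{\hbar=1}[t]} \cdot e^{-J}
\]
inside the graded-commutative algebra $\sO(\sL[1] \oplus \sF)$ and apply $Q + d_\sL + \Delta_t$, using the Leibniz rule for the derivations $Q$ and $d_\sL$ together with the second-order derivation identity $\Delta_t(AB) = (\Delta_t A)\, B + (-1)^{|A|} A\, \Delta_t B + (-1)^{|A|}\{A, B\}_t$ for $\Delta_t$. By the preceding vanishings, every term that contains a factor of $QJ$, $\Delta_t J$, or $\{\cdot, J\}_t$ drops out, leaving only the contribution of $d_\sL$ acting on $e^{-J}$, which produces $-d_\sL J \cdot e^{I_{\hbar=1}[t] - J}$. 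Combined with equation (\ref{eq: volumeform}) applied to $e^{I_{\hbar=1}[t]}$---which, re-examined, is really the defining identity of $\Obstr[t]$ as the sQME failure of $I[t]$ and therefore holds unconditionally---this gives
\[
(Q + d_\sL + \Delta_t)\, e^{I_{\hbar=1}[t] - J} \;=\; \bigl( \Obstr[t] - d_\sL J \bigr)\, e^{I_{\hbar=1}[t] - J}.
\]

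Finally I would invoke the standard $\hbar = 1$ equivalence, used implicitly in the proof of Lemma \ref{lem: sQME}, between the sQME for an interaction $\tilde I$ and the quantum-closure equation $(Q + d_\sL + \Delta_t)\, e^{\tilde I_{\hbar=1}} = 0$; applied to $\tilde I = I[t] - \hbar J$, this says that $I[t] - \hbar J$ satisfies the sQME if and only if $\Obstr[t] = d_\sL J$, which is both directions of the lemma. The hardest step is the short bookkeeping exercise in the middle paragraph showing that all $J$-terms other than $-d_\sL J$ vanish; everything else is formal.
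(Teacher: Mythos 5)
Your argument is correct, but it takes a slightly different route from the paper's. The paper simply substitutes $I[t]-\hbar J$ directly into the left-hand side of the sQME, applies the same three vanishing facts you isolate ($QJ=\Delta_t J=\{J,\cdot\}_t=0$), and reads off that the failure is $\hbar\Obstr[t]-\hbar d_{\sL}J$; both directions follow in two lines with $\hbar$ kept explicit throughout. You instead pass to the exponential $e^{I_{\hbar=1}[t]-J}=e^{I_{\hbar=1}[t]}e^{-J}$ and reuse the computation behind Equation (\ref{eq: volumeform}), which buys a cleaner conceptual picture (the shift by $J$ rescales the ``volume form'' $e^{I_{\hbar=1}[t]}$ and shifts the obstruction by $d_{\sL}J$, exactly in the spirit of Lemmas \ref{lem: sQME} and \ref{lem: obstrtrivbund}) at the cost of an extra layer of justification. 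The one step you should make explicit is the ``$\hbar=1$ equivalence'': in general the sQME is an identity of formal power series in $\hbar$, and specializing to $\hbar=1$ could in principle lose information through cancellations between different orders. Here it is harmless because $I[t]-\hbar J=I_{tr}[t]+\hbar(I_{wh}[t]-J)$ is the renormalized interaction of a free theory, so (by the same reasoning as Lemma \ref{lem: formofobstr}) the entire failure of its sQME is concentrated at order $\hbar^1$ and equals $\hbar(\Obstr[t]-d_{\sL}J)$ on the nose; with that remark added, your proof is complete. Also note that injectivity of multiplication by $e^{I_{\hbar=1}[t]-J}$ (its lowest symmetric-degree component is $1$) is what lets you cancel the exponential at the end.
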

\begin{proof}
Let's start with the first assertion:
\begin{align*}
(d_{\sL}+Q)(I[t]&- \hbar J) +\frac{1}{2}\{I[t]-\hbar J, I[t]-\hbar J\}_t +\Delta_t (I[t]-\hbar J)\\
&=\hbar\Obstr[t]-\hbar d_{\sL}J=0,
\end{align*}
where we are using that $J$ doesn't depend on $\sF$, so that $QJ = \Delta_t J = \{J,\cdot\}_t=0$. Thus, $I[t]-\hbar J$ solves the sQME. Conversely, the above equation shows that if there exists a $J\in C^\bullet_{red}(\sL)$ such that $I[t]-\hbar J$ satisfies the scale $t$ sQME, then $\Obstr[t]$ is exact. 
\end{proof}

\begin{remark}
Combining the preceding lemma with Lemma \ref{lem: rgqme}, we see that the obstruction is exact at one scale if and only if it is exact at all scales, since \[W(P(t',t),I[t]-\hbar J)= I[t'] -\hbar J,\] by an analysis of the Feynman diagram expansion of \[W(P(t',t),I[t]-\hbar J).\] If one prefers not to use Feynman diagrams, one can use the explicit definition of the operator $W(P(t',t),\cdot)$ and the fact that $\partial_P J=0$.
\end{remark}

\section{The Equivariant Observables of the Massless Free Fermion}
\label{sec: line}
In this section, we use the notations of Section \ref{sec: ferm} concerning the massless free fermion. We will assume that an elliptic dgla $\sL$ acts on the free theory $\sS$. This could be the complex $\sL_\R$ from Lemma \ref{lem: actionofforms}, but our results will hold in the more general case. 

The main goal of this section is to prove the following

\begin{proposition}
\label{prop: line}
If an elliptic dgla $\sL$ acts on the massless free fermion theory $\sS$, then there is a deformation retraction of $C^\bullet(\sL)$-modules
\[
\begin{tikzcd}
\left( \sO(\sL[1]), d_{\sL}+\delta\right )\arrow[r,shift left = .5 ex,"\tilde \iota"] &  \Obqh[\infty]\arrow[l, shift left = .5 ex,"\tilde \pi"]\arrow[loop right, distance = 4em,start anchor = {[yshift = 1ex]east},end anchor = {[yshift=-1ex]east}]{}{\tilde \eta}
\end{tikzcd}.
\]
More precisely, $\tilde \eta$ is a degree --1 map and the three maps satisfy 

\begin{align*}
\tilde \iota \circ \tilde \pi -id &= [d_{\Obqh},\tilde \eta]\\
\tilde \pi\circ \tilde \iota-id &= 0.
\end{align*}
Here, $\delta$ is a degree one endomorphism such that $(d_\sL+\delta)^2=0$. We recall that $\sO(\sL[1])$ is the underlying graded vector space of $C^\bullet(\sL)$, and it is given the obvious $C^\bullet(\sL)$-module structure, namely the one where $C^\bullet(\sL)$ acts on $\sO(\sL[1])$ by multiplication.
\end{proposition}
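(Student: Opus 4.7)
The plan is first to trade $\Obqh[\infty]$ for the ``straightened'' complex $\Obqobstr[\infty]$ via Lemma \ref{lem: sQME}: multiplication by $e^{I_{\hbar=1}[\infty]}$ is a $C^\bullet(\sL)$-module isomorphism between them, where $\Obqobstr[\infty]$ carries the tautological multiplicative $C^\bullet(\sL)$-action and the differential $Q+d_\sL+\Delta_\infty-\Obstr[\infty]\cdot$. It therefore suffices to build a $C^\bullet(\sL)$-linear deformation retract of $\Obqobstr[\infty]$ onto $(\sO(\sL[1]),d_\sL+\delta)$ and then pull back along this isomorphism.

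The heart of the argument eliminates the $\sS$-directions in two steps using the Hodge decomposition of Lemma \ref{lem: hodge}: write $\sS = \sH_\sF \oplus \sS_\perp$, where $\sH_\sF := \ker D \oplus \ker D[-1]$ is finite-dimensional and $\sS_\perp$ is its $L^2$-complement. On $\sS_\perp$ the operator $Q=D_{0\to1}\Gamma_{0\to0}$ is an isomorphism from cohomological degree zero to degree one, with inverse built from $\Gamma Q^{GF}$ composed with the Green's operator of $D^2$; extending this inverse as a contracting homotopy on the symmetric algebra yields a strong deformation retract of $(\sO(\sS_\perp^\vee),Q)$ onto $(\RR,0)$. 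Meanwhile, by Lemmas \ref{lem: infraredkernel} and \ref{lem: spectrum} the heat kernel at scale $\infty$ coincides with the orthogonal projector onto $\sH_\sF$, so $\Delta_\infty$ vanishes on $\sO(\sS_\perp^\vee)$ and restricts on $\sO(\sH_\sF^\vee)$ to the standard BV Laplacian of the finite-dimensional, odd-symplectic, purely fermionic graded vector space $\sH_\sF$. The cohomology of this finite-dimensional Koszul-type complex is one-dimensional, concentrated in cohomological degree zero and generated by the top exterior power of $(\ker D)^\vee$; an explicit strong deformation retract of $(\sO(\sH_\sF^\vee),\Delta_\infty)$ onto $(\RR,0)$ can be produced basis-by-basis from the odd-symplectic pairing. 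Tensoring these two retracts, and then tensoring with the identity on $\sO(\sL[1])$, yields a $C^\bullet(\sL)$-linear strong deformation retract of $(\sO(\sL[1]\oplus\sS), Q+d_\sL+\Delta_\infty)$ onto $(\sO(\sL[1]),d_\sL)$.

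Finally, the remaining term $-\Obstr[\infty]\cdot$ in the differential acts by multiplication by a fixed element of $\sO(\sL[1])$ and does not involve any $\sS$-direction; it therefore commutes with the inclusion and projection of the retract built above, and descends unchanged to the operator $\delta := -\Obstr[\infty]\cdot$ on the target. The relation $(d_\sL+\delta)^2 = 0$ then follows from the $d_\sL$-closedness of the obstruction cocycle $\Obstr[\infty]$ together with $\Obstr[\infty]^2 = 0$, the latter holding because $\Obstr[\infty]$ has cohomological degree one in the graded-commutative algebra $\sO(\sL[1])$. Composing with the inverse of the isomorphism of Lemma \ref{lem: sQME} defines the triple $(\tilde\iota, \tilde\pi, \tilde\eta)$ with the asserted properties.

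The main technical obstacle I anticipate is checking that the contracting homotopy on $\sS_\perp$ extends continuously to the completed symmetric algebra and that all side conditions of the homological perturbation lemma hold $C^\bullet(\sL)$-linearly. This reduces to the boundedness of the Green's operator on $\sS_\perp$ by $1/\lambda_1$, where $\lambda_1>0$ is the smallest nonzero eigenvalue of $D^2$ (Lemma \ref{lem: spectrum}); granted this bound, the remaining verifications --- that the retract maps are $C^\bullet(\sL)$-linear, and that the perturbation by $-\Obstr[\infty]\cdot$ transfers trivially because it acts by zero on $\sS$-directions --- are purely algebraic.
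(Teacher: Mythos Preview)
Your approach is correct and, in fact, somewhat cleaner than the paper's. The paper proceeds by first building a deformation retraction of the \emph{trivial} complex $\Obqhtriv[\infty]=(\sO,Q+d_\sL+\Delta_\infty)$ onto $C^\bullet(\sL)$ in three stages (Propositions \ref{prop: he1}--\ref{prop: he3}), and then applies the homological perturbation lemma with the perturbation $\{I_{\hbar=1}[\infty],\cdot\}_\infty$ to obtain the retraction of $\Obqh[\infty]$. The delicate step there is verifying that $(1-\{I_{\hbar=1}[\infty],\cdot\}_\infty\circ\eta''')$ is invertible, which the paper handles by a careful filtration argument tracking $\Sym$-degree in $C^\bullet(\sL)$, $W$, and $W_\perp$ simultaneously (the table in Figure~\ref{tab: degrees}). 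You instead front-load Lemma~\ref{lem: sQME} to replace $\Obqh[\infty]$ by $\Obqobstr[\infty]$, so that the only perturbation beyond $Q+d_\sL+\Delta_\infty$ is multiplication by $-\Obstr[\infty]$; since this lies in $C^\bullet_{red}(\sL)$ and your homotopy is $C^\bullet(\sL)$-linear, the perturbation graded-commutes with $\eta$ and descends trivially, bypassing the invertibility analysis entirely. This is essentially the argument the paper gives for Corollary~A, but promoted to the primary route. The price you pay is that your $\delta$ is $-\Obstr[\infty]\cdot$ only after conjugating by $e^{I_{\hbar=1}[\infty]}$, whereas the paper's $\delta$ is given directly by the HPL formula $\pi'''(1-\{I,\cdot\}_\infty\eta''')^{-1}\{I,\cdot\}_\infty\iota'''$; but since the proposition only asserts existence of some $\delta$, this is immaterial. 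One small point: you should note that the invertibility hypothesis of the HPL (or, in your direct argument, the convergence of the deformed retraction data) for the $-\Obstr[\infty]\cdot$ perturbation still requires the observation that $\Obstr[\infty]$ has strictly positive $\Sym$-degree in $\sL$, though this is much easier than the paper's corresponding check.
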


We will see that $\delta$ is given by an explicit, if complicated, formula.

\begin{remark}
By a small modification of the arguments in \cite{CG2}, (infinitesimal) homotopical RG flow defines cochain isomorphisms $\Obqh[t]\cong \Obqh[t']$ for all $t,t'$. It follows that $\Obqh[t]$ also deformation retracts onto a rank-one $C^\bullet(\sL)$ module for all $t$.
\end{remark}

We note that, because $D$ is formally self-adjoint, we can use the characterization of the heat kernel from Lemma \ref{lem: spectrum}. This implies, in particular, that we can define the scale infinity BV heat kernel $K_\infty$ and propagator $P(0,\infty)$, and therefore the cochain complex of scale-infinity equivariant quantum observables $\Obq{}[\infty]$ in the proposition is well-defined.

Recall that the $\sL$ symmetry has an obstruction $\Obstr$ to quantization. As a corollary of Proposition \ref{prop: line}, we will find the following:

\begin{corA}
Choose the same hypothesis as in Proposition \ref{prop: line}, and recall the notation $\Obqobstr$ from Lemma \ref{lem: sQME}. There is a deformation retraction of $C^\bullet(\sL)$-modules
\[
\begin{tikzcd}
\left( \sO(\sL[1]) , d_{\sL}+ \Obstr[\infty]\cdot\right )\arrow[r,shift left = .5 ex,"\tilde \iota"] &  \Obqobstr[\infty]\arrow[l, shift left = .5 ex,"\tilde \pi"]\arrow[loop right, distance = 4em,start anchor = {[yshift = 1ex]east},end anchor = {[yshift=-1ex]east}]{}{\tilde \eta}
\end{tikzcd}.
\]
Here, $\Obstr[\infty]\cdot$ denotes multiplication by $\Obstr[\infty]$ in $\sO(\sL[1])$.
\end{corA}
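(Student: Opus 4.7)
The plan is to derive the corollary by combining Proposition \ref{prop: line} with Lemma \ref{lem: sQME}. By that lemma, multiplication by $e^{I_{\hbar=1}[\infty]}$ gives a cochain isomorphism $m\colon \Obqh[\infty] \to \Obqobstr[\infty]$. This map is $C^\bullet(\sL)$-linear, since in the commutative algebra $\sO(\sL[1]\oplus\sF)$ multiplication by any fixed element commutes with multiplication by elements of the subalgebra $C^\bullet(\sL) = \sO(\sL[1])$. Conjugating the retract data $(\tilde\iota, \tilde\pi, \tilde\eta)$ from Proposition \ref{prop: line} by $m$ produces a deformation retract of $C^\bullet(\sL)$-modules from $\Obqobstr[\infty]$ onto a complex whose underlying graded vector space is $\sO(\sL[1])$ and whose differential has the form $d_\sL + \delta'$ for some degree-one operator $\delta'$.

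The substantive content of the corollary is then the identification $\delta' = \Obstr[\infty]\cdot$. My approach is to construct the retract of $\Obqobstr[\infty]$ independently via homological perturbation and then appeal to uniqueness of minimal models up to homotopy. I will start with the trivial theory $\Obqhtriv[\infty] = (\sO(\sL[1]\oplus\sF),\, Q + d_\sL + \Delta_\infty)$ and build a $C^\bullet(\sL)$-equivariant Hodge-theoretic retract onto $(\sO(\sL[1]),\, d_\sL)$: the projection $\sigma$ sets the $\sF$-variables to zero, its section $\tau$ is the corresponding inclusion, and the homotopy $\eta_0$ is built from a parametrix for $[Q, Q^{GF}]$; Lemmas \ref{lem: hodge}, \ref{lem: infraredkernel}, and \ref{lem: spectrum} together imply that $\Delta_\infty$ acts on $\sO(\sL[1]\oplus\sF)$ as the orthogonal projection onto $\sO(\sL[1]\oplus\ker D)$. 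Then I view $\Obqobstr[\infty]$ as the perturbation of $\Obqhtriv[\infty]$ by the degree-one, $\sO(\sL[1])$-linear operator $-\Obstr[\infty]\cdot$. Because $\Obstr[\infty]$ lies in $\sO(\sL[1])$ and involves no $\sF$-variables (cf.\ Lemma \ref{lem: obstrind}), multiplication by it commutes with $\sigma$, $\tau$, and $\eta_0$, so the perturbation series in the homological perturbation lemma collapses to its leading term. This produces a deformation retract of $\Obqobstr[\infty]$ onto $(\sO(\sL[1]),\, d_\sL + \Obstr[\infty]\cdot)$, with the sign matching the convention of Lemma \ref{lem: sQME}, and uniqueness will then yield $\delta' = \Obstr[\infty]\cdot$.

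The main obstacle I expect lies in verifying the Hodge-theoretic retract of $\Obqhtriv[\infty]$ with full $C^\bullet(\sL)$-equivariance, and in particular confirming that $\Delta_\infty$ acts on the entire symmetric algebra $\sO(\sL[1]\oplus\sF)$ as the projection onto the harmonic sector. The cited analytic lemmas give a parametrix for $[Q, Q^{GF}]$ on the orthogonal complement of $\ker D$ and identify $K_\infty$ with the harmonic projection on $\sF$ itself, but care will be required to lift these statements to the symmetric algebra in a way that commutes with multiplication by elements of $\sO(\sL[1])$ and respects the symmetric-algebra structure.
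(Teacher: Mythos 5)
Your overall strategy---transfer the differential from $\Obqhtriv[\infty]$ to a rank-one $C^\bullet(\sL)$-module and then observe that the perturbation by the multiplication operator $\Obstr[\infty]\cdot$ collapses to its leading term because $\Obstr[\infty]$ lies in $C^\bullet(\sL)$---is exactly the paper's proof. But your construction of the base retract contains a genuine error. The pair ``$\sigma$ sets the $\sF$-variables to zero'' and ``$\tau$ includes the constants'' is not a deformation retraction of $\Obqhtriv[\infty]$ onto $(\sO(\sL[1]),d_\sL)$: $\sigma$ is not even a cochain map (since $\Delta_\infty$ lowers $\Sym$-degree in the $\sF$-variables by two, $\sigma\Delta_\infty\neq 0$ on the $\Sym^2$ part), and, more fundamentally, the constants are $\Delta_\infty$-exact whenever $\ker D\neq 0$. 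Writing $W=\R[\![\phi_i^*,\xi_i^*]\!]$ as in the proof of Proposition \ref{prop: he3}, the element $1$ is proportional to $\Delta_\infty(\xi_1^*\phi_1^*)$, and the cohomology of $(W,\Delta_\infty)$ is spanned not by $1$ but by $\tilde\phi^*_1\cdots\tilde{\tilde\phi}^*_{\dim\ker D^-}$, i.e.\ by the determinant line $\Lambda^{top}(\ker D^+)^\vee\otimes \Lambda^{top}(\ker D^-)^\vee$. This is the crux of the whole construction: if the retract landed on the constants, the module would be canonically trivial and there could be no anomaly. Relatedly, $\Delta_\infty$ does not act ``as the orthogonal projection onto $\sO(\sL[1]\oplus\ker D)$''; it is contraction with $K_\infty$, a second-order operator lowering $\Sym$-degree by two. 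The harmonic projection is the $t\to\infty$ limit of $e^{-t[Q,Q^{GF}]}$ acting on $\sF$ itself, and it enters only through the construction of $K_\infty$ and $P(0,\infty)$.

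The repair is simply not to rebuild the retract: the proof of Proposition \ref{prop: line} already supplies a $C^\bullet(\sL)$-linear deformation retraction $(\iota''',\pi''',\eta''')$ of $(\sO,\,d_\sL+Q+\Delta_\infty)=\Obqhtriv[\infty]$ onto $(\sO(\sL[1]),d_\sL)$, obtained by composing Propositions \ref{prop: he1}--\ref{prop: he3}; there $\iota'''$ and $\pi'''$ are the inclusion of, and projection onto, $\sO(\sL[1])\otimes\Lambda^{top}(\ker D^+)^\vee\otimes\Lambda^{top}(\ker D^-)^\vee$, and $\eta'''$ vanishes on that subspace. With this retract your collapse argument goes through verbatim and yields the transferred differential $\pi'''\left(\Obstr[\infty]\cdot\right)\iota'''=\Obstr[\infty]\cdot$, which is precisely the paper's argument. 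Your first paragraph is then superfluous: conjugating the retraction of Proposition \ref{prop: line} by the isomorphism $e^{I_{\hbar=1}[\infty]}$ leaves the small complex unchanged, so $\delta'=\delta$; and an appeal to uniqueness of minimal models would in any case only identify $d_\sL+\delta$ and $d_\sL+\Obstr[\infty]\cdot$ up to isomorphism rather than on the nose. Neither is needed, since the corollary asserts only the existence of a retraction onto $(\sO(\sL[1]),\,d_\sL+\Obstr[\infty]\cdot)$.
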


Finally, in Lemma \ref{lem: obstrtrivbund}, we show that---as a consequence of the corollary---the rank-one module of the proposition is isomorphic to the trivial $C^\bullet(\sL)$-module if and only if $\Obstr$ is cohomologically trivial.

\begin{remark}
The $C^\bullet(\sL)$-module $\sO(\sL[1])$ with differential $d_\sL+\delta$ can be thought of as the space of sections of a line bundle over the formal moduli problem $B\sL(M)$. Lemma \ref{lem: obstrtrivbund} then implies that this line bundle is trivial if and only if the obstruction is cohomologically trivial. In other words, Lemma \ref{lem: obstrtrivbund} is an interpretation the statement ``the anomaly is the obstruction to the existence of a well-defined and gauge-invariant partition function'' discussed in the introduction. 
\end{remark}

\subsection{The Proof of Proposition \ref{prop: line}}

The proof of Proposition \ref{prop: line} is a minor modification of arguments in Chapters 2.5-2.6 of \cite{othesis}, and it uses the Hodge decomposition of Lemma \ref{lem: hodge} and properties of deformation retractions, including the homological perturbation lemma. Before we begin the proof itself, we spell out some details of the relevant background information.

The following is the main theorem discussed in \cite{crainic}; it is a generalization of Theorem 2.5.3 of \cite{othesis}.

\begin{lemma}[Homological Perturbation Lemma]
Suppose that 

\[
\begin{tikzcd}
(W,d_W)\arrow[r,shift left = .5 ex," \iota"] & (V,d_V)\arrow[l, shift left = .5 ex," \pi"]\arrow[loop right,distance = 4em,start anchor = {[yshift = 1ex]east},end anchor = {[yshift=-1ex]east}]{}{\eta}
\end{tikzcd}
\]
is a deformation retraction, and suppose that $\delta_V$ is a degree 1 operator such that $d_V+\delta_V$ is a differential and $(1-\delta_V\eta)$ is invertible. Then 

\[
\begin{tikzcd}
(W,d_W+\delta_W)\arrow[r,shift left = .5 ex," \iota'"] & (V,d_V+\delta_V)\arrow[l, shift left = .5 ex," \pi'"]\arrow[loop right, distance = 4em,start anchor = {[yshift = 1ex]east},end anchor = {[yshift=-1ex]east}]{}{\eta'}
\end{tikzcd}
\]
is a deformation retraction, where

\begin{align*}
\delta_W &= \pi (1-\delta_V \eta)^{-1}\delta_V \iota\\
\iota' & = \iota + \eta (1-\delta_V\eta)^{-1}\delta_V\iota\\
\pi' & = \pi + \pi (1-\delta_V\eta)^{-1}\delta_V\eta\\
\eta' & = \eta+ \eta(1-\delta_V\eta)^{-1}\delta_V\eta
\end{align*}
are the perturbed data of the deformation retraction.
\end{lemma}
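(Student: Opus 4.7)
This is a standard result; my plan would closely follow the classical proof (see \cite{crainic}). The core idea is to introduce the auxiliary operator $A := (1-\delta_V\eta)^{-1}\delta_V$, which is well-defined by hypothesis and satisfies $(1-\delta_V\eta)A = \delta_V$, or equivalently $A = \delta_V(1-\eta\delta_V)^{-1}$. One then rewrites the perturbed data compactly as
\[
\iota' = (1+\eta A)\iota, \quad \pi' = \pi(1+A\eta), \quad \eta' = (1+\eta A)\eta, \quad \delta_W = \pi A\iota,
\]
and all verifications will use this form together with the operator identity above.

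Before the main computation I would establish (or include as part of the input) the \emph{side conditions} $\pi\eta = 0$, $\eta\iota = 0$, $\eta^2 = 0$, which strengthen our deformation retraction to a strong one. Any deformation retraction can be modified to satisfy these without changing $d_W$, $\iota$, or $\pi$, and Crainic's paper treats both cases. With the side conditions in hand, the four required assertions---that $(d_W+\delta_W)^2 = 0$, that $\iota'$ and $\pi'$ are chain maps with respect to the perturbed differentials, that $\pi'\iota' = \mathrm{id}$, and that $\iota'\pi' - \mathrm{id} = [d_V + \delta_V, \eta']$---each reduce to substituting the compact formulas above, expanding, and applying $(1-\delta_V\eta)A = \delta_V$ (together with $A(1-\eta\delta_V) = \delta_V$) to resum. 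The Maurer--Cartan identity $[d_V, \delta_V] + \delta_V^2 = 0$, which follows from the hypothesis that $d_V + \delta_V$ is a differential, is used precisely once, in the verification that $(d_W + \delta_W)^2 = 0$.

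The main obstacle is not conceptual but purely combinatorial: tracking Koszul signs for graded operators and organizing the many intermediate terms so that the resummation identity above applies cleanly. In the generality stated here one cannot invoke a geometric-series expansion $(1-\delta_V\eta)^{-1} = \sum_n (\delta_V\eta)^n$, since no topology or filtration is given; one must instead manipulate the abstract inverse through its defining identity alone. This is slightly more finicky than the filtered case usually encountered in practice, but it is mechanical once the notation above is fixed.
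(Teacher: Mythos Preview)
Your proposal is correct and matches the paper's approach exactly: the paper does not prove this lemma at all but simply states it as ``the main theorem discussed in \cite{crainic}'', so following Crainic's argument is precisely what is intended. Your remarks about side conditions and the abstract handling of the inverse are accurate and appropriate for the level of generality stated.
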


\begin{lemma}
\label{lem: comphe}
Given two deformation retractions
\[
\begin{tikzcd}
(V_1,d_1)\arrow[r,shift left = .5 ex," \iota_1"] & (V_2,d_2)\arrow[l, shift left = .5 ex," \pi_1"]\arrow[loop right, distance = 4em,start anchor = {[yshift = 1ex]east},end anchor = {[yshift=-1ex]east}]{}{\eta_1}
\end{tikzcd}
\]

and

\[
\begin{tikzcd}
(V_2,d_2)\arrow[r,shift left = .5 ex," \iota_2"] & (V_3,d_3)\arrow[l, shift left = .5 ex," \pi_2"]\arrow[loop right, distance = 4em,start anchor = {[yshift = 1ex]east},end anchor = {[yshift=-1ex]east}]{}{\eta_2}
\end{tikzcd},
\]

the composite 

\[
\begin{tikzcd}
(V_1,d_1)\arrow[r,shift left = .5 ex," \iota_2 \iota_1"] & (V_3,d_3)\arrow[l, shift left = .5 ex," \pi_1 \pi_2"]\arrow[loop right, distance = 4em,start anchor = {[yshift = 1ex]east},end anchor = {[yshift=-1ex]east}]{}{\eta_2+\iota_2\eta_1\pi_2}
\end{tikzcd}
\]

is also a deformation retraction.
\end{lemma}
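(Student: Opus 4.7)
The plan is to verify directly the three defining conditions of a deformation retraction for the proposed composite data $(\iota_2 \iota_1,\, \pi_1 \pi_2,\, \eta_2 + \iota_2 \eta_1 \pi_2)$. First I would observe that $\iota_2 \iota_1$ and $\pi_1 \pi_2$ are chain maps between $(V_1, d_1)$ and $(V_3, d_3)$ automatically, being compositions of chain maps. The identity $(\pi_1 \pi_2)(\iota_2 \iota_1) = id_{V_1}$ then follows by associativity as $\pi_1 (\pi_2 \iota_2) \iota_1 = \pi_1 \iota_1 = id_{V_1}$, using the two hypothesized retraction identities in turn.

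The substantive step is the homotopy identity
\[
(\iota_2 \iota_1)(\pi_1 \pi_2) - id_{V_3} = [d_3,\, \eta_2 + \iota_2 \eta_1 \pi_2].
\]
I would expand the right-hand side as $[d_3, \eta_2] + [d_3, \iota_2 \eta_1 \pi_2]$. The first summand equals $\iota_2 \pi_2 - id_{V_3}$ by hypothesis on the second retraction. For the second, I would push $d_3$ past the chain maps $\iota_2$ and $\pi_2$ using $d_3 \iota_2 = \iota_2 d_2$ and $\pi_2 d_3 = d_2 \pi_2$, obtaining $\iota_2 (d_2 \eta_1 + \eta_1 d_2) \pi_2 = \iota_2 [d_2, \eta_1] \pi_2 = \iota_2 (\iota_1 \pi_1 - id_{V_2}) \pi_2$ by hypothesis on the first retraction. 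Summing the two contributions telescopes: the $\iota_2 \pi_2$ terms cancel, leaving $\iota_2 \iota_1 \pi_1 \pi_2 - id_{V_3}$, exactly as required.

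There is essentially no obstacle here beyond a bookkeeping check of Koszul signs. Since $d_3$ has ghost number $+1$ and $\iota_2 \eta_1 \pi_2$ has total ghost number $-1$, the graded commutator reads $d_3 X + X d_3$ with a plus sign, which is what the calculation above uses. The entire argument is a short algebraic manipulation and requires no further input beyond the defining identities of the two given deformation retractions.
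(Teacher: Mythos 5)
Your proposal is correct and is precisely the direct computation that the paper invokes (its proof of Lemma \ref{lem: comphe} reads simply ``Direct computation''); the verification of $\pi_1\pi_2\iota_2\iota_1 = id$ and the telescoping of $[d_3,\eta_2] + \iota_2[d_2,\eta_1]\pi_2$ are exactly the intended steps. Note that the paper's notion of deformation retraction (as stated in Proposition \ref{prop: line}) requires only the two identities you check, so no additional side conditions need to be verified.
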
  
\begin{proof}
Direct computation. 
\end{proof}

Armed with the preceding results on deformation retractions, we can proceed to the proof of Proposition \ref{prop: line}. The main idea of the proof is repeated application of the homological perturbation lemma. We will break the proof up into small pieces.

Let us first establish some notation that we will use throughout the remainder of this section. Let $\sO$ denote the underlying graded vector space of $\Obqh$. Let us also abbreviate $\widehat{\Sym}((H^\bullet \sS)^\vee)$ to $W$, letting $W^{(j)}$ denote the $\Sym^j$ space of $W$. Similarly, let $W_\perp$ denote $\widehat{\Sym}((H^\bullet \sS)^{\perp, \vee})$, with $W_\perp^{(j)}$ the $\Sym^j$ space of $W_\perp$.

The Hodge decomposition of Lemma \ref{lem: hodge} tells us that the cokernel of $D$ is isomorphic to its kernel. This allows us to decompose $\sS$ as follows:

\begin{equation}
\label{eq: hodge}
\begin{tikzcd}
\ker D \oplus \Im D \arrow[r,"D\Gamma"] &\ker D \oplus \Im D.
\end{tikzcd}
\end{equation}
It follows that $H^\bullet \sS = \ker D \oplus (\ker D[-1])$, and we denote by $H^\bullet \sS^\perp$ the cochain complex $\Im D \overset{D\Gamma}{\longrightarrow}\Im D$; this notation is appropriate because $\Im D=\ker D^\perp$.

It follows that $W=\widehat\Sym\left((\ker D)^\vee \oplus (\ker D)^\vee[1]\right )$ and similarly \[W_\perp=\widehat\Sym\left((\Im D)^\vee \oplus (\Im D)^\vee[1]\right )\].

Using Equation \ref{eq: hodge},
\[
\sO = \prod_{i,j,k=0}^\infty \left(\Sym^i(\sL^\vee) \otimes W^{(j)}\otimes W_\perp^{(k)}\right).
\]
Since $\sO = \sO(\sL[1])\otimes W\otimes W_\perp$, we can endow $\sO$ with a $\sO(\sL[1])$-module structure, which gives a dg-$C^\bullet(\sL)$-module structure when we endow $\sO$ with the differentials from either $\Obqh[\infty]$ or $\Obqhtriv[\infty]$. Throughout the remainder of this section, any cochain complex whose underlying graded vector space is of the form $\sO(\sL[1])\otimes U$, where $U$ is a graded vector space, is given the dg-$C^\bullet(\sL)$-module structure in an analogous way. In all cases, it will be manifest that the $\sO(\sL[1])$ action on $\sO(\sL[1])\otimes U$ interacts as necessary with the differential on $\sO(\sL[1])\otimes U$.

Recall that the differential on $\Obqh[\infty]$ can be written as a sum 
\[Q+d_{\sL}+\{I_{\hbar=1}[\infty],\cdot\}_\infty + \Delta_\infty.\]
We will treat the terms $\Delta_\infty$ and $\{I_{\hbar=1}[\infty],\cdot\}_\infty$ as successive deformations of the differential $Q+d_{\sL}$ on $\sO$, starting from a homotopy equivalence of $(\sO,Q+d_{\sL})$ with a smaller subspace. The cochain complex $(\sO,Q+d_{\sL})$ describes the \textit{classical} observables of the massless free fermion with \textit{trivial} $\sL$ action.

\begin{proposition}[Cf. Proposition 2.5.5 and Theorem 2.6.2 in \cite{othesis}] 
\label{prop: he1}
There is a deformation retraction of $C^\bullet(\sL)$-modules
\[
\begin{tikzcd}
C^\bullet(\sL)\otimes W\arrow[r,shift left = .5 ex," \iota"] & (\sO,Q+d_{\sL})\arrow[l, shift left = .5 ex," \pi"]\arrow[loop right, distance = 4em,start anchor = {[yshift = 1ex]east},end anchor = {[yshift=-1ex]east}]{}{\eta}
\end{tikzcd}.
\]
\end{proposition}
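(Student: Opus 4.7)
The plan is to produce the deformation retraction in three stages. First, using the Hodge decomposition of Lemma \ref{lem: hodge}, I will build a deformation retraction
\[
\begin{tikzcd}
(H^\bullet\sS,\,0)\arrow[r,shift left = .5 ex,"\iota_0"] & (\sS,\,Q)\arrow[l, shift left = .5 ex,"\pi_0"]\arrow[loop right, distance = 4em,start anchor = {[yshift = 1ex]east},end anchor = {[yshift=-1ex]east}]{}{\eta_0}
\end{tikzcd}
\]
in which $\iota_0$ is the inclusion of harmonic sections, $\pi_0$ is the orthogonal projection onto harmonics, and $\eta_0$ is the degree $-1$ operator on $\sS$ given by $-\Gamma D^{-1}$ composed with the orthogonal projection $\sV\to\Im D$, with $D^{-1}$ the inverse of $D$ on $\Im D\subset\sV$ furnished by the eigenfunction expansion of Lemma \ref{lem: spectrum}. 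The required identities $\pi_0\iota_0=\mathrm{id}$ and $\iota_0\pi_0-\mathrm{id}=[Q,\eta_0]$ are direct computations using $\Gamma^2=1$ and the anticommutation $D\Gamma=-\Gamma D$ recalled in the proof of the earlier lemma.

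Second, I apply $\widehat{\Sym}((-)^\vee)$ to this linear deformation retraction. Continuous linear duality takes a deformation retraction of nuclear Fr\'echet complexes to one on the duals (with transposed maps), and $\widehat{\Sym}(-)$ carries a linear deformation retraction to one on the completed symmetric algebras via the standard symmetrized extensions of the three data. This produces a deformation retraction of $(\widehat{\Sym}(\sS^\vee),\,Q)$ onto $(W,0)$. Third, I tensor with the identity deformation retraction on $(C^\bullet(\sL),d_\sL)$; since deformation retractions tensor and the differential $Q+d_\sL$ on $\sO$ is the sum of the two tensor-factor differentials, the result is the desired deformation retraction. The maps $\iota,\pi,\eta$ are all of the form $\mathrm{id}_{C^\bullet(\sL)}\otimes(-)$ and so are automatically $C^\bullet(\sL)$-linear.

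The main obstacle is analytic: to legitimize the second stage, I need $\pi_0$ and $\eta_0$ to be continuous on the Fr\'echet space of smooth sections, so that their transposes exist on the continuous linear duals and the completed symmetric algebra construction behaves as expected. Both operators are in fact smoothing: $\pi_0$ has finite-dimensional range consisting of smooth harmonic sections, and $-\Gamma D^{-1}$ precomposed with projection onto $\Im D$ is smoothing by the spectral expansion of Lemma \ref{lem: spectrum} together with elliptic regularity. Once this continuity is verified, the remaining two stages reduce to the functoriality of $\widehat{\Sym}$ and of tensor products of deformation retractions, as used similarly in \cite{othesis}.
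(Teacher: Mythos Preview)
Your argument is correct and follows essentially the same route as the paper: both build the retraction from the Hodge decomposition, use the operator $P(0,\infty)$ (your $\eta_0$) as the linear homotopy, extend to the symmetric algebra, and tensor with $C^\bullet(\sL)$. The paper writes down the explicit symmetric-algebra homotopy $\eta = P/k$ (derivation extension of $P$ divided by the $W_\perp$ Sym-degree) and defers the verification to \cite{othesis}, whereas you package the same construction as functoriality of $\widehat{\Sym}((-)^\vee)$ and of tensor products of deformation retractions; these are the same data.

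One small correction: your $\eta_0=-\Gamma D^{-1}$ (projected to $\Im D$) is \emph{not} smoothing---$D^{-1}$ is a pseudodifferential operator of order $-1$, not order $-\infty$---so the justification you give for its continuity is off. This does not damage the argument: order $-1$ pseudodifferential operators are continuous on $C^\infty$ sections, and indeed your $\eta_0$ coincides with the paper's propagator $Q^{GF}\int_0^\infty e^{-sH}\,ds$ restricted to $\Im D$, which is manifestly continuous. Just replace ``smoothing'' with ``a pseudodifferential operator of negative order'' and the analytic point goes through.
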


\begin{proof}
As described in Section 2.6 of \cite{othesis}, the Hodge decomposition leads to the desired homotopy equivalence, with maps $\pi,\iota,$ and $\eta$ defined as follows. $\pi$ is just projection onto the subspace of $\sO$ for which $k=0$, and $\iota$ is the inclusion of this subspace into $\sO$. On the other hand, $\eta$ is a bit more subtle. We let $P$ denote the degree --1 operator on $\sS$ whose kernel is $P(0,\infty)$ (see Definition \ref{def: propagator}). On $\sV[-1]\subset \sS$, $P$ acts by 0 on $\ker D$ and by $D^{-1}$ on $\Im D$. $P$ also induces an operator on $\sS^\vee$, and we can also extend it $C^\bullet(\sL)$-linearly to $\sO$ as a derivation for the multiplication in $W_\perp$; we will abuse notation and call all three operators $P$. We define $\eta$ to be 0 on the $k=0$ component of $\sO$ and $P/k$ on the $k\neq 0$ components of $\sO$. Note that $\eta$ acts only on the $W_\perp$ part of $\sO$. The fact that this is a deformation retraction is verified in \cite{othesis}. 
\end{proof}

Now, we turn on the deformation $ \Delta_\infty$ to compute the \textit{quantum} observables of the theory with trivial $\sL$ action:
\begin{proposition}
\label{prop: he2}
There is a deformation retraction of $C^\bullet(\sL)$-modules
\[
\begin{tikzcd}
\left(\sO(\sL[1])\otimes W,d_{\sL}+\Delta_\infty\right)\arrow[r,shift left = .5 ex," \iota'"] & \Obqhtriv[\infty]\arrow[l, shift left = .5 ex," \pi'"]\arrow[loop right, distance = 4em,start anchor = {[yshift = 1ex]east},end anchor = {[yshift=-1ex]east}]{}{\eta'}
\end{tikzcd}.
\]
In other words, the deformation $\Delta_\infty$ of the differential on $\sO$ induces the same deformation on $\sO(\sL[1]){\otimes} W$.
\end{proposition}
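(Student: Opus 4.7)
The plan is to apply the Homological Perturbation Lemma to the deformation retraction of Proposition \ref{prop: he1}, taking $\delta_V := \Delta_\infty$ as the perturbation. First I analyze the structure of $\Delta_\infty$: since $D$ is formally self-adjoint, Lemmas \ref{lem: hodge} and \ref{lem: spectrum} imply that the heat kernel of $[Q,Q^{GF}] = D^2_{0\to 0} + D^2_{1\to 1}$ converges as $t \to \infty$ to the orthogonal projector onto $\ker D$. Consequently the BV heat kernel $K_\infty$ is supported on $H^\bullet \sS \otimes H^\bullet \sS$, and $\Delta_\infty$ acts by contraction solely on the $W$-factor of the tri-decomposition $\sO = \sO(\sL[1]) \otimes W \otimes W_\perp$, strictly lowering $W$-symmetric degree by $2$ and preserving $W_\perp$-degree and $\sL$-degree.

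The next step is to verify the invertibility of $(1 - \Delta_\infty \eta)$ by a grading argument. Recall from Proposition \ref{prop: he1} that $\eta$ preserves $W$-degree and vanishes on the $W_\perp^{(0)}$-component; meanwhile $\Delta_\infty$ strictly lowers $W$-degree by $2$. Since $W$-degree is bounded below by $0$, on each homogeneous piece $\Sym^i(\sL^\vee) \otimes W^{(j)} \otimes W_\perp^{(k)}$ the Neumann series $\sum_n (\Delta_\infty \eta)^n$ terminates after at most $\lceil j/2 \rceil$ summands, and so the inverse is well-defined component-wise.

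I then compute the induced differential $\delta_W = \pi(1 - \Delta_\infty \eta)^{-1} \Delta_\infty \iota$. The inclusion $\iota$ identifies $\sO(\sL[1]) \otimes W$ with the $W_\perp^{(0)}$-component of $\sO$, and because $\Delta_\infty$ acts trivially on the $W_\perp$ factor, $\Delta_\infty \iota$ already lies in that component. But $\eta$ vanishes there, so $(1 - \Delta_\infty \eta)^{-1} \Delta_\infty \iota = \Delta_\infty \iota$ and $\delta_W = \Delta_\infty|_{\sO(\sL[1]) \otimes W}$. Adding this to the differential $d_\sL$ already present on $\sO(\sL[1]) \otimes W$ (the $Q$-term of Proposition \ref{prop: he1} having been killed by the passage to cohomology) yields the claimed differential $d_\sL + \Delta_\infty$.

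Finally, the perturbed data $\iota', \pi', \eta'$ are built from the original $\iota, \pi, \eta$ together with $\Delta_\infty$ and $(1 - \Delta_\infty \eta)^{-1}$, all of which are $C^\bullet(\sL)$-linear ($\Delta_\infty$ commutes with multiplication by $C^\bullet(\sL)$-elements because $K_\infty$ involves only $\sF$-variables). Hence the perturbed deformation retraction is again one of $C^\bullet(\sL)$-modules, as required. The main obstacle is the invertibility of $(1 - \Delta_\infty \eta)$ in the infinite-dimensional setting: this is not automatic, and requires exploiting the specific support of $K_\infty$ on harmonic sections rather than a generic finiteness property.
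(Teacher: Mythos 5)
Your overall strategy coincides with the paper's: apply the homological perturbation lemma to the retraction of Proposition \ref{prop: he1} with perturbation $\Delta_\infty$, identify $K_\infty$ as supported on the harmonic modes so that $\Delta_\infty$ contracts only the $W$-factor, and observe that $\eta$ kills the image of $\Delta_\infty\iota$ so that the induced differential collapses to $\pi\Delta_\infty\iota$. All of that matches the paper and is correct, as is the remark on $C^\bullet(\sL)$-linearity.

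There is, however, a genuine gap in your invertibility argument for $(1-\Delta_\infty\eta)$. You argue that since $\Delta_\infty$ strictly lowers $W$-degree by $2$ and $\eta$ preserves it, the Neumann series terminates on each homogeneous piece $\Sym^i(\sL^\vee)\otimes W^{(j)}\otimes W_\perp^{(k)}$, and conclude that the inverse is ``well-defined component-wise.'' But $\sO$ is the completed product over all $(i,j,k)$, and the $W$-grading is unbounded above: the ghost-number $-1$ generators $\xi_i^*$ of $W$ are fermionic of odd ghost number, hence \emph{commute}, so $W^{(j)}\neq 0$ for all $j$. A fixed output component in $W$-degree $j$ of $\sum_n(\Delta_\infty\eta)^n\theta$ receives the contribution $(\Delta_\infty\eta)^n\theta^{(i,j+2n,k)}$ for \emph{every} $n\geq 0$, and your grading argument gives no reason for all but finitely many of these to vanish; a lowering operator on a completed symmetric algebra is not locally nilpotent in general. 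What saves the construction is a \emph{uniform} nilpotency coming from fermionic statistics of the zero-modes: writing $\Delta_\infty=2\sum_i\partial_{\xi_i^*}\partial_{\phi_i^*}$, each application strictly lowers the $\phi^*$-degree by one, and since the ghost-number-zero duals $\phi_i^*$ anticommute, no element of $\sO$ depends on them to degree exceeding $\dim\ker D$. Hence $(\Delta_\infty\eta)^{\dim\ker D+1}=0$ identically and $(1-\Delta_\infty\eta)^{-1}=\sum_{l=0}^{\dim\ker D}(\Delta_\infty\eta)^l$ is a finite sum of operators. This fermionic bound, not the $W$-grading, is the finiteness you need; your closing sentence correctly flags invertibility as the crux but the justification you supply for it does not go through.
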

\begin{proof}
We use the homological perturbation lemma, though we need to check that the hypotheses of that lemma apply. First, note that the operator $Q+d_{\sL}+ \Delta_\infty$ is a differential, since it is the differential on $\Obqhtriv$, i.e. it is the differential on $\sO$ for the trivial action of $\sL$ on $\sS$.  Moreover, $(1-\Delta_\infty \eta)$ is invertible, as we now show. The operator $\Delta_\infty$ is the operator of contraction with $K_\infty$; based on the description of $K_t$ in the proof of Lemma \ref{lem: diagramcomputation} in the next section, and using the characterization of the heat kernel immediately preceding Proposition 2.37 in \cite{bgv}, it follows that 
\[
K_\infty = \sum_{i=1}^{\dim\ker D} -\phi_i \otimes \xi_i+\xi_i \otimes \phi_i.
\]
Here,  $\{\phi_i\}$ is an orthonormal basis for $\ker D$ sitting in degree 0 and $\xi_i$ is $\phi_i$ as an element of $\sV$, but living in degree 1 in $\sS$. Let $\{\phi^*_i\}$ and $\{\xi^*_i\}$ denote the dual bases for $\ker D^\vee$.

It follows that 
\[
\Delta_\infty = 2\sum_{i=1}^{\dim\ker D}\frac{\partial^2}{\partial \xi^*_i \partial \phi^*_i}.
\]
So $\Delta_\infty$ lowers $\Sym$-degree  in $W$ by 2. On the other hand, $\eta$ doesn't change $\Sym$-degree in $W$. More important, no element of $\sO$ can have a dependence of degree greater than $\dim \ker D$ on the $\phi_i$, since the $\phi_i$ are of ghost number 0 and have fermionic statistics and so anti-commute with each other. It follows that $(\Delta_\infty \eta)^{\dim \ker D+1}=0$ so that
\[
(1- \Delta_\infty \eta)^{-1} = \sum_{l=0}^{\dim\ker D} ( \Delta_\infty \eta)^l,
\]
and the homological perturbation lemma applies to our situation. Moreover, since $\iota$, $\pi$ and $\eta$ were constructed to be $C^\bullet(\sL)$-linear, the explicit formulas of the homological perturbation lemma show that the perturbed data are also $C^\bullet(\sL)$-linear.

The perturbation $\delta_{\Delta_\infty}$ of the differential on $C^\bullet(\sL){\otimes}W$ is given by the formula
\[
\delta_{\Delta_\infty} = \pi (1- \Delta_\infty \eta)^{-1}\Delta_\infty\iota= \pi\sum_{l=0}^{\dim\ker D} (\Delta_\infty \eta)^l\Delta_\infty \iota;
\]
however, since $\eta$ acts trivially on elements of $C^\bullet(\sL){\otimes}W$ and $\Delta_\infty$ preserves $C^\bullet(\sL){\otimes}W\subset \sO$, the formula for $\delta_{\Delta_\infty}$ simplifies:
\[
\delta_{\Delta_\infty} = \pi\Delta_\infty \iota.
\]
The proposition follows. 
\end{proof}

Now, we'd like to understand the cochain complex $(\sO(\sL[1]){\otimes} W, d_{\sL}+ \Delta_\infty)$ a bit better. We continue to use the notation $\xi^*_i, \phi^*_i$ from the proof of the previous proposition.
\begin{proposition}
\label{prop: he3}
There is a deformation retraction of $C^\bullet(\sL)$-modules
\[
\begin{tikzcd}
\left(\sO(\sL[1]),d_{\sL}\right)\arrow[r,shift left = .5 ex," \iota''"] & (\sO(\sL[1]){\otimes} W,d_{\sL}+ \Delta_\infty)\arrow[l, shift left = .5 ex," \pi''"]\arrow[loop right,distance=4em, start anchor = {[yshift=1ex]east}, end anchor = {[yshift=-1ex]east}]{}{\eta''}
\end{tikzcd}.
\]
\end{proposition}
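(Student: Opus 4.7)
The plan is to prove the proposition by constructing an explicit deformation retract between $(W, \Delta_\infty)$ and the trivial complex $(\R, 0)$, and then extending $\sO(\sL[1])$-linearly. The key observation is that the cohomology of $(W, \Delta_\infty)$ is one-dimensional, generated by the ``top form'' in the $\phi^*$-variables, so the deformation retract exhibits $W$ as quasi-isomorphic to a rank-one module concentrated in cohomological degree zero.

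First I fix an orthonormal basis $\{\phi_i\}_{i=1}^{n}$ of $\ker D$ (with $n := \dim \ker D$), yielding dual generators $\phi^*_i$ of cohomological degree $0$ and $\xi^*_i$ of cohomological degree $-1$, all of fermionic statistics. Accounting for total parity (cohomological plus statistical), the $\phi^*_i$ are totally odd while the $\xi^*_i$ are totally even, so in the graded-symmetric algebra $W$ the $\phi^*_i$ behave as exterior (Grassmann) generators and the $\xi^*_i$ as polynomial generators. Thus $W$ decomposes as a tensor product
\[
W \;\cong\; \bigotimes_{i=1}^{n} W_i, \qquad W_i := \Lambda(\phi^*_i) \otimes \Sym(\xi^*_i),
\]
and $\Delta_\infty = \sum_i \Delta_i$ with $\Delta_i := 2\,\partial_{\xi^*_i}\partial_{\phi^*_i}$ acting on the $i$-th factor only.

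For the single-factor complex $(W_i, \Delta_i)$ a direct computation gives $\Delta_i(\phi^*_i f(\xi^*_i) + g(\xi^*_i)) = 2 f'(\xi^*_i)$, from which its cohomology is seen to be $\R \cdot \phi^*_i$, concentrated in cohomological degree zero. I define the deformation retract data
\[
\iota_i(1) := \phi^*_i, \qquad \pi_i(\phi^*_i f + g) := f(0), \qquad \eta_i(\phi^*_i f + g) := -\tfrac{1}{2}\phi^*_i \int_0^{\xi^*_i} g(t)\,\d t,
\]
and verify by a short calculation that $\pi_i \iota_i = \mathrm{id}$ and $\iota_i \pi_i - \mathrm{id} = \Delta_i \eta_i + \eta_i \Delta_i$. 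I then combine these $n$ deformation retracts into one for the full tensor product $(W, \Delta_\infty)$ by iterating Lemma \ref{lem: comphe} (viewing each new factor $W_{i+1}$ as introducing a ``perturbation'' $\Delta_{i+1}$ of the differential on the already-constructed tensor product, and applying the homological perturbation lemma). This yields a deformation retract from $(\R, 0)$ to $(W, \Delta_\infty)$ with $\iota(1) = \omega := \phi^*_1 \cdots \phi^*_n$ and $\pi$ extracting the coefficient of $\omega$ and evaluating at $\xi^* = 0$.

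Finally I extend $\iota$, $\pi$, $\eta$ to $\sO(\sL[1])$-linear maps $\iota''$, $\pi''$, $\eta''$ on $\sO(\sL[1]) \otimes W$. Because $\eta''$ is $\sO(\sL[1])$-linear and $d_\sL$ acts only on the $\sO(\sL[1])$ factor, $d_\sL$ (graded-)commutes with $\eta''$, giving $[d_\sL + \Delta_\infty, \eta''] = [\Delta_\infty, \eta''] = \iota'' \pi'' - \mathrm{id}$ and producing the desired deformation retract of $C^\bullet(\sL)$-modules. The main obstacle is the one-factor case, which demands care with Koszul signs: it is essential that $\xi^*_i$ is \emph{totally even} despite having odd cohomological degree, since this is what makes the antiderivative $\int_0^{\xi^*_i}$ well-defined as a map $\Sym(\xi^*_i) \to \Sym(\xi^*_i)$ and forces the cohomology to concentrate on the top $\phi^*$-form; once this case is in place, the remaining steps are essentially formal.
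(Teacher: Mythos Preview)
Your argument is correct and takes a genuinely different route from the paper's proof. The paper writes down a single global homotopy $\eta''$ on all of $W$ via an explicit combinatorial formula in multi-indices $I,J$ (tracking the exponents of the $\xi^*_i$ and the support of the $\phi^*_J$), and then verifies the identity $[\eta'',d_\sL+\Delta_\infty]=\iota''\pi''-\mathrm{id}$ by a direct, somewhat intricate, calculation. You instead exploit the factorisation $W\cong\bigotimes_i W_i$ with $\Delta_\infty=\sum_i\Delta_i$ acting factorwise, build a one-variable Koszul-type contraction on each $W_i$, and then assemble these via the homological perturbation lemma and Lemma~\ref{lem: comphe}. Your approach is more modular and conceptual (the one-factor case is essentially the contraction of the de Rham complex of the formal line), while the paper's gives a closed formula for $\eta''$ that may be useful when one needs the homotopy explicitly---for instance in computing the perturbed data $\iota''',\pi''',\eta'''$ in the proof of Proposition~\ref{prop: line}.

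One small point of care: your ``total parity'' heuristic does not literally reproduce the paper's braiding, which is $(-1)^{|a||b|+\pi_a\pi_b}$ rather than $(-1)^{(|a|+\pi_a)(|b|+\pi_b)}$; in particular $\phi^*_i$ and $\xi^*_j$ actually \emph{anti}-commute in $W$ (as the paper notes). This does not affect your argument---the self-commutation relations you need ($\phi^*_i$ Grassmann, $\xi^*_i$ polynomial) are correct under either convention, and the tensor decomposition $W\cong\bigotimes_iW_i$ holds in the $\Z\times\Z/2$-graded symmetric monoidal category regardless---but you should phrase the sign discussion in terms of the paper's actual braiding rather than total parity.
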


The main consequence of this proposition, together with Proposition \ref{prop: he2}, is 
\begin{corB}
$\Obqhtriv[\infty]$ deformation retracts onto a free rank-one $C^\bullet(\sL)$-module.
\end{corB}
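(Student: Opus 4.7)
The plan is to build the retraction as a tensor product. Because the differential on the middle complex splits as $d_\sL+\Delta_\infty$, with $d_\sL$ acting only on the $\sO(\sL[1])$ factor and $\Delta_\infty$ acting only on the $W$ factor, it suffices to produce a $\CC$-linear deformation retraction of $(W,\Delta_\infty)$ onto a one-dimensional subspace and then tensor with the identity retraction on $(\sO(\sL[1]),d_\sL)$. All resulting maps act as the identity on the $\sO(\sL[1])$ factor and are therefore automatically $C^\bullet(\sL)$-linear for the multiplication module structure.

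Using the bases $\{\phi^*_i\}$ of $(\ker D)^\vee$ and $\{\xi^*_i\}$ of $(\ker D)^\vee[1]$ introduced in the proof of Proposition \ref{prop: he2} (of cohomological degree $0$ and $-1$ respectively, both of fermionic statistics), the total-degree Koszul sign rule makes the $\phi^*_i$ anticommute and the $\xi^*_i$ commute. Hence
\[
W\;\cong\;\Lambda[\phi^*_1,\dots,\phi^*_n]\otimes\CC[\xi^*_1,\dots,\xi^*_n],\qquad n:=\dim\ker D,
\]
with $\Delta_\infty=2\sum_i\partial_{\xi^*_i}\partial_{\phi^*_i}$. This factorizes as $W=\bigotimes_{i=1}^n W_i$, where $W_i=\Lambda[\phi^*_i]\otimes\CC[\xi^*_i]$ and $\Delta_\infty=\sum_i\Delta_i$ for $\Delta_i:=2\partial_{\xi^*_i}\partial_{\phi^*_i}$.

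For each $i$ I would exhibit a deformation retraction $(W_i,\Delta_i)\to(\CC\cdot\phi^*_i,0)$ by setting $\iota_i(1)=\phi^*_i$, letting $\pi_i$ extract the coefficient of $\phi^*_i$, and defining the cohomological-degree $-1$ homotopy
\[
\eta_i\bigl((\xi^*_i)^k\bigr)=\frac{1}{2(k+1)}\,\phi^*_i(\xi^*_i)^{k+1},\qquad \eta_i\bigl(\phi^*_i(\xi^*_i)^k\bigr)=0\qquad(k\geq 0).
\]
A direct monomial-by-monomial check verifies $\pi_i\iota_i=\mathrm{id}$, $\Delta_i\eta_i+\eta_i\Delta_i=\mathrm{id}-\iota_i\pi_i$, and the side conditions $\eta_i^2=0$, $\eta_i\iota_i=0$, $\pi_i\eta_i=0$; in particular, the cohomology of $(W_i,\Delta_i)$ is the one-dimensional space spanned by the class of $\phi^*_i$.

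Iterating the standard tensor-product formula for deformation retractions $n$ times then produces a retraction of $(W,\Delta_\infty)$ onto $(\CC\cdot\phi^*_1\cdots\phi^*_n,0)$, and tensoring with the identity retraction on $(\sO(\sL[1]),d_\sL)$ and identifying $\sO(\sL[1])\otimes\CC\cong\sO(\sL[1])$ yields the required $\iota''$, $\pi''$, $\eta''$. The main technical bookkeeping is the propagation of Koszul signs through the iterated tensor product; but because each single-variable retraction satisfies the full side conditions above, the Eilenberg-Zilber formula applies cleanly at every step and the verifications of the retraction identities reduce to the already-checked single-variable cases.
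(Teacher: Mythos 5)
Your proposal is correct, and for the substantive step (the retraction of $(\sO(\sL[1])\otimes W,\, d_{\sL}+\Delta_\infty)$ onto $(\sO(\sL[1]),d_{\sL})$, i.e.\ Proposition \ref{prop: he3}) it takes a genuinely different route from the paper. The paper writes down a single closed-form homotopy $\eta''$ on all of $W$ at once, indexed by multi-indices $I,J$ with the normalization $\tfrac{(-1)^{|I|}}{2(|J^c|+\sigma_{I,J})}\sum_{j\notin J}\tfrac{1}{i_j+1}\xi^*_j\xi^*_I\phi^*_j\phi^*_J$, and verifies $[\eta'',d_\sL+\Delta_\infty]=-id+\iota''\pi''$ by a direct, rather delicate computation on monomials. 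You instead factor $(W,\Delta_\infty)$ as a (completed) tensor product of the two-variable Koszul-type complexes $W_i=\Lambda[\phi^*_i]\otimes\CC[\![\xi^*_i]\!]$ with $\Delta_i=2\partial_{\xi^*_i}\partial_{\phi^*_i}$, retract each onto $\CC\cdot\phi^*_i$ (your single-variable identities check out: $\Delta_i(\phi^*_i(\xi^*_i)^k)=2k(\xi^*_i)^{k-1}$, so $\Delta_i\eta_i+\eta_i\Delta_i=\mathrm{id}-\iota_i\pi_i$ holds monomial by monomial), and assemble via the tensor-product formula $h=h_1\otimes 1+\iota_1\pi_1\otimes h_2$ iterated $n$ times; the resulting target $\CC\cdot\phi^*_1\cdots\phi^*_n$ agrees with the paper's $\Lambda^{top}(\ker D^+)^\vee\otimes\Lambda^{top}(\ker D^-)^\vee$. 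Your approach buys modularity and an essentially automatic identification of $H^\bullet(W,\Delta_\infty)$, at the cost of tracking Koszul signs through the iterated tensor product (note $\Delta_i$ has total parity odd, so the identity factors do contribute signs); the paper's buys an explicit global formula usable later (e.g.\ in the invertibility argument for $1-\{I_{\hbar=1}[\infty],\cdot\}_\infty\eta'''$) at the cost of a heavier one-shot verification. Two small points to keep in mind: work with the completed symmetric algebra in the $\xi^*_i$, over which your $\eta_i$ is still well defined since it acts degree by degree; and since the paper's notion of deformation retraction does not impose side conditions, the basic tensor-product identity $[d,h]=1\otimes 1-\iota_1\pi_1\otimes\iota_2\pi_2$ already suffices, so the Eilenberg--Zilber side conditions you verify are welcome but not strictly needed.
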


\begin{proof}[Proof of Proposition \ref{prop: he3}]
$\iota''$ and $\pi''$ will be, just like $\iota$ and $\pi$, inclusion and projection operators; we simply need to identify the relevant one-dimensional subspace of $W$ onto which to project. To this end, let $\{\tilde\phi^*_i\}_{i=1}^{\dim\ker D^+}$ be an orthonormal basis for $(\ker D^+)^\vee$ and $\{\tilde {\tilde \phi}^*_i\}_{j=1}^{\dim\ker D^-}$ be an orthonormal basis for $(\ker D^-)^\vee$. The one-dimensional subspace of $W$ we seek is spanned by 
\[
\tilde\phi^*_1\cdots \tilde \phi^*_{\dim \ker D^+}\tilde {\tilde \phi}^*_1 \cdots \tilde {\tilde \phi}^*_{\dim \ker D^-},
\]
i.e. it is the subspace $\Lambda^{top} (\ker D^+)^\vee\otimes\Lambda^{top} (\ker D^-)^\vee$. Now, $\iota''$ can be verified directly to be a cochain map, while to verify that $\pi''$ is a cochain map, note first that $\pi''\Delta_\infty=0$ since $\Delta_\infty$ necessarily lowers the degree of dependence on the $\phi_i$. Thus, 
\[\pi''(d_{\sL}\alpha+\Delta_\infty \alpha) = \pi''d_{\sL}\alpha = d_{\sL}\pi''(\alpha),\]
since $\pi''$ manifestly intertwines the $d_{\sL}$ differential. Moreover, $\pi''\iota''=id$, by construction.

It remains to construct $\eta''$ and verify that $\eta''$ is a chain homotopy between $\iota''\pi''$ and $id$. We can write 
\[
W = \R[\![ \phi_i^*, \xi_i^*]\!],
\]
where the $\phi^*_i$ have ghost number 0 and have fermionic statistics, while the $\xi^*_i$ have ghost number --1 and have fermionic statistics as well. In other words, the $\xi^*_i$ commute with each other, while the $\phi^*_i$ anti-commute among themselves and with the $\xi^*_i$. Now, we let $I=(i_1,\cdots, i_{\dim\ker D})$ be a multi-index, $J\subset (1,\cdots, \dim\ker D)$, and $\alpha \in \sO(\sL[1])$; we consider $J$ to be a multi-index with all indices zero or 1. Moreover, define
\[
\sigma_{I,J}=\#\left(\{i_j\mid j\in J, i_j\neq 0\}\right),
\]
and denote by $|I|$ the total degree of $I$, that is
\[
|I| = \sum_{k=1}^{\dim\ker D} i_k.
\]

Let 
\[
\eta''(\alpha\otimes\xi^{*}_I\phi^*_J) =
(-1)^{|\alpha|+1}\alpha\otimes \frac{(-1)^{|I|}}{2 (|J^c|+\sigma_{I,J})}\left(\displaystyle\sum_{j\notin J} \frac{1}{i_j+1}\xi^*_{j}\xi^*_I \phi^*_j \phi^*_J \right).
\]
Here, $|J^c|$ is the cardinality of the complement of $J$ in $\{1,\cdots, \dim\ker D\}$. We let $1/(|J^c|+\sigma_{I,J})=0$ when both summands in the denominator are zero. Note that \[|(J\backslash\{k\})^{c}|+\sigma_{I\backslash\{k\},J\backslash \{k\}}=|J^c|+\sigma_{I,J}\] when $i_k\neq 0$, where $I\backslash \{k\}$ is $I$ with $i_k$ set to $0$. We will use this fact tacitly below. Let us verify, using the notation $\delta_{jk}$ for the Kronecker delta:

\begin{dmath*}
(d_{\sL}+\Delta_\infty) \eta'' (\alpha\otimes\xi^*_I \phi^*_J) =-\eta''(d_{\sL}\alpha \otimes \xi^*_I\phi^*_J)
\\-\alpha\otimes\frac{(-1)^{|I|}(-1)^{|I|+1}}{|J^c|+\sigma_{I,J}}\left(\sum_{\substack{j\notin J\\ k\in J\cup \{j\}}} \frac{1}{i_j+1}\left( \delta_{jk} \xi_I^* + \xi^*_j \frac{\partial\xi_I^*}{\partial\xi^*_k}\right)\left( \delta_{jk} \phi_J^* - \phi^*_j \frac{\partial\phi_J^*}{\partial\phi^*_k}\right) \right)
= -\eta''(d_{\sL}\alpha \otimes \xi^*_I\phi^*_J)
\\+\alpha\otimes\frac{1}{|J^c|+\sigma_{I,J}}\left(\displaystyle \sum_{j\notin J} \frac{1}{i_j+1}\left( \xi^*_I \phi_J^*+i_j\xi^*_I\phi_J^*\right) -\displaystyle \sum_{\substack{j\notin J\\ k\in J}}\frac{1}{i_j+1}\xi^*_j \frac{\partial\xi^*_I}{\partial\xi^*_k}\phi_j^* \frac{\partial\phi^*_J}{\partial\phi^*_k}\right)
\end{dmath*}
\begin{dmath*}
\eta ''(d_{\sL}+ \Delta_\infty)(\alpha\otimes \xi^*_I\phi^*_J)=\eta''(d_{\sL}\alpha\otimes \xi^*_I\phi^*_J)-\alpha \otimes\frac{(-1)^{|I|-1}(-1)^{|I|}}{|J^c|+\sigma_{I,J}}\\ \left(\sum_{\substack{j\in J\\k\in J^c}} \frac{1}{i_k+1}\xi^*_k \frac{\partial\xi^*_I}{\partial\xi^*_j} \phi^*_k \frac{\partial \phi^*_J}{\partial\phi^*_j}+\sum_{\substack{j\in J\\i_j\neq 0}} \frac{1}{i_j} \xi^*_j \frac{\partial\xi^*_I}{\partial\xi^*_j}\phi^*_j \frac{\partial \phi^*_J}{\partial\phi^*_j}\right)
\end{dmath*}

It follows that, unless $|J^c|=\sigma_{I,J}=0$, i.e. unless $\xi_I^*\phi^*_J=\phi^*_1\cdots \phi^*_{\dim\ker D}$ (a spanning element of $\det D^+\subset W$), \[[\eta'',d_{\sL}+ \Delta_\infty](\xi_I^*\phi_J^*)=-\xi_I^*\phi_J^*=(-id+\iota''\pi'')(\xi_I^*\phi_J^*).\] And, if $|J^c|=\sigma_{I,J}=0$, then \[[\eta'', d_{\sL}+ \Delta_\infty]\phi_J^*=0 = (-id+\iota''\pi'')(\phi_J^*),\] whence the proposition follows, once we note that $\eta''$ is manifestly $C^\bullet(\sL)$-linear (once one takes into account the fact that $\eta''$ has odd cohomological degree and so acquires an extra sign when commuting past elements of $C^\bullet(\sL)$). 
\end{proof}

So far, we have shown that $\Obqhtriv[\infty]$---i.e. the cochain complex of equivariant observables for the \textit{trivial} $\sL$-action on $\sS$---deformation retracts onto a rank-one $C^\bullet(\sL)$-module. To prove Proposition \ref{prop: line}, we need to prove the analogous statement for the cochain complex of equivariant observables with non-trivial $\sL$ action. The differential on this latter chain complex is a perturbation of the differential on $\Obqhtriv[\infty]$ by the term $\{I_{\hbar=1}[\infty],\cdot\}_\infty$, so to complete the proof of Proposition \ref{prop: line},we need to check that this perturbation satisfies the hypotheses of the homological perturbation lemma.

\begin{proof}[Proof of Proposition \ref{prop: line}]
The three preceding propositions, together with Proposition \ref{lem: comphe},  can be combined to yield a deformation retraction
\[
\begin{tikzcd}
C^\bullet(\sL)\arrow[r,shift left = .5 ex," \iota'''"] & (\sO,d_{\sL}+Q+ \Delta_\infty)\arrow[l, shift left = .5 ex," \pi'''"]\arrow[loop right,distance=4em, start anchor = {[yshift=1ex]east}, end anchor = {[yshift=-1ex]east}]{}{\eta'''}
\end{tikzcd}.
\]
We need only to verify that the perturbation $\{I_{\hbar=1}[\infty], \cdot\}_{\infty}$ satisfies the hypotheses of the homological perturbation lemma. Then, the degree-one endomorphism $\delta$ which appears in the statement of the proposition can be computed by applying the homological perturbation lemma with $(V,d_V)=\Obqh$, $W=\sO(\sL[1])$, $\delta_V=\{I_{\hbar=1}[\infty],\cdot\}_\infty$:
\[
\delta=\pi'''\left(1-\{I_{\hbar=1}[\infty],\cdot\}_\infty\eta'''\right)^{-1}\{I_{\hbar=1}[\infty],\cdot\}_\infty\iota'''.
\]

We have seen that the operator \[Q+d_{\sL}+\Delta_{\infty}+\{I_{\hbar=1}[\infty],\cdot\}_\infty\] is a differential (satisfies the wQME), so it remains to show that $(1-\{I_{\hbar=1}[\infty],\cdot\}_\infty \circ \eta''')$ is invertible. To do this, we must understand $\eta'''$ a bit better. In the notations of Propositions \ref{prop: he2} and \ref{prop: he3}, 
\[
\eta''' = \eta'+\iota' \eta'' \pi'.
\]
Let us explicitly compute $\iota'$, $\eta'$, and $\pi'$. Letting $\iota, \eta, $ and $\pi$ be as in Proposition \ref{prop: he1}. Then 

\begin{align*}
\eta' &= \eta\sum_{l=0}^\infty ( \Delta_\infty \eta)^l\\
\iota' &= \iota +\eta\sum_{l=0}^\infty (\Delta_\infty \eta)^l  \Delta_\infty \iota = \iota\\
\pi' &= \pi + \pi \sum_{l=1}^\infty ( \Delta_\infty \eta)^l = \pi.
\end{align*}
$\iota'=\iota$ because $\Delta_\infty$ preserves $\sO(\sL[1]){\otimes} W\subset \sO$, but $\eta$ acts by 0 on this subspace. To show that $\pi' = \pi$, note that $ \Delta_\infty \eta$ commutes with $\iota\pi$ because both $\Delta_\infty$ and $\eta$ preserve $W$. Moreover, $\eta\iota \pi=0$, whence
\[
\pi' = \pi + \pi\iota\pi \sum_{l=1}^\infty ( \Delta_\infty \eta)^l =\pi+\pi \sum_{l=1}^\infty ( \Delta_\infty \eta)^l\iota\pi=\pi
\]
Recall also that the sum in $\eta'$ is finite, since $\eta$ and $\Delta_\infty$ commute and $\Delta_\infty^{\dim\ker D^++1}=0$.

We will need to understand the effect that various operators have on the $\Sym$-grading in $\sO$. The results are depicted in Figure \ref{tab: degrees}. The table should be interpreted as follows: a ``0'' means that the operator preserves the corresponding degree, a ``$>$'' means the operator has only terms which increase the corresponding degree, a ``$\geq$'' means the operator has terms which increase the corresponding degree but none that decrease it; ``$<'$' and ``$\leq$'' have analogous meanings. In all cases except the ``$>$'', the operators only have a finite number of terms increasing or decreasing the corresponding degree.

\begin{figure}
\[
\begin{array}{c|ccc}
& C^\bullet(\sL) & W & W_\perp \\
\hline 
\{I_{\hbar=1}[\infty],\cdot\}_\infty & > & \leq & \geq \\
\eta' & 0 & \leq& 0  \\
\iota' \eta'' \pi' & 0 & \geq & 0
\end{array}
\]
\caption{The effect of various operators on $\Sym$-degree and $\hbar$-power in $\sO$.}
\label{tab: degrees}
\end{figure}

To show that $(1- \{I_{\hbar=1}[\infty],\cdot\}_\infty \circ \eta''')$ is invertible, we will show that the sum
\[
\sum_{l=0}^\infty \left(\{I_{\hbar=1}[\infty],\cdot\}_\infty \circ \eta'''\right)^l
\]
is well-defined. Let $\theta^{(i,j,k)}$ denote the component of $\theta$ in 
\[\widehat{\Sym}^i(\sL[1]^\vee)\otimes W^{(j)}\otimes W_\perp^{(k)},\] 
and let $\zeta$ denote the putative value of $(1-\{I_{\hbar=1}[\infty],\cdot\}_\infty \circ \eta''')^{-1}\theta$, with $\zeta^{(i,j,k)}$ defined similarly. The $C^\bullet(\sL)$ column of Figure \ref{tab: degrees} tells us that $\zeta^{(i,j,k)}$ contains only sums of terms like 
\[
\left( \{I[\infty,\cdot\}_\infty \eta'''\right)^l\theta^{(i',j',k')} 
\] 
for $l+i'\leq i$, i.e. only a finite number of $i'$ contribute to $\zeta^{(i,j,k)}$. Moreover, since in all spaces but the $>$, the corresponding operator has only a finite number of terms changing degree, at a fixed $l$ and $i'$, there is a maximum amount by which $\left( \{I[\infty,\cdot\}_\infty \eta'''\right)^l$ can change $j',k'$. Thus, $\zeta^{(i,j,k)}$ only has contributions from 
\[\left( \{I[\infty,\cdot\}_\infty \eta'''\right)^l\theta^{(i',j',k')} \]
 when $l+i'\leq i$ and $|j'-j|,|k'-k|$ are small enough for the given $l$ and $i'$. It follows that $(1-\{I_{\hbar=1}[\infty],\cdot\}_\infty \eta''')^{-1}$ is well-defined, whence the proposition. 
\end{proof}

\subsection{More Results Concerning the Equivariant Observables}
In Section \ref{sec: lemmas} and the previous subsection, we saw some indication that the obstruction $\Obstr$ is related to the triviality of the equivariant quantum observables $\Obq$ as a $C^\bullet(\sL)$-module (see, e.g. Lemma \ref{lem: sQME} and the Corollary to Proposition \ref{prop: line}). In this subsection, we show that $\Obstr$ is precisely the measure of the non-triviality of the observables as a $C^\bullet(\sL)$-module. We begin by proving Corollary A:

\begin{proof}[Proof of Corollary A]

From the proof of Proposition \ref{prop: line}, we have a deformation retraction
\[
\begin{tikzcd}
\left(\sO(\sL[1]) ,d_{\sL}\right)\arrow[r,shift left = .5 ex," \iota'''"] & (\sO,d_{\sL}+Q+ \Delta_\infty)\arrow[l, shift left = .5 ex," \pi'''"]\arrow[loop right,distance=4em, start anchor = {[yshift=1ex]east}, end anchor = {[yshift=-1ex]east}]{}{\eta'''}
\end{tikzcd}.
\]
The differential on $\Obqobstr[\infty]$ is a perturbation of $d_{\sL}+Q+ \Delta_\infty$ by the operator $ \Obstr[\infty]\cdot$. For the same reasons as in the proof of Proposition \ref{prop: line}, this perturbation satisfies the hypotheses of the homological perturbation lemma. Therefore, all we need to check is that the corresponding differential on $\sO(\sL[1])$ is as specified. To see this, recall that the perturbation of the differential on $\sO(\sL[1])$ is given by the formula
\[
\pi'''(1- \Obstr[\infty]\cdot\eta''')^{-1}\left( \Obstr[\infty]\cdot\right) \iota'''.
\]
Since $\Obstr[\infty]\in C^\bullet(\sL)$, multiplication by $\Obstr[\infty]$ preserves the subspace \[\sO(\sL[1])\subset \Obqobstr[\infty].\] Moreover, by construction, $\eta'''$ is 0 on this subspace. Also, tracing through the definitions of $\iota'''$ and $\pi'''$, one finds that they are simply the usual inclusion and projection maps for the subspace $\sO(\sL[1])\subset \sO$ with respect to the Hodge decomposition of $\sO$. The Corollary follows. 
\end{proof}

Together with the isomorphism $\Obqh[\infty] \to \Obqobstr[\infty]$, Corollary A gives us an explicit characterization of the differential on $\sO(\sL[1])$ from Proposition \ref{prop: line}. Moreover, this differential is directly related to the obstruction. Finally, we can state and prove the following lemma.

\begin{lemma}
\label{lem: obstrtrivbund}
The dg $C^\bullet(\sL)$ module
\[
\left( \sO(\sL[1]), d_{\sL}+ \Obstr[\infty]\cdot\right)
\]
is isomorphic to the trivial $C^\bullet(\sL)$-module $C^\bullet(\sL) $ if and only if $\Obstr[\infty]$ is exact in $C^\bullet(\sL)$.
\end{lemma}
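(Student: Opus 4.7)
The plan is to realize the asserted isomorphism concretely as multiplication by an exponential $e^{\pm J}$, where $J$ is a primitive for $\Obstr[\infty]$. The key structural fact I will exploit is that $\sO(\sL[1])=\widehat{\Sym}(\sL[1]^\vee)$ is a complete local graded-commutative algebra with augmentation ideal equal to $C^\bullet_{red}(\sL)$, so that $\exp$ and $\log$ converge on reduced elements / unipotent units respectively.

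For the ``if'' direction, I assume $\Obstr[\infty]=d_{\sL}J$ for some $J\in C^\bullet(\sL)$. Since $\Obstr[\infty]$ is a reduced cocycle of cohomological degree one, I may subtract the constant term of $J$ (which $d_{\sL}$ kills) to arrange $J\in C^\bullet_{red}(\sL)$ of cohomological degree zero. Because $J$ has vanishing $\widehat{\Sym}^0$-component, the series $e^{-J}$ converges in $\sO(\sL[1])$. I then propose the map
\[
\Phi:C^\bullet(\sL)\longrightarrow \bigl(\sO(\sL[1]),\,d_{\sL}+\Obstr[\infty]\cdot\bigr),\qquad f\longmapsto e^{-J}f.
\]
It is manifestly $C^\bullet(\sL)$-linear of cohomological degree zero, and it is invertible with inverse given by multiplication by $e^{J}$. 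The chain-map property reduces to the identity $d_{\sL}(e^{-J})=-(d_{\sL}J)e^{-J}=-\Obstr[\infty]\cdot e^{-J}$, which is immediate by applying the graded Leibniz rule term-by-term to the exponential power series.

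For the ``only if'' direction, I suppose that an isomorphism $\Phi$ of dg $C^\bullet(\sL)$-modules exists. Since both sides are free of rank one, $\Phi$ is multiplication by the element $u:=\Phi(1)\in\sO(\sL[1])$, which must sit in cohomological degree zero. Invertibility of $\Phi$ forces $u$ to be a unit in $\sO(\sL[1])$, i.e. its $\widehat{\Sym}^0$-component is a nonzero scalar; after rescaling (which changes neither the triviality of the module nor the exactness of $\Obstr[\infty]$), I may assume this scalar equals $1$. The chain-map identity $\Phi(d_{\sL}1)=(d_{\sL}+\Obstr[\infty]\cdot)u$ then reads $d_{\sL}u+\Obstr[\infty]u=0$. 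Defining $J:=-\log u\in C^\bullet_{red}(\sL)$ via the usual logarithmic series in $u-1$ (convergent because $u-1$ lies in the augmentation ideal), the graded Leibniz rule yields $d_{\sL}J=-u^{-1}d_{\sL}u=\Obstr[\infty]$, so $\Obstr[\infty]$ is exact.

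No step presents a genuine obstacle; the content is packaged into a handful of formal-algebra checks: convergence of $\exp$ and $\log$ in the $\widehat{\Sym}$-completion, continuity of $d_{\sL}$ as a derivation with respect to the augmentation filtration (so that Leibniz applies term-by-term), and a degree bookkeeping that uses the cohomological degree of $\Obstr[\infty]$ to place $J$ in the correct degree to match $u$. The lemma is essentially the statement that the unipotent rank-one $C^\bullet(\sL)$-modules are classified by $H^1\bigl(C^\bullet_{red}(\sL)\bigr)$, with $[\Obstr[\infty]]$ the classifying class of the module $(\sO(\sL[1]),d_{\sL}+\Obstr[\infty]\cdot)$.
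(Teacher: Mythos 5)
Your proof is correct and follows essentially the same route as the paper's: multiplication by the exponential of a primitive of $\Obstr[\infty]$ in one direction, and extraction of a primitive as $d_{\sL}\log\Phi(1)$ from the unit $\Phi(1)$ in the other. The extra care you take with convergence of $\exp$ and $\log$ in the $\widehat{\Sym}$-completion and with normalizing the $\widehat{\Sym}^0$-component is a reasonable elaboration of details the paper leaves implicit.
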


\begin{proof}
Suppose $\Obstr[\infty] = d_{\sL}\alpha$ for $\alpha \in C^0(\sL)$. Then multiplication by $e^{ \alpha}$ gives a cochain isomorphism
\[
\begin{tikzcd}
\left( \sO(\sL[1]), d_{\sL}+\Obstr[\infty]\cdot\right)\ar[r,"\cdot e^{\alpha}"] &C^\bullet(\sL)
\end{tikzcd}.
\]
On the other hand, suppose given a cochain isomorphism of $C^\bullet(\sL)$-modules 
\[
\begin{tikzcd}
\left( \sO(\sL[1]), d_{\sL}+ \Obstr[\infty]\cdot\right)\ar[r,"\Phi"] &C^\bullet(\sL)
\end{tikzcd}.
\]
The statement that this map intertwines differentials and multiplication by elements of $C^\bullet(\sL)$ implies, in particular, that 
\[
 \Obstr[\infty] \cdot\Phi(1) =\Phi(\Obstr[\infty]\cdot 1)= \Phi\left( (d_{\sL}+\Obstr[\infty]\cdot)1\right) = d_{\sL} \Phi(1).
\]
Since $\Phi$ is an isomorphism, $\Phi(1)$ must be invertible in the symmetric algebra $C^\bullet(\sL)$, so that $\log \Phi(1)$ is defined. It follows that 
\[
 \Obstr[\infty] = (d_{\sL}\Phi(1))\Phi(1)^{-1} = d_{\sL} \log \Phi(1),
\]
so that $\Obstr[\infty]$ is exact in $C^\bullet(\sL)$. 
\end{proof}

\section{The Obstruction and the Obstruction Complex}
\label{sec: main}
In this section, we study and characterize $\Obstr$ in its entirety for the case of the axial symmetry. In Section \ref{subsec: main}, we give an abstract characterization of the cohomology class of $\Obstr$ (Theorem \ref{thm: str}). This is the main theorem of the paper. The proof of the theorem relies on a Feynman diagram computation which we defer to subsection \ref{subsec: comps}, where we also compute the obstruction as evaluated on one-forms. Finally, in Section \ref{subsec: fujikawa}, we connect our approach to the more traditional Fujikawa approach, as discussed in \cite{wein}.

\subsection{The Main Results}
\label{subsec: main}
We continue to use the notations from Section \ref{sec: ferm} relating to the massless free fermion and its axial symmetry; in short, we continue to let $\sS$ be the space of fields for the free fermion theory, with $\sL_\R:=\Omega^\bullet_M$ acting by the axial symmetry. The following nice characterization of the obstruction complex $C^\bullet_{loc,red}(\sL_\R)$ will allow us, in Theorem \ref{thm: str}, to characterize the obstruction entirely.

\begin{proposition}[Obstruction Complex]
	\label{prop: qism}
	If $M$ is oriented, the map of complexes of sheaves
	\begin{align*}
	  \Phi: \Omega^\bullet_M[n-1] \hookrightarrow C^\bullet_{loc,red}(\Omega^\bullet_M)
	\end{align*}
	given by 
	\[
	\Phi(U)(\alpha): \beta \mapsto (-1)^{|\alpha|}\alpha \wedge \beta
	\]
	is a quasi-isomorphism.
\end{proposition}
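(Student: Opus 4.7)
My plan is to reduce the claim to a computation in $D_M$-modules via the jet-bundle description of local functionals. Specifically, the formalism of Costello-Gwilliam provides a quasi-isomorphism
\[
C^\bullet_{loc,red}(\sL) \simeq \Omega^\bullet_M[n] \otimes^{\mathbb{L}}_{D_M} C^\bullet_{red}(J\sL),
\]
where $J\sL$ is the $\infty$-jet bundle of $\sL$, regarded as an $L_\infty$-algebra object in flat $D_M$-modules. This replaces the rather intractable complex of local functionals with a purely algebraic computation in $D_M$-modules.

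For $\sL = \Omega^\bullet_M$, the first key step is to observe that the jet complex $J\Omega^\bullet_M$, equipped with the jet-level de Rham differential, is a resolution of the constant $D_M$-module $\underline{\R}$. This is the algebraic Poincar\'e lemma applied fiberwise to formal power series in the coordinate variables. Consequently, as an abelian dgla in $D_M$-modules, $J\Omega^\bullet_M$ is quasi-isomorphic to $\underline{\R}$ concentrated in cohomological degree zero, and its reduced Chevalley-Eilenberg cochains are $C^\bullet_{red}(\underline{\R}) = \underline{\R}[-1]$, since $\underline{\R}$ is a one-dimensional abelian Lie algebra in degree $0$ and only the $\Sym^1$ summand survives (the generator sitting in cohomological degree $1$ squares to zero by graded commutativity).

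Substituting, one obtains
\[
C^\bullet_{loc,red}(\Omega^\bullet_M) \simeq \Omega^\bullet_M[n] \otimes^{\mathbb{L}}_{D_M} \underline{\R}[-1] \simeq \Omega^\bullet_M[n-1],
\]
where the last step uses the standard fact that $\Omega^\bullet_M \otimes^{\mathbb{L}}_{D_M} \underline{\R} \simeq \Omega^\bullet_M$ (the de Rham complex computes the derived tensor with the constants).

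The final task is to verify that this chain of quasi-isomorphisms is implemented concretely by the explicit formula for $\Phi$. Unpacking the identification, a form $\alpha \in \Omega^p_M$ (sitting in cochain degree $p - n + 1$ inside $\Omega^\bullet_M[n-1]$) corresponds to the local linear functional on $\Omega^\bullet_M[1]$ given, up to sign, by $\beta \mapsto \alpha \wedge \beta$, and the Koszul sign $(-1)^{|\alpha|}$ arises from commuting $\alpha$ past the formal argument $\beta$ in the symmetric dual. Orientability enters at this step to canonically trivialize the density bundle, so that $\alpha \wedge \beta$ can be interpreted directly as a density without further choices; as Remark \ref{rem: twisteddR} anticipates, dropping orientability requires replacing $\Omega^\bullet_M$ on the left by its orientation-twisted variant. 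I expect the hardest part of the proof to be this last compatibility check, since it requires pinning down sign and shift conventions across a chain of identifications. A more hands-on alternative that sidesteps the jet machinery is to build explicit contracting homotopies for the $\Sym^{\geq 2}$ pieces of $C^\bullet_{loc,red}(\Omega^\bullet_M)$ directly, stalkwise on coordinate charts, leveraging the Poincar\'e lemma for $\Omega^\bullet_M$ to show that only the $\Sym^1$ part carries cohomology.
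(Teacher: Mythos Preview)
Your proposal follows the same overall strategy as the paper's proof: use the jet-bundle description $C^\bullet_{loc,red}(\sL)\simeq \text{Dens}_M\otimes^{\mathbb L}_{D_M}\sO_{red}(J\sL)$, apply the fiberwise Poincar\'e lemma to identify $J(\Omega^\bullet_M)$ with the rank-one trivial $D_M$-module (which is $C^\infty_M$, not literally the constant sheaf $\underline{\R}$---you should be careful here, since you need a $D_M$-module, and ``one-dimensional'' means rank one over $C^\infty_M$), and then compute the derived tensor product. The paper does the last step by writing down the explicit Koszul/Spencer resolution $\Omega^\bullet_M[n]\otimes_{C^\infty_M}D_M\to\text{Dens}_M$ and contracting.

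The one substantive difference is in how you propose to show that the \emph{specific} map $\Phi$ realizes this abstract equivalence. You suggest unpacking the chain of identifications and matching signs, which you correctly flag as the hard part. The paper avoids this entirely: having established that source and target have isomorphic cohomology sheaves, it checks $\Phi$ is a quasi-isomorphism directly on open balls $U\cong\R^n$. By the Poincar\'e lemma the source has cohomology $\R$ concentrated in degree $-n+1$, generated by the constant function $1$, so it suffices to show that $\Phi(1)$---the functional $\Xi:\omega\mapsto(-1)^{|\omega|}\omega$---is not exact in $C^\bullet_{loc,red}(\Omega^\bullet_M)(U)$. Since $d_{\sL}$ preserves $\Sym$-degree and there are no degree $-n$ elements in $\Sym^1$, this is immediate. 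This trick is considerably cleaner than tracing the derived equivalences, and it is worth knowing: once you know two complexes of sheaves have abstractly isomorphic cohomology and the source is locally one-dimensional, checking that a given map is a quasi-isomorphism reduces to a single non-exactness check.
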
 


Having established a characterization of the obstruction complex, we can state the main theorem of this paper (cf. the Main Theorem of the introduction):
\begin{theorem}
\label{thm: str}
Let $(M,g)$ be a Riemannian manifold of dimension $n$, $V\to M$ a $\Z/2$-graded vector bundle with metric $(\cdot,\cdot)$, and $D$ a formally self-adjoint Dirac operator. Denote by $\sL_\R$ the dgla $\Omega^\bullet_M$. (These are the data of the massless free fermion and its axial symmetry, which has an obstruction $\Obstr$.)  Under the quasi-isomorphism $\Phi$ of Proposition \ref{prop: qism}, the cohomology class of $\Obstr[t]$ corresponds to \[(-1)^{n+1}\text{ }2\Str k_t(x,x)dVol_g(x).\]  More precisely,
\[
\left [ \Phi \left( (-1)^{n+1}\text{ }2\Str k_t(x,x)dVol_g(x)\right) \right ] = [\Obstr]
\]
in cohomology. This statement is also true if $M$ is not oriented.
\end{theorem}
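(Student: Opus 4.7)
The strategy is to exploit the $t$-independence of $\Obstr$ (Lemma \ref{lem: obstrind}) to work at an arbitrary scale, compute $\Obstr[t]$ by a direct Feynman-diagram expansion, and then transport the answer across the quasi-isomorphism $\Phi$ of Proposition \ref{prop: qism}. Because the underlying BV theory is \emph{free} and the $\sL_\R$-action is linear in the fields, the only diagrams contributing to $\Obstr[t]$ are one-loop ``wheels'' whose internal edges carry the scale-$t$ BV heat kernel $K_t$ for $[Q,Q^{GF}]=D^2_{0\to 0}+D^2_{1\to 1}$ and whose vertices come from \eqref{eq: 0formaction}--\eqref{eq: 1formaction}: 0-form legs insert $f(\Gamma_{0\to 0}-\Gamma_{1\to 1})$ and 1-form legs insert $c(\alpha)_{0\to 1}$. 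Since $\sL_\R^{k}$ acts trivially for $k\ge 2$, every external leg is either a 0-form or a 1-form.

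The central computation is the wheel with a single external 0-form leg $f\in\Omega^0_M$. This diagram has one vertex and one internal edge closing the loop with $K_t$; tracking the Koszul signs from the shift $\sS=V\oplus V[-1]$ and using the description of $K_t$ in terms of $k_t$, I expect the value to be
\[
f\;\longmapsto\;(-1)^{n+1}\,2\int_M f(x)\,\Str\bigl(k_t(x,x)\bigr)\,dVol_g(x),
\]
where the factor of $2$ reflects the two copies of $V$ on which $\Gamma$ acts (with opposite signs that conspire, via the degree shift of the propagator, to add), and the supertrace on fibers of $V$ arises from the insertion of $\Gamma$ into the heat-kernel trace. By the very definition of $\Phi$, this is exactly the value of $\Phi\bigl((-1)^{n+1}\,2\,\Str(k_t(x,x))\,dVol_g\bigr)$ on $f$. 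The careful sign and combinatorial bookkeeping I would defer to Subsection \ref{subsec: comps}.

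To upgrade this matching of 0-form components to an equality of cohomology classes in $C^\bullet_{loc,red}(\Omega^\bullet_M)$, I would invoke Proposition \ref{prop: qism}: the degree-1 part of the obstruction complex is quasi-isomorphic (as a complex of sheaves, hence in hypercohomology) to $\Omega^n_M$ modulo exact $n$-forms, and the restriction of $\Phi$ to degree 1 outputs functionals linear in a 0-form input. Hence it suffices to show that the wheels with additional 0-form or 1-form external legs produce $d_{\sL_\R}$-exact local functionals, so that the full $\Obstr[t]$ and $\Phi\bigl((-1)^{n+1}\,2\,\Str(k_t(x,x))\,dVol_g\bigr)$ agree modulo coboundaries. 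The non-orientable case proceeds identically once one replaces $\Omega^\bullet_M$ by the twisted de Rham complex of Remark \ref{rem: twisteddR}, since $\Str(k_t(x,x))\,dVol_g$ is intrinsically a density.

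The main obstacle is the Feynman wheel calculation itself, in two respects: \textbf{(i)} pinning down the precise Koszul signs, degree shifts, and symmetry factors in the single-vertex wheel so as to identify the answer with $(-1)^{n+1}2\,\Str k_t(x,x)\,dVol_g$; and \textbf{(ii)} verifying that wheels with two or more external legs (in particular the 1-form wheels, which look superficially like Chern--Weil terms) are coboundaries in $C^\bullet_{loc,red}(\Omega^\bullet_M)$. The heat-kernel spectral decomposition of Lemma \ref{lem: spectrum} and the Clifford-action description of Definition--Lemma \ref{deflem: cliffaction} are the essential computational tools; the matching with McKean--Singer (Theorem \ref{thm: mcs}) then recovers the index interpretation upon integrating the $n$-form representative.
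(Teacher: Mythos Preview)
Your proposal is essentially correct, but the paper organizes the argument more economically and thereby sidesteps what you flag as obstacle \textbf{(ii)}. Rather than computing the one-leg wheel on a \emph{general} $f$ and then separately arguing that the higher-leg wheels are $d_{\sL_\R}$-exact, the paper observes that $H^1\bigl(C^\bullet_{loc,red}(\sL_\R)(M)\bigr)$ is one-dimensional (since $\Phi$ is a quasi-isomorphism onto $\Omega^\bullet_M[n-1]$ and $H^n_{dR,tw}(M)\cong\R$). It then introduces a single detector map: evaluation at the constant function $1$, which fits into a commutative square with the integration map $\Omega^n_{dR,tw}(M)\to\R$. Because both vertical arrows and $H^1\Phi$ are isomorphisms, two degree-1 cocycles are cohomologous if and only if they agree on $1$. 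Thus the entire theorem reduces to the single computation $\Obstr[t](1)=-2\int_M\Str k_t(x,x)\,dVol_g$ (Lemma~\ref{lem: diagramcomputation}), and nothing needs to be said directly about multi-leg diagrams. (Indeed, the paper later derives the exactness of the higher-leg terms as a \emph{corollary} of Theorem~\ref{thm: str}, not as input to it.)

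Two small corrections to your diagrammatic expectation. First, the value of the one-leg wheel on $\lambda\in\R$ is $-2\lambda\int_M\Str k_t\,dVol_g$; the factor $(-1)^{n+1}$ does not come from the Feynman integral but from the sign $(-1)^{|\alpha|}$ built into the definition of $\Phi$ when $\alpha$ is a top form. Second, your route would still go through---since $d_{\sL_\R}$ preserves $\Sym$-degree for the abelian dgla and the image of $\Phi$ lies in $\Sym^1$, the quasi-isomorphism forces all $\Sym^{\ge 2}$ cohomology to vanish---but the paper's detector-map argument is what lets it avoid spelling this out.
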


\begin{remark}
Top-degree classes in de Rham cohomology on an oriented manifold are determined by their integral; the integral of $\Obstr$ is 
\[
(-1)^{n+1}2\int_M \Str k_t(x,x) dVol_g(x) = (-1)^{n+1}\text{ }2\ind(D).
\]
Thus, the cohomology class of $\Obstr$ coincides with the class of 
\[
(-1)^{n+1}\frac{2\ind (D)}{vol(M)}dVol_g,
\]
as stated in the Introduction.
\end{remark}

\begin{proof}[Proof of Proposition \ref{prop: qism}]
Our line of attack will be to first prove that the two complexes have the same cohomology, and then to use the well-understood structure of the cohomology of the de Rham complex to show that $\Phi$ is a quasi-isomorphism. This first step uses very similar techniques to those employed in Chapter 5.6 of \cite{cost}, Proposition 8.10 of \cite{ggw}, and elsewhere; we adapt those techniques to the case at hand and fill in details missing in the references. The proof in the second step is, to the best of our knowledge, new.

We first invoke Lemma \ref{lem: localCE}, which provides a quasi-isomorphism
\[
C^\bullet_{loc,red}(\Omega^\bullet_M )\simeq \text{Dens}_{M}\otimes_{D_M}^\bL \sO_{red}(J(\Omega^\bullet_M)),
\]	
where the sheaf of jets $J(\Omega^\bullet_M)$ is defined in Definition \ref{def: jets} and $\sO_{red}(J(\Omega^\bullet_M))$ is defined in Equation \ref{eq: jetce}. We will replace both terms in the derived tensor product above with quasi-isomorphic complexes. Let's start with the map of bundles of dglas and of $D_{M}$-modules
\[
\iota:C^\infty_M \hookrightarrow J(\Omega^\bullet_M)
\]
which in the fiber over $x\in M$ takes $c\in \R$ to the jet at $x$ of the constant zero-form $c$. Recall that $D_M$ is the sheaf of differential operators on $M$. By definition it acts on $\cinfty_M$; $J(\Omega^\bullet_M)$ is a $D_M$-module, as discussed in Section 6.2 of Chapter 5 of \cite{cost}. The Poincar\'{e} lemma can be used to show that this is a quasi-isomorphism of $D_{M}$ modules. Namely, since the differential arises from a fiberwise endomorphism of $J$, we can in each fiber use the Poincar\'e lemma to show that the fiber is quasi-isomorphic to $\R$ in degree 0. More precisely, on a sufficiently small open set $U\subset M$, 
\[
J(\Omega^\bullet_M)(U)\cong C^\infty(U)\otimes \R[\![x_1,\cdots,x_n]\!][dx_1,\cdots, dx_n],
\]
with $dx_i$ in cohomological degree 1. Here, the $x_i$ are formal coordinates on $U$, i.e. variables with respect to which the Taylor expansions of elements in the fibers of the jet bundle of $\Lambda^\bullet T^*M$ over $U$ are computed. The ``evaluation at 0'' map $\R[\![x_1,\cdots,x_n]\!][dx_1,\cdots, dx_n]\to \R$ gives a map 
\[\pi(U):J(\Omega^\bullet_M)(U)\to \cinfty_M(U)\]
which, as $U$ varies over the open subsets of $M$, gives a sheaf map. Finally, the proof of the Poincar\'e Lemma furnishes a a degree --1 map
\[
\eta: \R[\![x_1,\cdots,x_n]\!][dx_1,\cdots, dx_n]\to \R[\![x_1,\cdots,x_n]\!][dx_1,\cdots, dx_n]
\] 
which provides a chain homotopy in a deformation retraction of $\R[\![x_1,\cdots,x_n]\!][dx_1,\cdots, dx_n]$, equipped with the de Rham differential, onto $\R$; we let 
\[
\eta(U):= id\otimes \eta :J(\Omega^\bullet_M)(U)\to J(\Omega^\bullet_M)(U).
\]
$\eta(U)$ is manifestly $\cinfty_M(U)$-linear and intertwines with restriction maps. Moreover, $\iota(U)$, $\pi(U)$, and $\eta(U)$ together form a deformation retraction of complexes of sheaves; it follows that $\iota(U)$ is a quasi-isomorphism for all sufficiently small $U$, so that $\iota$ is a quasi-isomorphism of complexes of sheaves.

Moreover, $\iota$ induces a quasi-isomorphism of $D_{M}$-modules
\[
\sO_{red}(J(\Omega^\bullet_M)) =\left(\widehat{Sym}^{>0}_{\cinfty_M}(J(\Omega^\bullet_M)^\vee), d_{CE}\right)\to \left(\widehat{\Sym}^{>0}_{\cinfty_M} (\cinfty_M[-1]), d_{CE}\right)=\cinfty_M[-1].
\]

We consider next the $\text{Dens}_{M}$ factor in the local Chevalley-Eilenberg cochains. Since $M$ is Riemannian, $\text{Dens}_{M}$ is trivial as a bundle, and its (right) $D_{M}$-module structure is induced from viewing a density as a distribution. This $D_M$-module structure encodes the fact that total derivatives integrate to zero, and it has the following projective resolution:
\[
\Omega^\bullet_M[n]\otimes_{\cinfty_M}D_M\to \text{Dens}_M, 
\] 
where the differential $\delta$ is given by the formula
\begin{align*}
\delta(\alpha\otimes fX_1\cdots X_s) = d(f\alpha)\otimes X_1\cdots X_s+(-1)^{|\alpha|}f\alpha \wedge \nabla ( X_1\cdots X_s),
\end{align*}
where $\nabla$ is the natural flat connection on $D_M$ and the map $\Omega^n_{dR}\otimes D_M \to\text{Dens}_{M}$ is given by the action of $D_{M}$ on densities, which happen to also be $n$ forms because $M$ is oriented. It is a direct computation to check that on a local coordinate patch $U$, this is a Koszul resolution of $\text{Dens}_{M}(U)$ with respect to the regular sequence $\{\partial_i\}_{i=1}^n$ of $D_{M}$, and so gives a quasi-isomorphism $\left(\Omega^\bullet_M[n]\otimes_{\cinfty_M}D_M\right)\to \text{Dens}_{M}$.

As a result, we have the following equivalence
\[
C^\bullet_{loc,red}(\sL_\R) \simeq \left(\Omega^\bullet_M\otimes_{\cinfty_M}D_M\right) \otimes_{D_{M}} \cinfty_M[n-1],
\]
where here, the symbol $\simeq$ means ``has isomorphic cohomology to''. We now need to identify this last complex with $\Omega^\bullet_M$, but this simply follows from the associativity of the tensor product and variations on the fact that $A\otimes_A M\cong M$ (isomorphism of left $A$-modules) for any left $A$-bimodule $M$. More precisely, by associativity we have
\[
(\Omega^\bullet_M\otimes_{\cinfty_M} D_M)\otimes_{D_M}\cinfty_M\cong \Omega^\bullet_M\otimes_{\cinfty_M} (D_M\otimes_{D_M}\cinfty_M);
\]
$D_M\otimes_{D_M}\cinfty_M$ is isomorphic as a left $D_M$-module, and hence as a $\cinfty_M$-module, to $\cinfty_M$; it follows that
\[
C^\bullet_{loc,red}(\sL_\R) \simeq \Omega^\bullet_M [n-1],
\]
where, again, the symbol $\simeq$ means ``has isomorphic cohomology to''.

We need only to show that $\Phi$ itself is a quasi-isomorphism, which can be checked on open balls. Suppose that $U$ is homeomorphic to $\R^n$; then by the Poincar\'e Lemma, the cohomology of $\Omega^\bullet_M(U)[n-1]$ is just $\R$ concentrated in degree $-n+1$ and spanned by the constant function 1. Thus, it suffices to check that the following ghost number $-n+1$ element (which is precisely the image of $1\in \Omega^0_{dR}(U)$ under $\Phi$) of $C^\bullet_{loc}(\sL_\R)(U)$ is not exact:
\[
\Xi: \omega \mapsto (-1)^{|\omega|}\omega,
\]
where on the left hand side we are thinking of $\omega$ as a collection of jets of $n$ forms on $U$ and on the right hand side we are thinking of $\omega$ as a density on $U$. $\Xi$ is linear in the jets of its input, and since we're dealing with an abelian dgla, the differential $d_{\sL_\R}$ preserves the $\Sym$ grading. Therefore, if $\Xi$ were exact in $C^\bullet_{loc,red}(\sL_\R)(U)$, it would be $d_{\sL_\R}J$ for $J$ some degree $-n$ element of 
\[\text{Dens}_M\otimes_{D_M}\Hom_{\cinfty_M}(J(\Omega^\bullet_M)[1], \cinfty_M);\]
but the degree $-n$ elements of this latter space are zero, and $\Xi\neq 0$, so $\Xi$ is not exact. Thus, $\Phi(U)$ is a quasi-isomorphism for all open $\R^n$'s $U$, which implies the statement of the proposition. 
\end{proof}

\begin{remark}
\label{rem: twisteddR}
The assumption of orientability is not crucial: we can replace the de Rham complex in the domain of $\Phi$ with the twisted de Rham complex, i.e. the space of de Rham forms with values in the orientation line bundle. The proof is essentially identical, except that since the densities are no longer to be identified with top forms but rather with top twisted forms, we use $D_M\otimes \Omega^\bullet_{tw,dR}$ in the Koszul resolution of the sheaf of densities.
\end{remark}

\begin{proof}[Proof of Theorem \ref{thm: str}]
$\Obstr$ represents a degree 1 cohomology class in $C^\bullet_{loc}(\sL_\R)$, so corresponds under $\Phi$ to a top-degree cohomology class in $H^\bullet_{dR,tw}(M)$. Whether or not $M$ is oriented, its top degree twisted de Rham cohomology is $\R$, and the map $ev_{dR}:\Omega^{top,tw}_{dR}(M) \to \R$ given by integration is a surjection onto cohomology. In a similar vein, define a map $ev_{loc}: C^1_{loc, red}(\sL_\R)\to \R$ given by evaluation of the cochain at the constant function 1. These maps descend to cohomology, and we have the following commutative diagram:
\begin{equation}
\label{eq: diag}
\begin{tikzcd}
H^{top}_{dR,tw}(M)	\arrow[rr,"H^1\Phi"]\arrow[d,"ev_{dR}"]	&& H^1(C^\bullet_{loc,red}(\sL_\R)(M))\arrow[d,"(-1)^{n+1}ev_{loc}"]\\
\R 	\arrow[rr,"id"]	&&	\R.
\end{tikzcd}
\end{equation}

We claim that the map $H^1\Phi$ is an isomorphism. This is because $\Phi$ is a quasi-isomorphism, so induces a quasi-isomorphism on derived global sections. Both $\Omega^\bullet_{dR,tw}$ and $C^\bullet_{loc,red}(\sL_\R)$ are complexes of $c$-soft sheaves since they are sheaves of sections of smooth vector bundles, so their spaces of global sections give a model for the derived global sections. The claim follows.

All but the arrow on the right side of the square have been shown to be isomorphisms, so that arrow is also an isomorphism. It follows that $\alpha \in H^1(\sO_{loc}(\sL_\R)(M)[1])$ corresponds to $\beta \in H^{top}_{dR,tw}(M)$ under $H^1\Phi$ if and only if $ev_{dR}(\beta) = (-1)^{n+1}ev_{loc}(\alpha)$. To complete the proof, we need to show that $\Obstr(1) = -2\int_M \Str k_t(x,x)dVol_g(x)$. We will prove this claim in the next subsection using a Feynman diagram computation.
\end{proof}

\subsection{Explicit Computation of the Obstruction}
\label{subsec: comps}

Our first main goal here is to prove the following

\begin{lemma}
\label{lem: diagramcomputation}
Let $(M,g)$ be a Riemannian manifold, $V\to M$ a $\Z/2$-graded metric vector bundle with metric $(\cdot,\cdot)$, and $D$ a formally self-adjoint Dirac operator. Denote by $\sL_\R$ the dgla $\Omega^\bullet_M$. (These data define the massless fre fermion theory $\sS$, together with an action of $\sL_\R$ on the theory, which has an obstruction $\Obstr$.) Then, $\Obstr[t]$, when evaluated on a constant function $\lambda$, satisfies,
\[
\Obstr[t](\lambda) = -2\lambda \int_M\Str(k_t(x,x))dVol_g(x),
\]
where $k_t$ is the heat kernel for the generalized Laplacian $D^2$.
\end{lemma}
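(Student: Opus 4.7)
The plan is to express $\Obstr[t](\lambda)$ as the value of a single Feynman diagram --- a one-vertex wheel --- whose evaluation produces the supertrace of the heat kernel.

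As recalled in the proof of Lemma \ref{lem: obstrind}, for a free theory the scale-$t$ obstruction $\Obstr[t]$ is a sum of connected wheel diagrams built from the vertices of $I_{tr}$ and contracted by the scale-$t$ BV heat kernel $K_t$. The axial interaction of Lemma \ref{lem: actionofforms}, restricted to $\sL^0$, contributes a single cubic vertex that is linear in $\lambda\in\Omega^0_M$ and quadratic in the fields $\phi_0+\phi_1\in\sV\oplus\sV[-1]=\sS$, with vertex kernel proportional to the operator $\lambda\Gamma$ acting with opposite signs on $\sS^0$ and $\sS^1$. To extract the part of $\Obstr[t]$ linear in a single $\sL^0$-insertion, I would restrict to wheels with a single such vertex; diagrams with more vertices contribute terms of higher $\Sym$-degree in $\sL^\vee$ and so do not affect the value on one $\lambda$. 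There is a unique one-vertex wheel, whose value is the BV Laplacian $\Delta_t$ applied to the quadratic-in-fields functional $I_{tr}(\lambda,\cdot,\cdot)$.

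Next, I would make $K_t$ explicit. Since the degree $-1$ pairing $\ip_{loc}$ couples $\sV$ with $\sV[-1]$ via $(\cdot,\cdot)\,dVol_g$ and the generalized Laplacian $[Q,Q^{GF}]$ acts as $D^2$ diagonally on $\sS^0\oplus\sS^1$, the scale-$t$ BV heat kernel takes the form
\begin{equation*}
K_t = k_t^{(0,1)} + k_t^{(1,0)},
\end{equation*}
where $k_t^{(i,j)}$ denotes the heat kernel of $D^2$ with first tensor factor placed in $\sS^i$ and second in $\sS^j$; convergence of this expression for $t>0$ is guaranteed by Lemma \ref{lem: spectrum}. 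Contracting the two legs of $I_{tr}(\lambda,\cdot,\cdot)$ against each summand of $K_t$ produces a fibrewise trace involving $\Gamma$; because the vertex carries opposite signs on $\sS^0$ and $\sS^1$ (Equation \ref{eq: 0formaction}), each of these traces is $\pm\lambda\int_M \mathrm{tr}(\Gamma\circ k_t(x,x))\,dVol_g(x)$, which by the definition $\Str=\mathrm{tr}(\Gamma\,\cdot\,)$ equals $\pm\lambda\int_M\Str(k_t(x,x))\,dVol_g(x)$.

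The final step is to combine the two contributions with their BV signs and the symmetry factor of the wheel: the relative signs are arranged, via the grading conventions on fermionic fields, so that the two summands \emph{add} rather than cancel, producing the overall factor $-2$. The main obstacle of the proof is precisely this sign bookkeeping --- verifying that the two summands of $K_t$ combine constructively is what implements the mechanism by which the axial anomaly is a supertrace rather than an ordinary trace and, ultimately, couples to the index of $D$ via Theorem \ref{thm: mcs}.
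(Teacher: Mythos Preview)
Your outline matches the paper's approach: reduce to the single tadpole diagram $\Delta_t I_{tr}^{(1)}[t](\lambda)$ and then contract the BV heat kernel against the vertex $\langle\cdot,[\lambda,\cdot]\rangle$ to produce the supertrace. However, two points are handled loosely enough that the argument as written does not go through.

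First, you omit the term $d_{\sL_\R}I_{wh}[t]$ from the obstruction. The formula you invoke (via Lemma \ref{lem: obstrind}) is actually the content of Lemma \ref{lem: formofobstr}, which gives $\Obstr[t]=\Delta_t I_{tr}[t]+d_{\sL_\R}I_{wh}[t]$. The second summand is not a wheel built from $I_{tr}$ and $K_t$; the paper dispatches it separately by noting that $d_{\sL_\R}I_{wh}[t](\lambda)$ feeds $d\lambda=0$ into one slot of $I_{wh}$, so vanishes for constant $\lambda$. This is easy, but it must be said.

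Second, and more seriously, your expression $K_t=k_t^{(0,1)}+k_t^{(1,0)}$ is incorrect: the paper isolates this computation as Lemma \ref{lem: BVvsregular} and obtains $K_t=-k_{t,1}+k_{t,2}$, with a relative minus sign coming from the Koszul rule when one commutes the degree-$0$ tensor factor past the bilinear form $\langle\cdot,\cdot\rangle$. That sign is essential; with your symmetric expression and the vertex $[\lambda,\cdot]=\lambda\Gamma_{0\to0}-\lambda\Gamma_{1\to1}$, the two contributions would \emph{cancel} rather than add. You correctly identify the sign bookkeeping as ``the main obstacle of the proof,'' but then do not carry it out and in fact start from the wrong formula for $K_t$. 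The paper's proof consists almost entirely of this bookkeeping: first pinning down $K_t$ with its signs, then tracking the Koszul braiding $\tau$ in $\partial_{K_t}$ and the pairing of $\langle\cdot,[\lambda,\cdot]\rangle$ against $K_t$ to arrive at $-2\lambda\int_M\Str(k_t(x,x))\,dVol_g$. Without that computation your proposal is an outline, not a proof.
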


Once this lemma is proved, Theorem \ref{thm: str} follows immediately.

\begin{remark}
Compare this formula to Equations 22.2.10 and 22.2.11 in \cite{wein}, which provide an unregulated formula for the anomaly, i.e. the author computes the $t\to 0$ limit of the above expression by computing the supertrace of the $t\to 0$ limit of $k_t$. This is ill-defined: since the $t\to0$ limit of $k_t$ is the kernel of the identity operator, this limit is singular on the diagonal in $M\times M$. In \cite{wein}, the author regulates those expressions by introducing a mass scale and taking that scale to infinity.  On the other hand, in the present work, we have used the heat kernel techniques of \cite{cost} to make sense of the ill-defined expressions.
\end{remark}

The proof of Lemma \ref{lem: diagramcomputation} is not tremendously difficult; it simply requires a careful accounting of signs. One major component of this accounting consists of determining the precise relationship between the BV heat kernel $K_t$ and the heat kernel $k_t$ as defined in Theorem \ref{thm: heatkernel}, and the next lemma is devoted to specifying this relationship.

\begin{lemma}
\label{lem: BVvsregular}
Let $k_{t,1}$ denote the heat kernel for the generalized Laplacian $D^2$, viewed as an element of $\sV\otimes (\sV[-1])\subset \sS\otimes \sS$. Similarly, let $k_{t,2}$ denote the heat kernel viewed as an element of $(\sV[-1])\otimes \sV$. Then, the BV heat kernel satisfies the following equation
\[
K_t = -k_{t,1}+k_{t,2}.
\]
In particular, $K_t$ is anti-symmetric under interchange of its two factors.
\end{lemma}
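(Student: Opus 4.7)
The plan is to verify that $-k_{t,1}+k_{t,2}$ has the two properties that uniquely characterize the BV heat kernel: it has the graded symmetry dictated by $\ip_{loc}$, and contraction of a test field against it through $\ip_{loc}$ realizes $e^{-t[Q,Q^{GF}]}$. The cleanest organization is to work in a spectral basis.

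First, invoke Lemma \ref{lem: spectrum} to pick an $L^{2}$-orthonormal eigenbasis $\{\phi_i\}$ of $D^{2}$ with eigenvalues $\lambda_i$, and denote by $\psi_i$ the section $\phi_i$ viewed inside $\sV\subset\sS$ (cohomological degree $0$) and by $\tilde\psi_i$ the same section viewed inside $\sV[-1]\subset\sS$ (degree $1$). The ordinary heat kernel reads $k_t(x,y)=\sum_i e^{-t\lambda_i}\phi_i(x)\otimes\phi_i(y)$ after using $(\cdot,\cdot)$ to identify $V\cong V^\vee$; translating this into the two embeddings $\sV\otimes\sV[-1]$ and $\sV[-1]\otimes\sV$ gives
\[
k_{t,1}=\sum_i e^{-t\lambda_i}\psi_i\otimes\tilde\psi_i,\qquad k_{t,2}=\sum_i e^{-t\lambda_i}\tilde\psi_i\otimes\psi_i.
\]
Next, write the general degree-$1$ ansatz $K_t=\sum_i e^{-t\lambda_i}\bigl(\alpha_i\,\psi_i\otimes\tilde\psi_i+\beta_i\,\tilde\psi_i\otimes\psi_i\bigr)$. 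Using that $[Q,Q^{GF}]=D^{2}_{0\to 0}+D^{2}_{1\to 1}$ acts as $D^{2}$ on both components (so $\Gamma$ has disappeared), the requirement that contracting the second slot of $K_t$ against a test field $\phi\in\sS$ via $\ip_{loc}$ reproduces $e^{-t[Q,Q^{GF}]}\phi$ — checked separately for $\phi\in\sV$ and $\phi\in\sV[-1]$, using $\ip_{loc}[\psi_i,\tilde\psi_j]=\ip_{loc}[\tilde\psi_j,\psi_i]=\delta_{ij}$ — fixes one linear relation between $\alpha_i$ and $\beta_i$. The graded-symmetry condition for $K_t$ dual to the degree $-1$ pairing $\ip_{loc}$ (which, as already observed in the paper, is graded skew-symmetric once the fermionic statistics of all fields are accounted for) pins down the other relation. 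Solving the two constraints gives $\alpha_i=-1$, $\beta_i=+1$, which is precisely the asserted formula $K_t=-k_{t,1}+k_{t,2}$. The concluding remark that $K_t$ is antisymmetric under the brute (non-graded) interchange of its factors is then immediate, because that interchange exchanges $k_{t,1}\leftrightarrow k_{t,2}$.

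The main obstacle is bookkeeping: three interacting gradings must be handled consistently, namely the cohomological $\Z$-grading distinguishing the $\sV$ and $\sV[-1]$ summands of $\sS$, the fermionic $\Z/2$-statistics carried by both summands, and the convention choice of which slot of $K_t$ is being contracted against a test field. A particular subtlety worth flagging is that $\ip_{loc}$ uses $(\cdot,\cdot)$ with no $\Gamma$, whereas $Q$ and $Q^{GF}$ individually involve $\Gamma$; one must check that the $\Gamma$'s cancel in $[Q,Q^{GF}]$ before comparing to the ordinary heat kernel $k_t$ of $D^{2}$. Once these conventions are pinned down in the spectral basis the result is forced, and there is no further analytic content beyond Lemma \ref{lem: spectrum}.
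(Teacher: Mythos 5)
Your overall strategy---expand everything in the spectral basis of Lemma \ref{lem: spectrum}, write a general ansatz $K_t=\sum_i e^{-t\lambda_i}(\alpha_i\,\psi_i\otimes\tilde\psi_i+\beta_i\,\tilde\psi_i\otimes\psi_i)$, and solve for the coefficients---is a legitimate and arguably more concrete route than the paper's, which never invokes the eigenbasis. But there is a genuine gap: the entire content of the lemma is the \emph{relative} sign between the two components of $K_t$, and that sign is determined solely by the Koszul-sign convention one adopts for the expression $-\,id\otimes\langle\cdot,\cdot\rangle(K_t\otimes f)$. You flag this as ``the main obstacle is bookkeeping'' and mention ``the convention choice of which slot of $K_t$ is being contracted,'' but you never actually fix the convention or compute the sign. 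The paper's proof consists almost entirely of this step: it defines the contraction as a composite in the symmetric monoidal category of $\Z\times\Z/2$-graded vector spaces, in which the first factor $a$ of $K_t$ must be braided past the degree $(-1,0)$ element $\ip\in\operatorname{Bilin}(\sS)$, producing a factor $(-1)^{|a|}$. That factor is $+1$ on the $\sV\otimes\sV[-1]$ component and $-1$ on the $\sV[-1]\otimes\sV$ component, and this is the sole source of the relative minus sign. With a naive (sign-free) contraction your two conditions would instead yield $\alpha_i=\beta_i=-1$, i.e.\ the symmetric kernel $-k_{t,1}-k_{t,2}$, contradicting the statement. So the answer you ``solve for'' is asserted, not derived.

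There is also a structural error in how you propose to pin down the coefficients. Because $\ip_{loc}$ pairs only opposite ghost numbers, contracting the second slot of $K_t$ against $\phi\in\sV$ involves only the $\alpha_i$'s, and contracting against $\phi\in\sV[-1]$ involves only the $\beta_i$'s; checking the defining identity on both summands therefore gives two independent conditions that determine \emph{all} the coefficients outright (this is exactly why the paper's Definition \ref{def: bvlaplacian} can assert uniqueness of $K_t$ from non-degeneracy of $\ip$ alone). It does not give ``one linear relation between $\alpha_i$ and $\beta_i$,'' and no separate graded-symmetry constraint is available or needed: symmetry is not part of the definition of the BV heat kernel in this paper, and the anti-symmetry is precisely the conclusion of the lemma. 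Invoking it as your second input makes the argument circular. To repair the proof, drop the symmetry step, state the contraction convention explicitly (including the braiding sign $(-1)^{|a|}$), and then your spectral computation does force $\alpha_i=-1$, $\beta_i=+1$ from the two degree components of the defining identity alone.
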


\begin{proof}
Recall that, by Definition \ref{def: bvlaplacian}, $K_t\in \sS\otimes \sS$ satisfies 
\begin{equation}
\label{eq: ev}
-id\otimes \langle \cdot, \cdot \rangle(K_t\otimes f) = \exp\left(-t[Q,Q^{GF}]\right)f = \exp\left(-t\left(D^2_{0\to0}+D^2_{1\to1}\right)\right)f,
\end{equation}
where $f\in \sS$. We need to make precise meaning of the expression on the left hand side of Equation \ref{eq: ev}, keeping track of all the signs that arise from the Koszul sign rule. This is a slightly subtle issue, so we will first consider a simple example to illustrate why this subtlety arises. 

Let us consider the symmetric monoidal category of $\Z\times \Z/2$-graded vector spaces where the braiding is given by 
\[
\begin{array}{rcl}
\tau_{W_1,W_2}: W_1\otimes W_2 & \to & W_2\otimes W_1\\
w_1\otimes w_2 & \mapsto & (-1)^{|w_1||w_2|+\pi_{w_1}\pi_{2_2}}w_2\otimes w_1.
\end{array}
\]
Given $W$ in this category, we ask what the natural way to pair an element $\omega_1\otimes \omega_2 \in W^\vee\otimes W^\vee$ with an element $w_1\otimes w_2\in W\otimes W$ is, assuming $\omega_1$, $\omega_2$, $w_1$, and $w_2$ are homogeneous. To do this, we compute the image of $\omega_1\otimes \omega_2\otimes w_1 \otimes w_2$ under the following composition of maps:
\begin{align*}
W^\vee\otimes W^\vee \otimes W\otimes W\overset{id\otimes \tau_{W^\vee,W}\otimes id}{\longrightarrow} W^\vee\otimes W\otimes W^\vee\otimes W \overset{ev\otimes ev}{\longrightarrow} \R\otimes \R\overset{\mu}{\to} \R,
\end{align*}
where $ev: W^\vee\otimes W\to \R$ is the natural pairing of a vector space with its dual and $\mu$ is multiplication of real numbers. In short, by $(\omega_1\otimes \omega_2)(w_1\otimes w_2)$, we mean \[(-1)^{|w_1||\omega_2|+\pi_{w_1}\pi_{\omega_2}}\omega_1(w_1)\omega_2(w_2).\]

Analogously, we understand the term on the left hand side of Equation \ref{eq: ev} as the image of $-id\otimes \langle \cdot, \cdot \rangle \otimes K_t\otimes s$ under the following chain of compositions:

\[
\begin{tikzcd}
\Hom(\sS,\sS)\otimes \text{Bilin}(\sS)\otimes (\sS)^{\otimes 3}\arrow[d,"id\otimes \tau_{Bilin(\sS), \sS}\otimes id\otimes id"]&\\
\Hom(\sS,\sS)\otimes \sS\otimes \text{Bilin}(\sS) \otimes \sS\otimes \sS \arrow[r]&\sS,
\end{tikzcd}
\]

\noindent where the first map uses the Koszul braiding on our symmetric monoidal category to move the first factor of $\sS$ past $\text{Bilin}( \sS)$ and the second map is a tensor product of evaluation maps 
\[
\Hom(\sS,\sS) \otimes \sS \to \sS
\] 
and 
\[
\text{Bilin}(\sS)\otimes \sS \otimes \sS\to \R.
\]

It follows that $K_t = -k_{t,1}+k_{t,2}$, with the relative minus sign arising because the first factors of $k_{t,1}$ and $k_{t,2}$ have ghost numbers of opposite parity. 
\end{proof} 

\begin{proof}[Proof of Lemma \ref{lem: diagramcomputation}]
Recall (Lemma \ref{lem: formofobstr}) that \[\Obstr[t]= d_{\sL_\R}I_{wh}[t]+\Delta_t I_{tr}[t].\] We note that the term $d_{\sL_\R}I_{wh}[t](\lambda)$ is zero because it evaluates $I_{wh}$ with $d\lambda=0$ in one of the slots, so the only term we have to consider is $\Delta_{t}I_{tr}[t](\lambda)$.

Let us study the Feynman diagrams appearing in $I_{tr}[t]$ with a $\lambda$ input into each $\sL_\R$ leg, i.e. onto each ``wavy'' external edge. See Figure \ref{fig: trees} for a diagrammatic interpretation of $I_{tr}[t]$. Let $I_{tr}^{(r)}[t]$ denote the component of $I_{tr}[t]$ in $\Sym$-degree $r$ with respect to $\sL_\R$ inputs ($I_{tr}^{(r)}[t]$ is represented by a tree diagram with $r$ squiggly edges). Because of the shift by one in the definition of Chevalley-Eilenberg cochains, $I^{(r)}$ must be anti-symmetric with respect to zero-form inputs; in particular, $I^{(r)}(\lambda,\cdots,\lambda)=0$ if $r>1$. It follows that $\Obstr[t](\lambda) =\Delta_t I_{tr}^{(1)}[t](\lambda)$, which has a diagrammatic interpretation as the tadpole in Figure \ref{fig: tadpole}. 

\begin{figure}[h]
\centering
\includegraphics[scale = 0.20]{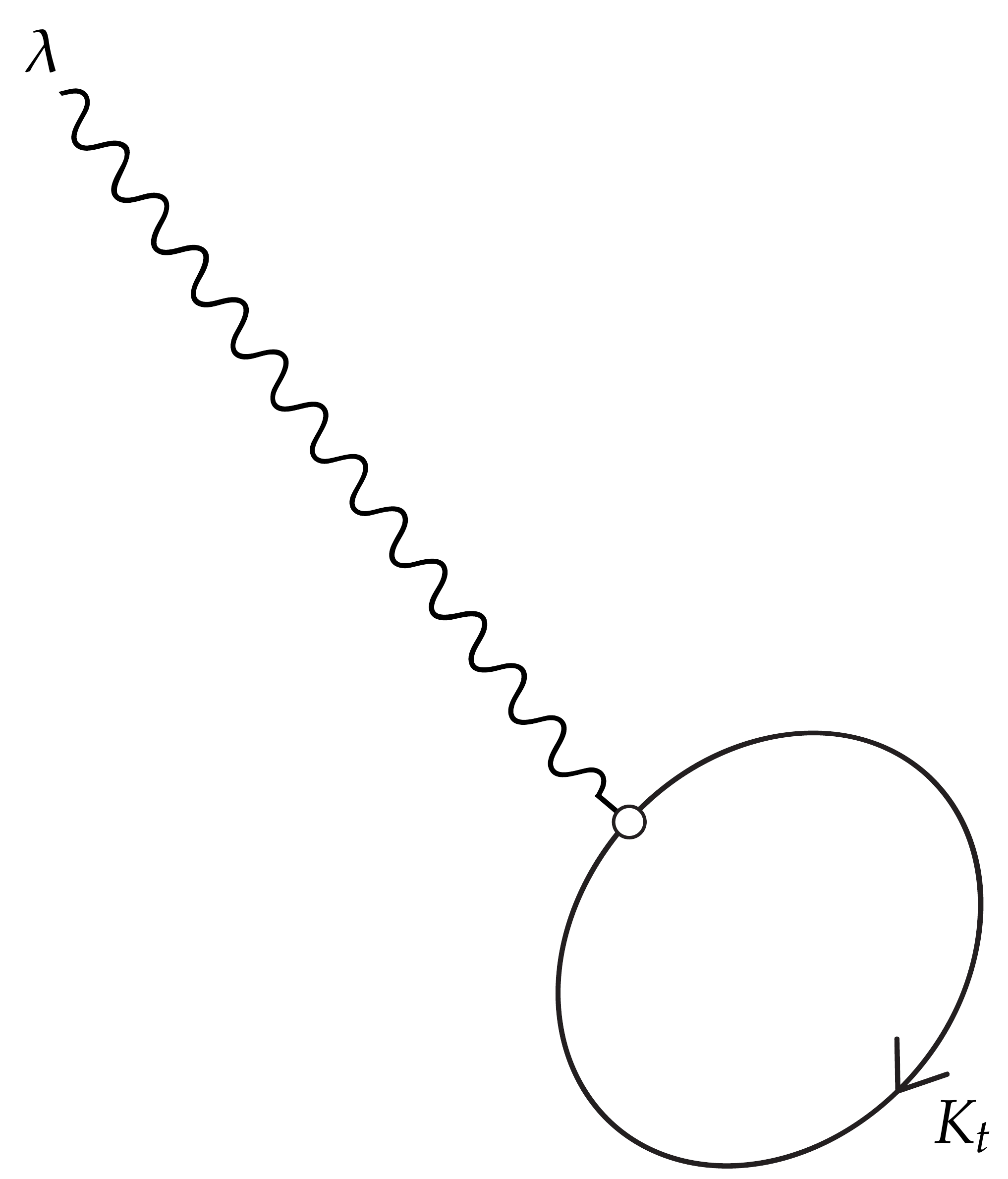}
\caption{The single diagram appearing in the Feynman diagram expansion for $\Obstr[t](\lambda)$.}
\label{fig: tadpole}
\end{figure}

The tadpole diagram, evaluated on $\lambda$, gives 
\begin{equation}
\label{eq: obstructionheatkernel}
\Obstr[t](\lambda)=-\partial_{K_t}I^{(1)}_{tr}[t](\lambda) = -\frac{1}{2}(\langle\cdot,[\lambda, \cdot] \rangle)(K_t+\tau K_t)= -(\langle\cdot,  [\lambda,\cdot]\rangle )(K_t),
\end{equation}
where $\tau$ is the Koszul braiding on the symmetric monoidal category of $\Z\times \Z/2$-graded (topological) vector spaces. Relative to the corresponding notion without Koszul signs, $\tau K_t$ has 
\begin{enumerate}
\item no sign arising from ghost number considerations, since $K_t$ is of degree 1 and therefore belongs to $\sV\otimes (\sV[-1])\oplus (\sV[-1])\otimes \sV$, and
\item a sign arising from the fact that all elements of $\sS$ have fermionic statistics.
\end{enumerate}
The last equality in Equation \ref{eq: obstructionheatkernel} then follows because by construction $\langle \cdot, [\lambda, \cdot]\rangle$ is anti-symmetric in its two arguments. 

We need to show that 
\[
-\langle \cdot, [\lambda,\cdot] \rangle (K_t)=-2\lambda\int_M \Str(k_t(x,x))dVol_g(x),
\]
which will complete the proof in light of Equation \ref{eq: obstructionheatkernel}. Heuristically, the term \[(\langle\cdot,  [\lambda,\cdot]\rangle )(K_t)\] should compute the trace of $K_t$, since $\lambda$ simply acts by multiplication by $\pm \lambda$ and $\ip (K_t)$ pairs the two tensor factors of $K_t$ via $(\cdot,\cdot)$.

To compute $-\langle \cdot, [\lambda, \cdot]\rangle (K_t)$, we take $K_t(x,x)$ (which is an element of the fiber of $V\otimes V$ over $x$), let $\lambda$ act on the second factor, pair the two factors using $(\cdot,\cdot)$, and integrate the resulting function over $M$ against the Riemannian density. We also use the description of $K_t$ from Lemma \ref{lem: BVvsregular}. Taking into account all of these comments, we have
\begin{align*}
-\langle \cdot, [\lambda,\cdot] \rangle (K_t)& = \lambda\int_M \Tr\left((id\otimes \Gamma_{0\to0} -id\otimes \Gamma_{1\to1})(k_{t,1}(x,x)-k_{t,2}(x,x))\right)dVol_g(x)
\\&=-2\lambda\int_M \Str(k_t(x,x))dVol_g(x),
\end{align*}
which is precisely what we claimed; this completes the proof. 
\end{proof}

\begin{remark}
The proof of the McKean-Singer formula in \cite{bgv} has two parts: one first shows that the supertrace of the heat kernel is independent of $t$, then one shows that the $t\to \infty$ limit of the supertrace is $\ind(D)$. Our work has given a physical, Feynman-diagrammatic interpretation of the first part of that proof. Namely, Lemmas \ref{lem: obstrind} and Lemma \ref{lem: diagramcomputation} together imply that 
\[
\int_M \Str(k_t(x,x))dVol_g(x)
\]
is independent of $t$. One can take the $t\to\infty$ limit to get $\ind (D)$.
\end{remark}

We have so far only understood explicitly the obstruction $\Obstr$ evaluated on closed zero forms $\lambda$. What happens when we evaluate it on other closed forms? The following two results show that Lemma \ref{lem: diagramcomputation} already contains all of the difficult computations we need to do to answer this question. This subsection lies somewhat outside the main line of development of this paper, but we include it here for completeness.

\begin{lemma}
\label{lem: degreecounting}
Let $\Obstr^{(r)}[t]$ be the component of $\Obstr[t]$ living in $\Sym$-degree $r$ in $C^\bullet_{red}(\sL_\R)$. Then, for $\alpha_1,\cdots, \alpha_r\in \Omega^\bullet_M(M)$, $\Obstr^{(r)}[t](\alpha_1,\cdots, \alpha_r)$ is zero unless one of the $\alpha_i$ is a function and the rest are one-forms.
\end{lemma}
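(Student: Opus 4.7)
The proof has two essentially independent inputs, which I would present in order.

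First I would argue that only $0$-forms and $1$-forms can appear at external $\sL_\R$-legs of any Feynman diagram contributing to $\Obstr[t]$. By Lemma \ref{lem: formofobstr}, $\Obstr[t]=d_{\sL_\R}I_{wh}[t]+\Delta_t I_{tr}[t]$, and both $I_{wh}[t]$ and $I_{tr}[t]$ are built diagrammatically out of the trivalent interaction vertex encoding the action of $\sL_\R$ on $\sS$ given in Lemma \ref{lem: actionofforms}. That vertex vanishes on forms of degree $>1$, by Equations \ref{eq: 0formaction}--\ref{eq: 1formaction} together with the prescription $[\beta,\cdot]=0$ for $|\beta|>1$. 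The de Rham differential $d_{\sL_\R}$ can turn a $k$-form input into a $(k+1)$-form input at a vertex, but since wheel diagrams already require each external $\sL_\R$ leg to plug into a vertex of the above form, the effect of $d_{\sL_\R}$ is likewise supported on inputs of form-degree $\le 1$. Hence $\Obstr^{(r)}[t](\alpha_1,\ldots,\alpha_r)=0$ whenever some $\alpha_i$ has form-degree strictly greater than $1$.

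Second, I would do a ghost-number count to show that among the surviving inputs, exactly one must be a $0$-form. A $k$-form of $\sL_\R$ sits in degree $k-1$ of $\sL_\R[1]$, so the linear functional on $\sL_\R[1]$ that evaluates at a $k$-form sits in degree $1-k$ of $(\sL_\R[1])^\vee$. Thus a component of $\Sym^r((\sL_\R[1])^\vee)$ that is non-zero only on an $r$-tuple of forms of degrees $k_1,\ldots,k_r$ has cohomological degree $\sum_{i=1}^r(1-k_i)=r-\sum_i k_i$. The obstruction lies in degree $1$ of $C^\bullet_{red,loc}(\sL_\R)$, so
\begin{equation*}
r-\sum_{i=1}^r k_i \;=\; 1.
\end{equation*}
Combined with $k_i\in\{0,1\}$ from the first step, if $m$ of the $\alpha_i$ are $0$-forms and $r-m$ of them are $1$-forms then $\sum k_i=r-m$, so $m=1$.

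Putting the two observations together gives exactly the statement of the lemma. The only possible obstacle is the first step: one has to be sure that no hidden source of $\ge 2$-form inputs arises from $d_{\sL_\R}$ acting on $I_{wh}[t]$ or from how the renormalized interaction is assembled. But $d_{\sL_\R}$ is a coderivation with respect to the symmetric coalgebra structure on $\sO(\sL_\R[1])$, so it acts on a single slot at a time; on a functional that already only pairs non-trivially with $0$- and $1$-form inputs in each slot, $d_{\sL_\R}$ produces a functional that pairs non-trivially only with $1$- and $2$-form inputs in a single slot (and with $0$- or $1$-form inputs in the remaining slots). Evaluating on \emph{closed} forms $\alpha_i$, however, we may integrate $d_{\sL_\R}$ by parts onto the other arguments, and closedness kills any such contribution; alternatively, the $d_{\sL_\R}I_{wh}[t]$ term on closed forms reduces to a boundary term that is absorbed into $\Delta_tI_{tr}[t]$, which already only sees $0$-forms and $1$-forms. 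So the first step is genuinely straightforward, and the degree count in the second step closes out the argument.
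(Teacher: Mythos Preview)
Your two main steps---forms of degree $>1$ act trivially so can be ignored, then a ghost-number count forces exactly one $0$-form---are exactly the paper's proof. That part is fine.

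The final paragraph, however, is confused and should be dropped. You worry that $d_{\sL_\R}$ acting on $I_{wh}[t]$ might introduce support on $2$-form inputs, and then try to salvage this by invoking closedness of the $\alpha_i$. Two problems: first, the lemma does not assume the $\alpha_i$ are closed, so that rescue is illegitimate. Second, and more to the point, you have the direction of $d_{\sL_\R}$ backwards. For the abelian dgla $\sL_\R$, the Chevalley--Eilenberg differential acts by $(d_{\sL_\R}F)(\alpha_1,\ldots,\alpha_r)=\sum_i\pm F(\alpha_1,\ldots,d_{dR}\alpha_i,\ldots,\alpha_r)$. If $F$ vanishes whenever any slot carries a form of degree $>1$, then for the $i$-th summand to be nonzero one needs $d_{dR}\alpha_i$ to have degree $\le 1$, i.e.\ $\alpha_i$ must be a $0$-form; the remaining slots are still constrained to degree $\le 1$. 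So $d_{\sL_\R}I_{wh}[t]$ is \emph{more} restrictively supported than $I_{wh}[t]$, not less. No $2$-form inputs ever appear, and there is nothing to fix. The paper simply asserts that $\Obstr$ only uses forms through their action on $\sS$; once you see that $d_{\sL_\R}$ feeds $d_{dR}\alpha_i$ into the vertex (and $d_{dR}$ of anything of degree $\ge 2$ has degree $\ge 3$, hence still acts trivially), that assertion is immediate.
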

\begin{proof}
Since $\Obstr[t]$ is a degree 1 Chevalley-Eilenberg cocycle, $\Obstr^{(r)}[t]$ is non-zero only when evaluated against arguments whose total ghost number in $\sL_\R[1]$ is--1. However, if any of these forms has degree higher than 1 in $\sL_\R$, then $\Obstr$ will evaluate to zero because $\Obstr$ only uses forms through their action on $\sS$ and these higher forms act trivially on $\sS$. Thus, $\Obstr^{(r)}[t]$ is non-zero only when evaluated on a collection of one-forms and functions whose total ghost number in $\sL_\R[1]$ is --1. This can only be accomplished if all but one of the forms is a one-form. 
\end{proof}

Thus, to understand $\Obstr[t]$ explicitly, we need only to evaluate it on collections of de Rham forms of the sort described in the preceding lemma. In fact, because a symmetric function is determined by its value when all of its inputs are identical, it suffices to consider $\Obstr[t](\lambda+\alpha)$, where $\alpha$ is a 1-form and $\lambda$ is a function, and we are most interested in the case that $\alpha$ and $\lambda$ are closed. The following corollary then follows from Theorem \ref{thm: str} and suffices to characterize $\Obstr[t]$ evaluated on any collection of closed forms. 

\begin{ucorollary}
Let $\alpha$ be a closed 1-form and $\lambda$ a closed 0-form. Then 
\[\Obstr[t](\lambda+\alpha) = -2\lambda\int_M \Str(k_t(x,x))dVol_g(x).\]
Together with Lemma \ref{lem: degreecounting}, this characterizes the value of $\Obstr[t]$ on all closed forms.
\end{ucorollary}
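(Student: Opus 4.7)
The plan is to decompose $\Obstr[t](\lambda+\alpha)$ by $\Sym$-degree in $C^\bullet_{red}(\sL_\R)$ and to treat the linear ($r=1$) piece and the higher symmetric pieces ($r\geq 2$) by entirely different arguments. Writing $\Obstr[t] = \sum_{r\geq 1}\Obstr^{(r)}[t]$, I need to show that $\Obstr^{(1)}[t](\lambda+\alpha) = -2\lambda\int_M\Str(k_t(x,x))dVol_g$ and that $\Obstr^{(r)}[t]$ evaluated on $(\lambda+\alpha)^{\otimes r}$ vanishes for all $r\geq 2$.

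The $r=1$ piece is immediate. By Lemma \ref{lem: degreecounting}, $\Obstr^{(1)}[t]$ is supported on $0$-form inputs, so $\Obstr^{(1)}[t](\alpha)=0$ and $\Obstr^{(1)}[t](\lambda+\alpha) = \Obstr^{(1)}[t](\lambda)$. Since $M$ is connected, a closed $0$-form is constant, so Lemma \ref{lem: diagramcomputation} applies verbatim and yields the desired value. The main work is thus establishing the vanishing at higher symmetric degrees, and for this I would use Theorem \ref{thm: str}. That theorem gives a reduced local cochain $\Psi$ with
\[
\Obstr[t] \;=\; \Phi\bigl((-1)^{n+1}\,2\Str(k_t(x,x))dVol_g\bigr) + d_{\sL_\R}\Psi.
\]
The first summand is a purely linear cochain and therefore contributes nothing to $\Obstr^{(r)}$ for $r\geq 2$. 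For the coboundary term, the key point is that $\sL_\R=\Omega^\bullet_M$ is abelian, so the Chevalley--Eilenberg differential on $C^\bullet(\sL_\R)$ is exactly the derivation extension of the dual of $d$; in particular it preserves $\Sym$-degree, and its value on $\Psi^{(r)}$ at a tuple $(\xi_1,\ldots,\xi_r)$ is a signed sum of terms $\Psi^{(r)}(\xi_1,\ldots,d\xi_i,\ldots,\xi_r)$. Evaluating at $\xi_1=\cdots=\xi_r=\lambda+\alpha$ makes every such term vanish because $d(\lambda+\alpha)=0$, so $(d_{\sL_\R}\Psi)^{(r)}((\lambda+\alpha)^{\otimes r})=0$ as well.

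The step I expect to be the only nontrivial one is verifying that the cohomological identification of Theorem \ref{thm: str} transports to a cochain-level primitive $\Psi$ inside the reduced complex---that is, that the same $\Psi$ which witnesses the equality of cohomology classes may be chosen in $C^\bullet_{loc,red}(\sL_\R)$ so that the Leibniz computation above is legitimate. This is where I would need to be careful, but it is automatic: both $\Obstr[t]$ and $\Phi(\omega)$ lie in the reduced complex (each has no $\Sym^0$ component by construction), and the reduced cohomology of $\sL_\R$ is computed by the reduced complex. Combining these three paragraphs yields the formula, and the final sentence of the corollary follows by pairing with Lemma \ref{lem: degreecounting}, which constrains the support of $\Obstr[t]$ on closed inputs to precisely the tuples of one $0$-form and several $1$-forms addressed by the argument above.
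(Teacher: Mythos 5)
Your proof is correct and follows essentially the same route as the paper's: both reduce the claim to Theorem \ref{thm: str} together with the fact that a Chevalley--Eilenberg coboundary for the abelian dgla $\Omega^\bullet_M$ vanishes when evaluated on closed forms, combined with Lemma \ref{lem: degreecounting} and the tadpole computation for the $\Sym$-degree-one piece. If anything, your explicit Leibniz-rule verification that $(d_{\sL_\R}\Psi)^{(r)}$ kills closed inputs spells out a step the paper only asserts (``evaluation on closed de Rham forms is independent of the choice of representative of a class in $H^\bullet(C^\bullet_{loc,red}(\sL_\R)(M))$''), while the paper's comparison cocycle is $\Delta_t I_{tr}^{(1)}$ rather than $\Phi\bigl((-1)^{n+1}2\Str(k_t)dVol_g\bigr)$ --- an immaterial difference.
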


\begin{proof}
First note that both $\Obstr[t]$ and $\Delta_t I_{tr}^{(1)}$ are closed elements of $C^\bullet_{loc,red}(\sL_\R)$. For $\Obstr[t]$, this is contained in the statements of Definition-Lemma \ref{def: tobstr} and Lemma \ref{lem: obstrind}. On the other hand, $\Delta_t I_{tr}^{(1)}$ is local because by a small modification in the proof of Lemma \ref{lem: diagramcomputation}, it satisfies
\[
\Delta_t I_{tr}^{(1)}(f)=  \int_M f(x)  \Str(k_t(x,x))dVol_g(x)
\]
for $f\in \cinfty(M)$. This is manifestly local. Furthermore, $\Delta_t I_{tr}^{(1)}$ is closed because it is an element of $\Sym^1(\sL_\R^\vee[-1])$ of top ghost number and $d_{\sL_\R}$ preserves $\Sym$-degree. Thus, $d_{\sL_\R} \Delta_t I_{tr}^{(1)}$ is of Sym-degree 1 and of ghost number one greater than $\Delta_t I_{tr}^{(1)}$, hence is 0. 

The crucial observation is that $\Obstr[t](\lambda)=\Delta_t I_{tr}^{(1)}(\lambda)$, as we saw in the proof of Lemma \ref{lem: diagramcomputation}.  Then, by the proof of Theorem \ref{thm: str}, it follows that $\Obstr[t]$ and $\Delta_t I_{tr}^{(1)}$ represent the same cohomology class in $H^\bullet(C^\bullet_{loc,red}(\sL_\R)(M))$. Since evaluation on cocycles in $H^\bullet_{dR}$ is independent of the choice of representative of a class in $H^\bullet(C^\bullet_{loc,red}(\sL_\R)(M))$, the two cocycles take the same value when evaluated on closed de Rham forms, so that 
\[
\Obstr[t](\lambda+\alpha) = \Delta_t I_{tr}^{(1)}(\lambda+\alpha)=\Delta_t I_{tr}^{(1)}(\lambda)+\Delta_t I_{tr}^{(1)}(\alpha),
\]
where the last equality follows from the linearity of $I_{tr}^{(1)}$ in its $\sL_\R$ arguments. But, by Lemma \ref{lem: degreecounting}, $\Delta_{t} I_{tr}^{(1)}(\alpha)=0$, so that 
\[
\Obstr[t](\lambda+\alpha)=\Delta_t I_{tr}^{(1)}(\lambda)=-2\lambda \int_M \Str(k_t(x,x))dVol_g(x),
\]
as desired. 
\end{proof}

\subsection{The Anomaly as a Violation of Current Conservation}
\label{subsec: fujikawa}
In this subsection, we make contact with the traditional (and familiar to physicists) notion that an anomaly is the failure of a current which is conserved classically to be conserved at the quantum level. The discussion will be relatively informal, but we will secretly be using a very simplified version of the Noether formalism of Chapter 12 of \cite{CG2}. 

Before we begin, we should establish some notation. So far, we have worked only with the equivariant observables. In this subsection, we will work also with the non-equivariant observables; that is, we will work with the observables of the theory defined by $\sS$, an object which is well-defined without reference to any particular action of a dgla on $\sS$. We let $\Obcl[\sS]$ denote the classical observables of $\sS$, i.e. the Chevalley-Eilenberg cochains of the abelian dgla $\sS[-1]$, and $\Obq{}[t](\sS)$ denote the quantum observables of $\sS$, i.e. the graded vector space $\sO(\sS)[\![\hbar]\!]$ together with the differential $d_{\sS[-1]}+\hbar \Delta_t$. 

We discuss the classical master equation (Equation \ref{eq: cme}) in the appendix.  In the case at hand, the classical master equation is equivalent to the statement that the map 
 
\begin{align*}
I: \sL_\R[1] &\to \Obcl{}(\sS)
\end{align*}
given by 
\[
I(X)(\phi, \psi) = \ip[\phi,{[X,\psi]}]
\]
is a cochain map.

\begin{remark}
Recall that we defined $I$ to be an element of $\sL_\R[1]^\vee\otimes  \Obcl{}(\sS)$. We abuse notation and refer to the corresponding map $\sL_\R[1]\to \Obcl(\sS)$ also as $I$. We will often use the notation $I_X$ for the observable $I(X)$.
\end{remark}

\begin{remark}
We should think of the above map as defining a conserved vector current because it (in particular) takes in a one-form and gives a degree zero observable. More precisely, let us ``define'' a vector current 
\[
j\in \Obcl(\sS) \otimes \text{Vect}(M)
\]
by the equation 
\[
\int_M \alpha(j) dVol_g = I_\alpha.
\]
for all one-forms $\alpha$. $j$ is conserved in the following sense. The above Lemma tells us that if $\alpha = df$, then \[I_\alpha = \int_M df(j) dVol_g\] is an exact observable. Unpacking the definition of exactness in $\Obcl{}(\sS)$, we discover that $I_\alpha$ is zero when evaluated on fields $\phi, \psi$ which satisfy the equation of motion $Q\phi = Q\psi =0$. In other words, the following observable 
\[
\int_M df(j) dVol_g = -\int_M f \text{div}(j) dVol_g
\]
is zero when evaluated on fields satisfying the equations of motion. This implies that $\text{div}(j)=0$ when evaluated on the solutions of the equations of motion, which is the usual statement of current conservation.
\end{remark}

Now, we would like to see whether we can define an analogous map $\sL_\R[1]\to \Obq{}[t](\sS)$. We will see that this is obstructed precisely by the supertrace of the heat kernel. In particular, suppose we had a map 
\[
I' : \sL_\R[1] \to \Obq{}[t](\sS)
\]
whose $\hbar^0$ component is $I$. Let us write
\[
I' = I+\sum_{i=1}^\infty \hbar^i I^{(i)}.
\]
To say that $I'$ is a cochain map, we need to show that 
\[
I'_{dX} = (d_{\sS[-1]}+\hbar \Delta_t)I'_X.
\]

The $\hbar^0$ part of this equation is satisfied by the previous Lemma. However, the $\hbar^1$ coefficient is:
\[
d_{\sS[-1]}I^{(1)}_X + \Delta_t I_X=I^{(1)}_{dX}.
\]

It follows after a bit of reflection that $I^{(1)}$ satisfies this equation if and only if 
\[\Delta_t I_X = J(dX)\] 
for some linear map $J: \Omega^\bullet_M[1]\to \R$. But this last equality is equivalent to the cohomological triviality of $\Obstr$, since $[\Obstr]=[\Delta_t I^{(1)}_{tr}[t]]$, as we have seen. Moreover, we can simply take $I^{(j)} =0$ for $j>1$. We have therefore shown the following Lemma:

\begin{lemma}
There exists a lift in the diagram
\[
\begin{tikzcd}
& \Obq{}[t](\sS)\ar[d,"\mod \hbar"]\\
\sL_\R[1] \ar[r, "I"]\ar[ur," I'", dashrightarrow]& \Obcl[\sS]
\end{tikzcd}
\]
if and only if $\Obstr$ is cohomologically trivial in $C^\bullet_{red, loc}(\sL_\R)$.
\end{lemma}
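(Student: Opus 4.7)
The plan is to analyze the cocycle condition on $I'$ order-by-order in $\hbar$ and show that the only genuine obstruction sits at order $\hbar$, where it recovers the cohomology class of $\Obstr$. First, write any candidate lift as $I' = I + \sum_{k \geq 1} \hbar^k I^{(k)}$ and impose the cocycle condition $I'_{dX} = (d_{\sS[-1]} + \hbar \Delta_t) I'_X$ for $X \in \sL_\R[1]$. The order $\hbar^0$ equation holds because $I$ is already a cochain map by the remark preceding the statement; at order $\hbar$ one obtains
\[
I^{(1)}_{dX} - d_{\sS[-1]} I^{(1)}_X = \Delta_t I_X,
\]
and at order $\hbar^k$ for $k \geq 2$, $I^{(k)}_{dX} - d_{\sS[-1]} I^{(k)}_X = \Delta_t I^{(k-1)}_X$.

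Next, I would decompose $I^{(1)}_X$ by polynomial (symmetric) degree in the fields. Because the theory is free and $I_X(\phi, \psi) = \langle \phi, [X, \psi]\rangle$ is exactly quadratic in the fields, $\Delta_t I_X$ --- which lowers polynomial degree by two --- lies in $\R \subset \Obq{}[t](\sS)$. Since $d_{\sS[-1]}$ preserves polynomial degree, the components $(I^{(1)})^{(r)}$ for $r > 0$ satisfy a homogeneous equation and can be taken to vanish. Setting $J := (I^{(1)})^{(0)} : \sL_\R[1] \to \R$, the order-$\hbar$ equation collapses to $J(dX) = \Delta_t I_X$; that is, $\Delta_t I = d_{\sL_\R} J$ in the reduced Chevalley-Eilenberg cochain complex.

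The observable $I$ is the linear (in $\sL_\R$) part $I_{tr}^{(1)}[t]$ of the tree-level interaction used earlier in the paper. By Lemma \ref{lem: formofobstr} and the Feynman-diagram analysis in the proof of Lemma \ref{lem: diagramcomputation}, the cocycle $\Delta_t I$ represents the class $[\Obstr[t]]$ in the local obstruction complex $C^\bullet_{loc, red}(\sL_\R)$. Consequently, a solution $J$ exists precisely when $[\Obstr] = 0$ in $C^\bullet_{loc, red}(\sL_\R)$. Once $I^{(1)}$ has been taken scalar-valued, $\Delta_t I^{(1)} = 0$, so the higher-order equations are trivially solved by $I^{(k)} = 0$ for $k \geq 2$, and the lift $I'$ exists.

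I expect the most delicate step to be the local-versus-global reconciliation: the existence of $I'$ is an a priori global statement about the equivariant quantum observables, whereas $C^\bullet_{loc, red}(\sL_\R)$ records only local cochains. Since $\Obstr[t]$ is itself local by Lemma \ref{lem: obstrind}, a globally exact primitive $J$ can, via the Poincar\'e-lemma-style constructions underlying Proposition \ref{prop: qism}, be arranged to be local; the converse direction is immediate.
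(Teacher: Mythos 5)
Your proof is correct and follows essentially the same route as the paper's: expand $I'$ in powers of $\hbar$, observe that only the order-$\hbar$ equation is non-trivial, and identify its solvability with the exactness of $\Delta_t I = \Delta_t I^{(1)}_{tr}[t]$, whose class is $[\Obstr]$. Your polynomial-degree decomposition of $I^{(1)}$ is a welcome sharpening of the paper's ``after a bit of reflection,'' and the local-versus-global point you flag is most cleanly settled by noting that a global degree-one primitive $J$ forces $\Delta_t I(1)=J(d1)=0$, i.e.\ $\ind D=0$, which by Theorem \ref{thm: str} gives triviality in the local complex.
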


\begin{remark}
In the proof of the previous lemma, we found that the lift was obstructed by the one-leg (i.e. $\Sym^1$) term of $\Obstr$ (the one represented by the tadpole diagram of Figure \ref{fig: tadpole}). However, $\Obstr$ has higher-leg contributions which made no appearance in the proof. Although these higher-leg terms are exact as a corollary of Theorem \ref{thm: str}, it is still worth wondering why those terms did not appear in the proof of the lemma. The answer lies in the fact that we are only considering whether or not $I$ can be lifted to a quantum observable, whereas our computation is a precise way of addressing the question of whether or not $e^{I/\hbar}$ is a quantum observable.
\end{remark}

The preceding Lemma still seems to make little point of contact with the notion of conserved quantum vector currents. The following remark closes the gap.  

\begin{remark}
We have not proved or discussed this here, but the exactness of an observable in $\Obq$ implies that that observable has expectation value zero, if expectation values exist at all (see \cite{CG1} for details on this point). Thus, the lift $ I'$ (when it exists) should be viewed as giving a current whose expectation value is conserved. More precisely, just as we should think of $I_{dA}$ as the divergence of the conserved vector $j$, $ I'_{dA}$ is a quantization of the divergence of $j$. The fact that $I'_{dA}$ is exact says that the quantum expectation value of $div(j)$ vanishes.
\end{remark}

\section{Equivariant Generalizations}
\label{sec: eq}
In this section, we continue to study the massless free fermion theory $\sS$ associated to a $\Z/2$-graded vector bundle $V$ equipped with a Dirac operator; however, we would like to replace the Lie algebra $\sL_{\R}$ from Lemma \ref{lem: actionofforms} with a slightly more general algebra, in the following way. Suppose we have an ordinary Lie algebra $\fg$ acting on $V$ and that this action commutes both with the $\cinfty_M$ action and with $D$. In other words, we have a map of $\Z/2$-graded Lie algebras
\[
\rho: \fg \to \cinfty(M;\End(V))
\]
such that $\rho(\gamma)(D\phi)=D\rho(\gamma)(\phi)$ for all $\phi\in \sV$, $f\in \cinfty(M)$, and $\gamma\in \fg$. What we have done so far is the special case that $\fg=\R$ acting by scalar multiplication, and in fact the general case proceeds very similarly, as we will see.

\begin{remark}
The physical interpretation of this $\fg$-action is as symmetries of the massless free fermion: since elements of $\fg$ commute with $D$, they infinitesimally preserve the equation of motion $D\phi = 0$. 
\end{remark}

We will usually just denote the action of an element $\gamma\in \fg$ on a section $\phi\in \sV$ by $\gamma\phi$. Since $\gamma$ commutes with $D$, it preserves $\ker D^+$ and $\coker D^+$, so we are free to make the following

\begin{definition}
	Let $\fg$ act on $V$ as in the first paragraph of this section. Then, the \textbf{equivariant index} of $D$ is the following linear function on $\fg$:
	\[
	\ind(\gamma, D)= \Tr(\gamma\mid_{\ker D^+})-\Tr(\gamma\mid_{\coker D^+}).
	\]
\end{definition}

\begin{remark}
Note that $\ind([\gamma,\gamma'],D)=0$, since traces of commutators are always zero. In other words, the index defines an element of the Lie algebra cohomology of $\fg$. This will be important in the sequel, since the obstruction-deformation complex in this context will be related to the Lie algebra cohomology of $\fg$ in a way made precise by Proposition \ref{prop: eqqism}.
\end{remark}


With this definition in hand, we can now state the \textbf{equivariant McKean-Singer theorem}:
\begin{theorem}
Let $(M,g)$ be a Riemannian manifold, $V\to M$ a $\Z/2$-graded metric vector bundle, and $D$ a formally self-adjoint Dirac operator on $V$. Suppose given an action $\rho$ of the Lie algebra $\fg$ on $V$ which commutes with the Dirac operator. Then,
\begin{equation}
	\label{eq: eqmcs}
	\ind(\gamma, D) = \int_M \Str(\rho(\gamma) k_t(x,x))dVol_g(x).
\end{equation}
Here, $dVol_g$ is the Riemannian density.
\end{theorem}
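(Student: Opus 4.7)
The plan is to mimic the proof of the classical McKean-Singer theorem (Theorem \ref{thm: mcs}), with the insertion of $\rho(\gamma)$ handled using the fact that $\rho(\gamma)$ commutes with $D$. Since $\rho(\gamma)$ is a smooth bundle endomorphism it extends to a bounded operator on $L^2(V)$, so that $\rho(\gamma) e^{-tD^2}$ is trace class by Lemma \ref{lem: spectrum}. Moreover, the integral kernel of $\rho(\gamma) e^{-tD^2}$ is $\rho(\gamma)_y k_t(x,y)$, so
\[
\int_M \Str(\rho(\gamma) k_t(x,x))\, dVol_g(x) = \Str\!\left(\rho(\gamma) e^{-tD^2}\right),
\]
where on the right we take the $L^2$-supertrace. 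The task therefore is to show that this supertrace equals $\ind(\gamma,D)$ independently of $t$.

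First I would show the supertrace is independent of $t$. Differentiating and using the spectral description of $e^{-tD^2}$ to justify exchanging $\tfrac{d}{dt}$ with the trace, one gets
\[
\frac{d}{dt}\Str\!\left(\rho(\gamma)e^{-tD^2}\right) = -\Str\!\left(\rho(\gamma)D^2 e^{-tD^2}\right).
\]
Because $\rho(\gamma)$ commutes with $D$, the right-hand side can be rewritten as $-\Str(D\cdot \rho(\gamma) D e^{-tD^2})$. The operator $\rho(\gamma)D e^{-tD^2}$ is odd and trace-class, and $D$ is odd, so the cyclic property of the supertrace yields
\[
\Str\!\left(D\cdot \rho(\gamma)D e^{-tD^2}\right) = -\Str\!\left(\rho(\gamma)D e^{-tD^2}\cdot D\right) = -\Str\!\left(\rho(\gamma) D^2 e^{-tD^2}\right),
\]
so the derivative vanishes identically.

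Next I would take the limit $t\to\infty$. By a straightforward adaptation of Lemma \ref{lem: infraredkernel}, $e^{-tD^2}|_{\sV^\pm}$ converges in trace norm to the orthogonal projection $P_\pm$ onto $\ker D^\pm$, so since $\rho(\gamma)$ is bounded,
\[
\lim_{t\to\infty}\Str\!\left(\rho(\gamma) e^{-tD^2}\right) = \Tr\!\left(\rho(\gamma)|_{\ker D^+}\right) - \Tr\!\left(\rho(\gamma)|_{\ker D^-}\right).
\]
Finally, the Hodge decomposition of Lemma \ref{lem: hodge} gives an isomorphism $\ker D^- \cong \coker D^+$ that is $\rho(\gamma)$-equivariant (because $\rho(\gamma)$ preserves $\ker D^\mp$ and $\Im D^\pm$, as it commutes with $D$), so the second trace above equals $\Tr(\rho(\gamma)|_{\coker D^+})$ and the limit is $\ind(\gamma,D)$. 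Combined with $t$-independence, this proves the theorem.

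I do not expect any serious obstacles: the only delicate points are the trace-class/bounded-operator bookkeeping required to invoke cyclicity of the supertrace and to interchange limits with traces. Both are standard given the spectral control provided by Lemma \ref{lem: spectrum} and Lemma \ref{lem: infraredkernel}, and the novelty compared to the non-equivariant proof is only the verification that $\rho(\gamma)$ can be passed around inside the supertrace, which is immediate from $[\rho(\gamma),D]=0$.
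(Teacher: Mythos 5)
Your proof is correct, and it is essentially the argument the paper has in mind: the paper does not write out a proof of this theorem but instead cites Proposition 6.3 of \cite{bgv} and remarks that the equation follows from ``a slight modification of either of the proofs of that proposition,'' which is precisely the McKean--Singer-style argument you give (the $t$-independence via vanishing of the supertrace of a supercommutator, using $[\rho(\gamma),D]=0$, followed by the $t\to\infty$ limit onto the harmonic projections). The technical points you flag---trace-class bookkeeping for the cyclicity step and the $\rho(\gamma)$-equivariance of the identification $\ker D^-\cong\coker D^+$---are handled exactly as you indicate by Lemmas \ref{lem: spectrum}, \ref{lem: infraredkernel}, and \ref{lem: hodge}, so there is no gap.
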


The version of the equivariant McKean-Singer theorem presented here is the infinitesimal version of Proposition 6.3 of \cite{bgv}. Alternatively, one can prove Equation \ref{eq: eqmcs} using a slight modification of either of the proofs of that proposition. 

Note that in the special case where $\fg = \R$ and the action is given by scalar multiplication, this theorem just reproduces the regular McKean-Singer formula.

The relevant elliptic dgla in this context is $\sL_{\fg}:=\fg\otimes \Omega^\bullet_M$. We will see that, assuming that $\fg$ acts as above, $\sL_{\fg}$ acts on the free fermion theory. Let us first make the following notational choice. Recall that $\sS=\sV\oplus \sV[-1]$. If $\gamma \in \fg$, we denote by $\gamma^\ddagger$ the operator which acts by $\gamma$ on $\sV^+\oplus\sV^-[-1]$ and $-\gamma^T$ on $\sV^-\oplus\sV^+[-1]$. In other words, $\gamma^\ddagger$ acts on $\sV^+\oplus\sV^-[-1]$ by the given representation $\rho$ of $\fg$ and on $\sV^-\oplus\sV^+[-1]$ by the dual representation. Now, we can construct the action of $\sL_{\fg}$ on the theory $\sS$.

\begin{lemma}
\label{lem: eqaction}
Let 
\[
\sigma: \sL_{\fg} \otimes \sS \to \sS
\]
be given by 
\[
\sigma(\gamma\otimes f, \zeta) = f\gamma^\ddagger\zeta
\]
for $f\in \cinfty(M)$,
\[
\sigma(\gamma\otimes \alpha, \zeta) = c(\alpha)_{0\to1}\gamma^\ddagger\Gamma_{0\to0} \zeta
\]
for $\alpha\in \Omega^1_{dR}(M)$, and 
\[
\sigma(\gamma\otimes \omega, \zeta) =0
\]
for $\omega\in \Omega^{>1}_{dR}(M)$. $\sigma$ is an action of $\sL_{\fg}$ on the massless free fermion theory. Here, $c(\alpha)$ denotes the Clifford action of Definition-Lemma \ref{deflem: cliffaction}.
\end{lemma}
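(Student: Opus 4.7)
The plan is to verify, in sequence, the three axioms defining an action of an elliptic dgla on a free BV theory, following the template of the proof of Lemma \ref{lem: actionofforms}: (a) the graded derivation property $Q\sigma(X, \varphi) = \sigma(d_{dR} X, \varphi) + (-1)^{|X|}\sigma(X, Q\varphi)$, (b) the graded Jacobi identity for $\sigma$, and (c) invariance of the local pairing $\ip_{loc}$ under $\sigma$. The recurring theme is that $\rho(\gamma)$ commutes with $D$ and with pointwise multiplication by smooth functions (hence with the Clifford action, since $c(df) = [D, f]$ by Definition-Lemma \ref{deflem: cliffaction}), so in (a) and (b) the equivariant action reduces to the scalar case of Lemma \ref{lem: actionofforms} by pulling the $\gamma^\ddagger$ factor through all operators in sight.

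For (a), I would treat separately the cases $X = \gamma \otimes f$ with $f\in \cinfty(M)$ and $X = \gamma\otimes \alpha$ with $\alpha \in \Omega^1(M)$; higher-degree forms act as zero, so $dX$ evaluates trivially in all cases we need to check. In the 0-form case, stripping off $\gamma^\ddagger$ reduces the identity to the one already verified in Lemma \ref{lem: actionofforms}. In the 1-form case, $dX$ is a 2-form-valued element whose action is zero, so we need $Q\sigma(X,\varphi) + \sigma(X, Q\varphi) = 0$; this follows from the anti-commutation of $D$ with $\Gamma$, together with $[\rho(\gamma), D] = 0$. For (b), since $\Omega^\bullet_M$ is an abelian dgla the bracket on $\sL_\fg$ is inherited entirely from $\fg$; the only nontrivial check is that $\sigma$ is a graded Lie algebra homomorphism, which reduces to the representation identity $[\rho(\gamma_1), \rho(\gamma_2)] = \rho([\gamma_1, \gamma_2])$, with the $\cinfty_M$-linearity of everything ensuring that the wedge of the form components comes out correctly. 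The mixed 0-form / 1-form case uses once more that $c(\alpha)$ commutes with $\rho(\gamma)$.

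The step I expect to involve the most delicate sign bookkeeping is (c), and this is where the $\ddagger$ construction earns its keep. The operator $\rho(\gamma)^\ddagger$ is designed to be skew with respect to $(\cdot, \cdot)$: for homogeneous $\varphi_1, \varphi_2 \in \sS$ with $|\varphi_1| + |\varphi_2| = 1$, one of $\rho(\gamma)^\ddagger \varphi_1, \rho(\gamma)^\ddagger \varphi_2$ uses $\rho(\gamma)$ while the other uses $-\rho(\gamma)^T$, and the pointwise adjoint identity $(\rho(\gamma)u, v) = (u, \rho(\gamma)^T v)$ kills the sum $\ip[\rho(\gamma)^\ddagger \varphi_1, \varphi_2]_{loc} + \ip[\varphi_1, \rho(\gamma)^\ddagger \varphi_2]_{loc}$. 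The 0-form action inherits this skew-symmetry directly; the 1-form action additionally uses that $c(\alpha)\Gamma$ is formally self-adjoint for $(\cdot, \cdot)$ (a fact already baked into the construction of the free BV theory from $D$), combined with the same adjoint relation for $\rho(\gamma)^\ddagger$.

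The main obstacle is therefore just to track the Koszul signs arising from the $\Z\times \Z/2$-grading with enough care to see that the twist encoded in the $\ddagger$ operation produces exactly the cancellation required by the graded skew-symmetry condition on $\sigma$. Without the $-\rho(\gamma)^T$ summand in $\gamma^\ddagger$, the invariance would fail whenever $\rho(\gamma)$ is not itself pointwise skew-adjoint, which is not among our hypotheses---so the $\ddagger$-twist is essential, not cosmetic. Once the signs are in order, the verification of (c) is mechanical, and the lemma follows.
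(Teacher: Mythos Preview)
Your proposal is correct and follows essentially the same three-step verification as the paper: compatibility of $Q+d_{dR}$ with $\sigma$, the Jacobi identity, and invariance of $\ip$, all driven by the commutation of $\gamma^\ddagger$ with $D$, $\Gamma$, and $c(\alpha)$ and by the built-in skewness of $\gamma^\ddagger$ for the metric pairing. One small caveat: the $0$-form action here is $f\gamma^\ddagger$, not $\gamma^\ddagger$ times the axial action $f(\Gamma_{0\to0}-\Gamma_{1\to1})$ of Lemma~\ref{lem: actionofforms}, so ``stripping off $\gamma^\ddagger$'' does not literally reduce to that lemma---but once $\gamma^\ddagger$ is commuted through, the identity follows directly from $c(df)=[D,f]$, which is exactly the paper's computation.
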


\begin{proof}
As above, we will often use $[\cdot, \cdot]$ to denote the bracket on $\sS[-1]\oplus \sL_{\fg}$ provided by $\sigma$. We need to check that
\begin{enumerate}
\item $Q[\alpha, \zeta] = [d_{dR}\alpha,\zeta]+(-1)^{|\alpha|}[\alpha, Q\zeta]$ for all forms $\alpha$ and all $\zeta \in \sS$, i.e. $Q+d_{dR}$ is a derivation for $[\cdot, \cdot]$,
\item $[\alpha, [\beta, \zeta]]=[[\alpha, \beta],\zeta]+(-1)^{|\alpha||\beta|}[\beta,[\alpha,\zeta]]$ for all forms $\alpha, \beta$ and all $\zeta \in \sS$, i.e. the Jacobi identity is satisfied, and 
\item  $\ip[{[\alpha,\zeta]},\eta]+(-1)^{|\alpha||\zeta|}\ip[\zeta, {[\alpha,\eta]}]=0$ for all forms $\alpha$, $\zeta,\eta \in  \sS$, i.e. $[\alpha, \cdot]$ is a derivation for the pairing $\ip$.
\end{enumerate}
The terms in the first identity are non-zero only if $\alpha$ has degree zero in $\sL_{\fg}$ and $\zeta$ has ghost number zero in $\sS$; in that case, the identity reads
\begin{equation}
\label{eq: derivcheck}
D_{0\to1}\Gamma_{0\to0}f \gamma^\ddagger\zeta \overset{?}{=} c(df)_{0\to1} \gamma^\ddagger\Gamma_{0\to0}\zeta - f\gamma^\ddagger D_{0\to 1}\Gamma_{0\to0} \zeta.
\end{equation}
Here, the question mark over the equals sign means that the equality does not necessarily hold, and is rather an equality whose truth we want to check.

We note the following identities
\begin{enumerate}
\item $\gamma^\ddagger \Gamma_{i\to i}= \Gamma_{i\to i}\gamma^\ddagger$
\item $D_{0\to 1} \gamma^\ddagger = \gamma^\ddagger D_{0\to 1}$
\item $\gamma^\ddagger c(df)_{0\to 1} = c(df)_{0\to 1} \gamma^\ddagger$.
\end{enumerate}
The first holds because $\gamma$ and $\gamma^T$ preserve  $\sV^+$ and $\sV^-$; the second identity holds because by assumption $\gamma$ commutes with $D$ and so too does $\gamma^T$ by the formal self-adjointness of $D$; the third identity holds because $c(df)$ is defined in terms of only $D$ and $f$, both of which commute with $\gamma$ and $\gamma^T$ by assumption. Using these identities, Equation \ref{eq: derivcheck} becomes
\[
D_{0\to 1} f \gamma^\ddagger \Gamma_{0\to0}\zeta \overset{?}{=} c(df)_{0\to 1} \gamma^\ddagger\Gamma_{0\to0}\zeta +fD_{0\to 1}\gamma^\ddagger \Gamma_{0\to0}\zeta,
\]
which is true by the definition of $c(df)$. 

Now we turn to the verification of the Jacobi identity. If $\alpha$ and $\beta$ are both of non-zero cohomological degree, then both sides of the Jacobi identity are automatically zero.  Hence, suppose $\alpha = \gamma_1 \otimes f$ and $\beta= \gamma_2\otimes g$ where $f,g\in \cinfty$ and $\gamma_1,\gamma_2 \in \fg$. We have
\begin{align*}
[\gamma_1\otimes f,[\gamma_2\otimes g,\zeta]]&=fg\gamma_1^\ddagger \gamma_2^\ddagger\zeta\\
[[\gamma_1\otimes f, \gamma_2 \otimes g],\zeta]+[\gamma_2\otimes g, [\gamma_1\otimes f, \zeta]]&=fg[\gamma_1,\gamma_2]^\ddagger\zeta + fg\gamma_2^\ddagger\gamma_1^\ddagger\zeta,
\end{align*}
and we see that the Jacobi identity is satisfied because the actions of $\gamma$ and $-\gamma^T$ are Lie algebra actions. In case $f,\gamma_1,\gamma_2$ are as above and $\alpha$ is a one-form, we have
\begin{align*}
[\gamma_1\otimes f,[\gamma_2\otimes \alpha,\zeta]]&=fg\gamma_1^\ddagger\gamma_2^\ddagger(c(\alpha_{0\to1})\zeta)\\
[[\gamma_1\otimes f, \gamma_2 \otimes \alpha],\zeta]+[\gamma_2\otimes \alpha, [\gamma_1\otimes f, \zeta]]&=f[\gamma_1,\gamma_2]^\ddagger(c(\alpha)_{0\to1}\zeta) + f\gamma_2^\dagger(c(\alpha)_{0\to1}\gamma_1^\ddagger\zeta).
\end{align*}
Because $c(\alpha)$ is defined using only the actions of $D$ and $\cinfty_M$ on $\sV$, $c(\alpha)$ commutes with $\gamma_1^\ddagger$, and so the last term in the above equation is $\gamma_2\gamma_1(c(\alpha)\zeta)$. Just as for the first case, then, the Jacobi identity is satisfied.

Finally, we must verify that the pairing is invariant under this action. To this end, let $\zeta, \eta\in \sS$ have ghost numbers 0 and 1 respectively and both be elements of $\sV^+$, $f\in \cinfty_M$, $\gamma \in \fg$, and $\alpha \in \Omega^1_{dR}$. Then 
\begin{align*}
\ip[{[\gamma\otimes f, \zeta]},\eta]& =\int_M f(\gamma^\ddagger \zeta, \eta) dVol_g=\int_M f(-\gamma^T \zeta, \eta) dVol_g\\&=\int_M f(\zeta, -\gamma\eta) dVol_g=-\ip[\zeta, {[\gamma\otimes f, \eta]}];
\end{align*}
the result holds analogously if $\zeta, \eta \in \sV^-$; if $\zeta, \eta$ have the same ghost number or live in opposite $\Z/2$ components of $\sV$, all terms in the above chain of equalities are zero. For $\zeta, \eta$ now both of ghost number 0 and in $\sV^-$, we have 

\begin{align*}
\ip[{[\gamma\otimes \alpha,\zeta]},\eta] &= \ip[c(\alpha)_{0\to1}\gamma^\ddagger\Gamma_{0\to0}\zeta,  \eta]=\int_M (c(\alpha)\gamma \Gamma \zeta, \eta)dVol_g =\\&-\int_M (\zeta, \Gamma \gamma^T c(\alpha)\eta)dVol_g=\int_M(\zeta, c(\alpha)\gamma^T\Gamma\eta)dVol_g = -\ip[\zeta,{[\gamma\otimes \alpha,\eta]}];
\end{align*}
just as for the case of zero-forms, the proof for $\zeta, \eta \in \sV^+$ is entirely analogous and if $\zeta, \eta$ have differing ghost number or live in pieces of $\sV$ of opposite $\Z/2$-grading, all terms in the preceding chain of equalities are zero. 
\end{proof}

We also have results analogous to those of Section \ref{sec: main}.
\begin{proposition}
\label{prop: eqqism}
Let $M$ be a manifold, $\fg$ a Lie algebra, and $\sL_\fg:= \fg\otimes \Omega^\bullet_M$. If $M$ is orientable, there is a map
\begin{align*}
\Phi: C^\bullet_{red}(\mathfrak g) \otimes\Omega^\bullet_M[n]\to C^\bullet_{loc,red}(\sL_{\fg})
\end{align*}
given by
\[
\Phi(\omega\otimes \alpha)\left( \gamma_1\otimes \alpha_1,\cdots, \gamma_r\otimes \alpha_r\right) = (-1)^{|\alpha|}\omega(\gamma_1,\cdots, \gamma_r)\alpha \wedge \alpha_1\wedge \cdots \wedge \alpha_r,
\]
and $\Phi$ is a quasi-isomorphism. Here, $C^\bullet_{loc, red}(\sL_\fg)$ is the cochain complex of reduced, local Chevalley-Eilenberg cochains. There is an analogous statement if $M$ is not orientable  using twisted de Rham forms.
\end{proposition}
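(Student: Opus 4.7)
The plan is to mirror the proof of Proposition \ref{prop: qism}, threading the Lie algebra $\fg$ through each of its three steps. First, invoke Lemma \ref{lem: localCE} to obtain the quasi-isomorphism
\[
C^\bullet_{loc,red}(\sL_\fg) \simeq \text{Dens}_M \otimes^{\bL}_{D_M} \sO_{red}(J(\sL_\fg)).
\]
Since $\sL_\fg = \fg \otimes \Omega^\bullet_M$, there is a canonical isomorphism $J(\sL_\fg) \cong \fg \otimes J(\Omega^\bullet_M)$ as sheaves of dg Lie algebras in $D_M$-modules: the bracket is $J(\Omega^\bullet_M)$-linear and the differential acts only on the second tensor factor. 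The fiberwise Poincar\'e lemma argument used in Proposition \ref{prop: qism} then applies verbatim with $\fg$-valued coefficients, yielding a quasi-isomorphism of $D_M$-modules and of dg Lie algebras
\[
\iota_\fg : \fg \otimes \cinfty_M \hookrightarrow J(\sL_\fg).
\]

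Second, because the bracket on $\fg \otimes \cinfty_M$ is $\cinfty_M$-linear, passing to reduced Chevalley–Eilenberg cochains relative to $\cinfty_M$ gives a canonical isomorphism of dg-$D_M$-modules
\[
\sO_{red}(\fg \otimes \cinfty_M) \cong C^\bullet_{red}(\fg) \otimes_\R \cinfty_M,
\]
whose differential is the $\cinfty_M$-linear extension of the Chevalley–Eilenberg differential on $C^\bullet_{red}(\fg)$. Third, using the same Koszul resolution $\Omega^\bullet_M[n]\otimes_{\cinfty_M} D_M \to \text{Dens}_M$ as in Proposition \ref{prop: qism} (replaced by twisted forms if $M$ is not orientable, following Remark \ref{rem: twisteddR}) and computing the derived tensor product gives
\[
C^\bullet_{loc,red}(\sL_\fg) \simeq \Omega^\bullet_M[n] \otimes_{\cinfty_M} D_M \otimes_{D_M} \bigl(C^\bullet_{red}(\fg) \otimes_\R \cinfty_M\bigr) \cong C^\bullet_{red}(\fg) \otimes_\R \Omega^\bullet_M[n],
\]
which matches the source of $\Phi$ as a graded vector space. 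As a sanity check, specializing to $\fg = \R$ recovers $\Omega^\bullet_M[n-1]$ via $C^\bullet_{red}(\R) \cong \R[-1]$.

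The main obstacle is verifying that the explicit map $\Phi$ of the statement represents the abstract composite quasi-isomorphism produced above. I would unwind the chain of equivalences: a representative $\omega \otimes \alpha$ transports to the local functional $(\gamma_i \otimes \alpha_i) \mapsto \omega(\gamma_1,\dots,\gamma_r)\,\alpha\wedge\alpha_1\wedge\cdots\wedge\alpha_r$, with the sign $(-1)^{|\alpha|}$ accounted for by the Koszul braiding when pushing $\alpha$ past $\omega$. Rather than track every sign globally, the cleanest route is to check the claim locally on an open ball $U \cong \R^n$, as in the final paragraph of the proof of Proposition \ref{prop: qism}. By the Poincar\'e lemma, $H^\bullet(\Omega^\bullet_M[n])(U) \cong \R[-n]$, so both sides have cohomology $C^\bullet_{red}(\fg)[-n]$ on $U$ by the three abstract quasi-isomorphisms. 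For any nontrivial Chevalley–Eilenberg cocycle $\omega \in C^r_{red}(\fg)$, evaluating $\Phi(\omega\otimes\mathrm{vol})$ on constant inputs $\gamma_i \otimes 1$ returns a nonzero multiple of $\omega(\gamma_1,\dots,\gamma_r)\,\mathrm{vol}$, which is nonzero in local cohomology; the cocycle $\Phi(\omega\otimes\mathrm{vol})$ is moreover not exact by the same $\Sym$-degree argument used in Proposition \ref{prop: qism} (the would-be primitive would have to live in a graded piece that is zero). Hence $\Phi$ is injective on local cohomology, and by the dimension count it is a quasi-isomorphism of complexes of sheaves; since both sides are $c$-soft, this promotes to the global statement.
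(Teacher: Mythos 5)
The first half of your argument --- jets, the fiberwise Poincar\'e lemma giving $\fg\otimes\cinfty_M\hookrightarrow J(\sL_\fg)$, and the Koszul resolution of $\text{Dens}_M$ --- matches the paper's proof and correctly establishes that source and target have isomorphic cohomology. The gap is in the final step, where you check that the explicit map $\Phi$ realizes the equivalence. Your local representative is wrong: on $U\cong\R^n$ the generator of the cohomology of $C^\bullet_{red}(\fg)\otimes\Omega^\bullet_M[n]$ is $\omega\otimes 1$, not $\omega\otimes\mathrm{vol}$ --- a top form on a ball is exact, so $\omega\otimes\mathrm{vol}$ represents the zero class in the source, and evaluating its image on constant inputs $\gamma_i\otimes 1$ proves nothing about injectivity on cohomology (and in any case, the nonvanishing of a cochain evaluated at a point is not non-exactness). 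To detect $\Phi(\omega\otimes 1)$ one must pair it against inputs $\gamma_i\otimes\alpha_i$ of total de Rham degree $n$, with one $\alpha_i$ a closed, non-exact, compactly supported top form, and integrate over $U$; this pairing descends to cohomology and is what the paper uses.

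The second problem is your appeal to ``the same $\Sym$-degree argument'' from Proposition \ref{prop: qism}. That argument works there only because everything lives in $\Sym$-degree one: a primitive would have ghost number $-n$ in $\Sym$-degree one, and that graded piece vanishes. Here $\Phi(\omega\otimes 1)$ sits in $\Sym$-degree $r=\deg\omega$ and ghost number $r-n$, and a would-be primitive of ghost number $r-n-1$ in $\Sym$-degree $r$ corresponds to inputs of total form degree $n+1$, which is perfectly achievable once $r\geq 2$ (e.g.\ one $j$-form and one $(n+1-j)$-form input). So the graded piece is not zero and the counting argument fails. Moreover, restricting attention to a fixed $\Sym$-degree at all requires justification, since $\sL_\fg$ is not abelian and $d_{\sL_\fg}$ raises $\Sym$-degree. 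The paper resolves both points at once by filtering both sides by $\Sym$-degree and comparing the resulting spectral sequences: on the $E_1$ page only the de Rham differential survives, the degree-by-degree analysis becomes legitimate, and the integration argument above shows the map of $E_1$ pages is an isomorphism. You need this filtration (or an equivalent device) to close the argument.
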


\begin{proof}
The proof is similar to that of Proposition \ref{prop: qism}. Letting $J_{\fg}$ refer to the jet bundle of $\Omega^\bullet_M\otimes \fg$, we have a quasi-isomorphism of sheaves of dglas
\[
C^\infty_M\otimes \mathfrak g \hookrightarrow J_{\fg},
\]
and then $C^\bullet_{red}(J_\fg)$ is quasi-isomorphic to $C^\infty_M \otimes C^\bullet_{red}(\fg)$. Then, just as in Proposition \ref{prop: qism}, we can show that both the source and target of $\Phi$ have the same cohomology groups.

It remains to show that the map specified above is a quasi-isomorphism. We show this by a spectral sequence argument. Both source and target of $\Phi$ are filtered by $\Sym$ degree and $\Phi$ preserves the filtrations. Therefore, the cohomology of both can be computed by a spectral sequence and there is a map of spectral sequences between the two. The first page of the spectral sequence just computes the cohomology of both sides with respect to the differential induced by the de Rham differential; on both sides, this is $\Sym^{>0}(\fg[1]^\vee)\otimes H^\bullet_{dR} $. This is obvious on the source side, while on the target side, this is a minor modification of the argument of Proposition \ref{prop: qism}. The induced map on $E_1$ pages is induced from $\Phi$, where we think of both source and target as having only the truncated, Sym-degree-preserving differential. A small modification of the argument from Proposition \ref{prop: qism} can be used to show that the map on $E_1$ pages is indeed an isomorphism. Namely, on a neighborhood $U$ of $M$ homeomorphic to $\R^n$, the cohomology of $\Sym^{> 0}(\fg[1]^\vee)\otimes \Omega^\bullet_M(U)$ is $\Sym^{> 0}(\fg[1]^\vee)$. To prove that the induced map on $E_1$ pages is an isomorphism, we need to show that the images of elements of $\Sym^{> 0}(\fg[1]^\vee)$ under $\Phi$ are not exact in the cochain complex
\[
\left(\sO_{loc, red}(\fg\otimes \Omega^\bullet_M)(U)[1], d_{dR}\right).
\]
(The differential on this complex is just the differential on the $E_1$ page in the spectral sequence for $C^\bullet_{loc,red}(\sL_\fg)$, since the differential induced from the Lie bracket changes $\Sym$-filtration.) However, if $\Phi(\omega)$ were exact in this truncated CE complex for some $\omega$, then
\[\int_U \omega(\gamma_1,\cdots, \gamma_r)\alpha_1\wedge \cdots \wedge \alpha_r
\]
would be zero for any $\gamma_1,\cdots, \gamma_r$ and any closed $\alpha_1,\cdots, \alpha_r$ of total degree $n$, with at least 1 of the $\alpha_i$ compactly-supported. Take $\gamma_1,\cdots,\gamma_r$ such that $\omega(\gamma_1,\cdots, \gamma_r)\neq0$, set $\alpha_1=\cdots= \alpha_{r-1}=1$, and choose a closed but non-exact compactly-supported top form $\alpha_r\in \Omega^n_{dR}$. Such a top form exists because the compactly-supported cohomology of $\R^n$ is $\R$ in degree $n$. Then,
\[
\int_U \omega(\gamma_1,\cdots, \gamma_r)\alpha_1\wedge \cdots \wedge \alpha_r=\omega(\gamma_1,\cdots,\gamma_r) \int_U \alpha_r \neq 0.
\]
This shows that the map induced on $E_1$-pages is locally an isomorphism, so it is an isomorphism of sheaves. It follows that the induced map on $E_2$-pages is an isomorphism. But the $E_2$ page is also the $E_\infty$ page, so we see that $\Phi$ indeed induces an isomorphism on cohomology. 
\end{proof}

We also have an analogue to Theorem \ref{thm: str}; however, this proof requires a slightly more sophisticated touch, since the appropriate generalization of the diagram in Equation \ref{eq: diag} is not immediately obvious.
\begin{theorem}
\label{thm: eqobstr}
Let $(M,g)$ be a Riemannian manifold, $V\to M$ a $\Z/2$-graded vector bundle with metric $(\cdot,\cdot)$, and $D$ a formally self-adjoint Dirac operator. Let $\fg$ act on $V$ $D$-equivariantly and denote by $\sL_\fg$ the dgla $\fg\otimes \Omega^\bullet_M$. (These are the data defining a free classical fermion theory $\sS$, together with an action of $\sL_\fg$ on the theory, which has an obstruction $\Obstr$.) Under the quasi-isomorphism $\Phi$ of Proposition \ref{prop: eqqism}, the cohomology class of $\Obstr$ as an element of \[C^\bullet_{loc,red}(\fg\otimes \Omega^\bullet_M(M))\] coincides with the cohomology class of \[\mathfrak{T}:\gamma \mapsto (-1)^{n+1}2\Str(\rho(\gamma) k_t)dVol_g\] in $C^\bullet_{red}(\fg)\otimes \Omega^\bullet_M(M)[n]$.
\end{theorem}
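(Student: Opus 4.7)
The plan is to follow the strategy of Theorem~\ref{thm: str}, but since the degree-$1$ cohomology of the source $C^\bullet_{red}(\fg) \otimes \Omega^\bullet_{tw}(M)[n]$ is no longer one-dimensional --- decomposing by K\"unneth as $\bigoplus_{p+q=n+1} H^p_{red}(\fg) \otimes H^q_{dR,tw}(M)$ --- a single scalar evaluation as in Equation~\ref{eq: diag} cannot pin down a cohomology class. My approach is instead to match $\Obstr[t]$ with $\Phi(\mathfrak{T})$ at the cochain level, up to an exact correction extracted from Lemma~\ref{lem: formofobstr}, rather than through a diagram chase into $\R$.

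The first step is the equivariant analog of Lemmas~\ref{lem: degreecounting} and~\ref{lem: diagramcomputation}. The degree-counting argument goes through unchanged: non-vanishing evaluation of a degree-$1$ reduced cochain on $\sL_\fg[1]$-inputs forces total input degree $-1$, hence exactly one $0$-form input and arbitrarily many $1$-form inputs. Since $I_{tr}[t]$ is linear in $\sL_\fg$ (the action is strict, with no higher $L_\infty$ operations), $\Delta_t I_{tr}[t]$ is supported in Sym$^1$ and vanishes on $1$-form inputs. On a $0$-form input $\beta \in \cinfty(M)$, I would run the tadpole Feynman calculation of Lemma~\ref{lem: diagramcomputation} with scalar multiplication by $\lambda$ replaced by the $\fg$-action $\rho(\gamma)^\ddagger$ and Lemma~\ref{lem: BVvsregular} applied identically, obtaining
\[
(\Delta_t I_{tr}[t])(\gamma \otimes \beta) = -2 \int_M \beta(x)\,\Str(\rho(\gamma) k_t(x,x))\, dVol_g(x).
\]
The equivariant McKean-Singer theorem recovers the advertised $-2\ind(\gamma, D)$ when $\beta = 1$. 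The same value is returned by a direct computation of $\Phi(\mathfrak{T})(\gamma \otimes \beta)$ from the formula defining $\Phi$ --- the signs $(-1)^n$ from $\Phi$ and $(-1)^{n+1}$ built into $\mathfrak{T}$ conspiring to yield $-2$ --- and $\Phi(\mathfrak{T})$ is likewise supported in Sym$^1$ since $\mathfrak{T}$ is linear in $\fg$. Hence $\Delta_t I_{tr}[t] = \Phi(\mathfrak{T})$ as cochains in $C^\bullet_{loc,red}(\sL_\fg)(M)$.

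Combining the cochain equality with the decomposition $\Obstr[t] = d_{\sL_\fg} I_{wh}[t] + \Delta_t I_{tr}[t]$ of Lemma~\ref{lem: formofobstr}, one obtains $\Obstr[t] - \Phi(\mathfrak{T}) = d_{\sL_\fg} I_{wh}[t]$. The main obstacle is to promote this identity to exactness in the \emph{local} cochain complex, i.e., to argue that $I_{wh}[t]$ may be chosen to lie in $C^\bullet_{loc,red}(\sL_\fg)(M)$. For free theories this is expected from standard properties of the Costello renormalization machinery in \cite{cost}, which produces only local counterterms in the absence of ultraviolet divergences; the proof reduces to either citing the relevant general result or directly inspecting the diagrams that define $I_{wh}[t]$ for the equivariant massless free fermion. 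As a consistency check, the evaluation maps $ev_{loc}: C^\bullet_{loc,red}(\sL_\fg)(M) \to C^\bullet_{red}(\fg)$ given by pullback along the dgla inclusion $\iota: \gamma \mapsto \gamma \otimes 1$ and $ev_{dR}: \omega \otimes \alpha \mapsto \omega \cdot \int_M \alpha$ are chain maps fitting into a commutative diagram $ev_{loc} \circ \Phi = (-1)^n \cdot ev_{dR}$, confirming on Sym$^1$ the cochain identity established above.
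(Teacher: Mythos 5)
There are two genuine gaps here, and together they bypass essentially all of the hard work in the paper's argument.

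First, your claim that $\Delta_t I_{tr}[t]$ is supported in $\Sym$-degree $1$ because ``$I_{tr}[t]$ is linear in $\sL_\fg$'' is false. The classical interaction $I$ is linear in $\sL_\fg$, but $I_{tr}[t]$ is its image under tree-level RG flow: a tree built from $r$ copies of the trivalent vertex joined by $\sF$-propagators has $r$ wavy legs, so $I_{tr}^{(r)}[t]\neq 0$ for all $r\geq 1$ (this is exactly Figure \ref{fig: trees} and the definition of $I_{tr}^{(r)}[t]$ in the proof of Lemma \ref{lem: diagramcomputation}). Consequently $\Delta_t I_{tr}[t]$ has higher-leg contributions, and by the degree count of Lemma \ref{lem: degreecounting} the two-leg term is evaluated on one $0$-form and one $1$-form --- precisely the configuration the quasi-isomorphism $\eta$ does \emph{not} kill. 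The bulk of the paper's proof is devoted to showing that this two-leg term is exact: trivially when $H^2(\fg)=0$, and in general via the $t\to\infty$ analysis of $\Tr_{\sS}(\cZ_t e^{-tH})$ using the Hodge decomposition, operator-norm convergence of $P_t\to P$, and Lemma \ref{lem: infraredkernel}. Your cochain-level identity $\Delta_t I_{tr}[t]=\Phi(\mathfrak T)$ is therefore simply not true, and the step it was meant to replace is the heart of the theorem.

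Second, the step you flag as ``the main obstacle'' --- choosing $I_{wh}[t]$ to be local --- cannot be repaired in the way you suggest. At finite scale $t$ the wheel diagrams are built from smooth heat-kernel propagators smeared over products of $M$, so $I_{wh}[t]$ is genuinely non-local; the paper is explicit that only the \emph{sum} $d_{\sL_\fg}I_{wh}[t]+\Delta_t I_{tr}[t]$ is known to be local (Lemma \ref{lem: obstr}). The paper's way around this is the observation that $\eta=res\circ i$ being a quasi-isomorphism forces the inclusion $i$ of local cochains into the full Chevalley--Eilenberg complex to be injective on cohomology, so one may discard $d_{\sL_\fg}I_{wh}[t]$ as exact in the \emph{full} complex and still conclude exactness of the difference in the local complex. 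Without that detour through $C^\bullet_{red}(\fg\otimes\Omega^\bullet_M(M))$, your decomposition $\Obstr[t]-\Phi(\mathfrak T)=d_{\sL_\fg}I_{wh}[t]$ does not establish exactness in $C^\bullet_{loc,red}(\sL_\fg)$. Your opening diagnosis --- that a single scalar evaluation no longer suffices because the target cohomology is not one-dimensional --- is correct, and your one-leg tadpole computation matches the paper's; but the proof as proposed is incomplete.
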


\begin{proof}
Consider the following diagram:
\[
\begin{tikzcd}
C^\bullet_{red}(\fg) \otimes \Omega^\bullet_M(M)[n]\arrow[r,"\Phi"]\arrow[rd,"id\otimes PD" below left]& C^\bullet_{loc,red}(\fg\otimes \Omega^\bullet_M(M))\arrow[r,hook,"i"]&C^\bullet_{red}(\fg\otimes \Omega^\bullet_M(M))\arrow[ld,"res"]\\
&C^\bullet_{red}(\fg)\otimes \Hom_\R(H^{-\bullet}_{dR}(M),\R)& \\\\
\end{tikzcd}.
\]
Only the downward pointing arrows have not yet been discussed. $PD$, whose letters stand for ``Poincar\'{e} Duality,'' is a composition given by first taking the projection of $\Omega^\bullet_M$ onto $H^\bullet_{dR}$ given by Hodge theory and then taking the Poincar\'e duality map. The map $res$ is a bit more involved. Let $p$ be an element of $\Sym^k((\fg\otimes \Omega^\bullet_M(M))^\vee)$. Then $res$ is defined by 
\[
res(p)(\beta,\gamma_1,\cdots, \gamma_k) =p(\gamma_1\otimes \beta,  \gamma_2\otimes 1,\cdots, \gamma_k\otimes 1),
\]
where on the right hand side we have used the harmonic representative for the cohomology class $\beta$ to understand $\beta$ as a de Rham form. We are also thinking of the right-hand side of the above expression as an element of the coinvariants with respect to the $S_k$ action of permutation of the $\gamma_i$'s. It is a quick verification that the diagonal map is a chain map and that the diagram commutes.

Now, since the two maps out of $C^\bullet_{red}(\fg)\otimes \Omega^\bullet_M(M)$ are quasi-isomorphisms, it follows that the composite map \[\eta:=res\circ i\] is a quasi-isomorphism as well. In particular, this tells us that the inclusion of local Chevalley-Eilenberg cochains into the full space of Chevalley-Eilenberg cochains is injective on cohomology, i.e. if two local cochains differ by an exact term in the full space, they also differ by an exact term in the local complex. 

We claim that \[[\eta(\Obstr[t])]=[id\otimes PD(\fT)]\] in cohomology. Here, it is crucial to us that $\eta$ factors through the full Chevalley-Eilenberg cochain complex, since this allows us to treat the $d_{\sL_\fg}I_{wh}[t]$ and $\Delta_t I_{tr}[t]$ terms separately. (Recall that we know only that the sum $d_{\sL_\fg}I_{wh}[t]+\Delta_t I_{tr}[t]$ is local.) First of all, this allows us to note that the term $d_{\sL_\fg}I_{wh}[t]$ in the scale $t$ obstruction is exact in $C^\bullet_{red}(\fg \otimes \Omega^\bullet_M)$, so that it does not contribute to the class $[\eta(\Obstr[t])]$. 

We are left with the task of showing that the cohomology class of $\Delta_t I_{tr}[t]$ coincides with that of $\Phi(\fT)$. We claim first of all that the one-leg term $\eta(\Delta_tI^{(1)}_{tr}[t])=id\otimes PD(\fT)$ To see this, note that $id\otimes PD (\fT)\in C^1_{red}(\fg)\otimes \Hom_\R(H^0_{dR}(M),\R)=C^1_{red}(\fg)$, so corresponds to a linear functional on $\fg$. In fact, we have 
\[
\eta\circ \Phi(\fT)(\gamma) = -2\int_M \Str(\rho(\gamma)k_t)dVol_g.
\]
To prove the claim about the one-leg term, it suffices to show that 
\begin{equation}
\label{eq: eqdiagram}
\Obstr[t](\gamma\otimes 1) = -2\int_M \Str(\rho(\gamma)(x) k_t(x,x))dVol_g(x).
\end{equation}
The single-leg contribution corresponds to the tadpole diagram in Figure \ref{fig: tadpole}, and that diagram gives a contribution
\begin{align*}
-\int_M \Tr_{S}&(\gamma^\ddagger(x) k_t(x,x))dVol_g(x)
\\&=-\int_M \Tr_{V^+}(\rho(\gamma) k_t(x,x))dVol_g(x)+\int_M \Tr_{V^-}(\rho(\gamma)^T k_t(x,x))dVol_g(x)-(+\leftrightarrow -)
\\&=-2\int_M \Str(\rho(\gamma)(x) k_t(x,x))dVol_g(x).
\end{align*}
In the last equality, we use the fact that $k_t=k_t^T$ as a result of the formal self-adjointness of $D$, so that we have $\Tr_{V^-}(\gamma^T k_t(x,x)) = \Tr_{V^{-}}((k_t(x,x)\gamma)^T)=\Tr_{V^-}(\gamma k_t(x,x)) $. Equation \ref{eq: eqdiagram} follows.

To complete the proof of the theorem, it now suffices to show that all higher leg contributions to $\Delta_t I_{tr}[t]$ go to something exact under $res$. To see this, note that---by Lemma \ref{lem: degreecounting}, which applies equally well here---any term with $r$ legs is non-zero only when it has one zero-form input and $r-1$ one-form inputs. This implies, in particular, that $\eta$ takes any term in $\Delta_t I_{tr}[t]$ with more than two legs to zero. This is because such diagrams are zero when evaluated on anything with more than one zero-form input, and $\eta$ evaluates a Sym-degree $r$ element of $C^\bullet_{loc,red}$ on $r-1$ zero-forms. 

We are left with just the task of showing that the two-leg term in $\Delta_t I_{tr}[t]$ goes to something exact under $\eta$. Under $\eta$, the two-leg term gives a class in $H^2(\fg) \otimes \Hom_\R(H^1_{dR}(M),\R)$; however, for the special case of a simple Lie algebra $\fg$, $H^2(\fg)=0$, so that the two-leg term is exact under $\eta$, and therefore exact in $C^\bullet_{loc,red}(\fg\otimes \Omega^\bullet_M(M))$.

We prove the general case in the $t\to\infty$ limit, after first establishing some notation. Let  \[\cZ_t:=\gamma_1^\ddagger \otimes \alpha \circ P_t\circ \gamma_2^\ddagger\otimes 1 +\gamma_2^\ddagger\otimes 1 \circ P_t \circ \gamma_1^\ddagger\otimes \alpha,\] where $P_t$ is the operator which appears in Definition \ref{def: propagator} for the propagator $P(0,t)$. We will let $P$ denote the $t\to\infty$ limit of $P_t$; $P$ is characterized by the fact that it is $0$ on $\ker D[-1]\subset \sS$ and $D^{-1}$ on $\Im D[-1]\subset \sS$. Both $P_t$ and $P$ are distributional in nature, acting as operators on smooth sections of $\sS$. We also allow $t$ to take the value $\infty$ in $\cZ_t$. Finally, let $H:=[Q,Q^{GF}]$ be the generalized Laplacian of the gauge-fixed massless free fermion.

With all this notation in place, we note that the two-leg contribution to $\eta(\Delta_t I_{tr}[t])$ comes from the diagram shown in Figure \ref{fig: twolegs}, and its value, on a harmonic one-form $\alpha$ and elements $\gamma_1, \gamma_2\in \fg$, is
\begin{equation}
\label{eq: tdependenttrace}
\Tr_{\sS}(\cZ_t \exp(-tH)),
\end{equation}
where $\Tr_{\sS}(\ell(x,y))=\int_M \Tr(\ell(x,x))dVol_g$ for any smooth kernel $\ell \in\cinfty(M\times M;S\boxtimes S)$. We claim that the $t\to\infty$ limit of the above quantity is given by 
\begin{equation}
\label{eq: trace}
\Tr_{H^\bullet \sS}(\cZ_\infty),
\end{equation}
Here, for any operator $Z$ on $\sS$ $\Tr_{H^\bullet \sS}(Z): = \Tr_{\sS}(Z\Pi)$, where $\Pi$ is the orthogonal projection onto cohomology afforded to us by Lemma \ref{lem: hodge}.

\begin{figure}[h]
\centering
\includegraphics[scale = 0.17]{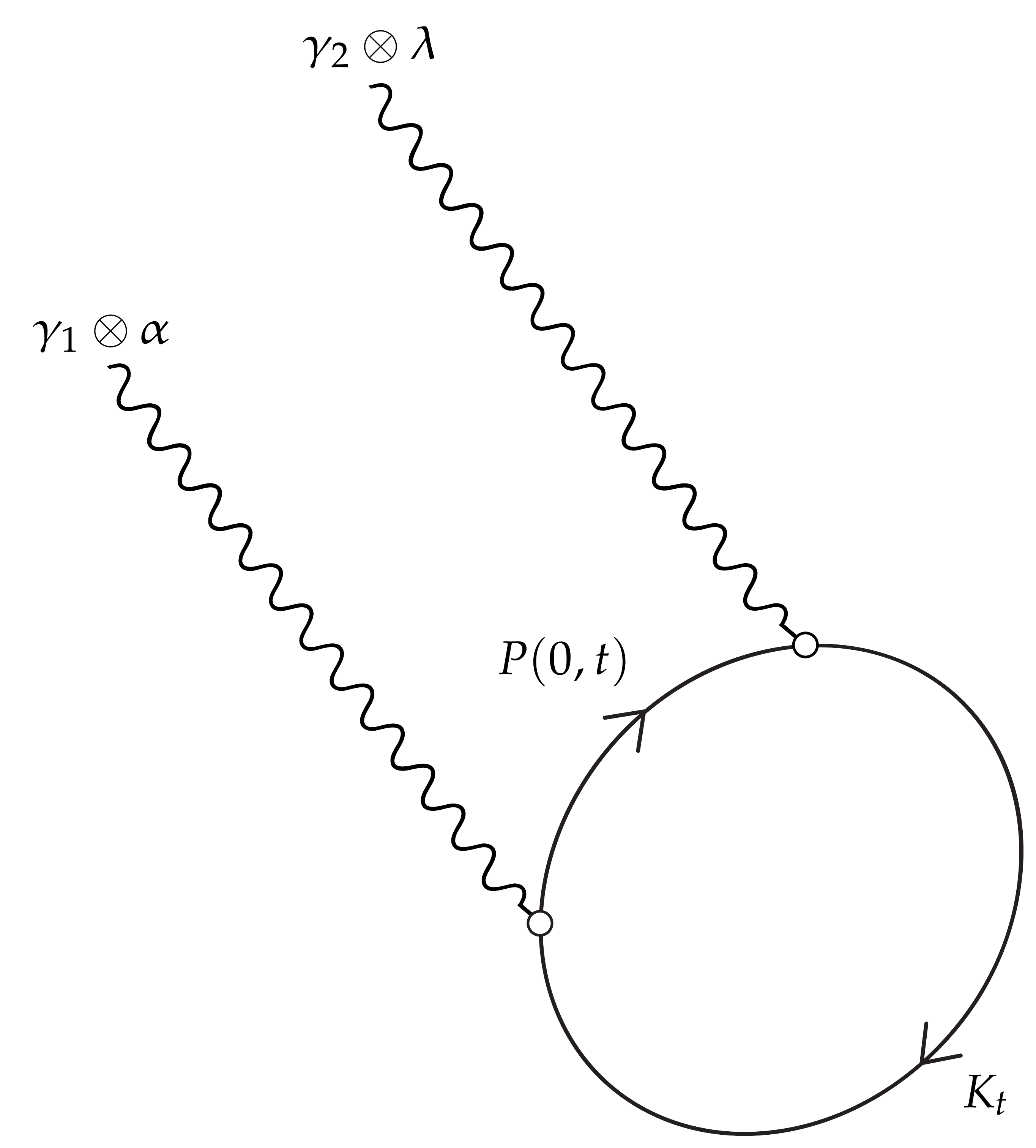}
\caption{The two-leg contribution to $\Delta_t I_{tr}[t]$.}
\label{fig: twolegs}
\end{figure}

Let us assume for the moment that the claim is proved; we wish to show that the quantity in  Expression \ref{eq: trace} is zero. To see this, suppose that $\varphi \in H^1(\sS)=\coker(D_{0\to1})=\ker(D)[-1]\subset \sV[-1]$; then, because $\gamma_1^\ddagger \otimes \alpha$ changes degree by $+1$, $\gamma_2^\ddagger\otimes 1 \circ P \circ \gamma_1^\ddagger\otimes \alpha\varphi=0$; on the other hand, $P\varphi =0$, since $\varphi \in H^\bullet(\sS)$ and by construction $P$ is zero on the cohomology of $\sS$. Moreover, $\gamma_2^\ddagger\otimes 1$ preserves the cohomology of $\sS$, so 
\[
(\gamma_1^\ddagger \otimes \alpha \circ P\circ \gamma_2^\ddagger\otimes 1) \varphi = 0.
\]
Thus, $Z_\infty \varphi=0$. Similarly, if $\varphi\in H^0(\sS)$, 
\[
\cZ_\infty \varphi = 0.
\] 
This shows that Expression \ref{eq: trace} is zero.

It remains only to show that Expression \ref{eq: tdependenttrace} has Expression \ref{eq: trace} as its $t\to \infty$ limit. The difference between the two expressions is
\[
\Tr_{\sS}\Big(\cZ_te^{-tH}-\cZ_\infty \Pi\Big)=\Tr_{\sS}\Big(\cZ_t\left(e^{-tH}-\Pi\right)+(\cZ_t-\cZ_\infty) \Pi\Big).
\]
Let us note that $\cZ_te^{-tH}=\cZ_te^{-tH/2}e^{-tH/2}$, so is the product of two operators $\cZ_te^{-tH/2}$ and $e^{-tH/2}$ with smooth kernels, which therefore admit extensions to bounded operators acting on $L^2$ sections of $S$. As operators on $L^2(M;S)$, they are Hilbert-Schmidt and the trace $Tr_\sS(\cZ_t e^{-tH})$ corresponds to the Hilbert space trace, using a modification of the arguments presented in the proof of Proposition 2.32 of \cite{bgv}. Similar arguments apply for $\cZ_t\Pi$ and $\cZ_\infty\Pi$, using $e^{-tH/2}\Pi=\Pi$. Thus, we are permitted to switch to the $L^2$ context for the computation of traces; all appearances of the expression $\Tr_\sS$ in the sequel will therefore refer to the trace in the Hilbert space sense. Now, on an eigenvector of $D$ with eigenvalue $\lambda$, $P_t$ is zero if $\lambda$ is zero and otherwise multiplies the eigenvector by $\lambda^{-1}(1-e^{-t\lambda^2})$. This is bounded as a function of $\lambda$, so that $P_t$ is a bounded operator. Similarly, $P$ acts by 0 on $\ker D$ and by $\lambda^{-1}$ on non-zero eigenvectors of $D$ with eigenvalue $\lambda$. Since the spectrum of $D$ is discrete (Lemma \ref{lem: spectrum}), it follows that $P$ is bounded in norm. It also follows from the characterizations of $P_t$ and $P$ that $P_t-P$ acts by zero on $\ker D$ and by $\lambda^{-1}e^{-t\lambda^2}$ on eigenvectors with non-zero eigenvalue. This last quantity is monotonically decreasing with $\lambda$ and so 
\[
||(\cZ_t-\cZ_\infty)e^{-tH/2}||\leq C |\lambda_1^{-1}|e^{-3t\lambda_1^2/2},
\]
where $\lambda_1$ is the smallest non-zero eigenvalue of $D$ and $C$ is some constant and $||\cdot||$ refers to the operator norm on bounded operators on $L^2$. Again, we have used the fact that the spectrum of $D^2$ is discrete. It follows that $\cZ_te^{-tH/2}\to \cZ_\infty e^{-tH/2}$ in the operator norm. Thus,
\begin{align}
\label{eq: lasteq}
\left|\Tr_{\sS}\left(\cZ_te^{-tH}-\cZ_\infty\Pi\right)\right| \leq ||\cZ_te^{-tH/2}|| \left|\Tr_{\sS}\left(e^{-tH/2}-\Pi\right)\right|+||(\cZ_t-\cZ_\infty)e^{-tH/2}|||\Tr_\sS(\Pi)|,
\end{align}
where we have used the fact that for a bounded operator $A$ and a trace-class, formally self-adjoint, non-negative operator $B$, 
\[
|\Tr(AB) |\leq ||A||Tr(B).
\]
Now, since $\cZ_t\to \cZ_\infty$ in the operator norm and $e^{-tH}\to \Pi$ in the trace norm (Lemma \ref{lem: infraredkernel}), this shows that 
\[
\Tr_\sS(\cZ_te^{-tH})\overset{t\to\infty}{\longrightarrow} \Tr_\sS(\cZ_\infty \Pi),
\]
as desired. 
\end{proof}

\begin{remark}
Just as in Section \ref{subsec: comps}, the $t$-independence of the obstruction, together with the single-leg computation in the above proof, show the $t$-invariance of the quantity appearing on the right-hand side of the equivariant McKean-Singer formula, Equation \ref{eq: eqmcs}. Moreover, one can take the $t\to \infty$ limit of this quantity (see \cite{bgv}). As is to be expected, the result is $\ind(\gamma,D)$, the global quantity appearing in the equivariant McKean-Singer formula.
\end{remark}

\appendix
\section{Background on Equivariant BV Quantization}
\label{sec: background}
\subsection{Introduction}

In this appendix, we set up the mathematical framework for studying what is in physics called an \textit{anomaly.} Roughly speaking, an anomaly is the failure of a symmetry of a classical field theory to persist after quantization. Symmetries are usually encoded in actions of a Lie algebra---or more generally a dgla, or even more generally an $L_\infty$-algebra---on the space of fields of the theory. We develop this perspective below, following \cite{CG2} and \cite{othesis}, which provide a general framework for treating actions of an $L_\infty$-algebra on a quantum field theory. We will not need the full generality of that framework below---restricting ourselves to dgla actions on \textit{free} field theories---though by illuminating the relationship of this special case to index theory, we hope to provide motivation for the study of the more general case. 

Another perspective on symmetries is that they consist of an interaction term $I$ representing a coupling of the physical fields to background gauge fields. $I$ is required to satisfy a Maurer-Cartan equation which contains the content of the fact that $\fg$ acts on the corresponding theory. This equation is called the classical master equation. This perspective is equivalent to the perspective of the previous paragraph.

To summarize, there are two equivalent perspectives on the notion of a symmetry of a classical theory. We can view a symmetry as

\begin{enumerate}
\item an action of a dgla $\fg$ on the space of fields of a field theory, or
\item a $\fg$-dependent interaction term in the action of the field theory. The elements of $\fg$ are considered to be background fields.
\end{enumerate}

There will be analogues of the classical master equation for $I$ in the quantum case, called the \textbf{weak and strong quantum master equations}. The strong quantum master equation, in particular, is central to the discussion of anomalies.

\subsection{Free Batalin-Vilkovisky Theories}
Free Batalin-Vilkovisky (BV) theories are the starting point for quantization, since in the perturbative formalism we think of interacting theories as perturbations of free ones. In this paper, we are mostly be concerned with families of free theories. However, the formalism for families of free theories requires us to consider interactions, even if the individual theories in the family are free; we will discuss the necessary formalism for interactions in Section \ref{subsec: eqquant}.

We will frequently deal with $\Z\times\Z/2$-graded vector spaces in the sequel. Recall from Section \ref{subsec: not} that if $v$ is a homogeneous element of a $\Z\times\Z/2$-graded vector space, we use the notation $|v|$ to denote its $\Z$ degree and $\pi_v$ to denote its $\Z/2$ degree.

The following is Definition 7.0.1 of Chapter 5 of \cite{cost}, modified to allow an extra $\Z/2$-grading on the space of fields.
\begin{definition}
A \textbf{free classical BV theory} on a compact manifold $N$ consists of the following data:
\begin{enumerate}
\item A $\Z\times \Z/2$-graded $\R-$vector bundle $F$ over $N$. We will call the sheaf of sections $\sF$ the \textbf{space of fields} of the theory.
\item A map of vector bundles
\[
\ip_{loc}: F\otimes F \to \text{Dens}_N
\]
of degree $(-1,0)$. We will denote by $\ip$ the following pairing on the space of global sections $\sF(N)$:
\[
\ip[f_1,f_2] = \int_N \ip[f_1,f_2]_{loc}.
\]
\item A differential operator $Q: \sF\to \sF$ of degree (1,0).
\end{enumerate}
The above data are required to satisfy the following additional conditions:
\begin{enumerate}
\item $\ip_{loc}$ is graded anti-symmetric and non-degenerate on each fiber.
\item $(\sF,Q)$ is a ($\Z/2$-graded) elliptic complex.
\item $Q$ is graded skew self-adjoint for $\ip$.
\end{enumerate}
\end{definition}
\begin{remark}
By graded anti-symmetry of $\ip_{loc}$, we mean that if $f_1,f_2$ are homogeneous elements of the fiber of $F$ over $x\in N$, then
\[
\ip[f_1,f_2]_{loc} = -(-1)^{|f_1||f_2|+\pi_{f_1}\pi_{f_2}}\ip[f_2,f_1]_{loc}.
\]
All notions of graded symmetry and anti-symmetry should be analogously understood. We also note that since $\ip$ is of degree --1, it is only non-zero on pairs $f_1$ and $f_2$ with $|f_1|$ and $|f_2|$ of opposite parity, so that we can rewrite the above equation as 
\begin{equation}
\label{eq: ipsym}
\ip[f_1,f_2]_{loc}=-(-1)^{\pi_{f_1}\pi_{f_2}}\ip[f_2,f_1]_{loc}.
\end{equation}
\end{remark}
\begin{remark}
The elliptic complex $(\sF,Q)$ encodes the information of the linear PDE
\[
Q\phi = 0.
\] 
The zeroth cohomology of $(\sF,Q)$ is the space of solutions to this PDE modulo the identification of solutions differing by an exact term; physically, the zeroth cohomology is the space of solutions of the equations of motion modulo the identification of physically indistinguishable (gauge-equivalent) configurations.
\end{remark}


When we discuss the quantization of free BV theories in Section \ref{subsec: eqquant}, we will need to equip the free BV theory with one more piece of auxiliary data:
\begin{definition}[Definition 7.4.1 of Chapter 5 of \cite{cost}]
A \textbf{gauge fixing} $Q^{GF}$ for a free BV theory is a degree $(-1,0)$ differential operator $\sF\to \sF$, such that $[Q,Q^{GF}]$ is a generalized Laplacian for some metric on $N$ and $Q^{GF}$ is graded self-adjoint for the pairing $\ip$.
\end{definition}

\begin{remark}
In physics, the (opposite of the) $\Z$-degree of an element of $\sF(N)$ is called the ``ghost number,'' while its $\Z/2$-degree is often called its ``statistics;'' we say that an element of $\sF(N)$ has bosonic or fermionic statistics depending on whether it has grading 0 or 1, respectively. The extra $\Z/2$-grading has the effect of changing signs in commutation relations.
\end{remark}


As a very simple example of a free field theory, we have
\begin{example}
Given an (ordinary, i.e. $\Z$-graded) elliptic complex $(\sF,P)$, the \textbf{cotangent theory} $\cot$ to $\sF$ has space of fields $\sF\oplus \sF^![-1]$, with all fields having odd $\Z/2$-grading. We define $\ip_{loc}$ to be the natural pairing on $\sF\oplus \sF^![-1]$, and $Q$ to be $P+P^!$. Here, $\sF^!$ is the sheaf of sections of the bundle $F^\vee\otimes \text{Dens}$.
\end{example}


As another simple example, we have the massless scalar boson theory:
\begin{example}
Let $(N,h)$ be a Riemannian manifold with Riemannian density $dVol_h$.The \textbf{massless scalar boson} is the free BV theory with space of fields $\cinfty_N\oplus \cinfty_N[-1]$, $Q$ the Laplacian on $N$, and with the pairing $\ip_{loc}$ determined uniquely by 
\[
\ip[f_1,f_2]_{loc} : =  f_1f_2 dVol_h,
\]
when $|f_1|=0$ and $|f_2|=1$. To satisfy the appropriate anti-symmetry conditions, we must take the opposite sign if $|f_1|=1$ and $|f_2|=0$. All fields have bosonic statistics.
\end{example}

\subsection{Actions of an Elliptic Differential Graded Lie Algebra on a Free Theory}


We would like to understand what it means for a dgla to act on a free theory. To begin with, we should specify the particular class of dglas suited to the analytic nature of quantum field theory, and the associated notions of Chevalley-Eilenberg cochains. 
\subsubsection{Elliptic Differential Graded Lie Algebras and their Chevalley-Eilenberg Cochains}
\begin{definition}[Definition 7.2.1 of \cite{othesis}]
An \textbf{elliptic differential graded Lie algebra} $\sL$ is a sheaf of sections of a $\Z$-graded vector bundle $L\to N$ which is also a sheaf of dglas. The differential must give $\sL$ the structure of an elliptic complex, and the bracket must be given by a bidifferential operator.
\end{definition}

\begin{example}
For a manifold $M$ and a Lie algebra $\fg$, the sheaf $\Omega^\bullet_M\otimes \fg$ is an elliptic dgla with differential given by the de Rham differential and bracket given by the wedge product on the differential form factor and Lie bracket on the $\fg$ factor.
\end{example}


An elliptic dgla has the standard Chevalley-Eilenberg (CE) cochain complex, as well as a cochain complex of functionals that are more nicely behaved and are defined using the fact that $\sL$ is a sheaf of smooth sections of a vector bundle.

\begin{definition}
Let $\sL$ be an elliptic dgla. The \textbf{Chevalley-Eilenberg cochain complex} of $\sL$ is the commutative differential graded algebra 
\[
C^\bullet(\sL) = \prod_{n=0}^\infty \Sym^n(\sL(N)^\vee[-1])
\]
equipped with the Chevalley-Eilenberg differential defined as in the finite-dimensional case. Here, we use $\vee$ to denote continuous linear dual, and the $\Sym$ is taken with respect to the projective tensor product of Nuclear spaces. Namely, 
\[
\Sym^n(\sL(N)^\vee[-1]) = Hom((\sL(N)[1])^{\otimes n} ,\R)_{S_n},
\]
where the subscript $S_n$ denotes coinvariants of the $S_n$ action which permutes the tensor factors. See Appendix 2 in \cite{cost} and Appendix B.1 in \cite{CG2} for more details. We will often use the notation $\sO(\sL[1])$ to describe the underlying graded vector space.
\end{definition}

\begin{definition}[Definition 7.2.7 of \cite{othesis}]
\label{def: locchains}
The cochain complex $C^\bullet_{loc}(\sL)$ of \textbf{local Chevalley-Eilenberg cochains} of the dgla $\sL$ is the subcomplex of $C^\bullet(\sL)$ given by sums of functionals of the form
\[
F_n(\phi_1,\cdots, \phi_n) = \int_N D_1\phi_1 \cdots D_n\phi_n d\mu,
\]
where each $D_i$ is a differential operator from $\sL$ to $\cinfty_N$ and $d\mu$ is a smooth density on $N$. We view $F_n$ as a coinvariant of the $S_n$ action. We can also study $C^{\bullet}_{loc,red}(\sL)$, which is the quotient of $C^\bullet_{loc}(\sL)$ by the subcomplex $\R$ (which lives in $\Sym$-degree 0). We also have $C^\bullet_{red}(\sL)$, defined analogously.
\end{definition}

\begin{remark}
Here, we are using that $N$ is compact to be able to integrate densities. However, to prove our main theorem, it is necessary to extend $C^\bullet_{loc}(\sL)$ to a sheaf on $N$. On open subsets of $N$, we would like to define $C^\bullet_{loc}(\sL)(U)$ to be the space of all Lagrangian densities (an idea made precise below). Densities cannot in general be integrated over non-compact $U$. Thus, in this case, we cannot think of local cochains on $U$ as living inside the space of all CE cochains.
\end{remark}


In light of the previous remark, we would like to have an alternative characterization of the local Chevalley-Eilenberg cochain complex that is naturally sheaf-like. To do this, we introduce the language of jets:
\begin{definition}[Cf. Section 5.6.2 of \cite{cost}]
\label{def: jets}
The \textbf{sheaf of jets} $J(L)$ of the elliptic dgla $\sL$ is the sheaf of sections of the vector bundle of dglas over $N$ whose fiber at a point $x\in N$ is the space of formal germs at $x$ of sections of $L$.
\end{definition}

The dgla structure on the fiber is induced from the same stucture on $L$ and the fact that the dgla operations of $\sL$ are polydifferential operators. Moreover, as discussed in Section 6.2 of Chapter 5 of \cite{cost}, $J(L)$ has a natural (left) $D_N$-module structure, where $D_N$ is the sheaf of differential operators on $N$. Furthermore, $J(L)$ is an inverse-limit of finite-dimensional $\cinfty_N$-modules; we can therefore endow it with the topology of the inverse limit. 

Now, let $J(L)^\vee$ denote the following $\cinfty_N$-module
\begin{equation}
\label{eq: jet dual}
\Hom_{\cinfty_N}(J(L), \cinfty_N),
\end{equation}
where $\Hom_{\cinfty_N}$ means \textit{continuous} homomorphisms of $C^\infty_N$ modules (i.e., those that respect the inverse-limit topology on $J(L)$). Let also 
\begin{equation}
\label{eq: jetce}
\sO_{red}(J(L)) : = \prod_{n>0} \Sym^n_{\cinfty_N}\left( J(L)^\vee\right).
\end{equation}
Note that $\sO(J(L))$ has a natural differential encoding the Chevalley-Eilenberg differential on $C^\bullet_{loc}(\sL)$. Moreover, $\sO(J(L))$ has the structure of a left $D_N$-module.


\begin{lemma}[Lemma 6.6.1 of Chapter 5 of \cite{cost}]
\label{lem: localCE} 
For $\sL$ an elliptic dgla, there is a canonical isomorphism of cochain complexes
\[
C^\bullet_{loc,red}(\sL)\cong \text{Dens}_N(N)\otimes_{D_N}\sO_{red}(J(L))(N).
\]
Here, $\text{Dens}_N$ is given the right $D_N$-module structure induced from considering densities as distributions.
\end{lemma}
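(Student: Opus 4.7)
The strategy is to interpret both sides of the asserted isomorphism as spaces of local Lagrangian densities: the right side before integration, the left side after. Concretely, I plan to construct a map
\[
\Psi \colon \mathrm{Dens}_N(N) \otimes_{D_N} \sO_{red}(J(L))(N) \longrightarrow C^\bullet_{loc,red}(\sL)
\]
by sending an elementary tensor $d\mu \otimes \Phi$, with $\Phi \in \Sym^n_{C^\infty_N}(J(L)^\vee)$, to the local functional
\[
(\phi_1,\ldots,\phi_n) \longmapsto \int_N \Phi(j\phi_1,\ldots,j\phi_n)\, d\mu.
\]
This is sensible because a polydifferential operator $\sL^{\otimes n} \to C^\infty_N$ is precisely a $C^\infty_N$-linear map $J(L)^{\otimes n} \to C^\infty_N$, so at the level of underlying graded vector spaces the two sides agree symmetric degree by symmetric degree.

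To see that $\Psi$ descends through the coequalizer defining $\otimes_{D_N}$, I would use that $j\phi$ is a flat section of $J(L)$ with respect to the Grothendieck connection; this forces $(X\cdot \Phi)(j\phi) = X\bigl(\Phi(j\phi)\bigr)$ for any vector field $X$, and ordinary integration by parts gives
\[
\int_N (X\cdot \Phi)(j\phi)\, d\mu = -\int_N \Phi(j\phi)\, L_X d\mu,
\]
which is exactly the $D_N$-balancing relation. Surjectivity is then immediate from Definition \ref{def: locchains}: any local cochain is built from polydifferential data, each such datum factors uniquely through the jet bundle, and the supplied density provides the tensor factor in $\mathrm{Dens}_N(N)$.

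The main obstacle is injectivity. Reformulated, one must show that if $\int_N \Phi(j\phi)\, d\mu$ vanishes for every test section $\phi$, then $d\mu \otimes \Phi$ already vanishes in the coequalizer. My plan is to argue this locally on a chart trivializing $L$: using formal coordinates one expands $\Phi$ in the fiber variables of $J(L)$ and applies the formal variational lemma, which asserts that a Lagrangian density whose Euler-Lagrange equations vanish identically is a total divergence. The iterative integration-by-parts procedure producing this divergence expresses $\Phi\, d\mu$ as $\sum_I \partial_I(\Psi_I\, d\mu)$, exhibiting the tensor as lying in the image of the $D_N$-action. A partition-of-unity argument then globalizes the statement.

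Finally, compatibility with the Chevalley-Eilenberg differentials is a direct verification: the CE differential on $C^\bullet(\sL)$ is built from the internal differential and bracket of $\sL$, both of which are polydifferential and so lift to $D_N$-linear operations on $\sO(J(L))$; under $\Psi$ these induced operations recover the CE differential on local cochains, so $\Psi$ is a map of complexes and, by the preceding steps, the desired isomorphism.
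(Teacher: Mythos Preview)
The paper does not prove this lemma; it is stated with attribution to Lemma 6.6.1 of Chapter 5 of \cite{cost} and used as a black box, so there is no proof in the paper to compare your proposal against.

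That said, your plan is the standard one and is essentially how the result is established in the cited reference. A few comments on the details. Your treatment of well-definedness over $D_N$ and of surjectivity is fine and essentially tautological from Definition \ref{def: locchains}. For injectivity, the ``formal variational lemma'' you invoke is the correct technical input, but note that you are in the $n$-linear (not single-field) setting: the statement you need is that if a polydifferential Lagrangian density integrates to zero against all tuples of compactly supported test sections, then it is a horizontal divergence. This is the exactness of the horizontal part of the variational bicomplex in top degree, and the standard proofs (e.g.\ via higher Euler operators or homotopy operators for the horizontal de Rham complex on jet space) go through for multilinear integrands just as for single-field Lagrangians. You should also be explicit that the partition-of-unity globalization uses the $C^\infty_N$-linearity of the tensor product over $D_N$, so that cutting off by a bump function commutes with the coequalizer relation.

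Finally, the compatibility of Chevalley-Eilenberg differentials is, as you say, a formal check once one knows that the differential and bracket on $\sL$ are polydifferential and hence lift to $D_N$-equivariant maps on jets; this is exactly the hypothesis built into the definition of an elliptic dgla.
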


\begin{remark}
The object on the right hand side of the isomorphism of Lemma \ref{lem: localCE} is manifestly the space of global sections of the sheaf $\text{Dens}_N\otimes_{D_N}\sO_{red}(J(L))$. We therefore will, in the sequel, abuse notation and use $C^\bullet_{loc,red}(\sL)$ to refer also to this sheaf. 
\end{remark}

\begin{lemma}[Lemma 6.6.2 of Chapter 5 of \cite{cost}]
 \label{lem: derivedlocalCE}
For $\sL$ an elliptic dgla, there is a quasi-isomorphism
 \[
 C^\bullet_{loc,red}(\sL)\simeq \text{Dens}_N\otimes_{D_N}^{\mathbb L}\sO_{red}(J(L)).
 \]
Putting this together with the previous lemma, we see that the actual tensor product is a model for the derived tensor product.
\end{lemma}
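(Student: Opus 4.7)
The plan is to combine this lemma with Lemma \ref{lem: localCE}, which already gives a canonical \emph{underived} isomorphism $C^\bullet_{loc,red}(\sL) \cong \text{Dens}_N \otimes_{D_N} \sO_{red}(J(L))$. The content of Lemma \ref{lem: derivedlocalCE} is then precisely that the higher Tor groups $\text{Tor}^{D_N}_i(\text{Dens}_N, \sO_{red}(J(L)))$ vanish for $i>0$, so that the natural map from the derived to the underived tensor product is a quasi-isomorphism. Thus the task reduces to a Tor-vanishing statement for a specific pair of $D_N$-modules.

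To prove this, I would resolve $\text{Dens}_N$ by a complex of free right $D_N$-modules via the Spencer resolution
\[
P_\bullet \;=\; \text{Dens}_N \otimes_{\cinfty_N} \Lambda^\bullet T_N \otimes_{\cinfty_N} D_N \;\xrightarrow{\sim}\; \text{Dens}_N,
\]
which is essentially the resolution already used in the proof of Proposition \ref{prop: qism} (where, in the oriented case, $\text{Dens}_N$ was identified with $\Omega^n_N$). Tensoring $P_\bullet$ over $D_N$ with $\sO_{red}(J(L))$ yields the complex
\[
\text{Dens}_N \otimes_{\cinfty_N} \Lambda^\bullet T_N \otimes_{\cinfty_N} \sO_{red}(J(L)),
\]
whose differential combines the Koszul differential on $\Lambda^\bullet T_N$ with the Grothendieck connection that encodes the $D_N$-module structure on $\sO_{red}(J(L))$. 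Its zeroth cohomology is tautologically $\text{Dens}_N \otimes_{D_N} \sO_{red}(J(L))$, so the problem reduces to acyclicity in positive degrees.

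Acyclicity is a local statement, provable on coordinate charts. In coordinates $(x_1,\ldots,x_n)$ with $L$ trivialized, one has $J(L)|_U \cong \cinfty(U) \,\hat\otimes_\R\, \R[\![y_1,\ldots,y_n]\!] \otimes L_0$, with the Grothendieck connection acting by $\nabla_{\partial_{x_i}} = \partial_{x_i} - \partial_{y_i}$. After this identification the Koszul-type differential on the complex above becomes the constant-coefficient Koszul complex associated to the regular sequence $\{\partial_{y_i}\}$, tensored up to the completed symmetric algebra defining $\sO_{red}(J(L))$. This constant-coefficient complex admits an explicit contracting homotopy in positive degrees (essentially the formal Poincar\'e-lemma homotopy on the $y$-variables), and that homotopy propagates through the symmetric algebra.

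The main obstacle is analytic: one must justify that the contracting homotopy is compatible with the two completions involved, namely the inverse-limit topology on jets and the completed projective symmetric algebra in the definition of $\sO_{red}(J(L))$, and that the local acyclicity glues into a statement at the level of complexes of sheaves. Once these topological subtleties are handled — in particular checking continuity and compatibility with restriction maps — the vanishing of the higher Tor groups follows and the quasi-isomorphism of the lemma is established.
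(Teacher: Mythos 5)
The paper does not prove this lemma: it is quoted verbatim from Lemma 6.6.2 of Chapter 5 of \cite{cost}, so there is no internal proof to compare against. Your proposal reconstructs what is essentially the standard argument, and its architecture is right: reduce to acyclicity of the derived tensor product in nonzero degrees, resolve $\text{Dens}_N$ by the Spencer complex of induced (hence flat) right $D_N$-modules, identify the resulting complex with the de Rham complex of the $D_N$-module $\sO_{red}(J(L))$, and prove acyclicity locally; since both sides are complexes of sheaves, local acyclicity suffices.

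There is, however, one point where your sketch as written would prove a false statement, and it is exactly the point the argument must turn on. The module being tensored is built from $J(L)^\vee$, not from $J(L)$, and the operators $\{\partial_{y_i}\}$ acting on $\R[\![y_1,\dots,y_n]\!]\otimes L_0$ are \emph{not} Koszul-regular: their joint kernel contains the constants, which is precisely the formal Poincar\'e lemma. What is Koszul-regular is the family of \emph{transposed} operators acting on the restricted dual $\R[y_1,\dots,y_n]^\vee\otimes L_0^\vee$; these are degree-raising, multiplication-type operators (each injective, with cokernel the lowest-degree piece), and so are the derivations they induce on $\Sym^m$ for $m\geq 1$. On $\Sym^0=\cinfty_N$ the induced operators vanish, and indeed the lemma fails for the non-reduced functionals: $\text{Dens}_N\otimes_{D_N}\cinfty_N$ is the zero sheaf (locally every density is a total derivative), while $\text{Dens}_N\otimes^{\mathbb L}_{D_N}\cinfty_N\simeq \underline{\R}[n]$. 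Your write-up uses $\sO_{red}$ throughout but never identifies where $\Sym^{>0}$ enters; it must enter exactly here, in the injectivity of the transposed operators on each $\Sym^{m\geq 1}$. Once that is corrected, the remaining issues you flag are real but routine: the perturbation from the constant-coefficient Koszul differential to the full flat connection converges because the contracting homotopy strictly lowers the jet-order filtration, which is bounded below, and compatibility with the completions and restriction maps is a matter of bookkeeping.
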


We use this Lemma in the proof of Proposition \ref{prop: qism}.

\subsubsection{Actions of an Elliptic dgla on a Free Theory}

In this subsubsection, we review the main results concerning the action of an elliptic dgla on a free BV theory. However, most of what we do works in a far more general context of elliptic $L_\infty$-algebras acting on a general BV theory. We will point out the few instances in which results are specific to the case we consider. 


\begin{definition}[Special case of Definition 11.1.2.1 of \cite{CG2}]
If $\sL$ is an elliptic dgla, and $\E$ is an elliptic complex, then \textbf{an action of $\sL$ on $\E$} is the structure of an elliptic dgla on $\E[-1]\oplus \sL$ such that the maps in the linear exact sequence 
\[
0\to \E[-1] \to \E[-1]\oplus \sL\to \sL \to 0,
\]
as well as the map $\sL\to \E[-1]\oplus \sL$, are openwise dgla maps (i.e. they commute with brackets). We are thinking of $\E[-1]$ as an elliptic abelian dgla.
\end{definition}

In other words, an action of $\sL$ on $\E$ is just a map 
\[
\rho: \sL\otimes \sE \to \sE
\]
satisfying certain coherence relations encoding the fact that the differential on $\E[-1]\oplus \sL$ is a derivation for $\rho$ and that the bracket on $\sE[-1]\oplus \sL$ satisfies the Jacobi identity. Equivalently, we require that $\sE$ be an $\sL$-module. We will also use $[\cdot,\cdot]$ to denote $\rho$ and $\sL\ltimes \sE[-1]$ to denote the resulting dgla.


Finally, we define what it means for an elliptic dgla to act on a free BV theory:
\begin{definition}[Definition 11.1.2.1 of \cite{CG2}]
Let $(\sF, Q,\ip)$ be a free BV theory and $\sL$ an elliptic dgla. An \textbf{action of $\sL$ on $(\sF,Q,\ip)$} is an action of $\sL$ on the elliptic complex $(\sF,Q)$ such that the map $\rho$ leaves the pairing $\ip$ invariant in the sense that $\rho(\alpha, \cdot)$ is a graded derivation with respect to $\ip$ for all $\alpha\in \sL$. This means that
\[
\ip[{[\alpha, f_1]},f_2] +(-1)^{|f_1||\alpha|}\ip[f_1,{[\alpha, f_2]}]=0
\] 
for all $f_1,f_2\in \sF$. Moreover, we require the action of $\sL$ on $\sF$ to be even with respect to the $\Z/2$ grading of $\sF$ in the sense that 
\[
\pi_{[\alpha, f_1]}=\pi_{f_1}
\]
for all $f_1\in \sF$ and $\alpha \in \sL$.
\end{definition}

\begin{remark}
The requirement that the differential on $\sF[-1]\rtimes \sL$ act as a derivation for $[\cdot, \cdot]$ implies that $H^\bullet \sL$ acts on $H^\bullet \sF[-1]$. This is the sense in which we should think of a dgla action as a symmetry; since the cohomology of $\sF[-1]$ is to be thought of as the moduli space of classical solutions, the fact that $H^\bullet \sL$ acts on $H^\bullet \sF[-1]$ is a precise way of saying that $\sL$ acts on $\sF[-1]$ in a way that (cohomologically) preserves the equations of motion.
\end{remark}

Returning now to the general context of an elliptic dgla acting on a free theory $\sF$, we note that a free BV theory provides the quadratic action functional 
\[
S(\phi) = \frac{1}{2}\ip[\phi, Q\phi]
\]
and the action of $\sL$ on $\sF$ provides the following cubic deformation of the action functional on $\sL[1]\oplus \sF$:
\begin{equation}
\label{eq: interaction}
I(X,\phi) = \frac{1}{2} \ip[\phi, \rho(X,\phi)].
\end{equation}
The complex $C^\bullet_{loc}(\sF[-1])$ has a degree +1 Poisson bracket $\{\cdot, \cdot\}$ induced from $\ip$, whose precise construction can be found in Section 5.3 of \cite{cost}. The bracket $\{\cdot, \cdot\}$ can be extended to all of $C^\bullet_{loc}(\sF[-1]\rtimes \sL)$ by $C^\bullet(\sL)$-linearity. Moreover, as discussed in Section 11.1 of \cite{CG2}, we can equivalently think of an action of $\sL$ on $\sF$ as a degree-zero element
\begin{equation}
\label{eq: mcinteraction}
I\in \sO_{loc}(\sF\oplus \sL[1])/(\sO^\bullet_{loc}(\sL[1])\oplus C^\bullet_{loc}(\sF)),
\end{equation}
satisfying the Maurer-Cartan Equation 
\begin{equation}
\label{eq: cme}
(d_{\sL}+Q) I+\frac{1}{2}\{I,I\}=0,
\end{equation}
which is also known as the \textbf{classical master equation}. In our special case, we require $I$ to have $\Sym$-degree 1 with respect to $\sL$ and $\Sym$-degree 2 with respect to $\sF$. If $I$ has the correct $\Sym$-degrees and satisfies the classical master equation, then it determines the action of $\sL$ on $\sF$ and vice versa. Furthermore, we consider $\sL$ to be the space of background fields because we have not demanded that $\sL$ be endowed with a pairing as $\sF[-1]$ is.

Notice that the interaction $I(X,\phi)$, though of cubic-and-higher order with respect to $X$ and $\phi$ together, is only quadratic in $\phi$. This is particular to our case; if we were to use the full language of $L_\infty$-algebras, we would be able to describe deformations of $\sF$ into non-free theories, in which case we would allow cubic and higher-order interactions in $\sF$ fields.

Graphically, the interaction $I$ can be depicted by a vertex with two straight half-edges and one wavy half-edge. See Figure \ref{fig: generalvertex}. The perspective on $I$ as such a vertex anticipates the formalism of Feynman diagrams. See also Remark \ref{rem: feyn}.

\begin{figure}[h]
\centering
\includegraphics[width=.25\textwidth]{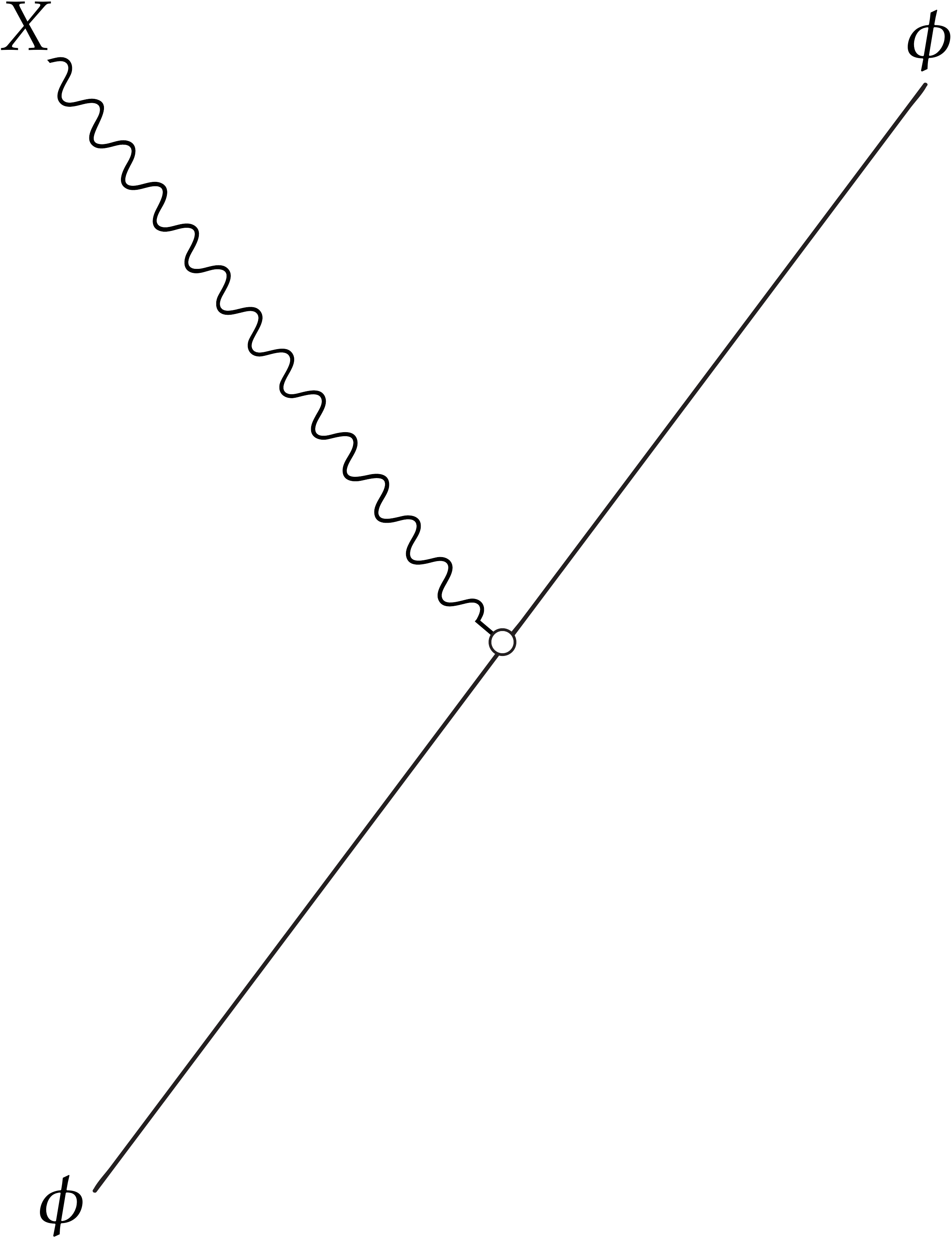}
\caption{A vertex depicting the interaction $I$.}
\label{fig: generalvertex}
\end{figure}

\begin{remark}
If $(\sF,P)$ is an elliptic complex, and $\sL$ acts on the cotangent theory to $\sF$ as an extension of an action of $\sL$ on $\sF$, then the interaction $I$ can be rewritten as 
\[
I(X,\varphi,\psi) = \ip[\varphi,{[X,\psi]}]
\]
where $\varphi\in \sF$ and $\psi \in \sF^![-1]$ are viewed as the base and fiber components of a general element of $T^*[-1]\sF$. The factor of 1/2 has disappeared because of the symmetry properties of $[X,\cdot]$. In this case, we can add arrows to the vertex to indicate that, with respect to the decomposition of $T^*[-1]\sF$ into base and fiber, $I$ is non-zero only when evaluated on a base field as one input and a fiber field as another. Figure \ref{fig: vertex} shows this modified graphical interpretation.
\end{remark}

\begin{figure}[h]
\centering
	\includegraphics[width = .25\textwidth]{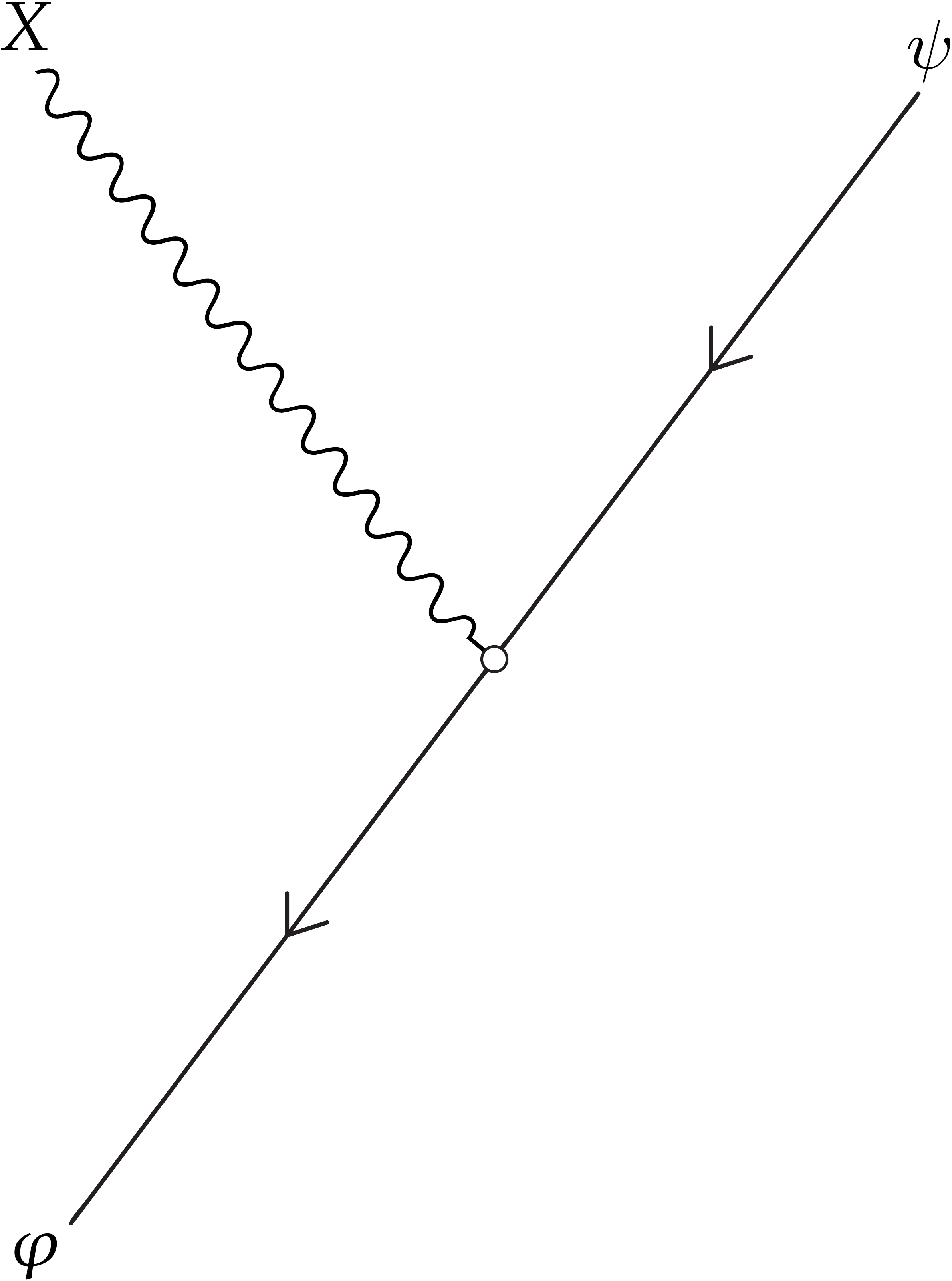}
	\caption{The vertex corresponding to the interaction $I$ for a cotangent theory.}
	\label{fig: vertex}
\end{figure}	


\begin{definition}[Proof of Proposition 11.3.0.1 in \cite{CG2}]
If $\sL$ acts on the free field theory $(\sF,Q,\ip)$, then the cochain complex of \textbf{global equivariant classical observables} $\Obcl$ is $C^\bullet(\sF[-1]\rtimes \sL)$, the (completed) differential graded commutative algebra of Chevalley-Eilenberg cochains on the dgla $(\sF[-1]\rtimes\sL)(N)$. We will use $\sO(\sL[1]\oplus \sF)$ to denote the underlying graded vector space.
\end{definition}

\begin{remark}
Recall that Chevalley-Eilenberg cochains of a dgla $\sM$ are, as a vector space, $\widehat{\Sym}( \sM^\vee[1])$; since we let $\sF$ be $\Z\times \Z/2$ graded, we view $\sL(N) \oplus \sF[-1](N)$ as a $\Z\times \Z/2$ graded object and the symmetrization is with respect to the Koszul sign rules for $\Z\times \Z/2$ graded objects.
\end{remark}

\subsection{Equivariant Quantization}
\label{subsec: eqquant}
In this subsection, we continue the notation of the previous subsection, letting $\sL$ act on the free classical theory $\sF$. Since $\ip$ gives a sort of $\sL$-equivariant symplectic structure on $\sF$, we would hope that it induces a Poisson bracket and Laplacian $\Delta$ on $\Obcl$, as can be done in the finite-dimensional case. If this were the case, we could define $\Obq$ to be $\sO^\bullet(\sL[1]\oplus \sF)[\![\hbar]\!]$ but with the differential from $\Obcl$ deformed by the term $\hbar \Delta$. This is what is considered BV quantization in the finite-dimensional case.  However, the Poisson bracket is only defined if one of the arguments is local, and the Laplacian is undefined. This is because the na\"ive way to define these operations involves pairing distributions with each other, an operation which is ill-defined in general. 

The solution to this problem is effective field theory, whose essence, as developed in \cite{cost} (especially Chapters 2 and 5), is to replace the interaction $I$ with a family $I[t]$, one for each $t\in \R_{>0}$, and to correspondingly define a Poisson bracket $\{\cdot, \cdot\}_t$ and BV Laplacian $\Delta_t$ for every $t$ (see also \cite{WILSONKOGUT} and Chapter 12.1 of \cite{PeskinSchroeder} for physical perspectives on some of these ideas.) We can then define $\Obq{}[t]$ as a graded vector space to be  $\Obcl{}[\![\hbar]\!]$, but with putative differential $Q+d_{\sL}+\{I[t],\cdot\}_t+\hbar \Delta_t$. However, the putative differential may not square to zero. We will first address the procedure of renormalization, then the question of the existence of the desired differential.

\subsubsection{Renormalization and the BV Formalism} 
We assume that $\sF$ has been equipped with a gauge-fixing $Q^{GF}$; then $H:=[Q,Q^{GF}]$ is a generalized Laplacian, and so $H$ has an integral heat kernel $k_t$. We will actually use a heat kernel more suited to the study of BV theories:


\begin{definition}[Section 8.3 of Chapter 5 of \cite{cost}]
Let $t\in \R_{>0}$. The \textbf{scale $t$ BV heat kernel} $K_t$ is the unique degree 1 element of $\sF\otimes \sF$ satisfying 
\[
-id\otimes \ip (K_t\otimes f) = e^{-tH}f
\]
for all $f\in \sF$. The non-degeneracy property of the pairing $\ip$, along with the existence of the heat kernel $k_t$, guarantees that $K_t$ exists.
\end{definition} 


\begin{definition}[Section 9.1 of Chapter 5 of \cite{cost}]
\label{def: bvlaplacian}
The \textbf{scale $t$ BV Laplacian $\Delta_t$} is the operator $-\partial_{K_t}: \sO(\sL[1]\oplus \sF)\to \sO(\sL[1]\oplus \sF)$, where $\partial_{K_t}$ is the unique order-2 differential operator on $\sO (\sL[1]\oplus \sF)$ which is 0 on $\Sym^{<2}$ and which on $\Sym^2$ is just contraction with $K_t$.
\end{definition}


\begin{definition}[Section 9.2 of Chapter 5 of \cite{cost}]
The \textbf{scale $t$ Poisson bracket $\{\cdot,\cdot\}_t$} is the map 
\[
\sO(\sL[1]\oplus \sF)\times \sO(\sL[1]\oplus \sF) \to \sO(\sL[1]\oplus \sF)
\]
given by 
\[
\{J,J'\}_t = \Delta_t(JJ')- \Delta_t(J)J' - (-1)^{|J|}J\Delta_tJ'.
\]
\end{definition}
In other words, the Poisson bracket measures the failure of the BV Laplacian to be a derivation for the commutative algebra $\sO(\sL[1]\oplus \sF)$. Figure \ref{fig: bracketlaplacian} gives a schematic depiction of the Poisson bracket and BV Laplacian. Notice that, based on the diagrammatic depiction of $\Delta_t$, $\{J,J'\}_t$ is 0 if either $J$ or $J'$ has a dependence only on $\sL$ and not on $\sF$. The same is true for $\Delta_t J$. This is because $K_t$ has components only in $\sF\otimes\sF$ as a consequence of the fact that we have not required $\sL$ to have a gauge-fixing or a pairing $\ip$. These facts about $\Delta_t$ and $\{\cdot,\cdot\}_t$ simplify our analysis dramatically.

\begin{figure}[h]
\centering
\begin{subfigure}[b]{0.6\textwidth}
	\includegraphics[width = \textwidth]{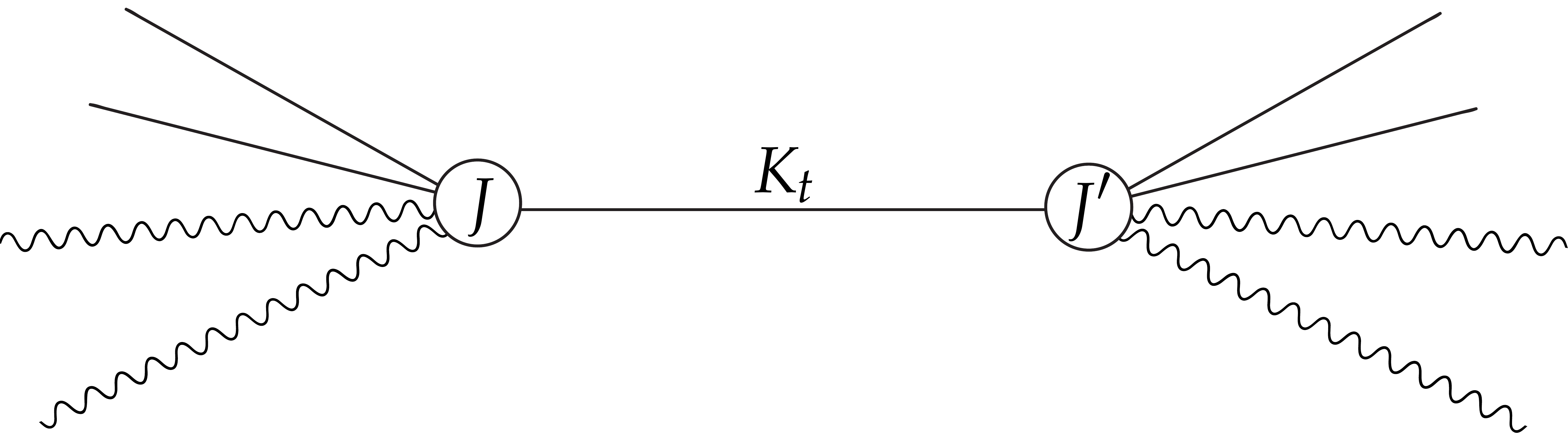}
	\caption{A diagrammatic depiction of a term in the Poisson bracket $\{J,J'\}$. The Poisson bracket is the graded sum over all ways of contracting out one edge from $J$ with one edge from $J'$ with $K_t$.}
\end{subfigure}
\\
\begin{subfigure}[b]{0.3\textwidth}
	\includegraphics[width = \textwidth]{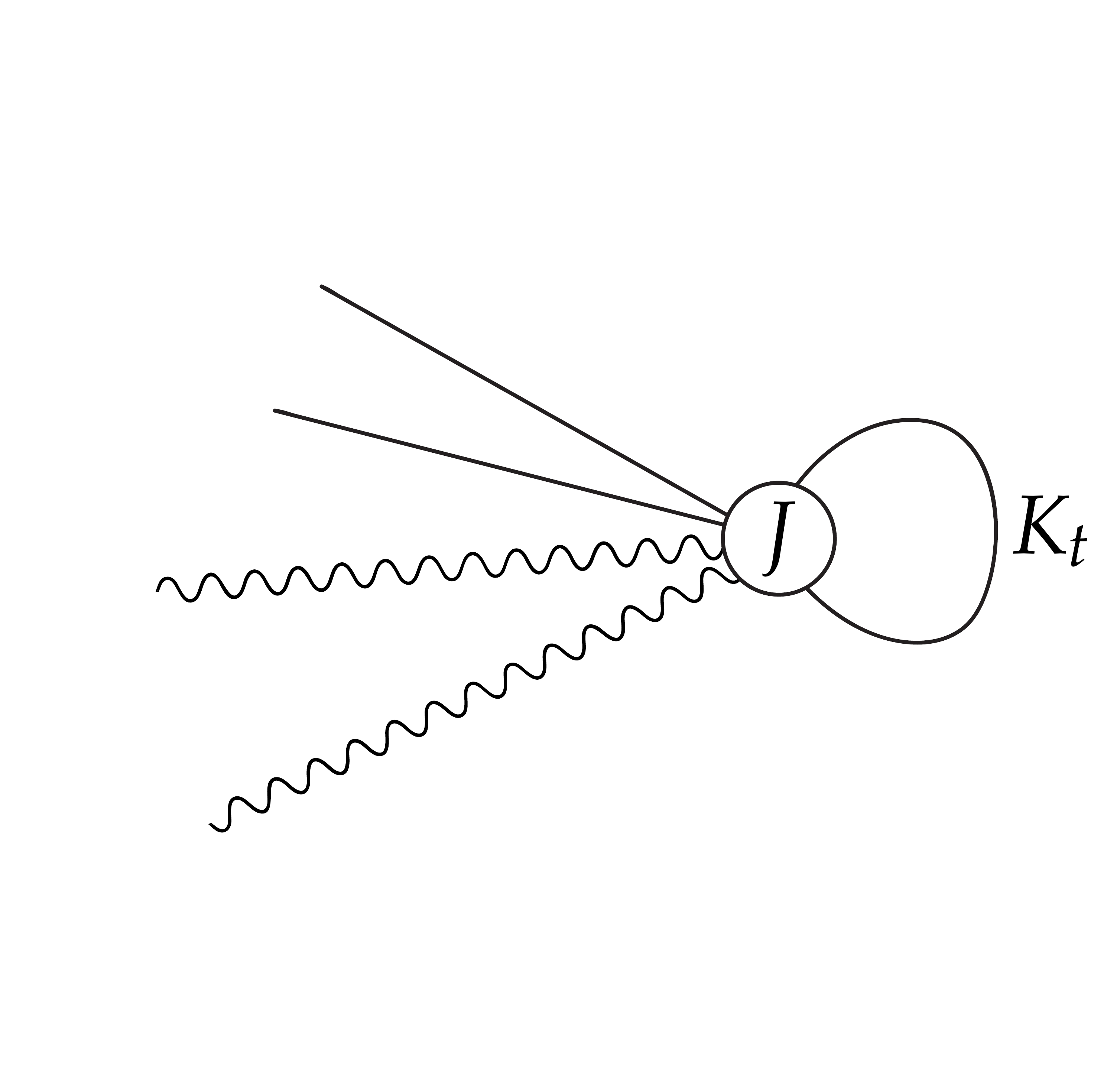}
	\caption{A diagrammatic depiction of a term in the BV Laplacian. It is a graded sum over all ways of contracting out pairs of edges in $J$ with $K_t$.}
\end{subfigure}
\caption{Schemata of the Poisson bracket and BV Laplacian.}
	\label{fig: bracketlaplacian}
\end{figure}

The following lemma is useful for computational reasons. It can be verified by explicit computation.
\begin{lemma}
\label{lem: propsofBVLaplacian}
$\Delta_t$ has the following two properties:
\begin{enumerate}
\item $\Delta_t$ squares to zero.
\item $\Delta_t$ is a derivation for the Poisson bracket $\{\cdot,\cdot\}_t$.
\end{enumerate}
\end{lemma}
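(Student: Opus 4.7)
The plan is to verify both claims by explicit computation in a local frame for $\sF$, which reduces the statements to purely formal manipulations governed by the symmetry properties of $K_t$. The key preliminary observation is that, since $\ip_{loc}$ is a graded anti-symmetric pairing of degree $-1$, its ``dual'' tensor $K_t\in\sF\otimes\sF$, which lives in degree $+1$, is graded \emph{symmetric} under the Koszul braiding on our symmetric monoidal category of $\Z\times \Z/2$-graded spaces. Fix a local frame $\{e_i\}$ of $\sF$ with dual coordinates $\{e^i\}$ on $\sO(\sF)$, and write $K_t = \sum_{i,j} K_t^{ij}\, e_i\otimes e_j$; the graded symmetry of $K_t$ then translates into a sign relation $K_t^{ij} = \pm K_t^{ji}$ with an appropriate Koszul sign. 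In this local frame, $\Delta_t = -\partial_{K_t}$ acts as the second-order graded derivation
\[
\Delta_t \;=\; -\sum_{i,j} K_t^{ij}\, \partial_{e^i}\partial_{e^j}.
\]

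For property (1), iterating the local formula gives
\[
\Delta_t^2 \;=\; \sum_{i,j,k,l} K_t^{ij}K_t^{kl}\, \partial_{e^i}\partial_{e^j}\partial_{e^k}\partial_{e^l}.
\]
Since the partial derivatives graded-commute, swapping the pair of labels $(i,j)\leftrightarrow(k,l)$ in the sum simply permutes the four derivatives, producing some definite Koszul sign; the coefficient $K_t^{ij}K_t^{kl}$, by contrast, transforms by the \emph{opposite} sign under this swap, thanks to the graded symmetry of $K_t$ established above. The sum is therefore equal to its own negative, and hence vanishes. This is the standard fact that an order-two differential operator built from a (shifted) symplectic tensor squares to zero.

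For property (2), I would compute $\Delta_t\{J,J'\}_t$ directly by applying $\Delta_t$ to the defining formula
\[
\{J,J'\}_t \;=\; \Delta_t(JJ') - \Delta_t(J)\, J' - (-1)^{|J|} J\, \Delta_t(J'),
\]
and using $\Delta_t^2=0$ to annihilate the first term. The two remaining terms $\Delta_t(\Delta_t(J)\,J')$ and $\Delta_t(J\,\Delta_t(J'))$ can each be re-expressed via the bracket definition itself; the cross-terms $\Delta_t(J)\Delta_t(J')$ that appear from the two sides cancel against one another (again using $\Delta_t^2=0$), and what remains is precisely the graded Leibniz identity for $\Delta_t$ with respect to $\{\cdot,\cdot\}_t$. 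This is the standard ``seven-term identity'' of a BV algebra, adapted to our sign conventions. The principal difficulty throughout is Koszul bookkeeping, since fields carry a $\Z\times\Z/2$-bigrading that forces careful tracking of signs when permuting basis elements and derivations of mixed parity; no conceptual input beyond the graded symmetry of $K_t$ and the order-two nature of $\Delta_t$ is needed.
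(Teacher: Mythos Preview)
Your approach matches the paper's, which simply notes that the lemma ``can be verified by explicit computation'' and gives no further details. Your outline for (2) is correct and is the standard BV-algebra manipulation.

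For (1), however, your attribution of the crucial sign is off. Under the swap $(i,j)\leftrightarrow(k,l)$ the coefficients $K_t^{ij}K_t^{kl}$ are real numbers and do not transform at all; the graded symmetry of $K_t$ governs the exchange $i\leftrightarrow j$ \emph{within} a single pair, not the exchange of the two pairs. The sign $-1$ you need comes entirely from reordering the derivatives: whenever $K_t^{ij}\neq 0$ the block $\partial_{e^i}\partial_{e^j}$ has total $(\Z\times\Z/2)$-degree $(1,0)$ (because $K_t$ itself has degree $(1,0)$), so commuting two such blocks past each other contributes $(-1)^{1\cdot 1+0\cdot 0}=-1$. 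Equivalently, and more cleanly: $\Delta_t$ is an odd element of the graded-commutative algebra of constant-coefficient differential operators on $\sO(\sF)$, and any odd element of a graded-commutative algebra squares to zero. No appeal to the symmetry of $K_t$ is needed for this part.
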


We have addressed two of the three scale-$t$ objects that are needed for our purposes: we just need to explain how $I$ gives rise to the family $I[t]$. This is by far the most subtle part of renormalization, and we refer the reader to \cite{cost} for the full treatment of this subject. We will content ourselves to present the results of the renormalization procedure as applied in the case of $\sL$ acting on a free theory. To do this, we first need the following few definitions:


\begin{definition}[Section 8.3 of Chapter 5 of \cite{cost}]
\label{def: propagator}
Let $t'>t>0$. The \textbf{propagator from scale $t$ to $t'$} is denoted $P(t,t')$ and is the unique element of $\sF\otimes \sF$ satisfying
\[
id\otimes\ip (P(t,t')\otimes f) = \left(Q^{GF}\int_t^{t'} e^{-s[Q,Q^{GF}]}ds \right)f,
\]
for $f\in \sF$.
\end{definition}
This tells us that, in particular, $P(t,t')$ is smooth so long as both $t$ and $t'$ are positive. In the limit as $t\to0$, $P(t,t')$ becomes distributional. On occasion, we need to consider $P(0,t)$.


\begin{definition}[Section 13.4 of Chapter 2 of \cite{cost}]
The \textbf{RG flow operator from scale $t$ to scale $t'$} is the following operator on $\sO(\sL[1]\oplus \sF)[\![\hbar]\!]$
\[
W(P(t,t'),\cdot) : J\mapsto \hbar \log \left( \exp\left( \hbar \partial_{P}\right) \exp\left(J/\hbar\right)\right),
\]
where $P$ is the operator whose kernel is $P(t,t')$. $\partial_{P}$ is defined analogously to $\partial_{K_t}$ (see Definition \ref{def: bvlaplacian}).
\end{definition}

\begin{remark}
\label{rem: feyn}
The RG flow operator has a natural interpretation in terms of Feynman diagrams, and we refer the reader to \cite{cost} Chapter 2.3 for details. We comment only that we use the word ``leg'' where \cite{cost} uses the word ``tail'' to describe external edges of Feynman diagrams, i.e. edges that end at a univalent vertex. 
\end{remark}

As mentioned above, we would like to replace $I$ with a family of interactions $\{I[t]\}$, one for each $t>0$; however, we will do so in a way that the $I[t]$ are related by the RG flow equation $I[t'] = W(P(t,t'),I[t])$. This is the main purpose of the operator $W(P(t,t'), \cdot)$, but we will have other uses for it in the sequel, e.g. the following


\begin{definition}[Definition 7.5.1 of \cite{othesis}]
The \textbf{tree-level, scale-$t$ interaction} is the element $I_{tr}[t]$ of $C^\bullet(\sF[-1]\rtimes \sL)$ given by
\[
I_{tr}[t]: = \lim_{t'\to 0}\left( W(P(t',t),I)\mod \hbar\right),
\]
where $I$ is the interaction in Expression \ref{eq: interaction}.
\end{definition}

\begin{remark}
The limit exists, though we do not prove it here. See section 5 of Chapter 2 of \cite{cost} for details.
\end{remark}

\begin{remark}
This is called the \textit{tree-level} interaction because the Feynman diagrams which represent it are all trees.
\end{remark}

\begin{figure}[h]
\centering
\includegraphics[scale = 0.25]{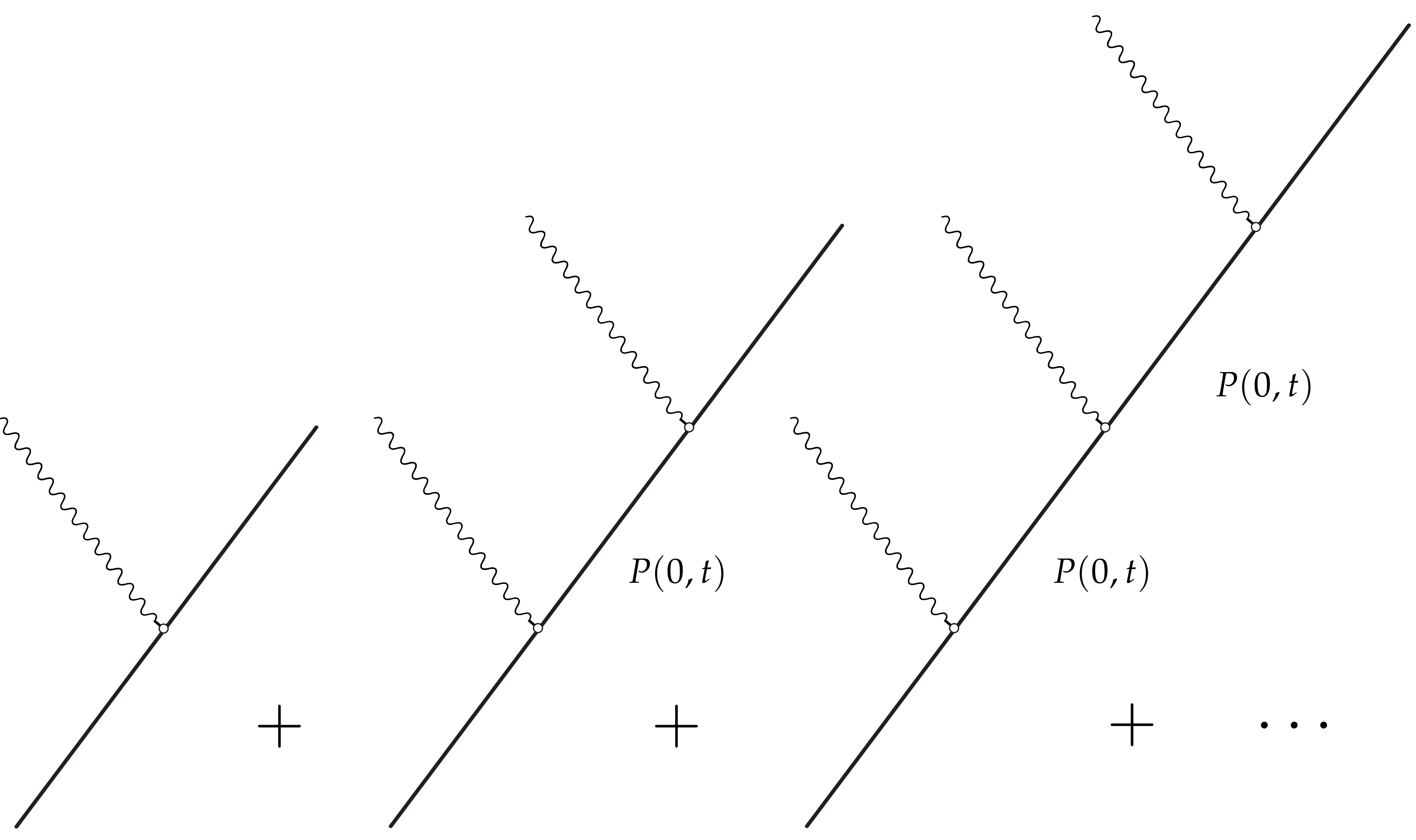}
\caption{The tree-level diagrams contributing to $I_{tr}[t]$.} 
\label{fig: trees}
\end{figure}
Since the propagator $P(t,t')$ connects only $\sF$ edges, all of the trees contributing to $I_{tr}$ have only two $\sF$ legs. Thus, $\Delta_t I_{tr}$ belongs to $\widehat{Sym}(\L[1]^\vee)$. 

If the limit as $t\to 0$ of $W(P(t,t'),I)$ (not just the $\hbar^0$ part) existed, we wouldn't need renormalization; this is, however, not the case. So, we must take the renormalized limit of the operator $W$. The precise procedure for doing this in general is spelled out in Chapter 2 of \cite{cost}, and the result of this procedure in the case at hand is discussed in \cite{othesis}. We note only that in the case of an action of $\sL$ on a free theory, this procedure produces a family $I_{wh}[t]$ of interactions schematized by Figure \ref{fig: wheel} (the subscript $_{wh}$ stands for ``wheel,'' describing the corresponding Feynman diagrams). We do not need to know much about $I_{wh}[t]$; it suffices for our purposes to know that it's an element of $C^\bullet_{red}(\sL)$. With these data in place, we can describe the promised family $I[t]$.

\begin{figure}[h]
\centering
\includegraphics[scale =.17]{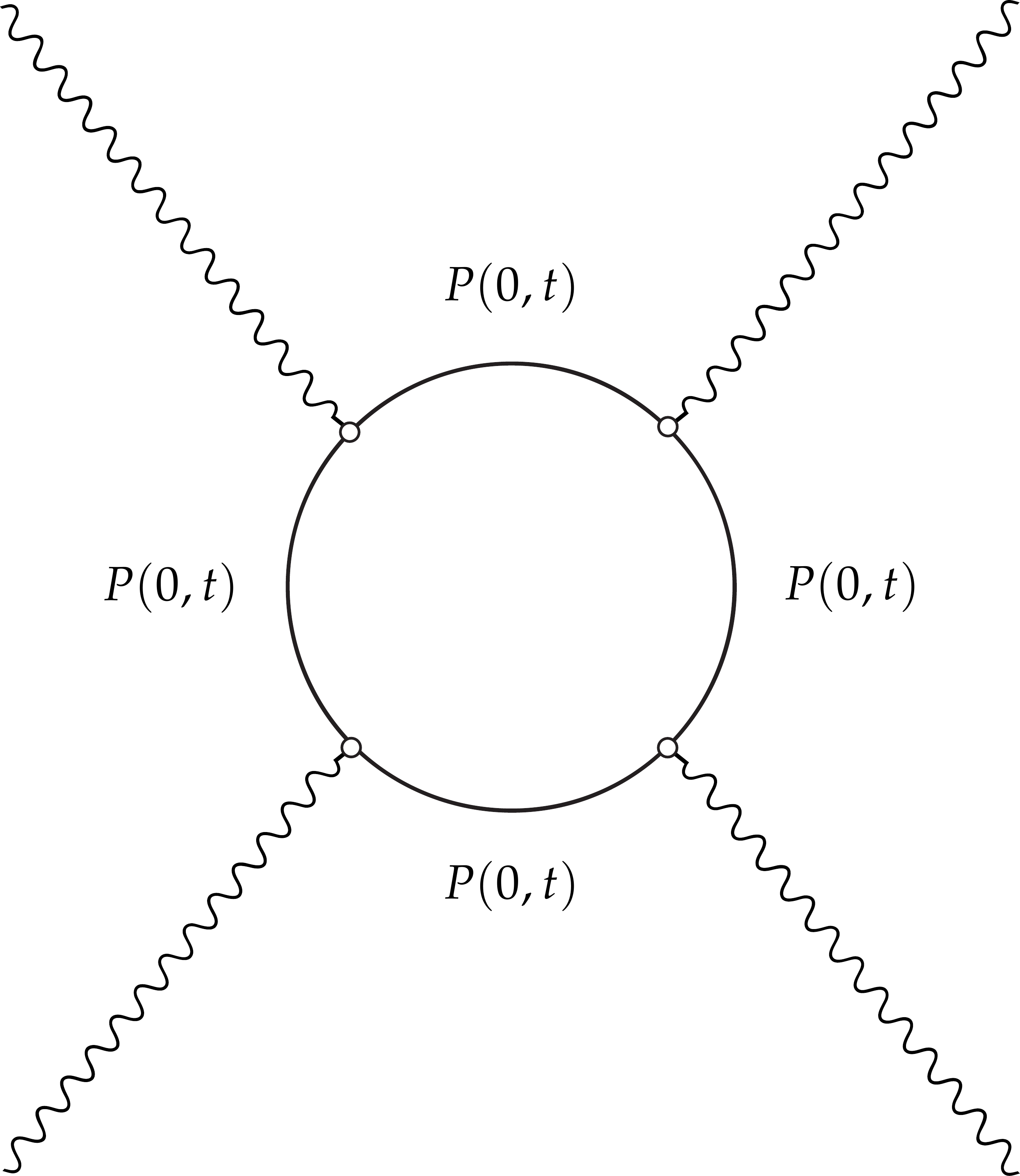}
\caption{A graphical depiction of one of the terms appearing in $I_{wh}[t]$. Not depicted is the diagram for the counter-term necessary to make this wheel diagram finite.}
\label{fig: wheel}
\end{figure}


\begin{definition}[Lemma 7.5.3 of \cite{othesis}]
The \textbf{scale-$t$ renormalized interaction for $\sL$ acting on $\sF$} is denoted $I[t]$ and given by 
\[
I[t] = I_{tr}[t] + \hbar I_{wh}[t]
\]
\end{definition}

\begin{remark}
The renormalized interaction takes this form only in the context of an elliptic dgla (or more generally an elliptic $L_\infty$-algebra) deforming a free theory into a family of free theories. If the original theory is interacting, or if the $L_\infty$-algebra deforms free theories into interacting ones, the structure of $I[t]$ becomes more complicated, and in particular includes contributions at all powers of $\hbar$.
\end{remark}

\subsubsection{The Quantum Master Equation and the Obstruction}
The essence of BV quantization is the introduction of a deformation of the Chevalley-Eilenberg differential on $\sO(\sL[1]\oplus \sF)[\![\hbar]\!]$. Indeed, we would like to define the cochain complex $\Obq{}[t]$, which is, as a vector space $\sO(\sL[1]\oplus \sF)[\![\hbar]\!]$, but which has as differential $d_{t}:=d_{\sF[-1]\rtimes \sL}+\hbar \Delta_t$. The only issue is that this differential may not square to zero. In fact,
\begin{lemma}
The square of the putative differential $d_t=d_{\sF[-1]\rtimes \sL}+\hbar \Delta_t$ on $\Obq{}[t]$ satisfies the equation
\[
(d_{\sF[-1]\rtimes \sL}+\hbar \Delta_t)^2 = \left\{ (Q+d_{\sL})I[t]+\frac{1}{2}\{I[t],I[t]\}_t+\hbar \Delta_t I[t], \cdot\right\}_t.
\]
\end{lemma}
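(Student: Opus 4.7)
The plan is to expand the square of $d_t$ as a sum of graded (anti)commutators and then identify the surviving terms. First I would unpack the putative differential as $d_t = D + B_t + \hbar\Delta_t$, where $D := Q + d_{\sL}$ is the internal part and $B_t := \{I[t],\cdot\}_t$ is the interaction contribution to the (scale-$t$) Chevalley-Eilenberg differential. Since all three summands have cohomological degree $1$, I have
\[
d_t^2 = \tfrac{1}{2}[D,D] + [D,B_t] + \hbar[D,\Delta_t] + \tfrac{1}{2}[B_t,B_t] + \hbar[B_t,\Delta_t] + \tfrac{\hbar^2}{2}[\Delta_t,\Delta_t],
\]
where each bracket is the graded commutator (i.e.\ the anticommutator, since all operators are odd).

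Next I would dispose of the three ``vanishing'' terms. One has $\tfrac12[D,D]=0$ because $Q^2 = 0$, $d_{\sL}^2 = 0$, and $Q$ and $d_\sL$ act on disjoint tensor factors of $\sO(\sL[1]\oplus \sF)$ and so strictly anticommute. By Lemma \ref{lem: propsofBVLaplacian}(1), $\tfrac{\hbar^2}{2}[\Delta_t,\Delta_t]=\hbar^2\Delta_t^2=0$. For $[D,\Delta_t]$, observe that $d_{\sL}$ commutes with $\Delta_t$ trivially, since $\Delta_t$ is contraction with $K_t\in \sF\otimes \sF$ and $d_{\sL}$ differentiates only the $\sL[1]^\vee$ tensor factors. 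For $[Q,\Delta_t]=0$ it is enough to verify that $K_t$ is closed under $Q\otimes 1 + 1\otimes Q$; this in turn follows from the defining identity $-\mathrm{id}\otimes \ip(K_t\otimes f) = e^{-tH}f$, the skew-self-adjointness of $Q$ for $\ip$, and the elementary fact that $[Q,H]=0$ (which drops out of $H = QQ^{GF} + Q^{GF}Q$ and $Q^2=0$).

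The surviving terms then produce the right-hand side. For $[D,B_t]$, the fact that $D$ is a (graded) derivation of $\{\cdot,\cdot\}_t$—which is immediate from $[D,\partial_{K_t}]=0$, shown above—gives $[D,B_t]J = \{DI[t],J\}_t$. For $\tfrac12[B_t,B_t] = B_t^2$, the graded Jacobi identity for the degree-$1$ Poisson bracket $\{\cdot,\cdot\}_t$ yields $\{I[t],\{I[t],J\}_t\}_t = \tfrac12\{\{I[t],I[t]\}_t,J\}_t$. For $\hbar[B_t,\Delta_t]$ I would invoke Lemma \ref{lem: propsofBVLaplacian}(2): $\Delta_t$ is a derivation of $\{\cdot,\cdot\}_t$, so unfolding
\[
\Delta_t\{I[t],J\}_t = \{\Delta_t I[t], J\}_t - \{I[t], \Delta_t J\}_t
\]
(the sign being forced by $|\Delta_t| = 1 = |I[t]|+1$) and combining with $B_t\Delta_t J = \{I[t],\Delta_t J\}_t$ produces $[B_t,\Delta_t]J = \{\Delta_t I[t],J\}_t$. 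Summing the three contributions gives precisely the asserted right-hand side.

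The main obstacle is the bookkeeping in the last computation: one must verify that the signs in the derivation identity for $\Delta_t$ conspire so that the $\{I[t],\Delta_t J\}_t$ terms cancel between $\Delta_t B_t$ and $B_t\Delta_t$, leaving a clean $\{\hbar\Delta_t I[t],\cdot\}_t$ rather than a spurious extra summand. The verification of $[Q,\Delta_t]=0$ likewise demands a small unwinding of the definition of $K_t$ using the $Q$-invariance of $\ip$, but otherwise the argument is mechanical.
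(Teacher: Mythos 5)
Your proof is correct and is exactly the "direct computation, using the fact that $\Delta_t$ is a derivation for $\{\cdot,\cdot\}_t$" that the paper offers as its entire proof, spelled out term by term: the decomposition $d_t = (Q+d_\sL) + \{I[t],\cdot\}_t + \hbar\Delta_t$, the vanishing of $[Q,\Delta_t]$ via the $Q$-closedness of $K_t$, and the Jacobi/derivation identities producing the three surviving brackets all match the intended argument. No gaps.
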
 
\begin{proof}
Direct computation, using the fact that $\Delta_t$ is a derivation for $\{\cdot,\cdot\}_t$.
\end{proof}

Therefore, $d_t^2=0$ if and only if 
\[
O[t]:=(Q+d_{\sL})I[t]+\frac{1}{2}\{I[t],I[t]\}_t+\hbar \Delta_t I[t]
\]
is in the $\{\cdot, \cdot\}_t$-center of $\Obq{}[t],$ or equivalently if $ O[t]$ lies in $C^\bullet(\sL)$.

We will see below that $d_t^2$ will always  be 0 for an action of $\sL$ on a free theory. However, we would like to demand something stronger, which we codify in the second part of the following definition.


\begin{definition}
The \textbf{weak scale $t$ quantum master equation (wQME)} is 
\[
(d_{ \sL}+Q+\{I[t],\cdot\}_t+\hbar \Delta_t)^2=0.
\]
The \textbf{strong scale $t$ quantum master equation (sQME)} is 
\[
(Q+d_{\sL})I[t]+\frac{1}{2}\{I[t],I[t]\}_t+\hbar \Delta_t I[t]=0.
\]
\end{definition}

\begin{definition}
If $I[t]$ satisfies the wQME, then the \textbf{cochain complex of global equivariant quantum observables} is 
\[
\Obq{}[t]:=\left(\sO(\sL[1]\oplus \sF),d_{ \sL}+Q+\{I[t],\cdot\}_t+\hbar \Delta_t\right).
\]
Multiplication by elements of $C^\bullet(\sL)$ in the obvious way naturally endows $\Obq{}[t]$ with the structure of a $C^\bullet(\sL)$-module.
\end{definition}

If the generalized Laplacian $[Q,Q^{GF}]$ is formally self-adjoint and positive for some metric on $F$, then we can also define scale $\infty$ observables because the operator $\exp\left( -t[Q,Q^{GF}]\right)$ is bounded and has a $t\to\infty$ limit.

\begin{remark}
In the language of \cite{CG2}, the wQME is the equation defining an action of $\sL$ on the quantum field theory of $\sF$, and the sQME is the equation defining an \textit{inner} action of $\sL$ on $\sF$.
\end{remark}

The following lemma tells us that the sQME interacts as we would like with the RG flow operator. It is Lemma 9.2.2 of Chapter 5 of \cite{cost}.

\begin{lemma}
\label{lem: rgqme}
If $I$ satisfies the scale $t$ sQME, then $W(P(t',t),I)$ satisfies the scale $t'$ sQME.
\end{lemma}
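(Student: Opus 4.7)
The plan is to reformulate both sQMEs in an equivalent ``exponential'' form. Namely, for any $J \in \sO(\sL[1] \oplus \sF)[\![\hbar]\!]$, one has the identity
\[
(d_\sL + Q + \hbar \Delta_s)\, e^{J/\hbar} \;=\; \frac{1}{\hbar}\Bigl[(d_\sL + Q)J + \tfrac{1}{2}\{J,J\}_s + \hbar \Delta_s J\Bigr]\,e^{J/\hbar},
\]
which follows directly from the fact that $d_\sL + Q$ is a graded derivation while $\Delta_s$ is a second-order operator whose failure to be a derivation is precisely the bracket $\{\cdot,\cdot\}_s$ (compare the definition of the Poisson bracket and Lemma \ref{lem: propsofBVLaplacian}). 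Consequently the scale-$s$ sQME is equivalent to $(d_\sL + Q + \hbar \Delta_s)e^{J/\hbar} = 0$. Because the definition of the RG operator gives $e^{W(P(t',t),I[t])/\hbar} = e^{\hbar \partial_P}e^{I[t]/\hbar}$ with $P := P(t',t)$, the lemma reduces to establishing the operator intertwining
\[
(d_\sL + Q + \hbar \Delta_{t'})\, e^{\hbar \partial_P} \;=\; e^{\hbar \partial_P}\,(d_\sL + Q + \hbar \Delta_t).
\]

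Next, $d_\sL$ and $\Delta_{t'}$ each commute with $\partial_P$: the former because it acts only on the $\sL$-factors while $\partial_P$ contracts with $P \in \sF \otimes \sF$, the latter because $\Delta_{t'}$ and $\partial_P$ are both constant-coefficient second-order derivations built from $\sF \otimes \sF$. Hence the content of the intertwining reduces to the commutator identity
\[
[Q, \partial_P] \;=\; \Delta_t - \Delta_{t'}.
\]
I would deduce this from the heat-kernel identity $(Q \otimes 1 + 1 \otimes Q)P(t',t) = K_{t'} - K_t$, which can be checked at the level of operators on $\sF$: using $[Q, Q^{GF}] = H$, one computes $[Q, Q^{GF}\int_{t'}^t e^{-sH}ds] = H\int_{t'}^t e^{-sH}ds = e^{-tH} - e^{-t'H}$, and translating this back through the $\ip$-duals appearing in the definitions of $K_s$ and $P(t',t)$ (with the graded skew-adjointness of $Q$ accounting for the ``$1\otimes Q$'' term) yields the claim at the level of kernels. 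Contracting then gives $[Q,\partial_P] = \partial_{K_{t'} - K_t} = \Delta_t - \Delta_{t'}$ via $\Delta_s = -\partial_{K_s}$. Since $[Q, \partial_P]$ is itself constant-coefficient it commutes with $\partial_P$, so $[Q, e^{\hbar \partial_P}] = \hbar[Q,\partial_P]e^{\hbar \partial_P}$, and combining with the commutativity observations yields the desired intertwining.

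Finally, putting it all together: assuming $I[t]$ satisfies the scale-$t$ sQME, the equivalent form gives $(d_\sL + Q + \hbar \Delta_t)e^{I[t]/\hbar} = 0$; applying $e^{\hbar \partial_P}$ to both sides and using the intertwining produces
\[
(d_\sL + Q + \hbar \Delta_{t'})\, e^{W(P(t',t),I[t])/\hbar} \;=\; e^{\hbar \partial_P}\,(d_\sL + Q + \hbar \Delta_t)\,e^{I[t]/\hbar} \;=\; 0,
\]
which, by the equivalent form again, is the scale-$t'$ sQME for $W(P(t',t),I[t])$. The main obstacle is bookkeeping: one must carefully reconcile the opposite sign conventions used in the $\ip$-based definitions of $K_s$ and $P(t',t)$, and track the Koszul signs introduced by the extra $\Z/2$-grading on $\sF$. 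Once these are fixed the computation is mechanical, and no convergence issues arise since $\partial_P$ is locally nilpotent on each polynomial piece and $e^{\hbar \partial_P}$ is interpreted as a formal power series in $\hbar$.
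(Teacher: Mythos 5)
Your argument is correct, and it is essentially the standard proof of this statement: the paper itself gives no proof, deferring to Lemma 9.2.2 of Chapter 5 of \cite{cost}, and the argument there is exactly the one you give (rewrite the sQME as $(d_\sL+Q+\hbar\Delta_s)e^{J/\hbar}=0$, then intertwine via $[Q,\partial_P]=\Delta_t-\Delta_{t'}$, which follows from $(Q\otimes 1+1\otimes Q)P(t',t)=K_{t'}-K_t$). The only quibbles are cosmetic: $\Delta_{t'}$ and $\partial_P$ are second-order operators rather than ``derivations,'' and their commuting uses that the kernel $P$ has even cohomological degree; you have already flagged the sign bookkeeping between the $\ip$-based definitions of $K_s$ and $P$, which is the only genuinely delicate point.
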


Let us examine the failure of the sQME to be satisfied in a bit more detail. Since $I$ was required to satisfy $(d_{\sL}+Q)I+\frac{1}{2}\{I,I\}=0$, the general machinery of renormalization implies that the $\hbar^0$ part of the sQME is satisfied (see, e.g. Lemma 9.4.1. of Chapter 5 of \cite{cost}). Thus, we make the following definition:


\begin{deflem}[Cf. Corollary 5.11.1.2 of \cite{cost}]
\label{def: tobstr}
The \textbf{scale $t$ obstruction to the $\sL$-equivariant quantization of $\sF$} is 
\[
\Obstr[t]:= \frac{1}{\hbar} \left( QI[t]+d_{\sL}I[t]+\frac{1}{2}\{I[t],I[t]\}_t+\hbar \Delta_t I[t]\right).
\]
The obstruction is a closed, cohomological degree 1 element of $\Obq{[t]}$.
\end{deflem}

The following is Lemma 7.5.4 of \cite{othesis}. It gives us an explicit formula for the scale $t$ obstruction.

\begin{lemma}
\label{lem: formofobstr}
For the action of an elliptic dgla on a free theory,
\[\Obstr[t] = (\Delta_t I_{tr}[t]+d_{\sL}I_{wh}[t]).\]
$\Obstr[t]$ depends only on $\sL$, i.e. is also a closed, ghost number 1 element of $C^\bullet_{red}(\sL)\subset \Obq{}[t]$.
\end{lemma}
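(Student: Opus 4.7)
The plan is to directly expand $\Obstr[t]$ using the explicit form $I[t]=I_{tr}[t]+\hbar I_{wh}[t]$ and the special features of the free-theory setting, then apply a Feynman-diagrammatic leg count to pin down where the result lives.

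First, I would substitute the decomposition of $I[t]$ into the definition
\[
\Obstr[t] = \tfrac{1}{\hbar}\Big( (Q+d_\sL)I[t] + \tfrac{1}{2}\{I[t],I[t]\}_t + \hbar \Delta_t I[t]\Big)
\]
and collect terms using three simplifying observations, all of which follow from $I_{wh}[t]\in C^\bullet_{red}(\sL)$ and the description of the BV operators (Definition \ref{def: bvlaplacian}, the diagrammatic depictions in Figure \ref{fig: bracketlaplacian}, and the fact that $K_t$ has components only in $\sF\otimes\sF$): namely (i) $Q I_{wh}[t]=0$ because $Q$ acts only on $\sF$ and $I_{wh}[t]$ has no $\sF$-legs; (ii) $\Delta_t I_{wh}[t]=0$ and $\{I_{wh}[t],\cdot\}_t=0$, since both operators contract $\sF$-edges against $K_t$; and (iii) consequently $\{I[t],I[t]\}_t = \{I_{tr}[t],I_{tr}[t]\}_t$. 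After these cancellations I am left with
\[
\Obstr[t] = \tfrac{1}{\hbar}\Big((Q+d_\sL)I_{tr}[t] + \tfrac{1}{2}\{I_{tr}[t],I_{tr}[t]\}_t\Big) + d_\sL I_{wh}[t] + \Delta_t I_{tr}[t].
\]

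Next I would argue that the bracketed expression vanishes. Since $I$ satisfies the classical master equation (Equation \ref{eq: cme}) and tree-level RG flow preserves the CME (a standard consequence of the defining formula for $W$, cf. Lemma \ref{lem: rgqme} and the observation that $\Delta_t I=0$ at tree level), the tree-level renormalized interaction $I_{tr}[t]$ satisfies the scale-$t$ CME:
\[
(Q+d_\sL)I_{tr}[t] + \tfrac{1}{2}\{I_{tr}[t],I_{tr}[t]\}_t = 0.
\]
This immediately yields $\Obstr[t] = \Delta_t I_{tr}[t] + d_\sL I_{wh}[t]$, establishing the claimed formula.

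To see that $\Obstr[t]$ lies in $C^\bullet_{red}(\sL)$, I would count legs in the Feynman diagrams contributing to each summand. The interaction $I$ corresponds to a trivalent vertex with two $\sF$-legs and one $\sL$-leg (Figure \ref{fig: generalvertex}), and the propagator $P(\epsilon,t)$ connects only $\sF$-legs. A connected tree with $v$ such vertices therefore has $v-1$ internal edges, contributing $v$ external $\sL$-legs and exactly $2v-2(v-1)=2$ external $\sF$-legs. Hence every term of $I_{tr}[t]$ has precisely two $\sF$-inputs, and $\Delta_t I_{tr}[t]$—which pairs these two inputs against $K_t$—produces something with no $\sF$-dependence. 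Combined with $d_\sL I_{wh}[t]\in C^\bullet_{red}(\sL)$, this shows $\Obstr[t]\in C^\bullet_{red}(\sL)$. Closedness and ghost number are inherited from Definition-Lemma \ref{def: tobstr}: inside the subcomplex $C^\bullet_{red}(\sL)$ the differential on $\Obq{}[t]$ restricts to $d_\sL$, so $\Obstr[t]$ is a $d_\sL$-cocycle of cohomological degree $1$.

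The main subtlety is the second step, verifying that $I_{tr}[t]$ satisfies the scale-$t$ CME despite $I$ itself satisfying only the singular ($t=0$) CME. This is where care with the renormalization machinery of \cite{cost} is required: one must use that tree-level RG flow is the unique solution to the homotopy-RG equation and intertwines $(Q+d_\sL)+\tfrac12\{\cdot,\cdot\}_t$ with its scale-$0$ counterpart, which is Lemma \ref{lem: rgqme} modulo $\hbar$. Everything else is bookkeeping in the symmetric algebra and a straightforward valence count on trees.
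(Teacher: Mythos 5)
Your proposal is correct and follows essentially the same route as the paper's proof: substitute $I[t]=I_{tr}[t]+\hbar I_{wh}[t]$, use that $Q$, $\Delta_t$, and $\{\cdot,\cdot\}_t$ annihilate $I_{wh}[t]$ since $K_t$ lives in $\sF\otimes\sF$, invoke the $\hbar^0$ (tree-level) part of the quantum master equation to kill the bracketed term, and count $\sF$-legs on the trees to place $\Delta_t I_{tr}[t]$ in $C^\bullet_{red}(\sL)$. Your valence count making explicit why every tree has exactly two $\sF$-legs is a slightly more detailed version of the paper's appeal to the Sym-degree of $I_{tr}[t]$, but the argument is the same.
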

\begin{proof}
Recall that $I[t] = I_{tr}[t]+\hbar I_{wh}[t]$, where $I_{wh}[t]\in C^\bullet_{red}(\sL)$, so that $QI_{wh}[t]=\{I_{wh}[t],\cdot\}_t=\Delta_t I_{wh}[t]=0$. Moreover, by the commentary preceding the definition of the obstruction, $QI_{tr}[t]+d_{\sL}I_{tr}[t]+\frac{1}{2}\{I_{tr}[t],I_{tr}[t]\}_t=0$. The first statement of the lemma then follows by direct computation. To see that $\Obstr[t]\in C^\bullet_{red}(\sL)$, recall that $I_{tr}[t]$ has Sym-degree two with respect to $\sF$ inputs, so that $\Delta_tI_{tr}[t]$ has Sym-degree zero with respect to $\sF$ inputs, so that $\Obstr[t]\in C^\bullet(\sL)$. Moreover, all terms in $\Delta_tI_{tr}[t]$ and $d_\sL I_{wh}$ have at least linear dependence on $\sL$ inputs, so that $\Obstr[t]\in C^\bullet_{red}(\sL)$. 
\end{proof}

\begin{remark}
This form for the obstruction is particular to families of free theories.
\end{remark}

By Lemma \ref{lem: formofobstr}, $I[t]$ satisfies the wQME, since $\Obstr[t]\in C^\bullet(\sL)$ and all elements of $C^\bullet(\sL)$ are in the $\{\cdot, \cdot\}_t$-center of $\Obq{}[t]$.

We have also the following lemma, which describes the relationship between the obstruction at various length scales. It is a modification of Lemma 11.1.1 from Chapter 5 of \cite{cost} to the case where the obstruction lives only at order $\hbar$, which we know to be the case by the previous lemma.

\begin{lemma}[Lemma 11.1.1 of Chapter 5 of \cite{cost}]
\label{lem: obstrtangent}
Let 
\[
W_{t,t'}: \sO(\sL[1]\oplus \sF) \to \sO(\sL[1]\oplus \sF)
\]
be the map defined by
\[W_{t,t'}(J) = W(P(t,t'),J) \mod \hbar.\]
Then, if $\epsilon$ is a parameter of cohomological degree $-1$ and square 0, we have 
\[
I_{tr}[t'] + \epsilon\Obstr[t'] = W_{t,t'}\left(I_{tr}[t]+\epsilon\Obstr[t]\right).
\] 
\end{lemma}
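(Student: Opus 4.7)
The plan is to reduce the equation, modulo $\epsilon^2 = 0$, to two computations: the $\epsilon^0$ part shows $W_{t,t'}(I_{tr}[t]) = I_{tr}[t']$; the $\epsilon^1$ part requires computing the linearization of $W_{t,t'}$ at $I_{tr}[t]$ in the direction $\Obstr[t]$ and matching it against $\Obstr[t']$. Each piece is handled via the Feynman-diagrammatic expansion of $W$.

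For the $\epsilon^0$ part, I would take the $\hbar^0$ component of the full RG-flow identity $W(P(t,t'), I[t]) = I[t']$ with $I[t] = I_{tr}[t] + \hbar I_{wh}[t]$. Any tree in the Feynman expansion involving an $I_{wh}[t]$-vertex picks up an explicit power of $\hbar$ from that vertex, so all such diagrams vanish modulo $\hbar$; what remains is $W_{t,t'}(I_{tr}[t]) = I_{tr}[t']$.

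For the $\epsilon^1$ part, the linearization $dW_{t,t'}\big|_{I_{tr}[t]}(\Obstr[t])$ is a sum over connected trees with one distinguished vertex labeled by $\Obstr[t]$, all others labeled by $I_{tr}[t]$, and edges given by the propagator $P(t,t')$. Since $\Obstr[t] \in C^\bullet_{red}(\sL)$ carries only $\sL$-legs while $P(t,t')$ contracts only $\sF$-legs, no propagator can attach to the $\Obstr[t]$-vertex; the only such connected diagram is therefore the isolated $\Obstr[t]$-vertex, contributing $\Obstr[t]$. Combining the two parts, $W_{t,t'}(I_{tr}[t] + \epsilon\Obstr[t]) = I_{tr}[t'] + \epsilon\Obstr[t]$, and the lemma reduces to showing $\Obstr[t] = \Obstr[t']$.

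The main obstacle is this last equality, which I would deduce from the RG-compatibility of the quantum master equation. The operator identity $[Q, Q^{GF}\int_t^{t'} e^{-s[Q,Q^{GF}]}\,ds] = e^{-t[Q,Q^{GF}]} - e^{-t'[Q,Q^{GF}]}$, translated to kernels, gives the commutator relation $[Q + d_\sL, \partial_{P(t,t')}] = \Delta_t - \Delta_{t'}$, which yields the standard intertwining
$$\text{sQME}_{t'}(W(P(t,t'), J)) \cdot e^{W(P(t,t'), J)/\hbar} = e^{\hbar \partial_{P(t,t')}}\bigl(\text{sQME}_t(J) \cdot e^{J/\hbar}\bigr).$$
Applied with $J = I[t]$, using $\text{sQME}_t(I[t]) = \hbar\Obstr[t]$ from Definition-Lemma \ref{def: tobstr}, and dividing by $e^{I[t']/\hbar} = e^{\hbar \partial_{P(t,t')}} e^{I[t]/\hbar}$, this exhibits $\Obstr[t']$ as the sum over connected Feynman diagrams containing one $\Obstr[t]$-vertex and any number of $I[t]$-vertices joined by $P(t,t')$-propagators. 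The same no-$\sF$-leg argument forces the $\Obstr[t]$-vertex to stand alone, so the entire sum collapses to $\Obstr[t]$, completing the proof.
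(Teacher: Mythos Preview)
The paper does not give its own proof of this lemma; it is imported from Costello (Chapter~5, Lemma~11.1.1). Your argument is correct, and its core ingredient---the RG--sQME intertwining identity in your step~3---is exactly what underlies Costello's proof.

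That said, your presentation takes a slightly circuitous route. Once you have the intertwining identity and divide through, you have expressed $\Obstr[t']$ as a sum of connected diagrams with one $\Obstr[t]$-vertex, remaining vertices from $I[t]$, and $P(t,t')$-edges. Since $\Obstr[t']$ is $\hbar$-independent (Lemma~\ref{lem: formofobstr}), reducing modulo $\hbar$ leaves only tree diagrams with one $\Obstr[t]$-vertex and $I_{tr}[t]$-vertices, which is precisely the $\epsilon$-linear part of $W_{t,t'}(I_{tr}[t]+\epsilon\Obstr[t])$. Combined with your step~1, this already \emph{is} the lemma. Your steps~1--2 reduce the lemma to scale-independence, and your step~3 then proves scale-independence by (implicitly) first establishing the lemma and then applying the no-$\sF$-leg argument; the no-$\sF$-leg step is not needed for this lemma itself and is exactly how the paper derives Lemma~\ref{lem: obstrind} \emph{from} it. So you have in fact proved both results at once, just in reversed order relative to the paper's exposition.
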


This lemma gives precise meaning to the idea that the collection $\{\Obstr[t]\}$ lies in the tangent space to the classical theory described by $\{I_{tr}[t]\}$. Moreover, it is discussed in \cite{cost} that $\lim_{t\to0}W_{t,t'}(I_{tr}[t])$ exists and is local, so we have the following 

\begin{lemma}[Corollary 11.1.2 of Chapter 5 of \cite{cost}]
\label{lem: obstr}
The limit
\[
\Obstr : = \lim_{t\to 0} \Obstr[t]
\]
exists and is a closed, degree 1 element of $C^\bullet_{loc,red}(\sL)$.
\end{lemma}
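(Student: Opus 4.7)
The plan is to reduce both the existence and the locality claims to two separate arguments: a Feynman-diagrammatic proof that $\{\Obstr[t]\}_{t>0}$ is constant in $t$, and an appeal to general renormalization theory for locality.

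First, I would show that $\Obstr[t]$ is independent of $t$. Extracting the coefficient of the square-zero parameter $\epsilon$ in Lemma~\ref{lem: obstrtangent} yields
\[
\Obstr[t'] \;=\; \frac{d}{d\epsilon}\Big|_{\epsilon=0} W_{t,t'}\bigl(I_{tr}[t] + \epsilon\,\Obstr[t]\bigr),
\]
which admits a Feynman expansion as a sum over connected trees in which exactly one vertex is labelled by $\Obstr[t]$ and the remaining vertices by $I_{tr}[t]$, with internal edges contracted by the propagator $P(t,t') \in \sF \otimes \sF$. By Lemma~\ref{lem: formofobstr}, $\Obstr[t] \in C^\bullet_{red}(\sL)$, so its vertex carries only $\sL$-legs; since $P(t,t')$ contracts only $\sF$-legs, no internal edge can touch the $\Obstr[t]$-vertex, and only the singleton diagram survives. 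Hence $\Obstr[t'] = \Obstr[t]$, and in particular the limit $\lim_{t\to 0}\Obstr[t]$ exists trivially and equals this common value. Closedness in degree~$1$ is immediate from the fact that $\Obstr[t] \in C^\bullet(\sL)$ lies in the $\{\cdot,\cdot\}_t$-center of $\Obq{}[t]$ and that $d_\sL + Q + \{I[t],\cdot\}_t + \hbar\Delta_t$ squares to $\hbar\{\Obstr[t],\cdot\}_t$: restricting to $C^\bullet_{red}(\sL)$-valued cochains and imposing nilpotency reduces to $d_\sL \Obstr[t] = 0$.

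The main obstacle is locality. A priori the kernel $k_t$ appearing in $\Delta_t I_{tr}[t]$ is smooth but not concentrated on the diagonal, so $\Obstr[t]$ at finite $t$ is not manifestly a local functional of jets of $\sL$. I would resolve this by invoking the general renormalization result of Corollary~11.1.2 of Chapter~5 of~\cite{cost}: whenever $\{J[t]\}$ is a first-order $\epsilon$-deformation of an RG-compatible renormalized family, the limit $\lim_{t\to 0} W_{0,t'}(J[t])$ is local for every $t' > 0$. Applied to $J[t] = I_{tr}[t] + \epsilon\,\Obstr[t]$ and combined with the $t$-independence above, this identifies the constant value of $\Obstr[t]$ with an element of $C^\bullet_{loc,red}(\sL)$, completing the proof. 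The later explicit computation in Lemma~\ref{lem: diagramcomputation}, where one sees $\Obstr[t](\lambda) = -2\lambda \int_M \Str k_t(x,x)\,dVol_g(x)$ as an integral of a density along the diagonal, provides an independent verification of locality in the axial setting.
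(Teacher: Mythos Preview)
Your proposal is correct and matches the paper's treatment closely. The paper does not give an independent proof of this lemma at all: it simply records the statement as Corollary~11.1.2 of Chapter~5 of \cite{cost}, prefaced by the remark that the $t\to 0$ limit of $W_{t,t'}(I_{tr}[t])$ exists and is local by Costello's general renormalization theory. Your step of invoking that same corollary for locality is therefore exactly what the paper does---though note the slight circularity in citing Corollary~11.1.2 to prove a lemma \emph{labelled} as Corollary~11.1.2; the point, as you correctly identify, is that the locality claim is imported wholesale from \cite{cost} rather than argued anew.

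Where your write-up differs is that you fold in the $t$-independence argument. In the paper this is a separate result, Lemma~\ref{lem: obstrind}, proved \emph{after} Lemma~\ref{lem: obstr} and by the same Feynman-diagrammatic reasoning you give (the $\Obstr[t]$-vertex has only $\sL$-legs, propagators contract only $\sF$-legs, so no nontrivial tree survives). The paper's logic runs the other way around: first Costello's general result gives existence and locality of the $t\to 0$ limit, then $t$-independence shows that $\Obstr[t]$ at every finite scale already equals this local limit. Your ordering---$t$-independence first, making the limit trivially exist, then locality via Costello---is equally valid and arguably cleaner in the free setting. One small note: your closedness argument is slightly garbled (nilpotency of the full differential does not by itself yield $d_\sL\Obstr[t]=0$); closedness is already asserted in Definition--Lemma~\ref{def: tobstr} by the standard Bianchi-type computation, and then reduces to $d_\sL$-closedness once one knows $\Obstr[t]\in C^\bullet_{red}(\sL)$.
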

\bibliography{AxialAnomalyintheBVFormalism}
\bibliographystyle{spmpsci}
\end{document}